\documentclass{article}
\usepackage{fullpage}
\usepackage{amsmath,amsfonts,amssymb,amsthm}
\usepackage{graphicx} % Required for inserting images
\usepackage{hyperref}
\usepackage[nameinlink]{cleveref}
\usepackage[T1]{fontenc}
\usepackage{physics}
\usepackage{todonotes}
\usepackage{mdframed} % For creating boxes, added by Yao-Ting
\usepackage{lipsum}
\usepackage{multicol}
\allowdisplaybreaks

\hypersetup{colorlinks={true},linkcolor={blue},citecolor=magenta}

\usepackage{comment}

\usepackage[style=alphabetic,minalphanames=3,maxalphanames=4,maxnames=99,backref=true]{biblatex}
\addbibresource{ref.bib}

\title{Cryptography in the Common Haar State Model:\\ Feasibility Results and Separations\footnote{This subsumes~\cite{AGL24}.}}

\author{Prabhanjan Ananth\thanks{\texttt{prabhanjan@cs.ucsb.edu}}\\ {\small UCSB} \and Aditya Gulati\thanks{\texttt{adityagulati@ucsb.edu}} \\ {\small UCSB} \and Yao-Ting Lin\thanks{\texttt{yao-ting\_lin@ucsb.edu}} \\ {\small UCSB}}
\date{}

\newcommand{\YTnote}[1]{}
\newcommand{\pnote}[1]{}
\newcommand{\AG}[1]{}

\begin{document}

\maketitle

%%%  Theorem styles %%%%%%%%%%%%%%%%%%%%%%%%%%%%%%%%%%%%%%%%%%%%%%%%%%%%%%
\newtheorem{theorem}{Theorem}[section]

\newtheorem{definition}[theorem]{Definition}

\newtheorem{lemma}[theorem]{Lemma}

\newtheorem{fact}[theorem]{Fact}

\newtheorem{proposition}[theorem]{Proposition}

\newtheorem{corollary}[theorem]{Corollary}

\newtheorem{remark}[theorem]{Remark}

\newtheorem{claim}[theorem]{Claim}

\Crefname{fact}{fact}{Fact}

%%% for extended abstract (no section numbering)
\newtheorem{mytheorem}{Theorem}
\newtheorem{mycorollary}[mytheorem]{Corollary}

%%%  English %%%%%%%%%%%%%%%%%%%%%%%%%%%%%%%%%%%%%%%%%%%%%%%%%%%%%%

\newcommand{\etal}{et~al.\ }
\newcommand{\aka}{also known as,\ }
\newcommand{\resp}{resp.,\ }
\newcommand{\ie}{i.e.,\ }
\newcommand{\wolog}{w.l.o.g.\ }
\newcommand{\Wolog}{W.l.o.g.\ }
\newcommand{\eg}{e.g.,\ }
\newcommand{\wrt} {with respect to\ }
\newcommand{\cf}{{cf.,\ }}

\newcommand{\st}{\ \text{s.t.}\ }

%%%  math %%%%%%%%%%%%%%%%%%%%%%%%%%%%%%%%%%%%%%%%%%%%%%%%%%%%%%

\newcommand{\N}{\mathbb{N}}
\newcommand{\Q}{\mathbb{Q}}
\newcommand{\R}{\mathbb{R}}
\newcommand{\Z}{\mathbb{Z}}

\newcommand{\round}[1]{\lfloor #1 \rceil}
\newcommand{\ceil}[1]{\lceil #1 \rceil}
\newcommand{\floor}[1]{\lfloor #1 \rfloor}
\newcommand{\angles}[1]{\langle #1 \rangle}
\newcommand{\parens}[1]{( #1 )}
\newcommand{\bracks}[1]{[ #1 ]}
\renewcommand{\bra}[1]{\langle#1\rvert}
\renewcommand{\braket}[2]{\langle #1 \mid #2 \rangle}
\renewcommand{\ket}[1]{\lvert#1\rangle}
\newcommand{\set}[1]{\{ #1 \}}
\newcommand{\bit}{\{0,1\}}

%%%  Font %%%%%%%%%%%%%%%%%%%%%%%%%%%%%%%%%%%%%%%%%%%%%%%%%%%%%%

\newcommand{\cA}{{\mathcal A}}
\newcommand{\cB}{{\mathcal B}}
\newcommand{\cC}{{\mathcal C}}
\newcommand{\cD}{{\mathcal D}}
\newcommand{\cE}{{\mathcal E}}
\newcommand{\cF}{{\mathcal F}}
\newcommand{\cG}{{\mathcal G}}
\newcommand{\cH}{{\mathcal H}}
\newcommand{\cI}{{\mathcal I}}
\newcommand{\cJ}{{\mathcal J}}
\newcommand{\cK}{{\mathcal K}}
\newcommand{\cL}{{\mathcal L}}
\newcommand{\cM}{{\mathcal M}}
\newcommand{\cN}{{\mathcal N}}
\newcommand{\cO}{{\mathcal O}}
\newcommand{\cP}{{\mathcal P}}
\newcommand{\cQ}{{\mathcal Q}}
\newcommand{\cR}{{\mathcal R}}
\newcommand{\cS}{{\mathcal S}}
\newcommand{\cT}{{\mathcal T}}
\newcommand{\cU}{{\mathcal U}}
\newcommand{\cV}{{\mathcal V}}
\newcommand{\cW}{{\mathcal W}}
\newcommand{\cX}{{\mathcal X}}
\newcommand{\cY}{{\mathcal Y}}
\newcommand{\cZ}{{\mathcal Z}}

\newcommand{\bfA}{\mathbf{A}}
\newcommand{\bfB}{\mathbf{B}}
\newcommand{\bfC}{\mathbf{C}}
\newcommand{\bfD}{\mathbf{D}}
\newcommand{\bfE}{\mathbf{E}}
\newcommand{\bfF}{\mathbf{F}}
\newcommand{\bfG}{\mathbf{G}}
\newcommand{\bfH}{\mathbf{H}}
\newcommand{\bfI}{\mathbf{I}}
\newcommand{\bfJ}{\mathbf{J}}
\newcommand{\bfK}{\mathbf{K}}
\newcommand{\bfL}{\mathbf{L}}
\newcommand{\bfM}{\mathbf{M}}
\newcommand{\bfN}{\mathbf{N}}
\newcommand{\bfO}{\mathbf{O}}
\newcommand{\bfP}{\mathbf{P}}
\newcommand{\bfQ}{\mathbf{Q}}
\newcommand{\bfR}{\mathbf{R}}
\newcommand{\bfS}{\mathbf{S}}
\newcommand{\bfT}{\mathbf{T}}
\newcommand{\bfU}{\mathbf{U}}
\newcommand{\bfV}{\mathbf{V}}
\newcommand{\bfW}{\mathbf{W}}
\newcommand{\bfX}{\mathbf{X}}
\newcommand{\bfY}{\mathbf{Y}}
\newcommand{\bfZ}{\mathbf{Z}}

\newcommand{\bfa}{\mathbf{a}}
\newcommand{\bfb}{\mathbf{b}}
\newcommand{\bfc}{\mathbf{c}}
\newcommand{\bfd}{\mathbf{d}}
\newcommand{\bfe}{\mathbf{e}}
\newcommand{\bff}{\mathbf{f}}
\newcommand{\bfg}{\mathbf{g}}
\newcommand{\bfh}{\mathbf{h}}
\newcommand{\bfi}{\mathbf{i}}
\newcommand{\bfj}{\mathbf{j}}
\newcommand{\bfk}{\mathbf{k}}
\newcommand{\bfl}{\mathbf{l}}
\newcommand{\bfm}{\mathbf{m}}
\newcommand{\bfn}{\mathbf{n}}
\newcommand{\bfo}{\mathbf{o}}
\newcommand{\bfp}{\mathbf{p}}
\newcommand{\bfq}{\mathbf{q}}
\newcommand{\bfr}{\mathbf{r}}
\newcommand{\bfs}{\mathbf{s}}
\newcommand{\bft}{\mathbf{t}}
\newcommand{\bfu}{\mathbf{u}}
\newcommand{\bfv}{\mathbf{v}}
\newcommand{\bfw}{\mathbf{w}}
\newcommand{\bfx}{\mathbf{x}}
\newcommand{\bfy}{\mathbf{y}}
\newcommand{\bfz}{\mathbf{z}}

\newcommand{\sfA}{\mathsf{A}}
\newcommand{\sfB}{\mathsf{B}}
\newcommand{\sfC}{\mathsf{C}}
\newcommand{\sfD}{\mathsf{D}}
\newcommand{\sfE}{\mathsf{E}}
\newcommand{\sfF}{\mathsf{F}}
\newcommand{\sfG}{\mathsf{G}}
\newcommand{\sfH}{\mathsf{H}}
\newcommand{\sfI}{\mathsf{I}}
\newcommand{\sfJ}{\mathsf{J}}
\newcommand{\sfK}{\mathsf{K}}
\newcommand{\sfL}{\mathsf{L}}
\newcommand{\sfM}{\mathsf{M}}
\newcommand{\sfN}{\mathsf{N}}
\newcommand{\sfO}{\mathsf{O}}
\newcommand{\sfP}{\mathsf{P}}
\newcommand{\sfQ}{\mathsf{Q}}
\newcommand{\sfR}{\mathsf{R}}
\newcommand{\sfS}{\mathsf{S}}
\newcommand{\sfT}{\mathsf{T}}
\newcommand{\sfU}{\mathsf{U}}
\newcommand{\sfV}{\mathsf{V}}
\newcommand{\sfW}{\mathsf{W}}
\newcommand{\sfX}{\mathsf{X}}
\newcommand{\sfY}{\mathsf{Y}}
\newcommand{\sfZ}{\mathsf{Z}}

\newcommand{\sfa}{\mathsf{a}}
\newcommand{\sfb}{\mathsf{b}}
\newcommand{\sfc}{\mathsf{c}}
\newcommand{\sfd}{\mathsf{d}}
\newcommand{\sfe}{\mathsf{e}}
\newcommand{\sff}{\mathsf{f}}
\newcommand{\sfg}{\mathsf{g}}
\newcommand{\sfh}{\mathsf{h}}
\newcommand{\sfi}{\mathsf{i}}
\newcommand{\sfj}{\mathsf{j}}
\newcommand{\sfk}{\mathsf{k}}
\newcommand{\sfl}{\mathsf{l}}
\newcommand{\sfm}{\mathsf{m}}
\newcommand{\sfn}{\mathsf{n}}
\newcommand{\sfo}{\mathsf{o}}
\newcommand{\sfp}{\mathsf{p}}
\newcommand{\sfq}{\mathsf{q}}
\newcommand{\sfr}{\mathsf{r}}
\newcommand{\sfs}{\mathsf{s}}
\newcommand{\sft}{\mathsf{t}}
\newcommand{\sfu}{\mathsf{u}}
\newcommand{\sfv}{\mathsf{v}}
\newcommand{\sfw}{\mathsf{w}}
\newcommand{\sfx}{\mathsf{x}}
\newcommand{\sfy}{\mathsf{y}}
\newcommand{\sfz}{\mathsf{z}}
\newcommand{\eps}{\varepsilon}
\newcommand{\veps}{\varepsilon}
\newcommand{\vphi}{\varphi}
\newcommand{\vsigma}{\varsigma}
\newcommand{\vrho}{\varrho}
\newcommand{\vpi}{\varpi}

%%%%%%%%%%%%%%%% Operators %%%%%%

\newcommand{\secp}{{\lambda}}
\newcommand{\poly}{\mathsf{poly}}
\newcommand{\polylog}{\operatorname{polylog}}
\newcommand{\loglog}{\operatorname{loglog}}
\newcommand{\GF}{\operatorname{GF}}
\newcommand{\Exp}{\operatorname*{\mathbb{E}}}
\newcommand{\Ex}{\Exp}
\newcommand{\View}[1]{\mathsf{View}\angles{#1}}
\newcommand{\Var}{\operatorname*{Var}}
\newcommand{\ShH}{\operatorname{H}}
\newcommand{\maxH}{\operatorname{H_0}}
\newcommand{\minH}{\operatorname{H_{\infty}}}
\newcommand{\Enc}{\operatorname{Enc}}
\newcommand{\Setup}{\operatorname{Setup}}
\newcommand{\KGen}{\operatorname{KeyGen}}
\newcommand{\StGen}{\operatorname{StateGen}}
\newcommand{\Dec}{\operatorname{Dec}}
\newcommand{\Sign}{\operatorname{Sign}}
\newcommand{\Samp}{\mathsf{Samp}}
\newcommand{\Ver}{\mathsf{Ver}}
\newcommand{\Gen}{\operatorname{Gen}}
\newcommand{\negl}{\mathsf{negl}}
\newcommand{\Supp}{\operatorname{Supp}}
\newcommand{\maj}{\operatorname*{maj}}
\newcommand{\argmax}{\operatorname*{arg\,max}}
\newcommand{\Image}{\operatorname{Im}}

\newcommand{\TD}{\mathsf{TD}}

%%% Complexity Classes %%%%%%%%%%%%%%%%%%%%%%%%%%%%%%%%%%%%%%%%%%

\newcommand{\class}[1]{\ensuremath{\mathbf{#1}}}
\newcommand{\coclass}[1]{\class{co\mbox{-}#1}} %  their complements
\newcommand{\prclass}[1]{\class{Pr#1}}
\newcommand{\PPT}{\class{PPT}}
\newcommand{\BPP}{\class{BPP}}
\newcommand{\NC}{\class{NC}}
\newcommand{\AC}{\class{AC}}
\newcommand{\NP}{\ensuremath{\class{NP}}}
\newcommand{\coNP}{\ensuremath{\coclass{NP}}}
\newcommand{\RP}{\class{RP}}
\newcommand{\coRP}{\coclass{RP}}
\newcommand{\ZPP}{\class{ZPP}}
\newcommand{\IP}{\class{IP}}
\newcommand{\coIP}{\coclass{IP}}
\newcommand{\AM}{\class{AM}}
\newcommand{\coAM}{\class{coAM}}
\newcommand{\MA}{\class{MA}}
\renewcommand{\P}{\class{P}}
\newcommand\prBPP{\prclass{BPP}}
\newcommand\prRP{\prclass{RP}}
\newcommand{\Ppoly}{\class{P/poly}}
\newcommand{\DTIME}{\class{DTIME}}
\newcommand{\ETIME}{\class{E}}
\newcommand{\BPTIME}{\class{BPTIME}}
\newcommand{\AMTIME}{\class{AMTIME}}
\newcommand{\coAMTIME}{\class{coAMTIME}}
\newcommand{\NTIME}{\class{NTIME}}
\newcommand{\coNTIME}{\class{coNTIME}}
\newcommand{\AMcoAM}{\class{AM} \cap \class{coAM}}
\newcommand{\NPcoNP}{\class{N} \cap \class{coN}}
\newcommand{\EXP}{\class{EXP}}
\newcommand{\SUBEXP}{\class{SUBEXP}}
\newcommand{\qP}{\class{\wt{P}}}
\newcommand{\PH}{\class{PH}}
\newcommand{\NEXP}{\class{NEXP}}
\newcommand{\PSPACE}{\class{PSPACE}}
\newcommand{\NE}{\class{NE}}
\newcommand{\coNE}{\class{coNE}}
\newcommand{\SharpP}{\class{\#P}}
\newcommand{\prSZK}{\prclass{SZK}}
\newcommand{\BPPSZK}{\BPP^{\SZK}}
\newcommand{\TFAM}{\class{TFAM}}
\newcommand{\RTFAM}{{\R\text{-}\class{TUAM}}}
% relations decidable by a randomized algorithm
\newcommand{\FBPP}{\class{BPP}}
% relations decidable by a RTFAM oracle
\newcommand{\BPPR}{\FBPP^{\RTFAM}}
\newcommand{\TFNP}{\mathbf{TFNP}}
\newcommand{\MIP}{\mathbf{MIP}}
% zero-knowledge complexity classes
\newcommand{\ZK}{\class{ZK}}
\newcommand{\SZK}{\class{SZK}}
\newcommand{\SZKP}{\class{SZKP}}
\newcommand{\SZKA}{\class{SZKA}}
\newcommand{\CZKP}{\class{CZKP}}
\newcommand{\CZKA}{\class{CZKA}}
\newcommand{\coSZKP}{\coclass{SZKP}}
\newcommand{\coSZKA}{\coclass{SZKA}}
\newcommand{\coCZKP}{\coclass{CZKP}}
\newcommand{\coCZKA}{\coclass{CZKA}}
\newcommand{\NISZKP}{\class{NI\mbox{-}SZKP}}
\newcommand{\phcc}{\ensuremath \PH^{\mathrm{cc}}}
\newcommand{\pspacecc}{\ensuremath \PSPACE^{\mathrm{cc}}}
\newcommand{\sigcc}[1]{\ensuremath \class{\Sigma}^{\mathrm{cc}}_{#1}}
\newcommand{\aco}{\ensuremath \AC^0}
\newcommand{\ncone}{\ensuremath \NC^1}

\newcommand{\shadowgen}{\mathsf{ShadowGen}}
\newcommand{\E}{\mathop{\mathbb{E}}}
\newcommand{\commit}{{\mathsf{Commit}}}
\newcommand{\reveal}{{\mathsf{Reveal}}}
\newcommand{\aux}{{\mathsf{aux}}}
\newcommand{\com}{{\mathsf{com}}}
\newcommand{\swapt}{{\mathsf{SwapTest}}}
\newcommand{\accept}{{\mathsf{accept}}}
\newcommand{\C}{\mathbb{C}}
\newcommand{\Haar}{\mathcal{H}}
\newcommand{\sym}{\mathsf{Sym}}
\renewcommand{\ketbra}[2]{\ket{#1}\bra{#2}}
\renewcommand{\braket}[2]{\langle #1 | #2 \rangle}
\newcommand{\inner}[2]{\langle #1, #2 \rangle}
\newcommand{\ugets}{\xleftarrow{\$}}
\newcommand{\ol}[1]{\overline{#1}}

\newcommand{\gap}{k} % k = gap = 0.01n
\newcommand{\keylength}{m} % m = |y| = 0.99n

\newcommand{\supp}{\mathsf{supp}}
\newcommand{\freq}[2]{\mathsf{freq}_{#2} (#1)}
\newcommand{\hamming}{\mathsf{size}}
\newcommand{\type}{\mathsf{type}}
\newcommand{\bintype}{\mathsf{bintype}}
\newcommand{\setT}{\mathsf{mset}}
\newcommand{\rep}{\mathsf{rep}}
\newcommand{\goodpre}[4]{\mathcal{I}^{(#3)}_{#1,#2} (#4)}

\newcommand{\good}{\mathsf{Good}}
\newcommand{\bad}{\mathsf{Bad}}

\newcommand{\alice}{A}
\newcommand{\bob}{B}
\newcommand{\charlie}{C}
\newcommand{\eve}{E}

\newcommand{\wt}{\widetilde}
\newcommand{\wh}{\widehat}

\newcommand{\Des}{\mathsf{Des}}
\newcommand{\Com}{\mathsf{Com}}
\newcommand{\KA}{\mathsf{KA}}

\newcommand{\puz}{\mathsf{Puz}}
\newcommand{\sol}{\mathsf{Sol}}

\newcommand{\Real}{\mathsf{Real}}
\newcommand{\Sim}{\mathsf{Sim}}
\newcommand{\Hyb}{\mathsf{Hyb}}

\newcommand{\dtv}{\mathsf{d_{TV}}}
\newcommand{\ch}{\mathbf{ch}}

\newcommand{\plain}{\mathsf{plain}}
\newcommand{\adv}{\mathsf{adv}}
\newcommand{\complete}{\mathsf{Complete}}
\newcommand{\tomography}{\mathsf{Tomography}}
\newcommand{\LOCC}{\mathsf{LOCC}}
\begin{abstract}
\noindent Common random string model is a popular model in classical cryptography. We study a quantum analogue of this model called the common Haar state (CHS) model. In this model, every party participating in the cryptographic system receives many copies of one or more i.i.d Haar random states. 
\par We study feasibility and limitations of cryptographic primitives in this model and its variants:
\begin{itemize}
    \item We present a construction of pseudorandom function-like states with security against computationally unbounded adversaries, as long as the adversaries only receive (a priori) bounded number of copies. By suitably instantiating the CHS model, we obtain a new approach to construct pseudorandom function-like states in the plain model. 
    \item We present separations between pseudorandom function-like states (with super-logarithmic length) and quantum cryptographic primitives, such as interactive key agreement and bit commitment, with classical communication. To show these separations, we prove new results on the indistinguishability of identical versus independent Haar states against LOCC (local operations, classical communication) adversaries.  
\end{itemize}
\end{abstract}
\newpage 
\tableofcontents 
\newpage

\section{Introduction} 
In classical cryptography, the common random string and the common reference string models were primarily introduced to tackle cryptographic tasks that were impossible to achieve in the plain model. In the common reference string model, there is a trusted setup who produces a string that every party has access to. In the common random string model, the common string available to all the parties is sampled uniformly at random. Due to the lack of structure required from the common random string model, it is in general the more desirable model of the two. There have been many constructions proposed over the years in these two models, including non-interactive zero-knowledge~\cite{BFM19}, secure computation with universal composition~\cite{CF01,CLOS02} and two-round secure computation~\cite{GS19,BL19}. 
\par It is a worthy pursuit to study similar models for quantum cryptographic protocols. In the quantum world, there is an option to define models that are intrinsically quantum in nature. For instance, we could define a model wherein a trusted setup produces a quantum state and every party participating in the cryptographic system receives one or more copies of this quantum state. Indeed, two works by Morimae, Nehoran and Yamakawa~\cite{MNY23} and Qian~\cite{Qian23} consider this model, termed as the {\em common reference quantum state model} (CRQS). They proposed a construction of unconditionally secure commitments in this model.  Quantum commitments is a foundational notion in quantum cryptography. In recent years, quantum commitments have been extensively studied~\cite{AQY21,MY21,AGQY22,MY23onewayness,BCQ23,Bra23} due to its implication to secure computation~\cite{BartusekCKM21a,GLSV21}. The fact that information-theoretically secure commitments are impossible in the plain model~\cite{LoChau97,May97,CLM23} renders the contributions of~\cite{MNY23,Qian23} particularly interesting. 

\paragraph{Common Haar State Model.} While CRQS is a quantum analogue of the common reference string model, in a similar vein, we can ask if there is a quantum analogue of the common random string model. We consider a novel model called the {\em common Haar state model} (CHS). In this model, every party in the system (including the adversary) receives many copies of many i.i.d Haar states. We believe that the CHS model is more pragmatic than the CRQS model owing to the fact that we do not require any structure from the common public state. This raises the possibility of avoiding a trusted setup altogether and instead we could rely upon naturally occuring physical processes to obtain the Haar states. This model was also recently introduced in an independent and concurrent recent work\footnote{We refer the reader to~\Cref{sec:cgg24} for a comparison with CCS. } by Chen, Coladangelo and Sattath~\cite{chen2024power} (henceforth, referred to as CCS). 
\par There are three reasons to study this model. Firstly, this model allows us to bypass impossibility results in the plain model. For instance, as we will see later, primitives that require computational assumptions in the plain model, can instead be designed with information-theoretic security in the CHS model. Second, perhaps a less intuitive reason, is that the constructions proposed in this model can, in some cases, be adopted to obtain constructions in the plain model by instantiating the Haar states either using state designs or pseudorandom state generators (PRSGs)~\cite{JLS18}. This leads to a modular approach of designing cryptographic primitives from PRS: first design the primitive in the CHS model and then instantiate the common Haar state using PRS. Finally, this model can be leveraged to demonstrate separations between different quantum cryptographic primitives. 

\subsection{Our Results}
\noindent We explore both feasibility results and black-box separations in the CHS model. 

\subsubsection{Feasibility Results}
\label{sec:intro:feasibility}

\paragraph{Pseudorandom Function-Like States with Statistical Security.} We study the possibility of designing pseudorandom function-like state generators (PRFSGs), introduced by Ananth, Qian and Yuen~\cite{AQY21}, with statistical security in the CHS model. Roughly speaking, a PRFSG is an efficient keyed quantum circuit that can be used to produce many pseudorandom states. We refer the reader to~\Cref{sec:qpseudorandomness} for a detailed discussion on the different notions of pseudorandomness in the quantum world. 
\par We are interested in designing $(\lambda,m,n,t)$-PR\underline{F}SGs in the setting when $n \geq \lambda$ and $m=\Omega(\log(\lambda))$, where $\lambda$ is the key length, $m$ is the input length, $n$ is the output length (and also the number of the qubits in the common Haar state) and $t$ is the maximum number of queries that can be requested by the adversary. However, in the CHS model, we can in fact achieve statistical security. 
\par We show the following. 

\newcommand{\secparam}{\lambda}
\begin{theorem}[Informal]
\label{thm:intro:prsgs}
There is a statistically secure $(\secparam,m,n,\ell)$-PR\underline{F}SG in the CHS model, for $m=\secparam^c$, $n \geq \secparam$ and $\ell=O\left( \frac{\secparam^{1-c}}{\log(\secparam)^{1+\varepsilon}} \right)$, for any constant $\varepsilon > 0$ and for all $c \in [0,1)$. 
\end{theorem}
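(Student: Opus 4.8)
The plan is to give a construction in which the $\secparam$-bit key buys only a short, per-input ``mask'' that is applied to a single fresh copy of the common Haar state, and then to reduce security to a purely information-theoretic statement about Haar-random states, which is the technical heart. \emph{Construction.} Let $\ket{\vartheta}$ be the common Haar state on $n$ qubits, and earmark $m$ of its qubits as a ``mask register.'' Use the key $k$ as the seed of an $\ell$-wise $\delta$-dependent function $g_k\colon\bit^{m}\to\bit^{m}$ with $\delta=\negl(\secparam)$; such families exist with seed length $O\!\big(\ell\cdot(m+\log(1/\delta))\big)=O\!\big(\ell\cdot\secparam^{c}\polylog\secparam\big)$, which is $\le\secparam$ exactly in the claimed regime $\ell=O\!\big(\secparam^{1-c}/\log(\secparam)^{1+\eps}\big)$. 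Define $\StGen(k,x)$ to output $(X^{g_k(x)}\otimes I)\ket{\vartheta}$, where $X^{g_k(x)}$ is the bit-flip Pauli on the mask register; since one fresh copy of $\ket{\vartheta}$ is consumed per query, consistency across repeated queries to the same input is automatic.

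\emph{Step 1: replace the key by a random function.} Since $g_k$ is $\ell$-wise $\delta$-dependent, the responses to any \emph{adaptively} chosen sequence of at most $\ell$ queries are, by the standard lazy-sampling argument, within statistical distance $\negl(\secparam)$ of what a truly random function $G\colon\bit^{m}\to\bit^{m}$ would produce — crucially, no union bound over transcripts is incurred. So it suffices to prove security when every distinct queried input $x$ is answered with copies of $(X^{G(x)}\otimes I)\ket{\vartheta}$ for an \emph{independent uniform} $G(x)$. \emph{Step 2: hybrid over the $\le\ell$ distinct queries.} Walk through hybrids in which, one distinct input at a time, the answer is switched from a fresh Haar state to $(X^{G(x)}\otimes I)\ket{\vartheta}$ with a fresh uniform $G(x)$. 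Two adjacent hybrids differ only in the answer to one input; conditioned on the rest of the adversary's view (the $t=\poly(\secparam)$ copies of $\ket{\vartheta}$ that a party receives in the CHS model, plus the $\le\ell-1$ already-switched masked copies of $\ket{\vartheta}$), the obligation reduces to the following Haar-indistinguishability lemma: given $T$ copies of an $n$-qubit Haar state $\ket{\vartheta}$ together with a few masked copies $(X^{b_j}\otimes I)\ket{\vartheta}$, one additional masked copy $(X^{b}\otimes I)\ket{\vartheta}$ with fresh uniform $b\in\bit^{m}$ is within trace distance $\poly(T)\cdot 2^{-\Omega(m)}$ of a fresh $n$-qubit Haar state. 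With $T=t+\ell$ this telescopes to total distance $\le\ell\cdot\poly(t+\ell)\cdot 2^{-\Omega(m)}=\negl(\secparam)$, because $m=\secparam^{c}$ is super-logarithmic.

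\emph{Step 3: the lemma.} I would compute the two density matrices in closed form by Haar-averaging, using $\E_{\vartheta}\ketbra{\vartheta}{\vartheta}^{\otimes s}=\Pi^{(s)}_{\sym}/\binom{2^{n}+s-1}{s}$ and the expansion of $\Pi_{\sym}$ into permutation operators, and bound $\|\rho_{\mathrm{real}}-\rho_{\mathrm{ideal}}\|_{1}$ via $\tr[(\rho_{\mathrm{real}}-\rho_{\mathrm{ideal}})^{2}]$ (Cauchy--Schwarz against the support dimension). The key point is that averaging $X^{b}$ over the $2^{m}$ masks collapses the mask sub-registers of the challenge copy against the others to the projector $\tfrac1{2^{m}}\sum_{b}X^{b}\otimes X^{b}$, which has rank only $2^{m}$; this is what turns the otherwise $\Theta(1)$ distance between ``$T+1$ symmetric copies'' and ``$T$ symmetric copies tensor a maximally mixed register'' into the claimed $\poly(T)\cdot 2^{-\Omega(m)}$. (Boundedness of $T$ is essential: with unboundedly many copies of $\ket{\vartheta}$ one learns $\ket{\vartheta}$, and then $(X^{b}\otimes I)\ket{\vartheta}$ lies in a \emph{known} $2^{m}$-dimensional subspace, so projecting onto it distinguishes with advantage $1-2^{m-n}$ — which is also why the bounded-copies restriction cannot be dropped.)

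\emph{Main obstacle.} Step 3 is where the real work lies: handling the several simultaneously present masked copies of the \emph{same} $\ket{\vartheta}$ (so the ``reference'' copies are themselves correlated with the challenge in a key-dependent way), squeezing out the $2^{-\Omega(m)}$ rather than merely $2^{-\Omega(1)}$ decay, and tracking the polynomial dependence on $T=t+\ell$ so the final bound survives the hybrid. This is exactly the flavor of identical-versus-independent Haar-state estimate the paper develops for its LOCC separations, here specialized to the Pauli-masked setting; I expect its proof to be the most delicate part, with Steps~1--2 being essentially bookkeeping once the lemma is in hand.
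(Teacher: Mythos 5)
Your construction has a concrete, fatal flaw in the claimed parameter regime, and it is exposed by the paper's own lower-bound technique. You mask only $m=\secparam^{c}$ qubits with an $m$-bit mask $g_k(x)$, so for any \emph{fixed} input $x$ the output state carries at most $m$ bits of key-dependent randomness. Now let the adversary query the same input $x$ all $\ell$ times (repeated indices are explicitly allowed in the selective-security definition). The real state
$\E_{\vartheta,k}\bigl[\bigl((X^{g_k(x)}\otimes I)\ketbra{\vartheta}{\vartheta}(X^{g_k(x)}\otimes I)\bigr)^{\otimes \ell}\otimes\ketbra{\vartheta}{\vartheta}^{\otimes t}\bigr]$
is a mixture over at most $2^{m}$ mask values of states supported on the symmetric subspace, so its rank is at most $2^{m}\binom{2^n+t+\ell-1}{t+\ell}$, whereas the ideal state $\E_{\varphi}\ketbra{\varphi}{\varphi}^{\otimes\ell}\otimes\E_{\vartheta}\ketbra{\vartheta}{\vartheta}^{\otimes t}$ has rank $\binom{2^n+\ell-1}{\ell}\binom{2^n+t-1}{t}$. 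Exactly as in the paper's optimality argument (Theorem~\ref{thm:prs-imp}), projecting onto the eigenspace of the real state accepts it with probability $1$ and accepts the ideal state with probability at most $2^{m}/\binom{t+\ell}{\ell}\cdot\prod_{i}(1+\tfrac{t}{2^n+i})$. Taking $t=\secparam^{3}$ and $\ell=\Theta(\secparam^{1-c}/\log(\secparam)^{1+\eps})$ gives $\binom{t+\ell}{\ell}\geq 2^{2\ell\log\secparam}=2^{\Omega(\secparam^{1-c}/\log(\secparam)^{\eps})}$, so whenever $c<1/2$ the ratio is $2^{\secparam^{c}-\Omega(\secparam^{1-c}/\log(\secparam)^{\eps})}=\negl(\secparam)$ and the distinguisher wins with advantage $1-\negl(\secparam)$. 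In short: $\ell$-copy security forces the per-input mask entropy to be at least about $\log\binom{t+\ell}{\ell}\approx\ell\log(t/\ell)=\Theta(\secparam^{1-c}/\log(\secparam)^{\eps})$, and your $m=\secparam^{c}$ bits fall short for every $c<1/2$ (and the bound $2^{-\Omega(m)}$ is vacuous outright at $c=0$, which the theorem includes). The paper avoids this by making the mask $\secparam'=\secparam^{1-c}$ bits long, derived GGM-style as $\bigoplus_{i=1}^{m}k_i^{x_i}$ from $2m$ independent $\secparam'$-bit subkeys, so the total key is still $2\secparam$ bits while each output sees a $\secparam^{1-c}$-bit mask.

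Two further points, even if you repair the mask length. First, your Step~3 lemma is doing all the work and its sketch does not survive scrutiny: unlike the random Pauli-$Z$ twirl, which \emph{annihilates} every cross term $\ketbra{\bfv}{\sigma(\bfv)}$ with $\sigma$ not fixing the masked block (given an $\ell$-fold prefix-collision-free type), the random $X^{b}$ twirl sends each such cross term to $2^{-m}\sum_{b}\ketbra{\cdot\oplus b}{\cdot\oplus b}$, a rank-$2^{m}$ partial isometry of trace norm $1$. So no individual term is killed, and the $2^{-\Omega(m)}$ decay must come from cancellations among exponentially many spread-out terms; your Cauchy--Schwarz-against-rank plan gives no control over how the prefactor grows with $T$ and with the number of coexisting masked blocks, which is precisely where a $\binom{t+\ell}{\ell}$-type blow-up (forcing mask length $\Omega(\ell\log(t+\ell))$) would appear. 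Second, the hybrid in Step~2 must switch all $\ell_x$ copies of a repeated input at once (they share one mask), so the per-step lemma is not "one additional masked copy" but "one additional block of identically masked copies," which is a strictly harder statement than the one you sketch. The paper's route --- $Z$-masks of length $\secparam^{1-c}$, type-state decomposition, and the $\ell$-fold prefix-collision-free condition with its $(\ell+t)^{2\ell}/2^{\secparam^{1-c}}$ union bound --- is engineered exactly to make all of these issues disappear in one stroke.
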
 

\noindent CCS is the only other work that has studied pseudorandomness in the CHS model. There are a few advantages of our result over CCS:
\begin{itemize}
    \item Our theorem subsumes and generalizes the result of CCS who showed $(\secparam,n,t)$-PRSGs exists in their model, where the output length is larger than the key length, i.e., $n > \secparam$ and moreover, when $t=1$ with $t$ being the number of copies of the PRS state given to the adversary.
    \item Our construction, when restricted to the case of PRSGs, is slightly simpler than CCS: in CCS, on a subset of qubits of the Haar state, a random Pauli operator is applied whereas in our case a random Pauli $Z$ operator is applied. Our construction of PR\underline{F}SG uses the seminal Goldreich-Goldwasser-Micali approach~\cite{GGM86} to go from one-query security to many-query security. 
    \item They propose novel sophisticated tools in their analysis whereas our analysis is arguably more elementary using well known facts about symmetric subspaces.
    \item Finally, we can achieve arbitrary stretch whereas it is unclear whether this is also achieved by CCS. 
\end{itemize}

\noindent As a side contribution, the proof of our PRSG construction also simplifies the proof of the quantum public-key construction of Coladangelo~\cite{Col23}; this is due to the fact the core lemma proven in~\cite{Col23} is implied by the above theorem. 
\par Interestingly, the above theorem has implications for computationally secure pseudorandomness in the plain model. Specifically, we obtain the following corollary by instantiating the CHS model using stretch PRSGs: 

\begin{corollary}
{\em Assuming $(\secparam,n,\ell)$-PRSGs, there exists $(\secparam',m,n,t)$-PR\underline{F}SGs, where $n > \secparam' > \secparam$, $m=\secparam^c$ and $\ell=O\left( \frac{\secparam^{1-c}}{\log(\secparam)^{1+\varepsilon}} \right)$, for any constant $\varepsilon > 0$ and $c \in [0,1)$.} 
\end{corollary}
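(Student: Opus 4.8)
The plan is to combine the construction of \Cref{thm:intro:prsgs} with the standard ``instantiate the common state by a PRS'' step, and argue security by a two-step hybrid. Concretely, let $\StGen$ be a $(\secparam,n,\ell)$-PRSG with stretch, so $\StGen(k_0)$ produces an $n$-qubit state $\ket{\psi_{k_0}}$ for a uniformly random seed $k_0 \in \bit^{\secparam}$, and let $\Gen$ be the CHS-model $(\secparam_1,m,n,\ell)$-PR\underline{F}SG promised by \Cref{thm:intro:prsgs}, whose evaluation on key $k_1$ and input $x$ uses copies of an $n$-qubit common Haar state. Define a plain-model PR\underline{F}SG with key $k=(k_0,k_1)$: on input $x$ the evaluator first runs $\StGen(k_0)$ the requisite number of times to produce fresh copies of $\ket{\psi_{k_0}}$, then runs $\Gen$'s evaluation with key $k_1$ and input $x$, feeding these copies in place of the copies of the common Haar state. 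The key length is $\secparam' = \secparam+\secparam_1 > \secparam$, the input and output lengths are $m$ and $n$, and, choosing $\secparam_1$ small relative to $n$ (or first amplifying the stretch of $\StGen$), we can arrange $n > \secparam'$ as in the hypothesis; the evaluator is efficient since both $\StGen$ and $\Gen$ are.

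For security, fix a query-bounded adversary $\cA$ against this construction that requests $t$ states, and pass through two hybrids. In $\Hyb_0$ (the real PR\underline{F}SG experiment) the challenger samples $(k_0,k_1)$ and answers the $i$-th query $x_i$ either by evaluating $\Gen$ with key $k_1$ on copies of $\ket{\psi_{k_0}}$ (when the challenge bit is $0$) or by returning a fresh Haar state (when it is $1$). In $\Hyb_1$ we replace every copy of $\ket{\psi_{k_0}}$ used by the challenger with a copy of a single truly Haar-random $n$-qubit state $\ket{\vartheta}$; this is exactly the CHS-model PR\underline{F}SG experiment for $\Gen$, in which $\cA$ simply ignores the copies of the common Haar state that the model hands it. The transition $\Hyb_0 \to \Hyb_1$ is justified by PRS security: a distinguisher receives $T$ copies of either $\ket{\psi_{k_0}}$ or $\ket{\vartheta}$, samples $k_1$ itself, and perfectly emulates the $\Hyb_0$/$\Hyb_1$ challenger using those copies, so $\cA$'s advantage changes by at most a negligible amount provided $T \le \ell$, where $T$ is the total number of copies of the (pseudorandom) state consumed over the $t$ queries. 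The second step, $\Hyb_1$, is controlled directly by the statistical security of \Cref{thm:intro:prsgs}. Chaining the two bounds shows $\cA$'s distinguishing advantage is negligible, and since we passed through a computational assumption the guarantee is computational — exactly as claimed.

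The main thing to pin down — and essentially the only obstacle — is the copy accounting: the PRS reduction must supply the honest evaluator with a fresh copy of $\ket{\psi_{k_0}}$ per query (or per GGM-tree node in the $\Gen$ evaluation), so the number $T$ of pseudorandom-state copies in play must stay within the PRS's copy-security parameter $\ell = O\!\left(\secparam^{1-c}/\log(\secparam)^{1+\veps}\right)$, which is what forces the stated interplay among $t$, $\ell$ and the input length $m = \secparam^{c}$. One also has to observe that the $\Gen$ evaluation accesses the common Haar state only through copies (never its classical description), so that the reduction — which holds only copies — can faithfully run it; this is immediate from the construction underlying \Cref{thm:intro:prsgs}. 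The remaining bookkeeping (verifying $n > \secparam' > \secparam$ and that the composed evaluation is a legal efficient PR\underline{F}SG circuit) is routine.
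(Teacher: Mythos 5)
Your proposal takes essentially the same route as the paper, which defines $G_{PRFS}(K,x) := G(k_1,x,G_{PRS}(k_2))$ with concatenated keys and justifies security by exactly your two-step hybrid: first swap the copies of the PRS state for copies of a true Haar state via PRS security, then invoke the statistical selective security of the CHS-model PR\underline{F}SG. The only slight inaccuracy is your parenthetical about one copy ``per GGM-tree node'': the CHS-model generator consumes a single copy of the common state per evaluation (the GGM structure lives entirely in how the Pauli-$Z$ keys are XORed), so the copy accounting is just $T = \ell + t$ and goes through as you intend.
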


\noindent Prior to our work, stretch PRFSGs for super-logarithmic input length, even in the bounded query setting, was only known from one-way functions~\cite{AQY21}. This complements the work of~\cite{AQY21} who showed a construction of PRFSGs for logarithmic input length from PRSGs.  

\par Interestingly, the state generators in both works (CCS and ours) only consume one copy of a single Haar state. In this special case, it is interesting to understand whether we can extend our result to the setting when the adversary receives $ \frac{\secparam}{\log(\secparam)}$ copies or more. We show this is not possible. 

\begin{theorem}[Informal]
 There does not exist a secure $(\secparam,m,n,\ell)$-PR\underline{F}SG, for any $m \geq 1$, in the CHS model, where $n=\omega(\log(\secparam))$ and $\ell=\Omega\left( \frac{\secparam}{\log(\secparam)} \right)$.    
\end{theorem}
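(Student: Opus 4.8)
The plan is to exhibit, for $n=\omega(\log\lambda)$ and $\ell=\Omega(\lambda/\log\lambda)$, an \emph{unbounded} distinguisher using only the allowed resources that wins the PRFSG security game with advantage $1-\negl$; since statistical security quantifies over all copy-bounded distinguishers (not just efficient ones), this suffices. Fix the input $x_0=0^m$. The distinguisher queries $x_0$ and asks for $\ell$ copies of the response, and it additionally holds $p=p(\lambda)$ copies of the common Haar state $\ket{\vartheta}$ on $n$ qubits, where $p$ is a fixed polynomial chosen below (we work in the CHS model instantiated with at least this many copies of the common state). In the real world the $\ell$ responses are copies of $\StGen(k,x_0)$ run on $\ket{\vartheta}$ for a uniform key $k$; in the ideal world, $\ell$ copies of a fresh $n$-qubit Haar state $\ket{\phi}$ that is independent of $\ket{\vartheta}$. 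Two normalizations come for free: (i) we may assume the generator's output is $\negl$-close to a pure state $\ket{\psi_{k,x_0}(\vartheta)}$, since otherwise a swap test on two copies already distinguishes it from a Haar state; and (ii) we first treat the clean case in which $\StGen(k,x_0)$ applies a unitary $U_{k,x_0}$ to a single copy of $\ket{\vartheta}$, so $\ket{\psi_{k,x_0}(\vartheta)}=U_{k,x_0}\ket{\vartheta}$ (this covers the constructions of the paper and of CCS; the general case is the obstacle discussed below).

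The distinguisher performs the projective measurement $\{\Pi_V,\,I-\Pi_V\}$ on its $p$ copies of $\ket{\vartheta}$ together with the $\ell$ received copies, and outputs ``real'' iff the outcome lands in the subspace
\[
 V \;:=\; \sum_{k\in\{0,1\}^\lambda} V_k,\qquad V_k \;:=\; \operatorname{span}\Big\{\,\ket{\vartheta}^{\otimes p}\otimes\big(U_{k,x_0}\ket{\vartheta}\big)^{\otimes\ell}\;:\;\ket{\vartheta}\in\C^{2^n}\,\Big\}.
\]
The crucial point is that $V$ is a \emph{fixed} subspace: it does not depend on the actual common Haar state, so an unbounded adversary can implement $\Pi_V$ even though $p$ copies are far from enough to learn $\ket{\vartheta}$. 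In the real world the adversary's state is $\Exp_{\vartheta,k}\big[\ketbra{\vartheta}{\vartheta}^{\otimes p}\otimes\ketbra{\psi_{k,x_0}(\vartheta)}{\psi_{k,x_0}(\vartheta)}^{\otimes\ell}\big]$ up to trace distance $\negl$ (by (i)), a mixture of rank-one projectors onto vectors lying in the $V_k$'s; hence the measurement returns ``real'' with probability $1-\negl$.

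For the ideal world, write $\ket{\vartheta}^{\otimes p}\otimes(U_{k,x_0}\ket{\vartheta})^{\otimes\ell}=(I^{\otimes p}\otimes U_{k,x_0}^{\otimes\ell})\ket{\vartheta}^{\otimes(p+\ell)}$, so $V_k=(I^{\otimes p}\otimes U_{k,x_0}^{\otimes\ell})\,\sym^{(p+\ell,2^n)}$ and therefore $\dim V\le 2^\lambda\binom{2^n+p+\ell-1}{p+\ell}$. The ideal state on these registers is $\binom{2^n+p-1}{p}^{-1}\Pi_{\sym^{(p,2^n)}}\otimes\binom{2^n+\ell-1}{\ell}^{-1}\Pi_{\sym^{(\ell,2^n)}}$, with operator norm $\binom{2^n+p-1}{p}^{-1}\binom{2^n+\ell-1}{\ell}^{-1}$, so the ``real'' outcome occurs in the ideal world with probability at most $\dim V\cdot\binom{2^n+p-1}{p}^{-1}\binom{2^n+\ell-1}{\ell}^{-1}$. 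Since $n=\omega(\log\lambda)$, the quantity $2^n$ exceeds every polynomial, so $\binom{2^n+r-1}{r}=(1+o(1))(2^n)^r/r!$ for all the polynomial values of $r$ in play (this is also where near-orthogonality of independent Haar states is used); the bound collapses to $(1+o(1))\,2^\lambda/\binom{p+\ell}{\ell}$. Finally $\binom{p+\ell}{\ell}\ge(p/\ell)^\ell=2^{\ell\log(p/\ell)}$, and because $\ell=\Omega(\lambda/\log\lambda)$ we may fix $p$ to be a large enough polynomial so that $\ell\log(p/\ell)=\omega(\lambda)$; then the ideal acceptance probability is negligible. Real acceptance $1-\negl$ versus ideal acceptance $\negl$ contradicts security, which proves the theorem in the clean case.

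The main obstacle is discharging normalization (ii). For a general generator consuming $c=\poly(\lambda)$ copies of $\ket{\vartheta}$ and producing junk that the adversary never receives, $V_k$ is no longer an isometric image of a single symmetric subspace, and the naive enclosing subspace (built from $p+c\ell$ copies of $\ket{\vartheta}$) has dimension $\approx(2^n)^{p+c\ell}/(p+c\ell)!$, so the ratio to the ideal support dimension picks up a factor $(2^n)^{(c-1)\ell}$ that overwhelms the $2^\lambda$ saving -- the attack as stated fails. Repairing this requires genuinely exploiting near-purity of the output together with the Schmidt decomposition of the generator's Stinespring dilation across the output/junk cut, in order to argue that the output register is, up to a state-dependent scalar and negligible error, controlled by far fewer ``degrees of freedom'' than $c\ell$ copies of $\ket{\vartheta}$ would suggest; carefully handling the junk dimension, the accumulation of $\negl$ errors over the $\ell$ copies, and the $\vartheta$-dependence of the dominant Schmidt component is the technical heart of the proof. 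Everything else is the symmetric-subspace dimension bookkeeping above, together with the choice of $p$ as a sufficiently large polynomial.
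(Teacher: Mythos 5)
Your proposal is correct and follows essentially the same route as the paper's proof (Theorem~\ref{thm:prs-imp} and Corollary~\ref{cor:prfs-imp}): fix a single input to reduce to the PRS case, take $p$ (the paper's $t=\secp^3$) copies of the common Haar state plus $\ell$ output copies, project onto the support of the real-world state, and bound the ideal-world acceptance by $\rank(\rho_0)/\rank(\rho_1)\approx 2^{\secp}/\binom{p+\ell}{\ell}$ using that the ideal state is maximally mixed on a product of symmetric subspaces. The obstacle you flag for generators consuming many copies of $\ket{\vartheta}$ is not resolved in the paper either --- its formal statement explicitly restricts to generators that use only one copy of the common Haar state (and uses the exact-purity requirement of state generation to conclude the channel is an isometry or replacement channel, in place of your swap-test normalization).
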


\noindent CCS also proved a lower bound where they showed that unbounded copy pseudorandom states do not exist. Their negative result is stronger in the sense that they rule out PRSGs who use up many copies of the Haar states from the CHRS and thus, their work gives a clean separation between 1-copy stretch PRS and unbounded copy PRS which was not known before. On the other hand, for the special case when the PR\underline{F}SG takes only one copy of the Haar state, we believe our result yields better parameters. 

\paragraph{Commitments.} In addition to pseudorandomness, we also study the possibility of constructing other cryptographic primitives in the CHS model. We show the following:  

\begin{theorem}[Informal]
There is an unconditionally secure bit commitment scheme in the CHS model.  
\end{theorem}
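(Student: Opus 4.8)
The plan is to build a canonical quantum commitment following \cite{AQY21} out of the statistically secure pseudorandom state generator in the CHS model that already underlies \Cref{thm:intro:prsgs}, and to observe that the standard analysis of such commitments yields \emph{statistical} hiding and \emph{statistical} binding once the generator is statistically secure. Concretely, let $\ket{\vartheta}$ be the common Haar state on $n$ qubits, fix a key length $\secparam$ with $n\ge\secparam+\omega(\log\secparam)$, and let $\ket{\phi_k}:=(Z^{k_1}\otimes\cdots\otimes Z^{k_\secparam}\otimes I^{\otimes(n-\secparam)})\ket{\vartheta}$ be the CHS-PRSG output on key $k\in\bit^\secparam$. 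Commit-phase states on a message register $\mathbf{C}$ ($n$ qubits) and an opening register $\mathbf{R}$: for $b=0$,
\[
\ket{\psi_0}_{\mathbf{C}\mathbf{R}}\;=\;\frac{1}{\sqrt{2^\secparam}}\sum_{k\in\bit^\secparam}\ket{\phi_k}_{\mathbf{C}}\otimes\ket{k}_{\mathbf{R}},
\]
which the committer prepares efficiently from one copy of $\ket{\vartheta}$ by Hadamarding $\mathbf{R}$ and applying controlled-$Z$ gates onto the first $\secparam$ qubits of $\mathbf{C}$; for $b=1$, $\ket{\psi_1}_{\mathbf{C}\mathbf{R}}=2^{-n/2}\sum_{x\in\bit^n}\ket{x}_{\mathbf{C}}\otimes\ket{x}_{\mathbf{R}}$ (with $\mathbf{R}$ padded to $n$ qubits in the $b=0$ case). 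The committer sends $\mathbf{C}$; to open, it sends $b$ and $\mathbf{R}$, and the receiver runs the natural check (for $b=1$ the standard disentangling check; for $b=0$, since it cannot un-prepare $\ket{\vartheta}$, it measures $\mathbf{R}$ to get $\tilde k$ and swap-tests $\mathbf{C}$ against $Z^{\tilde k}\ket{\vartheta}$ prepared from a fresh copy of $\ket{\vartheta}$, accepting iff the test passes). All honest operations are polynomial-time given copies of the common Haar state, and completeness is immediate.

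\emph{Hiding.} A cheating receiver's view after the commit phase is $\rho_b:=\operatorname{Tr}_{\mathbf{R}}\ketbra{\psi_b}{\psi_b}$ together with the copies of $\ket{\vartheta}$ it holds. We have $\rho_1=I/2^n$, and writing $\ket{\vartheta}=\sum_{y\in\bit^\secparam}\ket{y}\otimes\ket{\vartheta_y}$ gives $\rho_0=\E_k\, Z^k\ketbra{\vartheta}{\vartheta}Z^k=\sum_y\ketbra{y}{y}\otimes\ketbra{\vartheta_y}{\vartheta_y}$. The statistical indistinguishability at the heart of \Cref{thm:intro:prsgs} --- one copy of $\ket{\phi_k}$ for uniform $k$ is statistically close to a fresh Haar state even given the bounded number of copies of $\ket{\vartheta}$ present in the model, equivalently the core lemma of \cite{Col23} --- together with $\E_{\ket{\psi}\gets\Haar}\ketbra{\psi}{\psi}=I/2^n$, shows $\rho_0$ and $\rho_1$ are statistically indistinguishable given those copies: the scheme is statistically hiding. (One must check that the reduction hands the PRSG distinguisher only as many copies of $\ket{\vartheta}$ as the receiver actually holds.)

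\emph{Binding.} The point is the rank gap: $\rho_0=\sum_{y\in\bit^\secparam}\ketbra{y}{y}\otimes\ketbra{\vartheta_y}{\vartheta_y}$ has rank at most $2^\secparam$, for \emph{every} $\ket{\vartheta}$, so
\[
F(\rho_0,\rho_1)\;=\;\frac{1}{\sqrt{2^n}}\operatorname{Tr}\sqrt{\rho_0}\;\le\;2^{(\secparam-n)/2},
\]
negligible by the choice of $n$. The standard sum-binding analysis of canonical quantum commitments then bounds $p_0+p_1$, the probability a computationally unbounded cheating committer opens to $0$ plus the probability it opens to $1$, by $1+\negl(\secparam)$; this is information-theoretic and holds for every $\ket{\vartheta}$. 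There is no conflict with the impossibility of statistical commitments in the plain model \cite{LoChau97,May97,CLM23}: security is statistical only against adversaries holding a bounded number of copies of $\ket{\vartheta}$ (consistent with our own lower bound), and the purification of $\ket{\vartheta}$ --- which an Uhlmann-type attack on binding would need --- lies outside every party's registers.

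\emph{Main obstacles.} Two points will need the most care. First, the $b=0$ verification is a swap test against a receiver-prepared $Z^{\tilde k}\ket{\vartheta}$ rather than the textbook ``apply $Q_0^\dagger$, measure $\ket{0}$'', because the receiver cannot invert the preparation of $\ket{\vartheta}$; so the generic canonical-form binding bound must be re-derived for this testable verification --- it still reduces to the fidelity bound above, but that reduction is the crux. Second, one has to keep honest preparation, the hiding reduction, and the malicious receiver all within the copy budget that makes the CHS-PRSG statistically secure --- exactly the regime in which ``unconditional'' is meaningful here. An alternative that avoids the un-preparation issue is to instantiate the CRQS commitment of \cite{MNY23,Qian23} with the reference state taken to be a Haar state (or a state design) and to re-prove its hiding in the CHS model using the identical-versus-independent Haar indistinguishability bounds developed elsewhere in the paper; I would write up the PRSG-based route first, since its hiding reduction is a black-box call to \Cref{thm:intro:prsgs}.
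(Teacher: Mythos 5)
Your construction of the per-pair states $\ket{\psi_0},\ket{\psi_1}$ matches the paper's, and your hiding argument (reduce to one-copy pseudorandomness of $Z^k\ket{\vartheta}$, note $\rho_1=I/2^n$) is sound for a single commitment pair. The genuine gap is in binding. A SWAP test between any state $\rho$ on $\sfC$ and a freshly prepared pure state accepts with probability $\tfrac{1}{2}(1+\bra{\phi}\rho\ket{\phi})\geq \tfrac12$ \emph{no matter what $\rho$ is}. So with your single-pair verification, $p_0\geq 1/2$ for every cheating committer; in particular a committer who honestly commits to $b=1$ gets $p_1=1$ and $p_0\geq 1/2$, hence $p_0+p_1\geq 3/2$. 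No fidelity or rank bound can repair this --- the $1/2$ floor of a single SWAP test is information-independent --- so the step ``the standard sum-binding analysis then bounds $p_0+p_1\leq 1+\negl$'' fails. (The canonical-form analysis you invoke relies on a \emph{projective} verification $Q_b^\dagger\cdot\ket{0}\!\bra{0}\cdot Q_b$, which is exactly what is unavailable here because the receiver cannot un-prepare $\ket{\vartheta}$; you correctly flag this as the crux but the proposed swap-test substitute does not reduce to the fidelity bound.)

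The paper's fix is parallel repetition: the commitment consists of $p=\secp$ i.i.d. pairs $(\sfC_i,\sfR_i)$, the receiver runs $p$ SWAP tests against its own preparations of $\ket{\psi_b}$, and accepts only if all pass. The acceptance POVM is then $\bigotimes_i\frac{I+\ketbra{\psi_b}{\psi_b}}{2}=\Ex_{\cS\subseteq[p]}[\bigotimes_{i\in\cS}\ketbra{\psi_b}{\psi_b}\otimes I]$, and for each fixed $\cS$ one applies $F(\rho,\xi)+F(\sigma,\xi)\leq 1+\sqrt{F(\rho,\sigma)}$ together with the per-copy bound $F(\Tr_{\sfR_i}\psi_0,\Tr_{\sfR_i}\psi_1)\leq 2^{-(n-\secp)}$ to get $p_{0,\cS}+p_{1,\cS}\leq 1+2^{-|\cS|(n-\secp)/2}$; averaging over $\cS$ gives $p_0+p_1\leq 1+\bigl(\tfrac{1+2^{-(n-\secp)/2}}{2}\bigr)^p=1+\negl(\secp)$, which works already for $n\geq\secp+1$. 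This repetition has a knock-on effect on hiding that your single-pair sketch avoids but the actual scheme cannot: the receiver sees $p$ commitments $\E_{k_i}Z^{k_i}\ketbra{\vartheta}{\vartheta}Z^{k_i}$ all derived from the \emph{same} $\ket{\vartheta}$, so one needs \emph{multi-key} pseudorandomness (Lemma~\ref{lem:prs-multi-sec}), which is not implied by single-key security in the CHS model and is proved separately by a hybrid over the $p$ keys. If you want to keep a single pair, you would need a verification for $b=0$ that is genuinely projective onto (an approximation of) $\ket{\psi_0}$, which is precisely what access to copies of $\ket{\vartheta}$ alone does not provide.
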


\noindent Both our construction and the commitments scheme proposed by CCS are different although they share strong similarities. 

\subsubsection{Black-Box Separations} 
\paragraph{LOCC Indistinguishability.} We separate pseudorandom function-like states and quantum cryptographic primitives with classical communication using a variant of the CHS model. At the heart of our separations is a novel result that proves indistinguishability of identical versus independent Haar states against LOCC (local operations, classical communication) adversaries. More precisely, $(A,B)$ is an LOCC adversary if $A$ and $B$ are quantum algorithms who can communicate with each other via only classical communication channels. It is important that $A$ and $B$ do not share any entanglement. Moreover, we restrict our attention to LOCC distinguishers which are LOCC adversaries of the form $(A,B)$ where $A$ does not output anything whereas $B$ outputs a single bit. We say that a LOCC distinguisher $(A,B)$ can distinguish two states $\rho_{\sfA \sfB}$ and $\sigma_{\sfA \sfB}$ with probability at most $\varepsilon$, referred to as $\varepsilon$-LOCC indistinguishability, where $A$ receives the register $\sfA$ and $B$ receives the register $\sfB$, if $|{\sf Pr}\left[ 1 \leftarrow (A,B)(\rho_{\sfA \sfB}) \right] - {\sf Pr}\left[1 \leftarrow (A,B)(\sigma_{\sfA \sfB}) \right]| = \varepsilon$. Of particular interest is the case when 
$$\rho_{\sfA \sfB} = \Ex_{\ket{\psi} \leftarrow \Haar_n} \left[ (\ket{\psi}^{\otimes t})_{\sfA} \otimes (\ket{\psi}^{\otimes t})_{\sfB} \right],\ \sigma_{\sfA \sfB} = \Ex_{\substack{\ket{\psi} \leftarrow \Haar_n,\\ \ket{\phi} \leftarrow \Haar_n}} \left[ (\ket{\psi}^{\otimes t})_{\sfA} \otimes (\ket{\phi}^{\otimes t})_{\sfB} \right]$$  
Here, $\Haar_n$ denotes the Haar distribution on $n$-qubit quantum states and $t$ is polynomial in $n$. A couple of works by Harrow~\cite{Har23} and Chen, Cotler, Huang and Li~\cite{chen2022exponential} prove that the LOCC indistinguishability of $\rho_{\sfA \sfB}$ and $\sigma_{\sfA \sfB}$ is negligible in $n$ in the case when $t=1$. In this work, we extend to the case when $t$ is arbitrary. 

\begin{theorem}
\label{sec:thm:locc:intro}
$\rho_{\sfA \sfB}$ and $\sigma_{\sfA \sfB}$ (defined above) are $\varepsilon$-LOCC indistinguishable, where $\varepsilon=O\left( \frac{t^2}{2^n} \right)$. 
\end{theorem}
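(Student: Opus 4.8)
\emph{Step 1: reduce to a partial-transpose trace norm.} The plan is to use that any two-outcome POVM $\{E_0,E_1\}$ realized by an LOCC protocol in which $A$ and $B$ hold \emph{no} shared entanglement is \emph{separable}: $E_0,E_1$ are sums of positive semidefinite product operators. Consequently, writing $X^{\Gamma}$ for the partial transpose on the $\sfB$ register, we get $E_0^{\Gamma},E_1^{\Gamma}\succeq 0$, hence $E_0^{\Gamma}+E_1^{\Gamma}=I$ forces $0\preceq E_1^{\Gamma}\preceq I$. Since the partial transpose preserves the trace pairing, $\operatorname{tr}(E_1(\rho_{\sfA\sfB}-\sigma_{\sfA\sfB}))=\operatorname{tr}(E_1^{\Gamma}(\rho_{\sfA\sfB}-\sigma_{\sfA\sfB})^{\Gamma})$, and since $(\rho_{\sfA\sfB}-\sigma_{\sfA\sfB})^{\Gamma}$ is traceless, $|\operatorname{tr}(E_1(\rho_{\sfA\sfB}-\sigma_{\sfA\sfB}))|\le \tfrac12\,\|(\rho_{\sfA\sfB}-\sigma_{\sfA\sfB})^{\Gamma}\|_1$. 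So the entire theorem reduces to showing $\|(\rho_{\sfA\sfB}-\sigma_{\sfA\sfB})^{\Gamma}\|_1=O(t^2/2^n)$.

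\emph{Step 2: identify the two transposed states and use Schur's lemma.} Write $d:=2^n$ and $D:=\binom{d+t-1}{t}=\dim \sym^t(\C^d)$. Since the symmetric-subspace projector is a real matrix, $\sigma_{\sfA\sfB}^{\Gamma}=\sigma_{\sfA\sfB}$, which is exactly the maximally mixed state $\tau:=I/D^2$ on $\sym^t(\C^d)_\sfA\otimes \sym^t(\C^d)_\sfB$. On the other hand $\rho_{\sfA\sfB}^{\Gamma}=\Ex_{\ket\psi\leftarrow\Haar_n}\big[\ketbra{\psi}{\psi}^{\otimes t}\otimes \ketbra{\bar\psi}{\bar\psi}^{\otimes t}\big]$ (entrywise conjugate $\ket{\bar\psi}$), which is a genuine density operator (an average of rank-one projectors), also supported on $\sym^t(\C^d)_\sfA\otimes\sym^t(\C^d)_\sfB$. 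The key structural point is that $\rho_{\sfA\sfB}^{\Gamma}$ commutes with $U^{\otimes t}\otimes\bar U^{\otimes t}$ for every unitary $U$; and the representation $\sym^t(\C^d)\otimes\overline{\sym^t(\C^d)}$ of $U(d)$ decomposes \emph{multiplicity-freely} into the $t+1$ irreducibles $W_0,\dots,W_t$, with $W_k$ of highest weight $(k,0,\dots,0,-k)$. By Schur's lemma $\rho_{\sfA\sfB}^{\Gamma}=\sum_{k=0}^{t}\mu_k\,\Pi_{W_k}$ for scalars $\mu_k\ge 0$, while $\tau=\sum_{k=0}^{t}\tfrac1{D^2}\Pi_{W_k}$, so $\|\rho_{\sfA\sfB}^{\Gamma}-\tau\|_1=\sum_{k=0}^{t}\big|\mu_k-\tfrac1{D^2}\big|\,\dim W_k$.

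\emph{Step 3 (the crux): compute the eigenvalues and dimensions.} By $U(d)$-invariance, $\mu_k=\dim(W_k)^{-1}\big\|\Pi_{W_k}\big(\ketbra{e_1}{e_1}^{\otimes t}\big)\big\|_{\mathrm{HS}}^2$, viewing $\ketbra{\psi}{\psi}^{\otimes t}\otimes\ketbra{\bar\psi}{\bar\psi}^{\otimes t}$ as the rank-one operator $\ketbra{\psi^{\otimes t}}{\psi^{\otimes t}}$ on $\sym^t(\C^d)$ and $W_k$ as the irreducible pieces of $\operatorname{End}(\sym^t(\C^d))$ under conjugation; equivalently one recognizes that $\binom{d+2t-1}{2t}\cdot\rho_{\sfA\sfB}^{\Gamma}$ is precisely the partial transpose of the symmetric-subspace projector $\Pi^{2t}_{\sym}$ across the $t\mid t$ cut and invokes the closed form of its spectrum. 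Carrying this out yields
\[
\mu_k=\frac{(t!)^2}{(t-k)!\,\prod_{j=0}^{t+k-1}(d+j)},\qquad \dim W_k=\binom{d+k-1}{k}^2-\binom{d+k-2}{k-1}^2 .
\]
(These satisfy the consistency checks $\mu_0=1/\binom{d+t-1}{t}$, $\sum_k\mu_k\dim W_k=\operatorname{tr}(\rho_{\sfA\sfB}^{\Gamma})=1$, $\sum_k\mu_k^2\dim W_k=\operatorname{tr}((\rho_{\sfA\sfB}^{\Gamma})^2)=1/\binom{d+2t-1}{2t}$, and for $t=1$ they give the known nonzero values $\mu_0=1/d$ on the one-dimensional $W_0$ and $\mu_1=1/(d(d+1))$ on $W_1$ of dimension $d^2-1$, hence $\|\rho_{\sfA\sfB}^{\Gamma}-\tau\|_1=2(d-1)/d^2$.)

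\emph{Step 4: the estimate.} Plugging the formulas of Step 3 into $\sum_{k=0}^{t}\big|\mu_k-D^{-2}\big|\dim W_k$ and expanding in $1/d$, one checks that the $k=0$ term is $O(1/\binom{d+t-1}{t})=O(t!/d^t)$, that for each $0<k<t$ the $k$-th term equals $\tfrac{(t!)^2}{(t-k)!(k!)^2}\,d^{k-t}(1+o(1))$, and that the $k=t$ term equals $\tfrac{t^2}{d}(1+o(1))$; the largest contributions come from $k=t-1$ and $k=t$, each $\tfrac{t^2}{d}(1+o(1))$, and the remaining terms form a geometric-type tail with ratio $O(t^2/d)$. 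Summing, $\|\rho_{\sfA\sfB}^{\Gamma}-\tau\|_1=\tfrac{2t^2}{d}(1+o(1))=O(t^2/2^n)$, where the $o(1)$'s are controlled by elementary rising-factorial/Stirling bounds and are legitimate precisely in the regime $t\ll 2^{n/2}$ in which the claimed bound is non-vacuous. Combined with Step 1 this gives $\varepsilon=O(t^2/2^n)$.

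\emph{Where the difficulty lies.} The main obstacle is Step 3: without the exact spectrum there is no shortcut, since the naive bound $\|\rho_{\sfA\sfB}^{\Gamma}-\tau\|_1\le \sqrt{\operatorname{rank}}\cdot\|\rho_{\sfA\sfB}^{\Gamma}-\tau\|_2$ — even using the easily computed $\|\rho_{\sfA\sfB}^{\Gamma}-\tau\|_2^2=\binom{d+2t-1}{2t}^{-1}-\binom{d+t-1}{t}^{-2}$ and the $(t{+}1)$-term decomposition — loses a factor of order $2^t$ and is vacuous. A secondary point requiring care is the bookkeeping in Step 4 so that the final bound genuinely comes out $O(t^2/2^n)$ uniformly over the relevant range of $t$.
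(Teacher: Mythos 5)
Your proposal is correct in outline and reaches the right bound, but it takes a genuinely different route from the paper. Both arguments begin identically: relax LOCC to PPT measurements and reduce the problem to bounding $\norm{(\rho_{\sfA\sfB}-\sigma_{\sfA\sfB})^{\Gamma_\sfB}}_1$. From there the paper is combinatorial: it first truncates to collision-free types (paying an extra $O(t^2/2^n)$ via the collision bound so that types become sets), expands the type states across the $t\mid t$ cut, observes that the mismatched terms organize into blocks isospectral to adjacency matrices of Kneser graphs $K(d-2s,t-s)$, and invokes the known formula for the sum of absolute eigenvalues of those graphs. You instead keep the exact Haar averages, note that $\rho_{\sfA\sfB}^{\Gamma_\sfB}$ commutes with $U^{\otimes t}\otimes \overline{U}^{\otimes t}$, and use Schur's lemma on the multiplicity-free decomposition $\sym^t(\C^d)\otimes\overline{\sym^t(\C^d)}\cong\bigoplus_{k=0}^{t}W_k$ to diagonalize it exactly; your eigenvalue and dimension formulas pass the natural consistency checks (trace, purity, the $t=1$ case), and your final tail sum $\sum_{j\ge 1}\binom{t}{j}^2 j!\,d^{-j}\leq e^{t^2/d}-1$ is essentially the same estimate the paper arrives at via Kneser spectra. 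What your route buys is an exact spectral description of $\rho_{\sfA\sfB}^{\Gamma_\sfB}$ with no collision-bound losses, which yields the leading constant $2t^2/d$ and hence tightness for free (the paper needs the separate measure-and-check-collisions distinguisher of its lower-bound theorem for that); what the paper's route buys is a more elementary, self-contained argument that only cites a graph-spectrum fact rather than the representation-theoretic decomposition. The one gap to close in a full write-up is your Step~3: the formula for $\mu_k$ is asserted and sanity-checked rather than derived, so you would need to either prove it (e.g., by evaluating $\norm{\Pi_{W_k}(\ketbra{e_1}{e_1}^{\otimes t})}_{\mathrm{HS}}^2$ directly, or by computing the spectrum of the partially transposed symmetric-subspace projector) or cite a source for the spectrum of $(\Pi^{2t}_{\sym})^{\Gamma_\sfB}$ across the balanced cut.
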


\noindent We also show that the above bound is tight by demonstrating an LOCC distinguisher whose distinguishing probability is $\Theta(\frac{t^2}{2^n})$.  
\par Recently, Ananth, Kaleoglu and Yuen~\cite{AKY24} prove the indistinguishability of $\rho_{\sfA \sfB}$ and $\sigma_{\sfA \sfB}$ in the dual setting, against non-local adversaries that can share entanglement but cannot communicate. 
\par The above theorem can easily be extended to the multi-party setting where either all the parties get (many copies of) the same Haar state or they receive i.i.d Haar states. 

\paragraph{Separations.} We use~\Cref{sec:thm:locc:intro} to show that some quantum cryptographic primitives with classical communication are impossible in the CHS model. Let us develop some intuition towards proving such a statement. Suppose there are two or more parties participating in a quantum cryptographic protocol with classical communication in the CHS model. By definition, all the parties would receive many, say $t$, copies of $\ket{\psi}$, where $\ket{\psi}$ is sampled from the Haar distribution. Since the parties can only exchange classical messages, thanks to~\Cref{sec:thm:locc:intro}, without affecting correctness or security we can modify the protocol wherein for each party, say $P_i$, a Haar state $\ket{\psi_i}$ is sampled and $t$ copies of $\ket{\psi_i}$ is given to $P_i$. From this, we can extract a quantum cryptographic primitive in the plain model since each party can sample a Haar state on its own. In conclusion, quantum cryptographic primitives with classical communication in the CHS model can be turned into their counterparts in the plain model. 
\par This gives a natural recipe for proving impossibility results in the CHS model. We apply this recipe to obtain impossibility results for interactive key agreements and interactive commitments. 

\begin{theorem}
\label{thm:impossibility:intro}
Interactive quantum key agreement and interactive quantum commitment protocols, with classical communication, are impossible in the CHS model. 
\end{theorem}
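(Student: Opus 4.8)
The plan is to formalize the "recipe" sketched just before the theorem statement, reducing the existence of an interactive key agreement (resp.\ commitment) with classical communication in the CHS model to the existence of the same primitive in the plain model, which is known to be impossible (for key agreement, by the quantum analogue of the impossibility of key agreement from nothing; for commitment, by \cite{LoChau97,May97,CLM23}). So the real content is the reduction, and the engine is \Cref{sec:thm:locc:intro}.

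First I would fix notation for a generic two-party protocol $\Pi$ with classical communication in the CHS model: a trusted setup samples $\ket{\psi}\leftarrow\Haar_n$ and hands $t$ copies to each of $\alice$, $\bob$ (and to the adversary $\eve$), where $t,n$ are polynomial in $\secp$; the parties then interact exchanging only classical messages. The first step is to argue that, from the point of view of \emph{any} fixed transcript-generating process, the joint state held by $\alice$ and $\bob$ together with $\eve$ before interaction is $\rho_{\sfA\sfB}$ in the notation of the theorem (extended to three registers, as remarked in the excerpt), and that replacing it by $\sigma_{\sfA\sfB}$ — i.e., giving each party $t$ copies of an \emph{independent} Haar state $\ket{\psi_\alice},\ket{\psi_\bob},\ket{\psi_\eve}$ — changes the distribution of the protocol's output (transcript, parties' private outputs, adversary's guess) by at most $O(t^2/2^n) = \negl(\secp)$ in total variation. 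The key point enabling this is that the entire execution — parties computing locally and sending classical messages — is an LOCC process, so \emph{any} functional of the execution is an LOCC distinguisher, and \Cref{sec:thm:locc:intro} (in its multipartite form) bounds the advantage. I would phrase this as: for any LOCC adversary/environment, $\rho_{\sfA\sfB\sfE}$ and $\sigma_{\sfA\sfB\sfE}$ are $\negl$-indistinguishable, hence correctness and security guarantees of $\Pi$ are preserved up to $\negl(\secp)$ when we swap the common state for independent states.

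Second, once each party's input is an \emph{independent} Haar state that only that party needs, the trusted setup is vacuous: each party can sample its own Haar state locally (using, say, a Haar-random unitary applied to a fixed state — this is a valid, if inefficient, quantum operation, and these impossibility results are against unbounded adversaries so efficiency of the honest parties is not at issue here). This yields a protocol $\Pi'$ in the plain model with the same correctness and security (up to $\negl$). Invoking the plain-model impossibility — no information-theoretically secure quantum key agreement with classical communication, and no statistically hiding-and-binding quantum bit commitment — gives a contradiction, proving the theorem. For the commitment case I would be slightly careful to route through the interactive-commitment formulation and cite \cite{LoChau97,May97,CLM23}; for key agreement I would invoke the standard fact that a classical-communication key agreement secure against an eavesdropper who sees the same public resource is impossible when that resource can be locally generated, which after the swap is exactly the situation.

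The main obstacle I anticipate is the first step: rigorously arguing that the full protocol execution, \emph{including the adversary's final guess and including correctness checks that compare $\alice$'s and $\bob$'s outputs}, can be cast as a single LOCC distinguisher between $\rho$ and $\sigma$ to which \Cref{sec:thm:locc:intro} applies — in particular handling (i) the multipartite ($\alice,\bob,\eve$) extension of the binary LOCC statement, (ii) the fact that $\eve$ also holds $t$ copies and may interleave its operations with the honest parties' messages, and (iii) correctness, which is a joint predicate on two parties' registers rather than a single-bit output of one party, so one must argue it is still an LOCC-computable functional (it is: $\alice$ can send its output to $\bob$ classically). A secondary subtlety is bookkeeping the polynomial blow-up — each party holds $t$ copies so the relevant register dimension is $2^{nt}$-ish, but the LOCC bound is in terms of the number of copies $t$ and the local Haar dimension $2^n$, so $O(t^2/2^n)$ is still negligible; I would make sure the statement of \Cref{sec:thm:locc:intro} is applied with the right parameters and, if the protocol makes polynomially many rounds, note that a hybrid over rounds (or simply the fact that LOCC is closed under composition) keeps the loss negligible.
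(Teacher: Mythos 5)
Your proposal matches the paper's own proof of this theorem (formalized as \Cref{thm:Imp_QCCC_CHS}): convert the CHS-model protocol into a plain-model one by letting each party sample its own independent Haar states, use the LOCC Haar indistinguishability theorem to bound the resulting loss in completeness and security by a negligible amount, and then invoke the plain-model impossibilities. The main obstacle you flag --- the eavesdropper also holding $t$ copies and the need for a three-party LOCC statement --- is sidestepped in the paper by adopting a (weaker, hence impossibility-strengthening) security definition in which the adversary receives no copies of the common Haar state, so the two-party statement suffices; moreover, for commitment hiding/binding the paper obtains an exact rather than approximate reduction, since the malicious party can simply discard its copies without changing the honest party's view.
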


\noindent We extend the above theorem to separate interactive quantum key agreement and interactive quantum commitments from pseudorandom function-like state generators. The separations are obtained by considering a variant of the CHS model where the adversary does not get access to many copies of one Haar state but instead gets access to infinitely many input-less oracles\footnote{We note that~\cite{Kretschmer21} made similar use of infinitely many oracles to prove a separation between pseudorandom states and one-way functions.} $\left\{\{G_{k,x}\}_{k,x \in \{0,1\}^{\secparam}}\right\}_{\secparam \in \mathbb{N}}$ such that each $G_{k,x}$ produces a copy of a Haar state $\ket{\psi_{k,x}}$. In this model, it is easy to construct pseudorandom function-like states. However, an extension of~\Cref{thm:impossibility:intro} rules out the possibility of interactive quantum key agreement and quantum commitments with classical communication in this variant. Thus, we have the following. 

\begin{theorem}
There does not exist a black-box reduction from interactive quantum key agreement and quantum commitments with classical communication to pseudorandom function-like states. 
\end{theorem}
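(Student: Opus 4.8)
The plan is to prove the statement in its relativized form, which is stronger than a black-box separation: exhibit an oracle $\cO$ relative to which pseudorandom function-like state generators (PRFSGs) exist but interactive quantum key agreement and interactive quantum bit commitment with classical communication do not. Since a black-box reduction relativizes, the existence of such an $\cO$ rules it out. The oracle is the one described above: a family $\cO = \{\{G_{k,x}\}_{k,x\in\{0,1\}^\secparam}\}_{\secparam\in\N}$ in which, once and for all, an $n$-qubit state $\ket{\psi_{k,x}}$ is drawn from the Haar measure independently for every index $(k,x)$, and each query to the input-less unitary $G_{k,x}$ returns one fresh copy of $\ket{\psi_{k,x}}$.

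The feasibility direction is immediate. The PRFSG has key $k\ugets\{0,1\}^\secparam$; on input $x$ it queries $G_{k,x}$ once and outputs $\ket{\psi_{k,x}}$. For security, any adversary — even computationally unbounded — that makes at most $T=\poly(\secparam)$ oracle queries queries an index whose first coordinate equals the uniform, hidden challenge key $k$ with probability at most $T/2^\secparam = \negl(\secparam)$; conditioned on this not happening, the adversary's view together with the $t$-copy challenge states is distributed \emph{exactly} as in the ideal game where the challenge states are freshly Haar-sampled, so the advantage is negligible. (This is the argument of \cite{Kretschmer21} transported to Haar-state oracles.)

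For the impossibility direction — the heart of the proof — one extends \Cref{thm:impossibility:intro} from the CHS model to the oracle model. Observe that $\cO$ is nothing but an exponentially large bank of common Haar states: whenever two distinct parties each query $G_{k,x}$, they receive copies of the \emph{same} state $\ket{\psi_{k,x}}$. Fix any classical-communication protocol relative to $\cO$ (say for key agreement; commitment is analogous) whose parties, including the adversary, make at most $T=\poly$ queries in total, and let $S$ be the random set of indices queried by at least two distinct parties, so $|S|\le T$. Because all communication is classical, the parties' joint actions on the registers holding the copies of $\ket{\psi_{k,x}}$ constitute an LOCC protocol, so the multiparty extension of \Cref{sec:thm:locc:intro} lets us replace, index by index, the shared copies of $\ket{\psi_{k,x}}$ by \emph{independent} Haar states — one private state per party — at a cost of $O(T^2/2^n)$ per index and hence $\negl$ in total (using $n=\omega(\log(\secparam))$; note this is exactly where we need \Cref{sec:thm:locc:intro} for arbitrary polynomial copy number $t$, since a party may query a given $G_{k,x}$ polynomially many times). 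After these swaps every oracle answer a party receives is a copy of a state private to that party, so each party can simulate its part of $\cO$ locally by lazily Haar-sampling its own states; this yields a protocol \emph{in the plain model} with the same correctness and, up to negligible loss, the same security. But interactive quantum key agreement and quantum bit commitment with classical communication are impossible in the plain model — for commitments this is the impossibility of statistically secure commitments \cite{LoChau97,May97,CLM23}, and for key agreement it is the plain-model specialization of \Cref{thm:impossibility:intro} (impossibility in the CHS model a fortiori implies impossibility in the plain model). This contradiction shows no such protocol exists relative to $\cO$.

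Putting the pieces together: a black-box reduction from key agreement (or commitment) with classical communication to PRFSGs relativizes, so it would yield such a protocol relative to $\cO$, contradicting the previous paragraph; moreover the inefficient adversary exhibited there only uses its own freshly sampled private Haar states and never queries $\cO$, so feeding it to the efficient, poly-query security reduction still yields only $\negl$ advantage against the $\cO$-PRFSG, confirming that the reduction cannot succeed. The step I expect to be the main obstacle is the hybrid in the oracle setting: the set of queried indices, and in particular which indices are \emph{shared} among parties, is chosen adaptively from the classical transcript and earlier oracle responses, so the LOCC swaps must be applied on the fly against a lazily sampled oracle; one must carefully account for a single party hitting the same $G_{k,x}$ many times (this only inflates the per-index copy number, which stays polynomial), include the adversary's queries in the count, and verify that no party ever acts jointly on another party's registers — which is exactly what the classical-communication restriction guarantees. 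A secondary point is to pin down, for the commitment case, the precise security notion that survives the passage to the plain model, which is where the exact statement of \Cref{thm:impossibility:intro} is invoked.
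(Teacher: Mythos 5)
Your overall architecture matches the paper's: the same oracle $\{\{G_{k,x}\}_{k,x\in\bit^\secp}\}_{\secp\in\N}$, feasibility via the hardness of unstructured search, and an impossibility argument that uses LOCC Haar indistinguishability to privatize the shared states and land in the plain model. But there is a genuine gap in the impossibility direction. Your hybrid replaces the shared copies of $\ket{\psi_{k,x}}$ by independent per-party states \emph{for every queried index}, charging $O(T^2/2^{n})$ per index and declaring the total negligible ``using $n=\omega(\log\secp)$.'' This is not valid: the oracle is defined for all key lengths, a protocol at security parameter $\secp$ may query $G_{k,x}$ with $|k|=O(1)$ or $|k|=O(\log\secp)$, and those oracles return Haar states on $n(|k|)=\omega(\log|k|)$ qubits, which can be constantly many. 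For such indices $O(T^2/2^{n})$ is a constant, and by \Cref{thm:LOCC_lower_bound} this loss is not an artifact of the analysis --- an LOCC pair really can distinguish shared from independent short Haar states (e.g.\ by measuring in the computational basis and checking for collisions), so shared short states are a genuine correlated resource that your swap destroys. The hybrid argument therefore breaks exactly on the part of the oracle a construction is most tempted to exploit.

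The paper closes this hole with a two-step compilation (\Cref{sec:QBB_QCCC}, in particular \Cref{lem:compile_KA}): it fixes a threshold $\Lambda=O(\log\secp)$, applies the LOCC swap (\Cref{cor:LOCC_repetition}) only to keys of length $>\Lambda$, where $2^{n}\geq 2^\Lambda\gg T^2$ makes the loss inverse-polynomial, and then argues separately that the truncated oracle $G_\Lambda$ gives essentially no power: it contains only polynomially many states, each of polynomial dimension, so all parties (and the eavesdropper / malicious committer) can learn it entirely by tomography (\Cref{thm:tomography}) with polynomially many queries and inverse-polynomial error, after which the scheme is compiled into the plain model. Your proposal is missing this second step, and without it the reduction to the plain-model impossibility does not go through. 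Two smaller points: the plain-model impossibilities actually invoked are the QCCC-specific ones (the conditional-independence argument for key agreement and the completeness/hiding/binding trade-off for commitments), not the Lo--Chau/Mayers theorem for quantum-communication commitments; and the separation as proved only rules out constructions making \emph{classical} queries to the PRFSG, a restriction your argument also silently relies on (you treat the set of queried indices as classical) and should state.
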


\noindent Prior work by Chung, Goldin and Gray~\cite{CGG24} extensively studies the separations between quantum cryptographic primitives with classical communication and different quantum pseudorandomness notions. However, their framework did not capture the above result. 
\par Prior works by~\cite{ACC+22,CLM23,LLLL24} ruled out quantum key agreements and non-interactive commitments with classical communication from post-quantum one-way functions. However, their separation was either based on a conjecture or in a restricted setting whereas our result is unconditional. This makes our result incomparable with the results from~\cite{ACC+22,CLM23,LLLL24}. Our work follows a long line of recent works~\cite{HY20,ACC+22,AHY23,CLM23,afshar2023possibility,bouaziz2023towards,bouaziz2024quantum,coladangelo2024black} that make progress in understanding the landscape of black-box separations in quantum cryptography.

\section{Technical Overview}
\subsection{Pseudorandomness in the CHS Model}

\paragraph{Warmup: Pseudorandom State Generators (PRSGs).} As a warmup, we first study 1-copy PRSG in the CHS model. Consider the following construction: $G_k(\ket{\vartheta}) := (Z^{k}\otimes I_{n-\secp})\ket{\vartheta}$, where $Z^{k} = Z^{k_1} \otimes \cdots \otimes Z^{k_{\secparam}}$, $k= k_1 \cdots k_{\secparam} \in \{0,1\}^{\secparam}$ and $I_{n - \secparam}$ is an identity operator on $n-\secparam$ qubits. In other words, $G_k$ applies a random Pauli $Z$ operator only on the first $\secparam$ qubits and does not touch the rest. Note that this construction already satisfies the stretch property (i.e. the output length is larger than the key length). 
\par Let us consider the case when the adversary receives just one copy of $\ket{\vartheta}$ and is expected to distinguish $G_k(\ket{\vartheta})$ versus an independent Haar state $\ket{\varphi}$. Formally, we would like to argue that the following states are close. 
\[
\rho := \E_{\substack{k\gets\set{0,1}^{\secp} \\ \ket{\vartheta}\gets\Haar_{n}}}
\left[
G_k(\ket{\vartheta}) \otimes\ketbra{\vartheta}{\vartheta}
\right]
\text{ and }
\sigma := \frac{I}{2^{n}} \otimes \frac{I}{2^n}.
\]
By the properties of the symmetric subspace, the following holds:
$$\E_{ \ket{\vartheta}\gets\Haar_{n}}
\left[
\ketbra{\vartheta}{\vartheta}^{\otimes 2}
\right] \approx_{\varepsilon} \E_{x,y \leftarrow [2^n], x^1 \neq y^1} \left[ \frac{1}{2} \left( \ketbra{xy}{xy} + \ketbra{xy}{yx} + \ketbra{yx}{xy} + \ketbra{yx}{yx} \right) \right],$$ where $\varepsilon$ is negligible in $n$ and the notation $x^1$ (respectively, $y^1$) denotes the first $\secparam$ bits of $x$ (respectively, $y$). Now, applying a random $Z$ operator on the first $\secparam$ qubits tantamounts to measuring the first $\secparam$ qubits in the computational basis. Given the fact that $x^1 \neq y^1$, this measurement unentangles the last $n$ qubits. Thus, the result is a state of the form $\E_{x,y \leftarrow [2^n], x^1 \neq y^1} \left[ \frac{1}{2} \ketbra{x}{x} \otimes \ketbra{y}{y} + \frac{1}{2} \ketbra{y}{y} \otimes \ketbra{x}{x} \right]$. This state is in turn close to $\frac{I}{2^{n}} \otimes \frac{I}{2^n}$. \\

\noindent \textsc{Generalizing to Many Copies of the CHS.} 
Next, we to generalize the above approach to even when polynomially many copies of the CHS are provided. Formally, we would like to argue that the following two states are close. 
\[
\rho := \E_{\substack{k\gets\set{0,1}^{\secp} \\ \ket{\vartheta}\gets\Haar_{n}}}
\left[
G_k(\ket{\vartheta}) \otimes\ketbra{\vartheta}{\vartheta}^{\otimes t}
\right]
\text{ and }
\sigma := \Ex_{\substack{\ket{\varphi}\gets\Haar_n \\ \ket{\vartheta}\gets\Haar_{n}}}
\left[
\ketbra{\varphi}{\varphi}\otimes\ketbra{\vartheta}{\vartheta}^{\otimes t}
\right],
\]
where $t$ is some polynomial of $n$. Note that, by the  property of the Haar distribution, we can simplify $\sigma$ to $$\sigma = \frac{I}{2^{n}} \otimes \Ex_{T\leftarrow [0:t]^N} \ketbra{T}{T},$$ where $\ket{T}$ is a type state\footnote{We encourage readers unfamiliar with type states to refer to~\Cref{def:type_states}.} and $N=2^n$. Note that by the properties of the symmetric subspace, $$\E_{ \ket{\vartheta}\gets\Haar_{n}}
\left[
\ketbra{\vartheta}{\vartheta}^{\otimes t+1}
\right] \approx_{\varepsilon} \Ex_{\substack{T\leftarrow [0:t+1]^N\\ T\text{ is }\secparam\text{-prefix collision-free}}}\ketbra{T}{T},$$ where $\varepsilon$ is negligible in $n$ and $T$ is $\secparam$-prefix collision-free if $T\in\bit^{N}$ and for any $x,y\in T$\footnote{Since $T\in\bit^{N}$, we can treat it as a set, in particular the set associated to $T$ is $\set{i: T[i]=1}$.} with $x\neq y$ implies $x^1\neq y^1$, where the notation $x^1$ (respectively, $y^1$) denotes the first $\secparam$ bits of $x$ (respectively, $y$). Note that, any $\secparam$-prefix collision-free type $T$, $$\ket{T} = \frac{1}{\sqrt{{t+1 \choose t}}}\sum_{x\in T}\ket{x}\ket{T\setminus\set{x}}.$$ Again, applying a random $Z$ operator on the first $\secparam$ qubits tantamounts to measuring the first $\secparam$ qubits in the computational basis. Given the fact that $T$ is $\secparam$-prefix collision-free, this measurement unentangles the first $n$ qubits. Thus, the result is a state of the form $$\Ex_{\substack{T\leftarrow [0:t+1]^N\\ T\text{ is }\secparam\text{-prefix collision-free}\\ x\leftarrow T}}\left[\ketbra{x}{x}\otimes\ketbra{T\setminus\set{x}}{T\setminus\set{x}}\right].$$ This state is in turn close to $\frac{I}{2^{n}} \otimes \Ex_{T\leftarrow [0:t]^N} \ketbra{T}{T}$.\\
\\
\noindent \textsc{Generalizing to $\ell$-copy PRSG.}
Finally, we generalize this $\ell$-copy PRSG. Formally, we would like to argue that the following two states are close. 
\[
\rho := \E_{\substack{k\gets\set{0,1}^{\secp} \\ \ket{\vartheta}\gets\Haar_{n}}}
\left[
G_k(\ket{\vartheta})^{\otimes \ell} \otimes\ketbra{\vartheta}{\vartheta}^{\otimes t}
\right]
\text{ and }
\sigma := \Ex_{\substack{\ket{\varphi}\gets\Haar_n \\ \ket{\vartheta}\gets\Haar_{n}}}
\left[
\ketbra{\varphi}{\varphi}^{\otimes \ell}\otimes\ketbra{\vartheta}{\vartheta}^{\otimes t}
\right],
\]
where $\ell,t$ is some polynomial of $n$. Note that, by the  property of the Haar distribution, we can simplify $\sigma$ to $$\sigma = \Ex_{T_1\leftarrow [0:\ell]^N} \ketbra{T_1}{T_1} \otimes \Ex_{T_2\leftarrow [0:t]^N} \ketbra{T_2}{T_2},$$ where $\ket{T_1},\ket{T_2}$ are type states and $N=2^n$. Note that, similar to the  last case, we can still write, 
$$\E_{ \ket{\vartheta}\gets\Haar_{n}}\left[ \ketbra{\vartheta}{\vartheta}^{\otimes t+\ell} \right] \approx_{\varepsilon} \Ex_{\substack{T\leftarrow [0:t+\ell]^N\\ T\text{ is }\secparam\text{-prefix collision-free}}}\ketbra{T}{T},$$
and any $\secparam$-prefix collision-free type $T$,
$$\ket{T} = \frac{1}{\sqrt{{t+\ell \choose \ell}}}\sum_{\substack{T_1\subset T\\ |T_1|=\ell}}\ket{T_1}\ket{T\setminus T_1}.$$
Ideally, we would want the application of $(Z^k\otimes I_{n-\secparam})^{\otimes \ell}$ to unentangle $\ket{T_1}$ from  $\ket{T\setminus T_1}$. This is equivalent to measuring the first $\ell$ registers in the type basis. This is in general not true, not true. Hence, we settle for the next best thing, which is finding a ``dense-enough''\footnote{Here, by dense-enough, we mean when picking a random type from $\secparam$-prefix collision-free, it lies in this subset with probability $1-\negl$.} subset of $\secparam$-prefix collision-free type such that $(Z^k\otimes I_{n-\secparam})^{\otimes \ell}$ to unentangle $\ket{T_1}$ from $\ket{T\setminus T_1}$. We find this subset to be ``$\secparam$-prefix $\ell$-fold collision-free'' types. 

\par We say that a $\secparam$-prefix collision-free type $T$ is ``$\secparam$-prefix $\ell$-fold collision-free'' if for all pairs of $\ell$ sized subsets $T_1,T_2\subset T$, $\oplus_{x\in T_1} x=\oplus_{x\in T_2} x$ only if $T_1=T_2$. We start by noting that this subset is only ``dense-enough'' if $\ell = O\left( \frac{\secparam}{\log(\secparam)^{1+\varepsilon}} \right)$, for any constant $\varepsilon > 0$.\footnote{Later, in the impossibility result, we show that this is in fact the best we can hope for as a larger subset would bypass the impossibility result.} 
\par Next, we show that for these $\secparam$-prefix $\ell$-fold collision-free types states, applying a random $(Z^k\otimes I_{n-\secparam})^{\otimes \ell}$ is equivalent to meauring the first $\ell$ registers in the type basis. This is because $(Z^k\otimes I_{n-\secparam})^{\otimes \ell}$ on a type state $\ket{T_1}$ is equivalent to adding a phase of $(-1)^{k\cdot(\oplus_{x\in T_1}x)}$. Hence, $$\Ex_{k}\left[(Z^k\otimes I_{n-\secparam})^{\otimes \ell}\otimes I_{tn}\ketbra{T}{T}(Z^k\otimes I_{n-\secparam})^{\otimes \ell}\otimes I_{tn}\right] = \Ex_{k}\left[\frac{1}{{t+\ell \choose \ell}}\sum_{\substack{T_1,T_2\subset T\\ |T_1|=|T_2|=\ell}}(-1)^{k\cdot(\oplus_{x\in T_1} x \bigoplus \oplus_{y\in T_2}y)}\ket{T_1}\ket{T\setminus T_1}\bra{T_2}\bra{T\setminus T_2}\right],$$ which for $\secparam$-prefix $\ell$-fold collision-free types states is non-zero only if $T_1=T_2$, giving us 
$$\Ex_{k}\left[(Z^k\otimes I_{n-\secparam})^{\otimes \ell}\otimes I_{tn}\ketbra{T}{T}(Z^k\otimes I_{n-\secparam})^{\otimes \ell}\otimes I_{tn}\right] = \Ex_{\substack{T_1\subset T\\ |T_1|=\ell}}\left[\ketbra{T_1}{T_1}\otimes\ketbra{T\setminus T_1}{T\setminus T_1}\right].$$ Over expectation over all $\secparam$-prefix $\ell$-fold collision-free types states, this state is close to $\Ex_{T_1\leftarrow [0:\ell]^N} \ketbra{T_1}{T_1} \otimes \Ex_{T_2\leftarrow [0:t]^N} \ketbra{T_2}{T_2}.$

\paragraph{Limitations.} To complement our result, we show that a $t$-copy PRSG is impossible in the CHS model, for $\ell = O\left( \frac{\secparam}{\log(\secparam)} \right)$ (for a restricted class of PRSG constructs which only takes one copy of the common Haar state). We show this by showing that the rank of $\sigma$ grows much faster than the rank of $\rho$, hence, a simple distinguisher is a projector on the eigenspace of $\rho$. In particular, let $\tilde{G}_k(\vartheta)$ be the PRSG. Then define \[
\rho := \E_{\substack{k\gets\set{0,1}^{\secp} \\ \ket{\vartheta}\gets\Haar_{n}}}
\left[
\tilde{G}_k(\ket{\vartheta})^{\otimes \ell} \otimes\ketbra{\vartheta}{\vartheta}^{\otimes t}
\right]
\text{ and }
\sigma := \Ex_{\substack{\ket{\varphi}\gets\Haar_n \\ \ket{\vartheta}\gets\Haar_{n}}}
\left[
\ketbra{\varphi}{\varphi}^{\otimes \ell}\otimes\ketbra{\vartheta}{\vartheta}^{\otimes t}
\right]
\]
Now since $\tilde{G}_k(\ket{\vartheta})$ is a PRSG, its output is negligibly close to a pure state. This means that the rank of $\rho \leq 2^\secparam {2^n+t+\ell-1 \choose t+\ell}$. In contrast, the rank of $\sigma = {2^n+\ell-1 \choose \ell}{2^n+t-1 \choose t}$. Note that, for $t = \secp^3$ and $\ell = \secp/\log(\secp)$, $\rank(\rho)/\rank(\sigma) = \negl$. Hence, we can find a distinguisher. Here the distinguisher just projects onto the eigenspace of $\rho$, $\rho$ gets accepted with probability $1$  but $\sigma$ gets accepted with probability $\negl$, hence giving a disguisher. Since PRFSs imply PRSs (by setting $c = 0$), achieving an $\ell$-query statistical PRFS in the CHS model for $\ell = \Omega(\secp/\log(\secp))$ is impossible.

\paragraph{Pseudorandom Function-like State Generators.} Next we extend this idea from PRSGs to achieve PR\underline{F}SGs. We take inspiration from the seminal Goldreich-Goldwasser-Micali approach~\cite{GGM86}. In particular, on the key $K = (k_1^0,\ldots,k_m^0, k_1^1,\ldots,k_m^1)\in\bit^{2\secp' m}$ and the input $\bfx = (x_1,\ldots,x_m)\in\bit^{m}$, define the PRFSG $G_K(\bfx,\ket{\vartheta})$ as follows: $G_K(\bfx,\ket{\vartheta}) = (Z^{\bigoplus_{i=1}^m k^{x_i}_i}\otimes I_{n-\secp'}) \ket{\vartheta}.$ Formally, 
the following two states are close: 
\[
\rho := \E_{\substack{K\gets\set{0,1}^{2m\secp'} \\ \ket{\vartheta}\gets\Haar_{n}}}
\left[
\otimes_{i=1}^q G_K(\bfx^i,\ket{\vartheta})^{\otimes \ell_i}\otimes\ketbra{\vartheta}{\vartheta}^{\otimes t}
\right],
\]
and
\[
\sigma := \Ex_{\substack{\forall i\in[q], \ket{\varphi_i}\gets\Haar_n \\ \ket{\vartheta}\gets\Haar_{n}}}
\left[
\otimes_{i=1}^{q}\ketbra{\varphi_i}{\varphi_i}^{\otimes \ell_i}\otimes\ketbra{\vartheta}{\vartheta}^{\otimes t}
\right],
\]
for all $\bfx^1,\ldots,\bfx^q\in\bit^{m}$ and $\ell_1,\ldots,\ell_q$ such that $\sum_{i=1}^{q}\ell_i = \ell$, for $\ell=O\left( \frac{\secparam^{1-c}}{\log(\secparam)^{1+\varepsilon}} \right)$ and $m=\secparam^c$, for any constant $\varepsilon > 0$ and $c \in [0,1)$. 
\par Just as before, we can write $\sigma$ as follows: 
$$\sigma = \bigotimes^{q}_{i=1}\Ex_{T_i\leftarrow [0:\ell_i]^N} \ketbra{T_i}{T_i} \otimes \Ex_{\tilde{T}\leftarrow [0:t]^N} \ketbra{\tilde{T}}{\tilde{T}},$$ where $T_i$'s and $\tilde{T}$ are type states and $N=2^n$. Note that, similar to the  last case, we can still write, 
$$\E_{ \ket{\vartheta}\gets\Haar_{n}}\left[ \ketbra{\vartheta}{\vartheta}^{\otimes t+\ell} \right] \approx_{\varepsilon} \Ex_{\substack{T\leftarrow [0:t+\ell]^N\\ T\text{ is }\secparam\text{-prefix }\ell\text{-fold collision-free}}}\ketbra{T}{T},$$
and any $\secparam$-prefix $\ell$-fold collision-free type $T$,
$$\ket{T} = \frac{1}{\sqrt{{t+\ell \choose \ell}}}\sum_{\substack{T_1\subset T\\ |T_1|=\ell}}\ket{T_1}\ket{T\setminus T_1}.$$ Now, after application of one layer of $(Z^{k}\otimes I_{n-\secp})^{\otimes \ell}$, we know that $\ket{T_1}$ unentagles from $\ket{T\setminus T_1}$. We extend this idea to show that even for a tensor of type states, applying $(Z^{k}\otimes I_{n-\secp})^{\otimes \tilde{\ell_i}}$ on parts of each type state still unentangles each of them as long as all the type states are $\secparam$-prefix $\ell$-fold collision-free type and their combined set is still $\secparam$-prefix $\ell$-fold collision-free. Formally, we show the following: 
Let $\tilde{\ell_1},\ldots,\tilde{\ell_q}\in\N$, and $t_1,\ldots,t_q\in\N$ such that $\sum_{i=1}^{q} \tilde{\ell_i} = \tilde{\ell}$ and $\sum_{i=1}^{q} t_i = t$. Then for any $\secparam$-prefix $\tilde{\ell}$-fold collision-free type $T$ and any mutually disjoint sets $T_1,\ldots,T_q$ satisfying $\bigcup_{i=1}^{q} T_i = T$ and $|T_i|=t_i+\tilde{\ell_i}$ for all $i\in [q]$,
\begin{multline*}
\Ex_{k\gets\bit^{n}}\left[\bigotimes_{i=1}^{q}  \left( \left(Z^{k}\otimes I_m\right)^{\otimes \ell_i}\otimes I_{n+m}^{\otimes t_i} \right) \ketbra{T_i}{T_i} \left(\left(Z^{k}\otimes I_m\right)^{\otimes \tilde{\ell_i}}\otimes I_{n+m}^{\otimes t_i}\right) \right] \\
= \bigotimes_{i=1}^{q}\Ex_{\substack{X_i\subset T_i\\ |X_i|=\tilde{\ell_i}}} \left[ \ketbra{X_i}{X_i} \otimes \ketbra{T_i\setminus X_i}{T_i\setminus X_i} \right].
\end{multline*}

\par Hence, applying each layer $(Z^{k^b_i}\otimes I_{n-\secp})$ unentagles all type states into two halfs. Hence, by repeated application, we get 
$$\rho \approx_{\varepsilon} \Ex_{\substack{T\leftarrow [0:t+\ell]^N\\ T\text{ is }\secparam\text{-prefix }\ell\text{-fold collision-free}}}\Ex_{(T_1, T_2, \dots, T_q, \hat{T})}
\left[
\bigotimes_{i=1}^{q}\ketbra{T_i}{T_i}\otimes\ketbra{\hat{T}}{\hat{T}}
\right],
$$
where $(T_1, T_2, \dots, T_q, \hat{T})$ are sampled as follows: for $i = 1, 2, \dots, q$, sample an $\ell_i$-subset from $T\setminus(\bigcup_{j=1}^{i-1} T_j)$ uniformly and let $\hat{T}:= T \setminus (\bigcup_{j=1}^{q} T_j)$. Over expectation over all $\secparam$-prefix $\ell$-fold collision-free types states, this state is close to $\sigma$.

\subsection{Quantum Bit Commitments}
With $t$-copy PRSG in hand, we construct a statistically-hiding, statistically-binding commitment scheme in the CHS model. Our scheme draws inspiration from the quantum commitment scheme introduced in~\cite{MY21,MNY23} that builds quantum bit commitments from $t$-copy PRSG. 
\par In particular, to commit to $b=0$, the committer creates a superposition over all keys of the PRSG in the decommitment register and runs the PRSG in superposition over this register. The committer sets this as the commitment register. To commit to $b=1$, the committer creates a maximally entangled state over the commitment and the decommitment register. Formally, 
        \[
            \ket{\psi_0}_{\sfC_i\sfR_i} 
            := \frac{1}{\sqrt{2^\secp}} \sum_{k \in \bit^\secp} G_k(\ket{\vartheta})_{\sfC_i} \ket{k||0^{n-\secp}}_{\sfR_i}
        \]
        and
        \[
            \ket{\psi_1}_{\sfC_i\sfR_i} 
            := \frac{1}{\sqrt{2^n}} \sum_{j \in \bit^n} \ket{j}_{\sfC_i} \ket{j}_{\sfR_i},
        \]
where, $(\sfC_1,\ldots,\sfC_p)$ is the commitment register and $(\sfR_1,\ldots,\sfR_p)$ is the reveal register.

\par To achieve hiding, our scheme relies on the pseudorandomness property of the PRSG. In particular, the commitment is very close to one where the keys are distinct for all $(\sfC_i,\sfR_i)$, in this case, one copy of PRS is indistinguishable from a maximally mixed state.\footnote{Note that this still needs multi-key security which is not trivial in the CHS model, since all the PRS generators share the same Haar state for randomness. But we prove that our construction satisfies multikey security.}

\par Unlike the approach in~\cite{MY21}, our construction is not of the canonical form~\cite{Yan22}. To achieve binding, the receiver performs multiple SWAP tests. In particular, we show that since the rank of the commitment registers is exponentially separated, multiple SWAP tests can distinguish between the two. 

\subsection{Black-Box Separations}

\paragraph{LOCC Indistinguishability.}
The notion of LOCC indistinguishability is well-studied and is referred to as quantum data hiding by quantum information theorists~\cite{BDF+99, DLT02, EW02, GB02, HLS05, MWW09, CLMO13, PNC14, CH14, CLMOW14, HBAB19}. In this setting, there is a challenger, two (possibly entangled and mixed) bipartite quantum states $\rho_{\sfA\sfB}$ and $\sigma_{\sfA\sfB}$, and a computationally unbounded, two-party distinguisher (Alice, Bob) who are spatially separated and without pre-shared entanglement. The challenger picks a quantum state from $\set{\rho_{\sfA\sfB}, \sigma_{\sfA\sfB}}$ uniformly at random and sends register $\sfA$ to Alice and register $\sfB$ to Bob respectively. The task of Alice and Bob is to distinguish whether they are given $\rho_{\sfA\sfB}$ or $\sigma_{\sfA\sfB}$ by performing local operations and communicating classically. We call such distinguishers \emph{LOCC adversaries}.

We focus on the case where Alice and Bob each receive $t = \text{poly}(\secp)$ copies of $\ket{\psi}_\sfA$ and $\ket{\phi}_\sfB$, where $\ket{\psi}$ and $\ket{\phi}$ are either two identical or i.i.d. Haar states of length $n = \omega(\log(\secp))$. Explicitly, the two input states are
\[
\rho_{\sfA\sfB} = \Ex_{\ket{\psi} \gets \Haar_n} \left[ \ketbra{\psi}{\psi}^{\otimes t}_{\sfA} \otimes \ketbra{\psi}{\psi}_{\sfB}^{\otimes t} \right],
\]
\[
\sigma_{\sfA\sfB} = \Ex_{\ket{\psi} \gets \Haar_n} \left[ \ketbra{\psi}{\psi}^{\otimes t}_{\sfA} \right]
\otimes \Ex_{\ket{\phi} \gets \Haar_n} \left[ \ketbra{\phi}{\phi}^{\otimes t}_{\sfB} \right].
\]
Note that if global measurements are allowed, performing SWAP tests can easily distinguish them. As one of our main technical contributions, we show that for any LOCC adversary, the advantage of distinguishing $\rho_{\sfA\sfB}$ from $\sigma_{\sfA\sfB}$ is negligible in $\secp$. Before we explain the proof, we compare our theorem with~\cite[Theorem~8]{Har23}. In short, the theorems are incomparable. Our setting is stronger in the sense that the LOCC adversary both obtain polynomial copies of the input, while~\cite[Theorem~8]{Har23} studies the single-copy setting. However, \cite[Theorem~8]{Har23} is more general since it holds for a family of input states, whereas the input in our setting is fixed to $\rho_{\sfA\sfB}$ and $\sigma_{\sfA\sfB}$, which are belong to the family. We refer the readers to~\Cref{remark:Harrow} for a detailed discussion.

Toward the proof, we start by using the following common technique in proving LOCC indistinguishability: the set of LOCC measurements is a (proper) subset of the set of all positive partial transpose (PPT) measurements~\cite{CLMOW14}. Hence, it is sufficient to upper bound the maximum distinguishing advantage over two-outcome PPT measurements, \ie $\set{M_{\sfA\sfB}, I_{\sfA\sfB} - M_{\sfA\sfB}}$ such that $0\preceq M_{\sfA\sfB} \preceq I_{\sfA\sfB}$ and $0\preceq M_{\sfA\sfB}^{\Gamma_B} \preceq I_{\sfA\sfB}$, where $M_{\sfA\sfB}^{\Gamma_B}$ denote the partial transpose of $M_{\sfA\sfB}$ \wrt $\sfB$. Next, from the basic properties of partial transpose and trace norm, we show that the distinguishing advantage is bounded by the trace norm between $\rho_{\sfA\sfB}^{\Gamma_\sfB}$ and $\sigma_{\sfA\sfB}^{\Gamma_\sfB}$.

The most technical part of the proof is to upper bound the quantity $\norm{\rho_{\sfA\sfB}^{\Gamma_\sfB} - \sigma_{\sfA\sfB}^{\Gamma_\sfB}}_1$. We point out that the partial transpose of a density matrix might \emph{not} be a positive semidefinite matrix. Our first step is to expand $\rho_{\sfA\sfB}$ and $\sigma_{\sfA\sfB}$ in the \emph{type basis} as follows:
\[
\rho_{\sfA\sfB} = \Ex_{T\gets[0:2t]^d} \left[ \ketbra{T}{T}_{\sfA\sfB} \right],
\]
\[
\sigma_{\sfA\sfB} = \Ex_{S_A\gets[0:t]^d} \left[ \ketbra{S_A}{S_A}_\sfA \right] \otimes \Ex_{S_B\gets[0:t]^d} \left[ \ketbra{S_B}{S_B}_\sfB \right],
\]
where $d := 2^n$. Next, we further conditioned on the events that (1)~$T, S_A$ and $S_B$ each have no repeated elements (2)~$S_A$ and $S_B$ have no identical elements. From the collision bound, doing so only incurs an additional error of $O(t^2/d) = \negl(\secp)$. Therefore, we can now treat $T, S_A$ and $S_B$ as \emph{sets}. It suffices to prove that $\norm{\Tilde{\rho}_{\sfA\sfB}^{\Gamma_\sfB} - \Tilde{\sigma}_{\sfA\sfB}^{\Gamma_\sfB}}_1$ is negligible in $\secp$, where
\[
\Tilde{\rho}_{\sfA\sfB} := \Ex_{T\gets\binom{[d]}{2t}} \left[ \ketbra{T}{T}_{\sfA\sfB} \right],
\] 
\[
\Tilde{\sigma}_{\sfA\sfB} := \Ex_{\substack{S_A,S_B\gets\binom{[d]}{t}: \\ S_A \cap S_B = \emptyset}} \left[ \ketbra{S_A}{S_A}_\sfA \otimes \ketbra{S_B}{S_B}_\sfB \right].
\]
Observe that the $\Tilde{\sigma}^{\Gamma_B}_{\sfA\sfB} = \Tilde{\sigma}_{\sfA\sfB}$. To obtain a simpler expression of $\Tilde{\rho}^{\Gamma_B}_{\sfA\sfB}$, we rely on the following useful identity for bi-partitioning the type states:
\[
\ket{T}_{\sfA\sfB} 
= \sum_{X\in\binom{T}{t}} \frac{1}{\sqrt{\binom{2t}{t}}} \ket{T\setminus X}_\sfA \otimes  \ket{X}_\sfB.
\]
Hence, the partial transpose of $\Tilde{\rho}_{\sfA\sfB}$ can be written as
\[
\Tilde{\rho}^{\Gamma_B}_{\sfA\sfB}
= \Ex_{T\gets\binom{[d]}{2t}} \left[ \frac{1}{\binom{2t}{t}} \sum_{\substack{X,Y\in\binom{T}{t}}} \ketbra{T\setminus X}{T\setminus Y}_\sfA \otimes \ketbra{Y}{X}_\sfB \right].
\]
If $X = Y$, then the term is the tensor product of two \emph{disjoint} sets $\ketbra{T\setminus X}{T\setminus X}_\sfA \otimes \ketbra{X}{X}_\sfB$. Such a term will be canceled out by the corresponding term in $\Tilde{\sigma}^{\Gamma_B}_{\sfA\sfB}$ since they have equal coefficients. Therefore, the difference between them is the following matrix with mismatched $X$ and $Y$:
\[
\Tilde{\rho}^{\Gamma_B}_{\sfA\sfB} - \Tilde{\sigma}^{\Gamma_B}_{\sfA\sfB}
= \Ex_{T\gets\binom{[d]}{2t}} \left[ \frac{1}{\binom{2t}{t}} \sum_{\substack{X,Y\in\binom{T}{t}}: \\ X\neq Y} \ketbra{T\setminus X}{T\setminus Y}_\sfA \otimes \ketbra{Y}{X}_\sfB \right].
\]
We continue to simplify it by applying a double-counting argument. Every tuple of sets $(T,X,Y)$ uniquely determines a tuple of mutually disjoint sets $(C,I,X',Y')$ satisfying $C = T \setminus (X \cup Y)$ ($C$ for the complement of $X\cup Y$), $I = X \cap Y$ ($I$ for intersection), $X' = X \setminus I$ and $Y' = Y \setminus I$. Hence, $T \setminus X = C \uplus Y'$, $Y = I \uplus Y'$, $T \setminus Y = C \uplus X'$, and $X = I \uplus X'$ where $\uplus$ denotes the disjoint union. By further classifying the summands according to $s := |C| = |I| \in \set{0,1,\dots,t-1}$ (note that then $|X'| = |Y'| = t-s$), we have
\begin{align*}
& \norm{\Tilde{\rho}_{\sfA\sfB}^{\Gamma_\sfB} - \Tilde{\sigma}_{\sfA\sfB}^{\Gamma_\sfB}}_1 
=  \frac{1}{\binom{d}{2t}\binom{2t}{t}} \norm{ \sum_{s = 0}^{t-1} \sum_{C\in\binom{[d]}{s}} \sum_{I \in \binom{[d]\setminus C}{s}} 
 \sum_{\substack{X',Y' \in \binom{[d]\setminus (C\uplus I)}{t-s}: \\ X' \cap Y' = \emptyset}} \ket{C\uplus Y'}_\sfA \ket{I\uplus Y'}_\sfB \bra{C\uplus X'}_\sfA \bra{I\uplus X'}_\sfB }_1 \\
& \leq \frac{1}{\binom{d}{2t}\binom{2t}{t}} \sum_{s = 0}^{t-1} \sum_{C\in\binom{[d]}{s}} \sum_{I \in \binom{[d]\setminus C}{s}} 
\Bigg\| \underbrace{\sum_{\substack{X',Y' \in \binom{[d]\setminus (C\uplus I)}{t-s}: \\ X' \cap Y' = \emptyset}} \ket{C\uplus Y'}_\sfA \ket{I\uplus Y'}_\sfB \bra{C\uplus X'}_\sfA \bra{I\uplus X'}_\sfB}_{=: K_{C,I}} \Bigg\|_1,
\end{align*}
where the inequality follows from the triangle inequality. We observe that the matrix $K_{C,I}$ has the same structure as the adjacency matrix of \emph{Kneser graphs}. Here, we recall the definition of Kneser graphs. For $v,k\in\N$, the Kneser graph $K(v, k)$ is the graph whose vertices correspond to the $k$-element subsets of the set $[v]$, and two vertices are adjacent if and only if the two corresponding sets are disjoint. Therefore, for every $(C,I)$, the matrix $K_{C,I}$ is isospectral to the adjacency matrix of the Kneser graph $K(d-|C|-|I|, t-|I|)$. Finally, we employ the well-studied spectral property of Kneser graphs as a black box to obtain an $O(t^2/d) = \negl(\secp)$ upper bound for $\norm{\Tilde{\rho}_{\sfA\sfB}^{\Gamma_\sfB} - \Tilde{\sigma}_{\sfA\sfB}^{\Gamma_\sfB}}_1$.

Furthermore, we show the tightness of the theorem by constructing an optimal LOCC distinguisher that achieves the same advantage. The strategy is simple: Alice and Bob each individually measure every copy of their input in the computational basis and obtain a total of $2t$ outcomes. Then, they output $1$ if there is any collision among these $2t$ outcomes.

\paragraph{Impossibility Results in the CHS model.}

With the LOCC Haar indistinguishability theorem in hand, we investigate the limits of the CHS model when the communication between the parties is classical. We show that the several impossibility results of information-theoretically secure schemes in the plain model can be generically lifted to the CHS model, even when the adversary does not receive any common Haar state. We emphasize that there is no classical counterpart in the CRS model. If the adversary is not given the CRS, then many information-theoretically secure schemes exist, such as key agreements. 

As common in proving impossibilities, our approach is to convert schemes in the CHS model to those in the plain model. The transform is simple: in the new scheme, the parties each sample polynomially many copies of the Haar state \emph{independently} and run the original scheme. Crucially, despite the inconsistency in their Haar states, the new scheme still satisfies completeness thanks to the LOCC Haar indistinguishability. A caveat is that sampling Haar states is time-inefficient. However, since the impossibilities in the plain model are still valid if the (honest) algorithms in the scheme are time-inefficient, doing so is acceptable for the sake of showing impossibilities.

\paragraph{Separation Results.}  We separate many important primitives from $(\secp,\omega(\log(\secp)))$-PRSG.  Since $(\secp,\omega(\log(\secp)))$-PRSGs do not exist in the CHS model, we need to ``strengthen'' the oracle in order to prove separations. For every security parameter $\secp\in\N$, we define the oracle as $\set{G_k}_{k\in\bit^\secp}$ where each $G_k$ is an isometry that takes no input and outputs an i.i.d. Haar state $\ket{\psi_k}$. 

Relative to this oracle, the implementation of the PRSG is straightforward: the output on $k$ of any length $\secp\in\N$ is $\ket{\psi_k}$. The security directly follows from the hardness of unstructured search. To prove the non-existence of QCCC schemes, we employ a two step approach. First, showing that a scheme with respect to this oracle can be transformed to schemes with respect to a much weaker oracle. Second, showing that this much weaker oracle does not give much extra power over the plain model. Formally:  First, similar to the previous section, we show that due to the LOCC indistinguishability, the parties can sample all ``large'' quantum states on their own, and the correctness and security is only ``polynomially'' affected\footnote{Since the Haar indistinguishability has a factor of $O(t^2/d)$, as long as $t^2/d$ is inverse-polynomial, we do not incur a lot of loss.}. This means that any scheme with respect this this oracle can be turned into a scheme with respect to an oracle with only short (constant times logarithmic) Haar states. Second, for short (constant times logarithmic) quantum states, we show that this oracle does not give much extra power since an adversary can learn the oracle completely. This is because for short-enough states, the adversary can run tomography on polynomial queries and learn the state with up to inverse polynomial error. Hence, the adversary can simulate both parties post-selecting on a transcript to learn any secret\footnote{Note that since the adversary does not need to be efficient, as long as they have the description of this oracle, they can post-select on the transcript.}. This means that any scheme secure in the presence of this oracle can be transformed into another scheme that is secure in the plain model.  

Lastly, we observe that by considering a generalized oracle, namely $\set{G_{k,x}}_{k,x\in\bit^\secp}$, we can show that (classically accessible) PRFSGs with super-logarithmic input length exist. We can extend the impossibility of QCCC commitments to hold in the presence of the generalized oracle as well. Thus, we can separate PRFS and QCCC commitments. 
\section{Preliminaries}

We denote the security parameter by $\secp$. We assume that the reader is familiar with the fundamentals of quantum computing covered in~\cite{nielsen_chuang_2010}.

\subsection{Notation}
\begin{itemize}
\item We use $[n]$ to denote $\{1,\ldots,n\}$ and $[0:n]$ to denote $\{0,1,\ldots,n\}$.
\item For any finite set $T$ and any integer $0\leq k\leq |T|$, we denote by $\binom{T}{k}$ the set of all $k$-size subsets of $T$.
\item For any finite set $T$, we use the notation $x\gets T$ to indicate that $x$ is sampled uniformly from $T$. 
\item We denote by $S_t$ the symmetric group of degree $t$.
\item For any set $A$ and $t \in \N$, we denote by $A^t$ the $t$-fold Cartesian product of $A$. 
\item For $\sigma\in S_t$ and $\bfv = (v_1,\ldots,v_t)$, we define $\sigma(\bfv) := (v_{\sigma(1)},\ldots,v_{\sigma(t)})$.
\item We denote by ${\cal D}(H)$ the set of density matrices in the Hilbert space $H$. 
\item Let $\rho_{AB} \in \cD(H_A\otimes H_B)$, by $\Tr_{B}(\rho_{AB}) \in \cD(H_A)$ we denote the reduced density matrix by taking partial trace over $B$.
\item We denote by $\TD(\rho, \rho') := \frac{1}{2}\| \rho - \rho' \|_1$ the trace distance between quantum states $\rho, \rho'$, where $\norm{X}_1 = \Tr(\sqrt{X^\dagger X})$ denotes the trace norm.
\item For any matrices $A,B$, we write $A\preceq B$ to indicate that $B-A$ is positive semi-definite.
\item For any Hermitian matrix $O$, the trace norm of $O$ has the following variational definition: 
\[
\norm{O}_1 = \max_{-I\preceq M\preceq I} \Tr(MO).
\]
Furthermore, if $\Tr(O) = 0$ then $\norm{O}_1 = 2\cdot\max_{0 \preceq M\preceq I} \Tr(MO)$.
\item We denote the Haar measure over $n$ qubits by $\Haar_n$.
\item For any matrix $M_{\sfA\sfB} = \sum_{i,j,k,\ell} \alpha_{ijk\ell} \ketbra{i}{j}_\sfA \otimes \ketbra{k}{\ell}_\sfB$ on registers $(\sfA,\sfB)$, by $M_{\sfA\sfB}^{\Gamma_\sfB}$ we denote its \emph{partial transpose} \wrt register $\sfB$, \ie $M_{\sfA\sfB}^{\Gamma_\sfB} = \sum_{i,j,k,\ell} \alpha_{ijk\ell} \ketbra{i}{j}_\sfA \otimes \ketbra{\ell}{k}_\sfB$.\footnote{Note that the (partial) transpose operation needs to be defined \wrt to an orthogonal basis. Throughout this work, it is always defined \wrt to the computational basis.}
\end{itemize}

\subsection{Common Haar State Model}
The Common Haar State (CHS) model is related to the Common Reference Quantum State (CRQS) model~\cite{MNY23}. In this model, all parties receive polynomially many copies of a \emph{single} quantum state sampled from the Haar distribution. Recently, another work of Chen et.al.~\cite{chen2024power} studied a similar model called the Common Haar Random State (CHRS) model. In the CHRS model, every party receives polynomially many copies of \emph{polynomially many} i.i.d. Haar states.\\

\noindent We define another variant of the CHS model called the \emph{Keyed} Common Haar State Model. In this model, all parties (once the security parameter is set to $\secp$) have access to  the oracle (called the \emph{Keyed} Common Haar State Oracle) $G^{\secp} := \set{G_k}_{k \in \bit^\secp}$ as follows. For every $k\in\bit^\secp$, the oracle $G_k$ is a Haar isometry that maps any state $\ket{\psi}$ to $\ket{\psi}\ket{\vartheta_k}$, where $\ket{\vartheta_k}$ is a Haar state of length $n(\secp) = \omega(\log(\secp))$.\\

\noindent While the above variant is harder to instantiate (hence not useful for constructions), is a natural candidate for black-box separations as seen is~\Cref{sec:QBB_QCCC}.
 
\subsubsection{Pseudorandom State (PRS) Generators in the CHS model}

\begin{definition}[Statistically secure $(\secparam,n,\ell)$-pseudorandom state generators in the CHS model]
We say that a QPT algorithm $G$ is a \emph{statistically secure $(\secparam,n,\ell)$-pseudorandom state generator (PRSG)} in the CHS model if the following holds:
    \begin{itemize}
        \item \textbf{State Generation}:
        For any $\secp\in\N$ and $k\in\set{0,1}^{\secp}$, the algorithm $G_k$ (where $G_k$ denotes $G(k,\cdot)$) is a quantum channel such that for every $n(\secp)$-qubit state $\ket{\vartheta}$,
        \[
        G_k(\ketbra{\vartheta}{\vartheta}) = \ketbra{\vartheta_k}{\vartheta_k},
        \]
        for some $n(\secp)$-qubit state $\ket{\vartheta_k}$. We sometimes write $G_k(\ket{\vartheta})$ for brevity.\footnote{More generally, the generation algorithm could take multiple copies of the common Haar state as input or output a state of different size compared to the common Haar state. Here, we focus on a restricted class of generators that only require a single copy of the common Haar state as input, and the output of the generator matches the size of the common Haar states.}
        \item \textbf{$\ell$-copy Pseudorandomness}: 
        For any polynomial $t(\cdot)$ and any non-uniform, unbounded adversary $\alice = \set{\alice_\secp}_{\secp\in\N}$, there exists a negligible function $\negl(\cdot)$ such that: 
        \begin{multline*}
        \Bigg|
        \Pr_{\substack{k\gets\set{0,1}^{\secp} \\ \ket{\vartheta}\gets\Haar_{n(\secp)}}}
        \left[ \alice_{\secp}\left(G_k(\ket{\vartheta})^{\otimes \ell(\secp)} \otimes \ketbra{\vartheta}{\vartheta}^{\otimes t(\secp)} \right) = 1 \right] \\
        - \Pr_{\substack{\ket{\varphi}\gets\Haar_{n(\secp)} \\ \ket{\vartheta}\gets\Haar_{n(\secp)}}}
        \left[ \alice_{\secp}\left(\ketbra{\varphi}{\varphi}^{\otimes \ell(\secp)} \otimes \ketbra{\vartheta}{\vartheta}^{\otimes t(\secp)} \right) = 1 \right]
        \Bigg|
        \leq \negl(\secp).
        \end{multline*}
    \end{itemize}
\noindent If $G$ satisfies $\ell$-copy pseudorandomness for every polynomial $\ell(\cdot)$ then we drop $\ell$ from the notation and simply denote it to be a $(\secparam,n)$-PRSG. 
\end{definition}
\noindent We define a stronger definition below called \emph{multi-key $\ell$-copy PRS generators}. Looking ahead, our construction of PRS in~\Cref{sec:prs-con} satisfies this definition.

\begin{definition}[Multi-key statistically secure $(\secparam,n,\ell)$-pseudorandom state generators in the CHS model]
We say that a QPT algorithm $G$ is a \emph{multi-key statistically secure $(\secparam,n,\ell)$-pseudorandom state generator} in the CHS model if the following holds:
    \begin{itemize}
        \item \textbf{State Generation}:
        For any $\secp\in\N$ and $k\in\set{0,1}^{\secp}$, the algorithm $G_k$ (where $G_k$ denotes $G(k,\cdot)$) is a quantum channel such that for every $n(\secp)$-qubit state $\ket{\vartheta}$,
        \[
        G_k(\ketbra{\vartheta}{\vartheta}) = \ketbra{\vartheta_k}{\vartheta_k},
        \]
        for some $n(\secp)$-qubit state $\ket{\vartheta_k}$. We sometimes write $G_k(\ket{\vartheta})$ for brevity.
        \item \textbf{Multi-key $\ell$-copy Pseudorandomness}: For any polynomial $t(\cdot)$, $p(\cdot)$ and any non-uniform, unbounded adversary $\alice = \set{\alice_\secp}_{\secp\in\N}$, there exists a negligible function $\negl(\cdot)$ such that: 
        \begin{multline*}
        \Bigg|
        \Pr_{\substack{k_1,\ldots,k_{p(\secp)}\gets\set{0,1}^{\secp}\\ \ket{\vartheta}\gets\Haar_{n(\secp)}}}
        \left[ \alice_{\secp}\left(\bigotimes_{i=1}^{p(\secp)} G_{k_i}(\ket{\vartheta})^{\otimes \ell(\secp)} \otimes \ketbra{\vartheta}{\vartheta}^{\otimes t(\secp)} \right) = 1 \right] \\
        - \Pr_{\substack{\ket{\varphi_1},\ldots,\ket{\varphi_{p(\secp)}}\gets\Haar_{n(\secp)}\\ \ket{\vartheta}\gets\Haar_{n(\secp)}}}
        \left[ \alice_{\secp}\left(\bigotimes_{i=1}^{p(\secp)}\ketbra{\varphi_i}{\varphi_i}^{\otimes \ell(\secp)} \otimes \ketbra{\vartheta}{\vartheta}^{\otimes t(\secp)} \right) = 1 \right]
        \Bigg|
        \leq \negl(\secp).
        \end{multline*}
    \end{itemize}
\noindent If $G$ satisfies multi-key $\ell$-copy pseudorandomness for every polynomial $\ell(\cdot)$ then we drop $\ell$ from the notation and simply denote it to be a multi-key $(\secparam,n)$-PRSG. 
\end{definition}

\begin{remark}
Note that in the plain model, PRS implies multi-key PRS because the pseudorandom state generator does not share randomness for different keys. It is not clear whether this holds in the CHS model as the different executions of the pseudorandom state generator share the same common Haar state.
\end{remark}

\subsubsection{Pseudorandom Function-Like State (PRFS) Generators in the CHS model}

\begin{definition}[Statistical selectively secure $(\secparam,m,n,\ell)$-PRFS generators]
We say that a QPT algorithm $G$ is a \emph{statistical selectively secure $(\secparam,m,n,\ell)$-PRFS generator} in the CHS model if the following holds:
\begin{itemize}
\item \textbf{State Generation}:
For any $\secp\in\N$, $k\in\set{0,1}^{\secp}$ and $x\in\bit^{m(\secp)}$, where $m(\secp)$ is the input length, the algorithm $G_{k,x}$ (where $G_{k,x}$ denotes $G(k,x,\cdot)$) is a quantum channel such that for every $n(\secp)$-qubit state $\ket{\vartheta}$,
\[
G_{k,x}(\ketbra{\vartheta}{\vartheta}) = \ketbra{\vartheta_{k,x}}{\vartheta_{k,x}},
\]
for some $n(\secp)$-qubit state $\ket{\vartheta_{k,x}}$. We sometimes write $G_{k,x}(\ket{\vartheta})$ or $G_k(x,\ket{\vartheta})$ for brevity.
\item \textbf{$\ell$-query Selective Security}:
For any polynomial $t(\cdot)$, any non-uniform, unbounded adversary $\alice = \set{\alice_\secp}_{\secp\in\N}$, and any tuple of (possibly repeated) $m(\secp)$-bit indices 
$(x_1, \dots, x_{\ell(\secp)})$, there exists a negligible function $\negl(\cdot)$ such that for all $\secp\in\N$,
\begin{multline*}
\Bigg\lvert
\Pr_{k\gets\bit^\secp, \ket{\vartheta}\gets\Haar_{n(\secp)}}
\left[
A_\secp \left( x_1,\dots,x_{\ell(\secp)}, \bigotimes_{i = 1}^{\ell(\secp)} G(k,x_i,\ket{\vartheta}) \otimes \ketbra{\vartheta}{\vartheta}^{\otimes t(\secp)} \right) = 1
\right] \\
- 
\Pr_{\substack{\forall x \in \bit^{m(\secp)},\ \ket{\varphi_x}\gets\Haar_{n(\secp)}, \\ 
\ket{\vartheta}\gets\Haar_{n(\secp)}}}
\left[
A_\secp \left( x_1,\dots,x_{\ell(\secp)},\bigotimes_{i = 1}^{\ell(\secp)} \ketbra{\varphi_{x_i}}{\varphi_{x_i}} \otimes \ketbra{\vartheta}{\vartheta}^{\otimes t(\secp)} \right) = 1
\right]
\Bigg\rvert
\leq \negl(\secp).
\end{multline*}
\end{itemize}
\noindent If $G$ satisfies $\ell$-query selective security for every polynomial $\ell(\cdot)$, we drop $\ell$ from the notation and say that $G$ is a $(\secparam,m,n)$-PRFS generator. 
\end{definition}

\subsubsection{Quantum Commitments in the CHS model}
\begin{definition}[Quantum commitments in the CHS model]
A (non-interactive) quantum commitment scheme in the CHS model is given by a tuple of the committer $C$ and receiver $R$ parameterized by a polynomial $p(\cdot)$, both of which are uniform QPT algorithms. Let $\ket{\vartheta}$ be the $n(\secp)$-qubit common Haar state. The scheme is divided into two phases: the commit phase, and the reveal phase as follows:
    \begin{itemize}
        \item Commit phase: $C$ takes $\ket{\vartheta}^{\otimes p(\secp)}$ and a bit $b \in \set{0, 1}$ to commit as input, generates a quantum state on registers $\sfC$ and $\sfR$, and sends the register $\sfC$ to $R$.
        \item Reveal phase: $C$ sends $b$ and the register $\sfR$ to $R$. $R$ takes $\ket{\vartheta}^{\otimes p(\secp)}$ and $(b, \sfC, \sfR)$ given by $C$ as input, and outputs $b$ if it accepts and otherwise outputs $\bot$.
    \end{itemize}
\end{definition}

\begin{definition}[Poly-copy statistical hiding]
A quantum commitment scheme $(C, R)$ in the CHS model satisfies \emph{poly-copy statistical hiding} if for any non-uniform, unbounded malicious receiver $R^* = \set{R^*_\secp}_{\secp\in\N}$, and any polynomial $t(\cdot)$, there exists a negligible function $\negl(\cdot)$ such that
\begin{multline*}
\Bigg\vert \Pr[R^*_\secp(\ket{\vartheta}^{\otimes t(\secp)},\Tr_{\sfR}(\sigma_{\sfC\sfR})) = 1:
\substack{\ket{\vartheta}\gets \Haar_{n(\secp)},\\ \sigma_{\sfC\sfR}\gets C_{\com}(\ket{\vartheta}^{\otimes p(\secp)},0)}] \\
- \Pr[R^*_\secp(\ket{\vartheta}^{\otimes t(\secp)},\Tr_{\sfR}(\sigma_{\sfC\sfR})) = 1
:\substack{\ket{\vartheta}\gets \Haar_{n(\secp)}, \\ \sigma_{\sfC\sfR}\gets C_{\com}(\ket{\vartheta}^{\otimes p(\secp)},1)}] \Bigg\vert
\leq \negl(\secp),
\end{multline*}
where $C_{\com}$ is the commit phase of $C$.
\end{definition}

\begin{definition}[Statistical sum-binding] A quantum commitment scheme $(C, R)$ in the CHS model satisfies \emph{statistical sum-binding} if the following holds. For any pair of non-uniform, unbounded malicious senders $C_0^{\ast}$ and $C_1^{\ast}$ that take $\ket{\vartheta}^{\otimes T(\secp)}$ for arbitrary large $T(\cdot)$ as input and work in the same way in the commit phase, if we let $p_b$ to be the probability that $R$ accepts the revealed bit $b$ in the interaction with $C_b^{\ast}$ for $b \in \set{0, 1}$, then we have
$$p_0+p_1\leq 1+\negl(\secp).$$
\end{definition}

\subsection{Symmetric Subspaces, Type States, and  Haar States}

\noindent The proofs of facts and lemmas stated in this subsection can be found in~\cite{Harrow13church}. Let $\bfv = (v_1,\ldots,v_t)\in A^t$ for some finite set $A$. Let $|A| = N$. Define $\type(\bfv)\in [0:t]^N$ to be the \emph{type vector} such that the $i^{th}$ entry of $\type(\bfv)$ equals the number of occurrences of $i\in[N]$ in $\bfv$.\footnote{We identify $[0:t]^N$ as $[0:t]^A$.} In this work, by $T\in [0:t]^N$ we implicitly assume that $\sum_{i\in[N]}T_i = t$. 

For $T\in[0:t]^N$, we denote by $\setT(T)$ the \emph{multiset} uniquely determined by $T$. That is, the multiplicity of $i\in\setT(T)$ equals $T_i$ for all $i\in[N]$.
We write $T\gets [0:t]^N$ to mean sampling $T$ uniformly from $[0:t]^N$ conditioned on $\sum_{i\in[N]}T_i = t$. We write $\bfv\in T$ to mean $\bfv \in A^t$ satisfies $\type(\bfv) = T$.

In this work, we will focus on \emph{collision-free} types $T$ which satisfy $T_i\in\bit$ for all $i\in[N]$. A collision-free type $T$ can be naturally treated as a \emph{set} and we write $\bfv\gets T$ to mean sampling a uniform $\bfv$ conditioned on $\type(\bfv) = T$.

\begin{definition}[Type states] \label{def:type_states}
Let $T\in[0:t]^N$, we define the \emph{type states}:
\[
\ket{T} := \sqrt{\frac{\prod_{i\in[N]} T_i !}{t!}} \sum_{\bfv \in T} \ket{\bfv}.
\]
If $T$ is collision-free, then it can be simplified to 
\[
\ket{T} = \frac{1}{\sqrt{t!}} \sum_{\bfv \in T} \ket{\bfv}.
\]
Furthermore, it has the following useful expression
\begin{equation} \label{eq:useful}
\ketbra{T}{T} 
= \frac{1}{t!} \sum_{\bfv, \bfu \in T} \ketbra{\bfv}{\bfu}
= \Ex_{\bfv\gets T}\left[ \sum_{\sigma\in S_t}\ketbra{\bfv}{\sigma(\bfv)} \right].
\end{equation}
\end{definition} 

\begin{lemma}[Average of copies of Haar-random states]
\label{fact:avg-haar-random}
For all $N,t \in \N$, we have
\[
\E_{\ket{\vartheta} \leftarrow \Haar(\C^N)} \ketbra{\vartheta}{\vartheta}^{\otimes t}
= \E_{T \leftarrow [0:t]^N} \ketbra{T}{T}.
\]
\end{lemma}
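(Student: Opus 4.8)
The plan is to compute the left-hand side by exploiting the fact that $\Ex_{\ket{\vartheta}\gets\Haar(\C^N)}\ketbra{\vartheta}{\vartheta}^{\otimes t}$ is the (normalized) projector onto the symmetric subspace $\sym^t(\C^N)$, and then to check that the type states $\{\ket{T}\}_{T\in[0:t]^N}$ form an orthonormal basis of that subspace, so that the right-hand side is exactly the same projector normalized to be a density matrix. Concretely, I would first invoke the standard fact (from \cite{Harrow13church}) that $\Ex_{\ket{\vartheta}\gets\Haar}\ketbra{\vartheta}{\vartheta}^{\otimes t} = \Pi_{\sym} / \dim(\sym^t(\C^N))$, where $\Pi_{\sym}$ is the orthogonal projector onto the symmetric subspace and $\dim(\sym^t(\C^N)) = \binom{N+t-1}{t}$.

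Next I would establish that $\{\ket{T} : T\in[0:t]^N,\ \sum_i T_i = t\}$ is an orthonormal basis of $\sym^t(\C^N)$. Orthonormality: from the definition $\ket{T} = \sqrt{\prod_i T_i!/t!}\sum_{\bfv\in T}\ket{\bfv}$, two distinct types have disjoint supports in the computational basis, hence $\braket{T}{T'} = 0$ for $T\neq T'$; and $\braket{T}{T} = (\prod_i T_i!/t!)\cdot|\{\bfv:\type(\bfv)=T\}| = (\prod_i T_i!/t!)\cdot(t!/\prod_i T_i!) = 1$. Membership in $\sym^t$: each $\ket{T}$ is manifestly invariant under permuting the $t$ tensor factors, since it is a uniform superposition over an orbit of $S_t$. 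Spanning: the number of such types equals $\binom{N+t-1}{t} = \dim(\sym^t(\C^N))$, so an orthonormal set of that size must be a basis. Therefore $\Pi_{\sym} = \sum_T \ketbra{T}{T}$, and dividing by $\binom{N+t-1}{t}$ gives $\Ex_{T\gets[0:t]^N}\ketbra{T}{T}$ since the uniform distribution over types puts mass $1/\binom{N+t-1}{t}$ on each $T$.

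The only mildly delicate point — and the step I would be most careful about — is matching normalizations and counting: verifying $\braket{T}{T}=1$ uses the multinomial count $|\{\bfv\in A^t:\type(\bfv)=T\}| = t!/\prod_i T_i!$, and the equality of the two expressions hinges on the coincidence that the number of types equals $\dim\sym^t(\C^N) = \binom{N+t-1}{t}$ (stars and bars). Neither is hard, but both must be stated for the ``orthonormal set of the right size is a basis'' argument to go through. Everything else is bookkeeping with the definitions already recorded in the excerpt, so I would keep the write-up to roughly the two displays above plus the dimension count.
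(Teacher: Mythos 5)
Your proposal is correct, and it is essentially the proof the paper has in mind: the paper does not prove this lemma itself but defers to~\cite{Harrow13church}, where the argument is exactly yours — the Haar average is the normalized projector onto $\sym^t(\C^N)$ by Schur's lemma, and the type states form an orthonormal basis of that subspace of size $\binom{N+t-1}{t}$, so the two sides coincide. The normalization check $\braket{T}{T}=1$ and the stars-and-bars dimension count that you flag as the delicate points are handled correctly.
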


\subsection{Quantum Black-Box Reductions}
We recall the definition of fully black-box reductions~\cite{RTV04,BBF13} and their quantum analogue. The definitions below are taken verbatim from~\cite{HY20}.

\begin{definition}[Quantum primitives]
A quantum primitive $\cP$ is a pair $(\cF_\cP, \cR_\cP)$, where $\cF_\cP$ is a set of quantum algorithms $\cI$, and $\cR_\cP$ is a relation over pairs $(\cI, \cA)$ of quantum algorithms $\cI \in \cF_\cP$ and $\cA$. A quantum algorithm $\cI$ implements $\cP$ or is an implementation of $\cP$ if $\cI \in \cF_\cP$. If $\cI \in \cF_\cP$ is efficient, then $\cI$ is an efficient implementation of $\cP$. A quantum algorithm $\cA$ $\cP$-breaks $\cI \in \cF_\cP$ if $(\cI, \cA) \in \cR_\cP$. A secure implementation of $\cP$ is an implementation $\cI$ of $\cP$ such that no efficient quantum algorithm $\cP$-breaks $\cI$. The primitive $\cP$ quantumly exists if there exists an efficient and secure implementation of $\cP$.
\end{definition}

\begin{definition}[Quantum primitives relative to oracle]
Let $\cP = (\cF_\cP, \cR_\cP)$ be a quantum primitive, and $O$ be a quantum oracle. An oracle quantum algorithm $\cI$ implements $\cP$ relative to $O$ or is an implementation of $\cP$ relative to $O$ if $\cI^O \in \cF_\cP$. If $\cI^O \in \cF_\cP$ is efficient, then $\cI$ is an efficient implementation of $\cP$ relative to $O$. A quantum algorithm $\cA$ $\cP$-breaks $\cI \in \cF_\cP$ relative to $O$ if $(I^O, \cA^O) \in \cR_\cP$. A secure implementation of $\cP$ is an implementation $\cI$ of $\cP$ relative to $O$ such that no efficient quantum algorithm $\cP$-breaks $\cI$ relative to $O$. The primitive $\cP$ quantumly exists relative to $O$ if there exists an efficient and secure implementation of $\cP$ relative to $O$.
\end{definition}

\begin{definition}[Quantum fully black-box reductions]
A pair $(C, S)$ of efficient oracle quantum algorithms is a \emph{quantum fully-black-box reduction} from a quantum primitive $\cP = (\cF_\cP, \cR_\cP)$ to a quantum primitive $\cQ = (\cF_\cQ, \cR_\cQ)$ if the following two conditions are satisfied:
\begin{enumerate}
\item \textbf{(Correctness.)} For every implementation $\cI \in \cF_\cQ$, we have $C^\cI \in \cF_\cP$.
\item \textbf{(Security.)} For every implementation $\cI \in \cF_\cQ$ and every quantum algorithm $\cA$, if $\cA$ $\cP$-breaks $C^\cI$, then $S^{\cA,\cI}$ $\cQ$-breaks $\cI$.
\end{enumerate}
\end{definition}

\section{Warmup: Statistical Stretch PRS Generators in the CHS model} \label{sec:PRS_CHS}

\noindent We present a construction of multi-key PRS generator with statistical security in the CHS model.  

\begin{theorem} \label{thm:PRS_CHS}
There exists a multi-key $(\secparam,n,\ell)$-statistical PRS generator in the CHS model, where $n \geq \secparam$ and $\ell = O(\secp/\log(\secp)^{1+\eps})$ for any constant $\eps > 0$. 
\end{theorem}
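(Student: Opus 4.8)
The plan is to establish \Cref{thm:PRS_CHS} by analyzing the concrete construction $G_k(\ket{\vartheta}) := (Z^k \otimes I_{n-\secp})\ket{\vartheta}$ sketched in the technical overview, and showing it satisfies multi-key $\ell$-copy pseudorandomness. Concretely, I would fix polynomials $p(\cdot)$ (number of keys) and $t(\cdot)$ (adversary copies of the common Haar state), set $\ell = O(\secp/\log(\secp)^{1+\eps})$, and aim to prove that the states
\[
\rho = \Ex_{\substack{k_1,\dots,k_p \gets \bit^\secp \\ \ket{\vartheta}\gets\Haar_n}}\left[ \bigotimes_{i=1}^p G_{k_i}(\ket{\vartheta})^{\otimes \ell} \otimes \ketbra{\vartheta}{\vartheta}^{\otimes t}\right], \qquad \sigma = \Ex_{\substack{\ket{\varphi_1},\dots,\ket{\varphi_p}\gets\Haar_n \\ \ket{\vartheta}\gets\Haar_n}}\left[ \bigotimes_{i=1}^p \ketbra{\varphi_i}{\varphi_i}^{\otimes\ell} \otimes \ketbra{\vartheta}{\vartheta}^{\otimes t}\right]
\]
are within negligible trace distance; this implies the definition since no distinguisher (bounded or not) can do better than trace distance. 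Since the input length is $m=0$ here, this is the special case $q=p$, $\bfx^i = i$, $\ell_i = \ell$ of the PRFSG overview, and the same proof skeleton should serve.

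The first step is to rewrite both states in the type basis using \Cref{fact:avg-haar-random}: $\sigma$ becomes $\bigotimes_{i=1}^p \Ex_{T_i \gets [0:\ell]^N}\ketbra{T_i}{T_i} \otimes \Ex_{\tilde T \gets [0:t]^N}\ketbra{\tilde T}{\tilde T}$ with $N=2^n$, while the $\ket{\vartheta}$-register of $\rho$ together with the $p$ PRS registers comes from $\Ex_{\ket\vartheta}\ketbra{\vartheta}{\vartheta}^{\otimes(t+p\ell)}$ after applying $\bigotimes_i (Z^{k_i}\otimes I_{n-\secp})^{\otimes \ell}$ on the appropriate subregisters. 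Next, I would restrict the type $T \gets [0:t+p\ell]^N$ to the event that $T$ is "$\secp$-prefix $\ell$-fold collision-free" (in fact I will need the stronger "$p\ell$-fold" or an analogous multi-key variant, matching the collision-free notion used for the $q$-fold lemma in the overview). The key combinatorial claim — which I expect to carry over from the machinery the paper builds — is a collision/density bound showing that a random type is in this good set with probability $1-\negl(\secp)$ precisely when $\ell$ (or $p\ell$) is $O(\secp/\log(\secp)^{1+\eps})$; this is where the stated parameter regime comes from.

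The heart of the argument is the "unentangling" lemma: for a $\secp$-prefix $\ell$-fold collision-free type $T$ and any partition into disjoint subsets of the right sizes, averaging the phase operators $\bigotimes_i (Z^{k_i}\otimes I)^{\otimes\ell}$ over $k_1,\dots,k_p \gets \bit^\secp$ kills all off-diagonal terms (because $(Z^{k}\otimes I)^{\otimes\ell}$ acts on $\ket{T_1}$ as the phase $(-1)^{k\cdot(\oplus_{x\in T_1}x)}$, and the $\ell$-fold collision-freeness guarantees distinct $\ell$-subsets have distinct parities, so $\Ex_k$ of the cross term vanishes). This collapses $\rho$ (restricted to good types) to the statistical mixture where each of the $p$ PRS registers holds an independent uniformly random $\ell$-subset of $T$ and the $\vartheta$-register holds the complement — and that mixture is, up to the collision error, exactly $\sigma$ written in the type basis. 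Assembling: $\TD(\rho,\sigma) \le \negl$ by combining the Haar-to-type error ($\approx_\veps$), the good-type density bound, and the exact collapse on good types; a union bound over the $p$-fold tensor contributes only polynomial factors, which remain negligible. The main obstacle is making the multi-key version of the $\ell$-fold collision-free density bound precise — ensuring that sharing a \emph{single} common Haar state across $p$ different keys does not create bad correlations — but since all the $p$ key-dependent operators act as independent uniform phases on disjoint index sets, the required statement reduces to the same Kneser-type/collision combinatorics used in the single-key analysis, applied with the combined subset $\bigcup_i T_i$ of total size $p\ell = \poly(\secp)$ still in the collision-free regime.
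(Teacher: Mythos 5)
Your proposal is correct in substance, but it takes a different route from the paper for the multi-key step. The paper splits the proof in two: \Cref{lem:prs-sec} handles a \emph{single} key via the type-basis/collision-free/phase-averaging argument you describe, and then \Cref{lem:prs-multi-sec} obtains multi-key security by a hybrid over the $p$ keys, replacing one $G_{k_i}(\ket{\vartheta})^{\otimes \ell}$ block at a time by an independent Haar state and invoking monotonicity of trace distance under the channel $\bigotimes_{i>j+1}\Ex_{k_i}[G_{k_i}(\cdot)^{\otimes\ell}]$ to reduce each step to the single-key lemma with $(p-j-1)\ell+t$ spectator copies; this yields $O(p\cdot(p\ell+t)^{2\ell}/2^{\secp})$. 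You instead analyze the joint $p$-key state directly, which requires the multi-block unentangling lemma (essentially \Cref{lem:perm_split_gen,lem:nice_T_gen}, which the paper only proves later for the PRFS construction) but avoids the factor-$p$ loss and the channel-monotonicity trick. Both routes give a negligible bound in the stated parameter regime, and your route is arguably more self-contained once the generalized lemma is in hand; the paper's route is lighter because it reuses the single-key statement as a black box.

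One caveat on your hedge between ``$\ell$-fold'' and ``$p\ell$-fold'' collision-freeness: you must go with the $\ell$-fold notion. Because the keys $k_1,\dots,k_p$ are sampled independently, the expectation $\Ex_{k_1,\dots,k_p}\bigl[(-1)^{\sum_i\langle k_i,c_i\rangle}\bigr]$ factors and vanishes unless \emph{each} block's prefix-XOR condition holds separately, and each block involves only $\ell$-subsets of the type; so $\ell$-fold $\secp$-prefix collision-freeness of the full $(t+p\ell)$-type suffices. (Note this is where your setting genuinely differs from \Cref{lem:perm_split_gen}, where a \emph{single} shared $k$ couples all blocks and forces the total $\sum_i\ell_i$-fold condition.) Had you required $p\ell$-fold collision-freeness, the density bound from \Cref{fact:random_type_l_fold_collision} would degrade to $O((t+p\ell)^{2p\ell}/2^{\secp})$, which is \emph{not} negligible for large polynomial $p$, and the claimed $\ell=O(\secp/\log(\secp)^{1+\eps})$ regime would be lost.
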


\noindent The proof can be found in~\Cref{sec:prs-con}. Later, we prove the optimality of our construction in~\Cref{sec:prs-imp}. Specifically, we show that any $(\secparam,n,\ell)$-statistical PRS generator cannot simultaneously satisfy $n=\omega(\log(\secparam))$ and $\ell=\Omega(\secp/\log(\secp))$. 

\subsection{Useful Lemmas}

At a high level, the proof follows the template of~\cite{AGQY22,AGKL}: we do the analysis in the symmetric subspace. First, we identify a nice property of type vectors such that (1) a randomly sampled type satisfies this property with overwhelming probability and (2) the PRS generation algorithm behaves well on every type state having this property. We identify these type vectors as \emph{$\ell$-fold collision-free} types (which are a generalization of distinct types~\cite{AGQY22,AGKL}). 
\begin{definition}[$\ell$-fold $n$-prefix collision-free types]
\label{def:l_fold_collisions}
Let $n,m,t,\ell\in\N$ such that $t \geq \ell$ and $T \in [0:t]^{2^{n+m}}$ is a type vector. We say that $T$ is \emph{$\ell$-fold $n$-prefix collision-free} if for all pairs of $\ell$-subsets\footnote{Here we allow the subsets to contain duplicate elements.} $\cS,\cT \subseteq \setT(T)$, the first $n$ bits of $\bigoplus_{x\in\cS} x \in \bit^{n+m}$ is identical to that of $\bigoplus_{y\in\cT} y \in \bit^{n+m}$ if and only if $\cS = \cT$. We define $\goodpre{n}{m}{\ell}{t} := \set{ T \in [0:t]^{2^{n+m}}: T \text{ is $\ell$-fold $n$-prefix collision-free} }$ as the set of all $\ell$-fold $n$-prefix collision-free type vectors.
\end{definition}

\noindent When $t > \ell$, one can easily verify that $\ell$-fold $n$-prefix collision-freeness implies the standard collision-freeness. Also note that when $t > 2\ell$, $\ell$-fold $n$-prefix collision-freeness implies $i$-fold $n$-prefix collision-freeness for all $i\leq \ell$.

Next, we show that a random type is $\ell$-fold $n$-prefix collision-free with high probability.

\begin{lemma} \label{fact:random_type_l_fold_collision}
$\Pr_{T\gets [0:t]^{2^{n+m}}}[ T \in \goodpre{n}{m}{\ell}{t}] 
= 1- O(t^{2\ell}/(2^n-2\ell))$.
\end{lemma}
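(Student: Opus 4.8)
The plan is to bound the failure probability by a union bound over all "bad events," where a bad event is a pair of distinct $\ell$-element multisets $\cS, \cT \subseteq \setT(T)$ whose XOR-sums agree on their first $n$ bits. First I would set up the sampling picture: sampling $T \gets [0:t]^{2^{n+m}}$ (conditioned on the entries summing to $t$) is equivalent, as far as the multiset $\setT(T)$ is concerned, to drawing $t$ i.i.d. uniform elements $x_1, \dots, x_t \in \bit^{n+m}$ and then forgetting their order — so I can reason about $t$ uniform random strings. Actually the cleanest route: condition first on $T$ being (ordinary) collision-free, which by the standard collision bound fails with probability $O(t^2/2^{n+m}) = O(t^2/2^n)$, a term that is absorbed into the stated bound; then I may treat $\setT(T)$ as a genuine $t$-element subset of $\bit^{n+m}$ chosen (almost) uniformly, equivalently $t$ distinct uniform elements.

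Next I would do the union bound. Fix two distinct $\ell$-subsets (with possible repetition) $\cS \neq \cT$ of the $t$ sampled elements; write $\cS = \cI \uplus \cS'$ and $\cT = \cI \uplus \cT'$ where $\cI$ is the common part and $\cS', \cT'$ are disjoint and nonempty (if $\cS \ne \cT$, at least one of $\cS', \cT'$ is nonempty, and since $|\cS| = |\cT| = \ell$ both are nonempty). The condition "first $n$ bits of $\bigoplus_{x \in \cS} x$ equal first $n$ bits of $\bigoplus_{y \in \cT} y$" is then "first $n$ bits of $\bigoplus_{x \in \cS'} x \oplus \bigoplus_{y \in \cT'} y = 0^n$", a linear constraint on a nonempty set of the underlying uniform strings. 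Isolating one string involved in $\cS' \uplus \cT'$ that does not appear elsewhere in that symmetric difference, its first $n$ bits are uniform and independent of the rest, so the constraint holds with probability at most $2^{-n}$ — with a minor correction of order $O(\ell/2^n)$ coming from the fact that conditioning on distinctness only mildly perturbs uniformity, and this is where I would be slightly careful. The number of choices of the pair $(\cS, \cT)$ is at most $\binom{t+\ell-1}{\ell}^2 = O(t^{2\ell})$ (crudely, at most $t^{\ell}$ multisets of size $\ell$ from $t$ elements, squared). Multiplying, the total bad probability is $O(t^{2\ell} \cdot 2^{-n})$, and folding in the $\ell/2^n$ slack and the earlier collision term gives $O(t^{2\ell}/(2^n - 2\ell))$ as claimed (the $-2\ell$ in the denominator is exactly the slack from repeatedly conditioning on distinctness when isolating a free coordinate).

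The main obstacle — really the only subtle point — is handling the conditioning correctly: once we condition $T$ to be collision-free, the underlying strings are distinct rather than independent, so "isolate a coordinate and use that its first $n$ bits are uniform" is not literally true and must be replaced by "uniform up to a multiplicative $(1 - O(\ell \cdot 2^{-(n+m)}))^{-1}$ factor" or handled by a direct counting argument over $\binom{2^{n+m}}{t}$. I would prefer the counting argument: directly count, among all $t$-subsets, those that admit a bad pair, by fixing the $2\ell$-or-fewer elements of $\cS \uplus \cT$ satisfying the linear relation (at most $(\text{number of solutions}) \le 2^{(n+m)(2\ell-1)} \cdot 2^m \cdot (\text{something})$, careful bookkeeping) and then choosing the remaining $t - O(\ell)$ elements freely; the ratio to $\binom{2^{n+m}}{t}$ yields the bound, and the $2^n - 2\ell$ denominator emerges naturally from the ratio of falling factorials. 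Everything else is a routine union bound, so I would keep the write-up short and push the falling-factorial arithmetic into a one-line estimate.
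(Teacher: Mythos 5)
Your proposal is correct and follows essentially the same route as the paper's proof: reduce to a collision-free $T$ via the standard collision bound (cost $O(t^2/2^{n+m})$), then union-bound over pairs of distinct $\ell$-subsets, bounding each bad event by isolating one element of the symmetric difference whose $n$-bit prefix is nearly uniform conditioned on the remaining distinct samples, which is exactly where the $2^n-2\ell$ denominator arises in the paper as well. The alternative counting argument you sketch at the end is an equivalent repackaging of the same estimate and is not needed.
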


\begin{proof}
First, sampling $T\gets [0:t]^{2^{n+m}}$ uniformly is $O(t^2/2^{n+m})$-close to sampling a uniform collision-free $T$ from $[0:t]^{2^{n+m}}$ by the collision bound. \\

\noindent Furthermore, sampling a uniform collision-free $T$ from $[0:t]^{2^{n+m}}$ is equivalent to sampling $t$ elements $x_1,x_2,\dots,x_t$ one by one from $\bit^{n+m}$ conditioned on them being distinct and setting $T$ such that $\setT(T) = \set{x_1,\ldots,x_t}$. Hence, it suffices to show that sampling $t$ elements $x_1,x_2,\dots,x_t$ one by one from $\bit^{n+m}$ conditioned on them being distinct results in an $\ell$-fold $n$-prefix collision-free set with probability $1- O(t^{2\ell}/2^n)$. \\

\noindent For any two distinct $\ell$-subsets of indices $\cS\neq\cT \subseteq [t]$, let $\bad_{\cS,\cT}$ denote the event that the first $n$ bits of $\bigoplus_{i\in\cS} x_i$ is the same as that of $\bigoplus_{j\in\cT}x_j$. Then the following holds:

$$\Pr[\bad_{\cS,\cT}: \substack{x_1,x_2,\dots,x_t\gets \bit^{n+m}\\ x_1,x_2,\dots,x_t\text{ are distinct}}] = O(1/(2^n-2\ell)).$$

\noindent This is because we can first sample $|\cS\cup\cT|-1$ elements (in $\cS\cup\cT$) except one with indices in $\cS\setminus\cT$. Then $\bad_{\cS,\cT}$ occurs only if the first $n$ bits of the last sample is equal to the first $n$ bits of the bitwise XOR of all other elements in $\cS$ with all elements in $\cT$, which happens with probability at most $O(1/(2^n-2\ell))$. \\

\noindent By a union bound, we have $T \in \goodpre{n}{m}{\ell}{t}$ with probability at least $1 - \left(O(t^2/2^{n+m}) + \binom{t}{\ell}^2\cdot O(1/(2^n-2\ell))\right) = 1 - O(t^{2\ell}/(2^n-2\ell))$.
\end{proof}

\noindent Finally, the following two lemmas show that applying random Pauli-$Z$ on any $\ell$-fold $n$-prefix collision-free type state is equivalent to a ``classical'' probabilistic process\footnote{We say that this is a ``classical'' probabilistic process because we can write the resulting density matrix as direct sum of matrices with classical descriptions with weights chosen by a completely classical process. This means that we can simualte this process by first doing a completely classical sampling process followed by a state preparation.}. 
\begin{lemma} \label{lem:perm_split}
For any $\bfv\in\bit^{(n+m)(t+\ell)}$ such that $\type(\bfv) \in \goodpre{n}{m}{\ell}{t+\ell}$ and $\sigma\in S_{t+\ell}$, define 
\[
A_{\bfv,\sigma} 
:= \E_{k\gets\bit^{n}}\left[ \left(\left(Z^{k}\otimes I_m\right)^{\otimes \ell}\otimes I_{n+m}^{\otimes t}\right)\ketbra{\bfv}{\sigma(\bfv)}\left(\left(Z^{k}\otimes I_m\right)^{\otimes \ell}\otimes I_{n+m}^{\otimes t}\right) \right].
\]
Then $A_{\bfv,\sigma} = \ketbra{\bfv}{\sigma(\bfv)}$ if $\sigma$ maps $[\ell]$ to $[\ell]$; otherwise, $A_{\bfv,\sigma} = 0$. 
\end{lemma}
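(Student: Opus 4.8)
The plan is to reduce the statement to a one-line character-sum evaluation. Write $\bfv = (v_1, \dots, v_{t+\ell})$ with each $v_i \in \bit^{n+m}$, and split $v_i = a_i b_i$ into its $n$-bit prefix $a_i \in \bit^n$ and $m$-bit suffix $b_i \in \bit^m$. Since $\left(Z^{k}\otimes I_m\right)^{\otimes \ell}\otimes I_{n+m}^{\otimes t}$ is diagonal in the computational basis, I would first observe that it multiplies $\ket{\bfv}$ by the phase $(-1)^{k \cdot (a_1 \oplus \cdots \oplus a_\ell)}$ and $\ket{\sigma(\bfv)}$ by $(-1)^{k \cdot (a_{\sigma(1)} \oplus \cdots \oplus a_{\sigma(\ell)})}$ (the last $t$ registers contribute nothing). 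Since the layer is Hermitian, conjugating $\ketbra{\bfv}{\sigma(\bfv)}$ by it produces $(-1)^{k\cdot w}\,\ketbra{\bfv}{\sigma(\bfv)}$, where $w := (a_1 \oplus \cdots \oplus a_\ell) \oplus (a_{\sigma(1)} \oplus \cdots \oplus a_{\sigma(\ell)}) \in \bit^n$. Averaging over $k$ and using the standard fact that $\E_{k \gets \bit^n}[(-1)^{k \cdot w}]$ equals $1$ if $w = 0^n$ and $0$ otherwise, we get $A_{\bfv,\sigma} = \ketbra{\bfv}{\sigma(\bfv)}$ if $w = 0^n$ and $A_{\bfv,\sigma} = 0$ otherwise. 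So it remains to show that $w = 0^n$ precisely when $\sigma$ maps $[\ell]$ to $[\ell]$.

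Next I would handle the two directions. If $\sigma([\ell]) = [\ell]$, then $(a_{\sigma(1)}, \dots, a_{\sigma(\ell)})$ is a reordering of $(a_1, \dots, a_\ell)$, so the two bitwise XORs coincide, $w = 0^n$, and $A_{\bfv,\sigma} = \ketbra{\bfv}{\sigma(\bfv)}$. For the converse, note that $w$ is exactly the $n$-bit prefix of $\bigoplus_{i \in [\ell]} v_i \oplus \bigoplus_{j \in \sigma([\ell])} v_j$, so $w = 0^n$ iff $\bigoplus_{i \in [\ell]} v_i$ and $\bigoplus_{j \in \sigma([\ell])} v_j$ have the same $n$-bit prefix. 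Here $\set{v_i : i \in [\ell]}$ and $\set{v_j : j \in \sigma([\ell])}$ are two $\ell$-element (multi)subsets of $\setT(\type(\bfv))$, so by the hypothesis $\type(\bfv) \in \goodpre{n}{m}{\ell}{t+\ell}$ and \Cref{def:l_fold_collisions}, equality of these prefixes forces the two subsets to be equal. Moreover, since $\ell$-fold $n$-prefix collision-freeness with parameter $t+\ell$ implies ordinary collision-freeness whenever $t+\ell > \ell$ (the remaining case $t = 0$ being vacuous, as then $\sigma \in S_\ell$ automatically maps $[\ell]$ to $[\ell]$), the entries $v_1, \dots, v_{t+\ell}$ are distinct; hence equality of the value sets is equivalent to $[\ell] = \sigma([\ell])$. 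Consequently, if $\sigma([\ell]) \neq [\ell]$ then $w \neq 0^n$ and $A_{\bfv,\sigma} = 0$, which completes the argument.

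The only genuinely delicate point is this last implication: ruling out an accidental cancellation $w = 0^n$ when $\sigma$ fails to preserve $[\ell]$. This is exactly what the $\ell$-fold $n$-prefix collision-free property (together with the plain collision-freeness it entails in this regime) is tailored to provide, so once that property is in hand the lemma drops out immediately. Everything else — the diagonal action of the Pauli-$Z$ layer and the evaluation of the character sum $\E_k[(-1)^{k\cdot w}]$ — is routine.
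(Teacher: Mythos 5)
Your proof is correct and follows essentially the same route as the paper's: conjugation by the $Z^k$ layer contributes the phase $(-1)^{\langle k,\,\bigoplus_{i\le \ell}(v_i\oplus v_{\sigma(i)})\rangle}$, averaging over $k$ leaves the indicator that the two prefix-XORs agree, and the $\ell$-fold $n$-prefix collision-free property then forces $\sigma([\ell])=[\ell]$. You are in fact slightly more explicit than the paper about the final step (passing from equality of the value multisets to equality of the index sets via distinctness of the entries), which the paper leaves implicit.
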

\begin{proof}
Suppose $\bfv = (v_1||w_1,\ldots,v_{t+\ell}||w_{t+\ell})\in\bit^{(n+m)(t+\ell)}$ with $v_i\in\bit^n$ and $w_i\in\bit^m$ for all $i\in[t]$. First, a direct calculation yields:
\[
\left(\left(Z^{k}\otimes I_m\right)^{\otimes \ell}\otimes I_{n+m}^{\otimes t}\right)\ketbra{\bfv}{\sigma(\bfv)}\left(\left(Z^{k}\otimes I_m\right)^{\otimes \ell}\otimes I_{n+m}^{\otimes t}\right) 
= (-1)^{\langle k,\bigoplus_{i=1}^{\ell} (v_i\oplus v_{\sigma(i)})\rangle}\ketbra{\bfv}{\sigma(\bfv)}.
\]
Therefore, after averaging over $k$,
\[
A_{\bfv,\sigma} 
= \E_{k\gets\bit^{n}}\left[(-1)^{\langle k,\bigoplus_{i=1}^{\ell} (v_i\oplus v_{\sigma(i)})\rangle}\right] \ketbra{\bfv}{\sigma(\bfv)}
= \begin{cases}
\ketbra{\bfv}{\sigma(\bfv)} & \text{ if } \bigoplus_{i=1}^{\ell} (v_i\oplus v_{\sigma(i)}) = 0 \\
0 & \text{ otherwise.}
\end{cases}
\]
Since $\type(\bfv) \in \goodpre{n}{m}{\ell}{t+\ell}$, the condition $\bigoplus_{i=1}^{\ell} v_i  = \bigoplus_{i=1}^\ell v_{\sigma(i)}$ holds if and only if the two sets $\set{1,2,\dots,\ell}$ and $\set{\sigma(1),\sigma(2),\dots,\sigma(\ell)}$ are identical. 
\end{proof}

\noindent The following lemma lies at the technical heart of this section. It states that the action of applying random $Z^k$ on $\ell$-fold $n$-prefix collision-free types $T$\footnote{Since $T$ is collision-free, we will treat it as a set.} has the following ``classical'' probabilistic interpretation: the output is identically distributed to first uniformly sampling an $\ell$-subset $X$ from $T$ and then generating $\ketbra{X}{X} \otimes \ketbra{T\setminus X}{T\setminus X}$.
\begin{lemma} \label{lem:nice_T}
For any $T \in \goodpre{n}{m}{\ell}{t+\ell}$,
\[
\Ex_{k\gets\bit^{n}}\left[ \left( \left(Z^{k}\otimes I_m\right)^{\otimes \ell}\otimes I_{n+m}^{\otimes t} \right) \ketbra{T}{T} \left(\left(Z^{k}\otimes I_m\right)^{\otimes \ell}\otimes I_{n+m}^{\otimes t}\right) \right] 
= \Ex_{X\gets \binom{T}{\ell}} \left[ \ketbra{X}{X} \otimes \ketbra{T\setminus X}{T\setminus X} \right].
\]
\end{lemma}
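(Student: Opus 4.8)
The plan is to expand the type state $\ket{T}$ in its bipartite form, push the random-$Z$ average through the resulting double sum, and invoke \Cref{lem:perm_split} termwise.

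First I would use the collision-free expression for the type state to write
\[
\ketbra{T}{T} = \Ex_{\bfv\gets T}\left[ \sum_{\sigma\in S_{t+\ell}} \ketbra{\bfv}{\sigma(\bfv)} \right],
\]
which is just \eqref{eq:useful} applied with $t+\ell$ in place of $t$. By linearity of the expectation over $k$, the left-hand side of the claim becomes
\[
\Ex_{\bfv\gets T}\left[ \sum_{\sigma\in S_{t+\ell}} A_{\bfv,\sigma} \right],
\]
where $A_{\bfv,\sigma}$ is exactly the quantity defined in \Cref{lem:perm_split} (note $\type(\bfv)=T\in\goodpre{n}{m}{\ell}{t+\ell}$, so the lemma applies). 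By that lemma, $A_{\bfv,\sigma} = \ketbra{\bfv}{\sigma(\bfv)}$ when $\sigma$ maps $[\ell]$ to $[\ell]$, and $A_{\bfv,\sigma}=0$ otherwise. So the sum collapses to
\[
\Ex_{\bfv\gets T}\left[ \sum_{\substack{\sigma\in S_{t+\ell}:\\ \sigma([\ell])=[\ell]}} \ketbra{\bfv}{\sigma(\bfv)} \right].
\]

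Next I would recognize the stabilizer-type subgroup $H := \set{\sigma\in S_{t+\ell} : \sigma([\ell])=[\ell]} \cong S_\ell \times S_t$ (permutations acting within the first $\ell$ coordinates and within the last $t$ coordinates independently). Writing $\bfv = (\bfv_{\mathrm{head}}, \bfv_{\mathrm{tail}})$ for its first $\ell$ and last $t$ entries, the sum over $H$ factorizes: $\sum_{\sigma\in H}\ketbra{\bfv}{\sigma(\bfv)} = \bigl(\sum_{\pi\in S_\ell}\ketbra{\bfv_{\mathrm{head}}}{\pi(\bfv_{\mathrm{head}})}\bigr) \otimes \bigl(\sum_{\tau\in S_t}\ketbra{\bfv_{\mathrm{tail}}}{\tau(\bfv_{\mathrm{tail}})}\bigr)$. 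The remaining bookkeeping is to match the distribution: sampling $\bfv\gets T$ uniformly (a sequence whose underlying set is $T$, which has $t+\ell$ distinct elements) is the same as first sampling the unordered $\ell$-subset $X := \set{v_1,\dots,v_\ell}\gets\binom{T}{\ell}$, then sampling $\bfv_{\mathrm{head}}\gets X$ and $\bfv_{\mathrm{tail}}\gets T\setminus X$ independently and uniformly. Applying \eqref{eq:useful} in reverse to each factor, $\Ex_{\bfv_{\mathrm{head}}\gets X}\sum_{\pi\in S_\ell}\ketbra{\bfv_{\mathrm{head}}}{\pi(\bfv_{\mathrm{head}})} = \ketbra{X}{X}$ and likewise for $T\setminus X$, yielding $\Ex_{X\gets\binom{T}{\ell}}\left[\ketbra{X}{X}\otimes\ketbra{T\setminus X}{T\setminus X}\right]$, as desired.

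I expect the main obstacle to be purely notational rather than conceptual: carefully justifying the combinatorial identity that the uniform distribution over $\bfv\gets T$ decomposes as (uniform $\ell$-subset $X$) $\times$ (uniform ordering of $X$) $\times$ (uniform ordering of $T\setminus X$), and making sure the coefficient counting from \eqref{eq:useful} (the factors of $t!$, $\ell!$, $(t+\ell)!$) lines up correctly after the factorization over $H$. One has to be a little careful that the index set being permuted is exactly $[\ell]$ on the ``head'' side, which is where $\ell$-fold $n$-prefix collision-freeness of $T$ is essential (it is precisely the hypothesis that forces $A_{\bfv,\sigma}=0$ unless $\sigma([\ell])=[\ell]$, via \Cref{lem:perm_split}). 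Everything else is linearity of expectation and the tensor structure.
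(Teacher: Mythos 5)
Your proposal is correct and follows essentially the same route as the paper's proof: expand $\ketbra{T}{T}$ via \eqref{eq:useful}, kill the off-block permutations with \Cref{lem:perm_split}, factor the surviving stabilizer $S_\ell\times S_t$ into a tensor product, and re-identify the factors as $\ketbra{X}{X}\otimes\ketbra{T\setminus X}{T\setminus X}$ via the subset-then-order sampling decomposition. You even correctly note the sum should range over $S_{t+\ell}$, fixing a small index typo in the paper's displayed equation.
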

\begin{proof}
We first use the expression in~\Cref{eq:useful} on the left-hand side:
\begin{align} \label{eq:split}
LHS 
= \Ex_{\bfv\gets T} \left[ \sum_{\sigma\in S_t} \Ex_{k\gets\bit^{n}}\left[ \left( \left(Z^{k}\otimes I_m\right)^{\otimes \ell}\otimes I_{n+m}^{\otimes t} \right) \ketbra{\bfv}{\sigma(\bfv)} \left( \left(Z^{k}\otimes I_m\right)^{\otimes \ell}\otimes I_{n+m}^{\otimes t} \right) \right] \right].
\end{align}
Then from the previous lemma (\Cref{lem:perm_split})
\begin{align*}
\eqref{eq:split}
& = \Ex_{\bfv\gets T} \left[ \sum_{\sigma_1 \in S_\ell, \sigma_2 \in S_{t}} \ketbra{\bfv}{\sigma_1\circ \sigma_2(\bfv)} \right] \\
& = \Ex_{\bfv\gets T} \left[ \sum_{\sigma_1 \in S_\ell} \ketbra{\bfv_{[1:\ell]}}{\sigma_1(\bfv_{[1:\ell]})} \otimes \sum_{\sigma_2 \in S_{t}} \ketbra{\bfv_{[\ell+1:\ell+t]}}{\sigma_2(\bfv_{[\ell+1:\ell+t]})} \right] \\
& = \Ex \left[ \sum_{\sigma_1 \in S_\ell} \ketbra{\bfv_1}{\sigma_1(\bfv_1)} \otimes \sum_{\sigma_2 \in S_{t}} \ketbra{\bfv_2}{\sigma_2(\bfv_2)}: \substack{X\gets \binom{T}{\ell}, \\ \bfv_1\gets X, \\ \bfv_2\gets T\setminus X} \right] \\
& = \Ex_{X\gets \binom{T}{\ell}} \left[ \ketbra{X}{X} \otimes \ketbra{T\setminus X}{T\setminus X} \right].
\end{align*}
For the first equality, we use~\Cref{lem:perm_split} and decompose $\sigma = \sigma_1 \circ \sigma_2$ for some $\sigma_1,\sigma_2$ such that $\sigma_1(x) = x$ for all $x \in \set{\ell+1,\ell+2,\cdots,\ell+t}$ and $\sigma_2(y) = y$ for all $y \in \set{1,2,\cdots,\ell}$. Since all $\ell+1,\ell+2,\cdots,\ell+t$ are fixed points of $\sigma_1$, we can view it as an element in $S_\ell$. Similarly, we view $\sigma_2(y)$ as an element in $S_{t}$. The second equality follows by denoting the first $\ell$ part of $\bfv$ by $\bfv_{[1:\ell]}$ and the last $t$ part of $\bfv$ by $\bfv_{[\ell+1:\ell+t]}$. The third equality holds because sampling a tuple $\bfv$ from $T$ is equivalent to sampling an $\ell$-subset $X$ from $T$ followed by ordering to elements in $X$ and $T\setminus X$.
\end{proof}

\subsection{Construction}\label{sec:prs-con}
In this section, we assume that the length of the common Haar state satisfies $n = n(\secp) \geq \secp$ for all $\secp \in \N$. We define the construction as follows: on input $k\in\set{0,1}^{\secp}$ and a single copy of the common Haar state $\ket{\vartheta}$,
\[
G_k(\ket{\vartheta}) := (Z^{k}\otimes I_{n-\secp})\ket{\vartheta}.
\]
%Note that when $n(\secp)>\secp$, our construction satisfies the {\em stretch} condition (i.e., the output length is strictly larger than the length of the key).

\begin{lemma}[$\ell$-copy pseudorandomness] \label{lem:prs-sec}
Let $G$ be as defined above. Let 
\[
\rho := \E_{\substack{k\gets\set{0,1}^{\secp} \\ \ket{\vartheta}\gets\Haar_{n}}}
\left[
G_k(\ket{\vartheta})^{\otimes \ell}\otimes\ketbra{\vartheta}{\vartheta}^{\otimes t}
\right]
\text{ and }
\sigma := \Ex_{\substack{\ket{\varphi}\gets\Haar_n \\ \ket{\vartheta}\gets\Haar_{n}}}
\left[
\ketbra{\varphi}{\varphi}^{\otimes \ell}\otimes\ketbra{\vartheta}{\vartheta}^{\otimes t}
\right].
\]
Then $\TD\left(\rho,\sigma\right) = O\left(\frac{(\ell+t)^{2\ell}}{2^{\secp}}\right)$.
\end{lemma}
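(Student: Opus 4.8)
The plan is to reduce the statement to the lemmas already proved in this subsection, namely \Cref{fact:avg-haar-random}, \Cref{fact:random_type_l_fold_collision} and \Cref{lem:nice_T}. First I would rewrite $\sigma$ in the type basis: using \Cref{fact:avg-haar-random} twice (once on the $\ell$ copies of the fresh Haar state $\ket{\varphi}$ and once on the $t$ copies of $\ket{\vartheta}$), we get $\sigma = \Ex_{T_1\gets[0:\ell]^N}\ketbra{T_1}{T_1}\otimes\Ex_{T_2\gets[0:t]^N}\ketbra{T_2}{T_2}$, where $N = 2^n$. For $\rho$, note that $G_k(\ket{\vartheta})^{\otimes\ell}\otimes\ketbra{\vartheta}{\vartheta}^{\otimes t}$ is obtained from $\ketbra{\vartheta}{\vartheta}^{\otimes(t+\ell)}$ by applying $(Z^k\otimes I_{n-\secp})^{\otimes\ell}\otimes I_n^{\otimes t}$ (here $m = n - \secp$ in the notation of \Cref{lem:nice_T}, or rather I would set $m = 0$ and $n \leftarrow \secp$, $n+m \leftarrow n$; I would be careful to state the instantiation of the lemma cleanly), and so by \Cref{fact:avg-haar-random}, $\rho = \Ex_{T\gets[0:t+\ell]^N}\Ex_{k\gets\bit^\secp}\big[((Z^k\otimes I)^{\otimes\ell}\otimes I^{\otimes t})\ketbra{T}{T}((Z^k\otimes I)^{\otimes\ell}\otimes I^{\otimes t})\big]$.

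Next, I would split the expectation over $T$ in $\rho$ according to whether $T\in\goodpre{\secp}{0}{\ell}{t+\ell}$ (using the notation of \Cref{def:l_fold_collisions} instantiated with prefix length $\secp$). By \Cref{fact:random_type_l_fold_collision}, the ``bad'' part has weight $O((\ell+t)^{2\ell}/(2^\secp - 2\ell))$, which (after a routine simplification, absorbing the $-2\ell$ and noting $n\ge\secp$) is $O((\ell+t)^{2\ell}/2^\secp)$; discarding it changes $\rho$ in trace distance by at most this much. On the good part, \Cref{lem:nice_T} applies termwise: for each $T\in\goodpre{\secp}{0}{\ell}{t+\ell}$, the averaged-over-$k$ matrix equals $\Ex_{X\gets\binom{T}{\ell}}[\ketbra{X}{X}\otimes\ketbra{T\setminus X}{T\setminus X}]$. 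Hence the good part of $\rho$ is $(1 - O((\ell+t)^{2\ell}/2^\secp))\cdot\Ex_{\substack{T\gets[0:t+\ell]^N\\ T\in\goodpre{\secp}{0}{\ell}{t+\ell}}}\Ex_{X\gets\binom{T}{\ell}}[\ketbra{X}{X}\otimes\ketbra{T\setminus X}{T\setminus X}]$, up to the trace-distance error already accounted for.

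The last step is to check that this ``split type'' distribution is close to $\sigma$. The cleanest way is to observe that \emph{without} the collision-freeness conditioning, the process ``sample $T\gets[0:t+\ell]^N$, sample a random $\ell$-subset $X$ of $\setT(T)$, output $(\type(X), \type(T\setminus X))$'' produces exactly the same distribution as ``sample $\bfv\gets\bit^{n(t+\ell)}$ (i.e.\ $t+\ell$ i.i.d.\ uniform strings), split into the first $\ell$ and the last $t$, output their types'' — which by \Cref{fact:avg-haar-random} applied in reverse gives precisely $\Ex_{T_1\gets[0:\ell]^N}\ketbra{T_1}{T_1}\otimes\Ex_{T_2\gets[0:t]^N}\ketbra{T_2}{T_2} = \sigma$. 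Conditioning on $T\in\goodpre{\secp}{0}{\ell}{t+\ell}$ only moves this by $O((\ell+t)^{2\ell}/2^\secp)$ in statistical distance (again by \Cref{fact:random_type_l_fold_collision}), hence by the same in trace distance of the associated density matrices. Collecting the three error terms — bad-type weight in $\rho$, the conditioning shift, and the leftover scalar $(1-O((\ell+t)^{2\ell}/2^\secp))$ prefactor — and applying the triangle inequality gives $\TD(\rho,\sigma) = O((\ell+t)^{2\ell}/2^\secp)$.

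I expect the main obstacle to be purely bookkeeping rather than conceptual: getting the index conventions of \Cref{lem:nice_T} (which is stated with both an $n$-bit prefix and an $m$-bit suffix, and with $t+\ell$ total registers) to line up with the construction here (which has an $\secp$-bit Pauli block and an $(n-\secp)$-bit untouched block), and making sure the ``reverse'' application of \Cref{fact:avg-haar-random} in the final step is justified — i.e.\ that marginalizing the type of a uniform tuple onto a sub-tuple of i.i.d.\ coordinates indeed yields the uniform type distribution of the right arity. Both are routine but need to be written carefully so the constants and the $2\ell$ in the denominator are handled honestly.
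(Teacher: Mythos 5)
Your overall skeleton is the same as the paper's: pass to the type basis via \Cref{fact:avg-haar-random}, condition on $T\in\goodpre{\secp}{n-\secp}{\ell}{\ell+t}$ at cost $O((\ell+t)^{2\ell}/2^{\secp})$ via \Cref{fact:random_type_l_fold_collision}, and apply \Cref{lem:nice_T} termwise to turn the averaged Pauli-$Z$ action into the ``split a uniform $\ell$-subset off $T$'' distribution. Up to that point your proposal matches the paper's Hybrids 1--3 exactly.

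The gap is in your last step, and it is precisely the step you flagged as needing justification: both ``exact'' claims there are false. Sampling a \emph{uniform} type $T\gets[0:t+\ell]^N$ and then a uniform $\ell$-subset of $\setT(T)$ is not the same as sampling $t+\ell$ i.i.d.\ uniform strings and splitting them, because the type of an i.i.d.\ tuple is multinomially distributed over $[0:t+\ell]^N$, not uniformly (already for $N=2$, $t=\ell=1$ the two processes put masses $(1/3,1/6,1/6,1/3)$ versus $(1/4,1/4,1/4,1/4)$ on the four outcome pairs). For the same reason \Cref{fact:avg-haar-random} cannot be ``applied in reverse'' to the i.i.d.\ process: the types of the two halves of an i.i.d.\ tuple are indeed independent, but each is multinomial, whereas $\sigma$ is the \emph{uniform} mixture over pairs $(T_1,T_2)$. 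What is true is that all of these distributions coincide once you condition on collision-freeness (and, for $\sigma$, on disjointness of $T_1$ and $T_2$), and these events have probability $1-O((t+\ell)^2/2^n)$. That is exactly how the paper bridges this last stretch, via its Hybrids 4--8, accruing errors $O((t+\ell)^{2}/2^{n})$, $O(t\ell/2^{n})$ and $O((t^2+\ell^2)/2^{n})$, all dominated by $O((\ell+t)^{2\ell}/2^{\secp})$ since $n\geq\secp$. So your argument is repairable by replacing the two exact-equality claims with a short chain of collision-bound comparisons, but as written the final step does not go through.
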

\begin{proof}
We prove this via a hybrid argument:
\paragraph{Hybrid $1$.} 
Sample $T\gets [0:\ell+t]^{2^n}$. 
Sample $k \gets \set{0,1}^{\secp}$. 
Output $((Z^{k} \otimes I_{n-\secp})^{\otimes \ell}\otimes I_{n}^{\otimes t})\ket{T}$.
\paragraph{Hybrid $2$.} 
Sample $T\gets [0:\ell+t]^{2^n}$ uniformly conditioned on $T\in\goodpre{\secp}{n-\secp}{\ell}{\ell+t}$. 
Sample $k\gets\set{0,1}^{\secp}$. 
Output $((Z^{k}\otimes I_{n-\secp})^{\otimes \ell}\otimes I_{n}^{\otimes t})\ket{T}$.
\paragraph{Hybrid $3$:}
Sample $T\gets [0:\ell+t]^{2^n}$ uniformly conditioned on $T\in\goodpre{\secp}{n-\secp}{\ell}{\ell+t}$.
Sample a uniform $\ell$-subset $T_1$ from $T$.
Output $\ket{T_1}\otimes\ket{T\setminus T_1}$.
\paragraph{Hybrid $4$.} 
Sample $T\gets [0:\ell+t]^{2^n}$.
Sample a uniform $\ell$-subset $T_1$ from $T$.\footnote{Since $T$ might have collisions, $T_1$ is allowed to contain duplicate elements.}
Output $\ket{T_1}\otimes\ket{T\setminus T_1}$.
\paragraph{Hybrid $5$.} 
Sample a collision-free $T$ from $[0:\ell+t]^{2^n}$.
Sample a uniform $\ell$-subset $T_1$ from $T$.
Output $\ket{T_1}\otimes\ket{T\setminus T_1}$.
\paragraph{Hybrid $6$.} 
Sample a uniform collision-free $T_1$ from $[0:\ell]^{2^n}$.
Sample a uniform collision-free $T_2$ from $[0:t]^{2^n}$ conditioned on $T_1$ and $T_2$ have no common elements.
Output $\ket{T_1} \otimes \ket{T_2}$.
\paragraph{Hybrid $7$.} 
Sample a uniform collision-free $T_1$ from $[0:\ell]^{2^n}$.
Sample a uniform collision-free $T_2$ from $[0:t]^{2^n}$.
Output $\ket{T_1} \otimes \ket{T_2}$.
\paragraph{Hybrid $8$.} 
Sample $T_1\gets [0:\ell]^{2^n}$.
Sample $T_2\gets [0:t]^{2^n}$.
Output $\ket{T_1} \otimes \ket{T_2}$.\\

\paragraph{Indistinuishability of Hybrids.} 
\begin{itemize}
    \item By~\Cref{fact:random_type_l_fold_collision}, the trace distance between Hybrid~$1$ and Hybrid~$2$ is $O((t+\ell)^{2\ell}/2^{\secp})$.  
\item From~\Cref{lem:nice_T}, the output of Hybrid~$2$ is
\[
\Ex_{\substack{T \gets [0:\ell+t]^{2^n}: \\ T\in\goodpre{\secp}{n-\secp}{\ell}{\ell+t} }}
\Ex_{T_1\gets \binom{T}{\ell}} \left[ \ketbra{T_1}{T_1} \otimes \ketbra{T\setminus T_1}{T\setminus T_1} \right].
\]
Hence, Hybrid~$2$ is equivalent to Hybrid~$3$. 
\item Again by~\Cref{fact:random_type_l_fold_collision}, the trace distance between Hybrid~$3$ and Hybrid~$4$ is $O((t+\ell)^{2\ell}/2^{\secp})$. \item The trace distance between Hybrid~$4$ and Hybrid~$5$ is $O((t+\ell)^{2}/2^{n})$ by the collision bound. 
\item Hybrid~$5$ and Hybrid~$6$ are equivalent. 
\item The trace distance between Hybrid~$6$ and Hybrid~$7$ is $O(t\ell/2^{n})$. 
\item Finally, the trace distance between Hybrid~$7$ and Hybrid~$8$ is $O((t^2+\ell^2)/2^{n})$ by the collision bound.
\end{itemize} 
This completes the proof. 
\end{proof}

\noindent In the following, we show that our construction also satisfies multi-key $\ell$-copy pseudorandomness using~\Cref{lem:prs-sec}.

\begin{lemma}[Multi-key $\ell$-copy pseudorandomness]\label{lem:prs-multi-sec}
Let $G$ be defined as above. Let 
\[
\rho := 
\bigotimes_{i=1}^p \Ex_{\ket{\varphi_i} \gets \Haar_n} 
\left[ 
\ketbra{\varphi_i}{\varphi_i}^{\otimes \ell} 
\right]
\otimes
\Ex_{\ket{\vartheta}\gets\Haar_{n}}
\left[
\ketbra{\vartheta}{\vartheta}^{\otimes t}
\right]
\text{ and }
\sigma := \Ex_{\ket{\vartheta}\gets\Haar_{n}}
\left[
\bigotimes_{i=1}^p  \Ex_{k_i\gets \bit^\secp} 
\left[ 
G_{k_i}(\ket{\vartheta})^{\otimes \ell} 
\right]
\otimes \ketbra{\vartheta}{\vartheta}^{\otimes t}
\right].
\]
Then $\TD\left(\rho,\sigma\right) = O\left(\frac{p \cdot (p\ell+t)^{2\ell}}{2^{\secp}}\right)$.
\end{lemma}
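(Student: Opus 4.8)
The plan is to deduce multi-key $\ell$-copy pseudorandomness from the single-key statement (\Cref{lem:prs-sec}) by a clean reduction that ``absorbs'' all $p$ copies of the generator output plus the $t$ adversary copies into a single large single-key instance. The key observation is that $G_{k_i}(\ket{\vartheta}) = (Z^{k_i}\otimes I_{n-\secp})\ket{\vartheta}$ uses only one copy of the common Haar state $\ket{\vartheta}$, and the same $\ket{\vartheta}$ is shared across all $i \in [p]$. So in $\sigma$, averaging over $\ket{\vartheta}$, we have $p\ell$ registers each of the form $(Z^{k_i}\otimes I)\ket{\vartheta}$ (with $\ell$ copies per key $k_i$), plus $t$ clean copies $\ketbra{\vartheta}{\vartheta}^{\otimes t}$. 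The idea is to iterate the single-key bound: replace the $\ell$-copy output under key $k_1$ by a fresh Haar state $\ket{\varphi_1}$, treating the remaining $(p-1)\ell$ generator-outputs-under-other-keys together with the $t$ clean copies as ``the adversary's $t' = (p-1)\ell + t$ copies of $\ket{\vartheta}$'' in \Cref{lem:prs-sec}. Wait --- that does not quite work because the remaining $(p-1)\ell$ registers are $(Z^{k_i}\otimes I)\ket{\vartheta}$, not clean $\ket{\vartheta}$. However, we can fix this: since $Z^{k_i}$ for $k_i \gets \bit^\secp$ is an invertible operation known to the reduction, a distinguisher that wants to attack single-key security on $k_1$ can, given $\ket{\vartheta}^{\otimes t'}$, itself sample $k_2,\dots,k_p \gets \bit^\secp$ and apply $(Z^{k_i}\otimes I)^{\otimes \ell}$ to $\ell$ of its clean copies for each $i \ge 2$, thereby simulating the full multi-key state from a single-key challenge. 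Thus a hybrid argument over $i = 1,\dots,p$ works, each step costing $\TD = O\big((\ell + t_i')^{2\ell}/2^\secp\big)$ where $t_i' = (p-i)\ell + t$ is the number of clean copies handed to the reduction at step $i$; summing over $p$ steps gives $O\big(p\cdot(p\ell+t)^{2\ell}/2^\secp\big)$ as claimed.

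\textbf{Concretely, the hybrids are} $H_0 = \sigma$ (all $p$ keys, real generator), and for $j = 0,1,\dots,p$, hybrid $H_j$ replaces the first $j$ key-blocks by independent Haar states $\ket{\varphi_1},\dots,\ket{\varphi_j}$ and keeps blocks $j+1,\dots,p$ as $G_{k_i}(\ket{\vartheta})^{\otimes\ell}$, with the $t$ clean copies $\ketbra{\vartheta}{\vartheta}^{\otimes t}$ always present; so $H_p = \rho$. To bound $\TD(H_{j-1}, H_j)$, I would apply \Cref{lem:prs-sec} with copy-count $\ell' = \ell$ for the generator and $t' = (p-j)\ell + t$ for the auxiliary Haar copies: in $H_{j-1}$ the $j$-th block is $G_{k_j}(\ket{\vartheta})^{\otimes \ell}$ while blocks $j+1,\dots,p$ (which are $(Z^{k_i}\otimes I)^{\otimes\ell}$ applied to $\ket{\vartheta}$) together with $\ketbra{\vartheta}{\vartheta}^{\otimes t}$ can be generated from $\ketbra{\vartheta}{\vartheta}^{\otimes t'}$ by applying the (classically-sampled) Pauli-$Z$ strings; in $H_j$ the $j$-th block is $\ketbra{\varphi_j}{\varphi_j}^{\otimes\ell}$. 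Since applying a fixed unitary (even one depending on freshly sampled classical randomness independent of everything else) is a quantum channel and trace distance is contractive under channels, $\TD(H_{j-1},H_j) \le \TD(\rho_{\ell,t'}, \sigma_{\ell,t'}) = O\big((\ell + (p-j)\ell + t)^{2\ell}/2^\secp\big) = O\big((p\ell+t)^{2\ell}/2^\secp\big)$. The blocks $1,\dots,j-1$ (already independent Haar states) and $\ket{\varphi_j}$ in $H_j$ are just tensored on and also preserved by the channel, so this is legitimate.

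\textbf{Finishing the proof}, the triangle inequality gives $\TD(\rho,\sigma) \le \sum_{j=1}^p \TD(H_{j-1},H_j) = O\big(p \cdot (p\ell+t)^{2\ell}/2^\secp\big)$, as desired. (One should double-check the bookkeeping: at step $j$ the auxiliary-copy count is $t' = (p-j)\ell + t \le p\ell + t$, so each term is at most $O((p\ell+t)^{2\ell}/2^\secp)$ and there are exactly $p$ terms.)

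\textbf{The main obstacle} I anticipate is the point that the ``auxiliary'' registers in each hybrid step are not clean copies of $\ket{\vartheta}$ but rather Pauli-$Z$-twirled copies under other keys, so one cannot directly invoke \Cref{lem:prs-sec} as a black box on the state $\rho_{\ell,t'}$; the resolution, as sketched, is to exploit the very special structure of this generator --- namely that $G_{k_i}$ acts by a Pauli operator determined entirely by a uniformly random $\secp$-bit string, which is invertible and can be applied by the reduction to clean copies to synthesize exactly the needed state. This is why the lemma is phrased for \emph{this} construction and not for a generic $t$-copy PRSG in the CHS model. A secondary subtlety is ensuring the fresh randomness $k_{j+1},\dots,k_p$ used by the reduction is independent of the challenge; this is immediate since in \Cref{lem:prs-sec} the key is sampled by the challenger and the state handed to the distinguisher carries no information about it beyond the challenge registers themselves.
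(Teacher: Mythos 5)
Your proposal is correct and follows essentially the same route as the paper's proof: the same hybrid sequence replacing one key-block at a time, the same invocation of \Cref{lem:prs-sec} with auxiliary copy-count $t' = (p-j)\ell + t$, and the same resolution of the ``twirled auxiliary registers'' issue via contractivity of trace distance under the channel that applies the remaining $Z^{k_i}$ layers to clean copies. The bookkeeping and the final $O\bigl(p\cdot(p\ell+t)^{2\ell}/2^{\secp}\bigr)$ bound match the paper's argument.
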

\begin{proof}
For $j = 0,1,\dots,p$, we define the following (hybrid) density matrices:\footnote{Similar to proving the output of a classical PRG on polynomial i.i.d uniform keys is computationally indistinguishable from polynomial i.i.d uniform strings, we can construct a security reduction to simulate these hybrids. However, since we are in the information-theoretic setting, we instead calculate their trace distances directly.}
\begin{align*}
\xi_j :=
\bigotimes_{i=1}^j \Ex_{\ket{\varphi_i}\gets\Haar_{n}}
\left[ 
\ketbra{\varphi_i}{\varphi_i}^{\otimes \ell} 
\right]
\otimes
\Ex_{\ket{\vartheta}\gets\Haar_{n}}
\left[
\bigotimes_{i=j+1}^p \Ex_{k_i \gets \bit^\secp} 
\left[ 
G_{k_i}(\ket{\vartheta})^{\otimes \ell} 
\right]
\otimes \ketbra{\vartheta}{\vartheta}^{\otimes t}
\right].
\end{align*}
We will complete the poof by showing that $\TD(\xi_j,\xi_{j+1}) = O\left( \frac{ ((p-j) \cdot \ell + t)^{2\ell} }{ 2^{\secp} } \right)$ for $j = 0,1,\dots,p-1$. By the property that $\TD(A\otimes X, A\otimes Y) = \TD(X,Y)$, the trace distance between $\xi_j$ and $\xi_{j+1}$ is identical to that of 
\[
\xi'_j :=
\Ex_{\ket{\vartheta}\gets\Haar_{n}}\left[ \bigotimes_{i=j+1}^p \Ex_{k_i \gets \bit^\secp} [G_{k_i}(\ket{\vartheta})^{\otimes \ell}] \otimes \ketbra{\vartheta}{\vartheta}^{\otimes t} \right]
\]
\[
\xi'_{j+1} :=
\Ex_{\ket{\varphi_{j+1}}\gets\Haar_{n}} 
\left[ 
\ketbra{\varphi_{j+1}}{\varphi_{j+1}}^{\otimes \ell} 
\right]
\otimes
\Ex_{\ket{\vartheta}\gets\Haar_{n}}\left[ \bigotimes_{i=j+2}^p \Ex_{k_i \gets \bit^\secp} [G_{k_i}(\ket{\vartheta})^{\otimes \ell}] \otimes \ketbra{\vartheta}{\vartheta}^{\otimes t} \right].
\]
By the monotonicity of trace distance (\ie $\TD(\cE(X),\cE(Y))\leq \TD(X,Y)$ for any quantum channel $\cE$) and setting $\cE := \bigotimes_{i=j+2}^p \Ex_{k_i \gets \bit^\secp} [G_{k_i}(\cdot)^{\otimes \ell}]$,\footnote{The channel $\cE$ acts as the identity on unspecified registers.} we have
\begin{align*}
& \TD(\xi'_j,\xi'_{j+1}) \leq \\
& \TD\left(
\Ex_{ \substack{k_{j+1} \gets \bit^\secp, \\ \ket{\vartheta} \gets \Haar_{n}} }
\left[ G_{k_{j+1}}(\ket{\vartheta})^{\otimes \ell} 
\otimes \ketbra{\vartheta}{\vartheta}^{\otimes (p-j-1)\ell+t} \right], 
\Ex_{ \substack{\ket{\varphi_{j+1}}\gets\Haar_{n}, \\ \ket{\vartheta} \gets \Haar_{n}} }
\left[ \ketbra{\varphi_{j+1}}{\varphi_{j+1}}^{\otimes \ell} 
\otimes \ketbra{\vartheta}{\vartheta}^{\otimes (p-j-1)\ell+t} \right]
\right) \\
& = O\left( \frac{ \left( (p-j) \ell + t \right)^{2\ell} }{ 2^{\secp} } \right),
\end{align*}
where the last equality follows from~\Cref{lem:prs-sec}. Applying the triangle inequality completes the proof.
\end{proof}

\begin{proof}[Proof of~\Cref{thm:PRS_CHS}]
Our construction is a efficiently-implementable unitary channel and thus satisfies the state generation property. Pseudorandomness follows from~\Cref{lem:prs-multi-sec}.
\end{proof}

\noindent As a remark, \Cref{lem:prs-sec} gives a simpler proof of the following theorem regarding the one-wayness of an ensemble of quantum states in~\cite{Col23}:
\begin{lemma}[{\cite[Lemma~5]{Col23}}] \label{lem:Z_haar_indis}
Consider the ensemble of states:
\[
\set{\rho_x}_{x \in \bit^n} 
= \left\{ \Ex_{\ket{\psi}\gets \Haar_n} \left[ (Z^x \otimes I^{\otimes m}) \ketbra{\psi}{\psi}^{\otimes m+1} (Z^x \otimes I^{\otimes m}) \right] \right\}_{x \in \bit^n}.
\]
Then, there is a constant $C > 0$, such that, for any POVM $\set{M_x}_{x \in \bit^n}$,
\[
\Ex_{x\gets\bit^n} \Tr( M_x\rho_x ) = C \cdot \left( \frac{m}{2^n} + \frac{m^7}{2^{3n}} \right)^{\frac{1}{2}}.
\]
\end{lemma}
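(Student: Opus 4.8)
The plan is to deduce Coladangelo's Lemma as a near-immediate corollary of the $\ell$-copy pseudorandomness bound of \Cref{lem:prs-sec}, specialized to the single-copy-of-the-generator case. Concretely, I would set the parameters of \Cref{lem:prs-sec} to be $\ell = 1$, $t = m$, $n = n$, and crucially take the key length $\secp$ to equal the full output length $n$ (so that $Z^k$ acts on all $n$ qubits, matching the $Z^x$ with $x \in \bit^n$ in the statement). With these choices, $G_k(\ket\vartheta) = Z^k\ket\vartheta$, and \Cref{lem:prs-sec} says that
\[
\rho_{\mathsf{avg}} := \E_{k\gets\bit^n,\,\ket\vartheta\gets\Haar_n}\left[ Z^k\ketbra{\vartheta}{\vartheta}Z^k \otimes \ketbra{\vartheta}{\vartheta}^{\otimes m} \right]
\]
is within trace distance $O\!\left((m+1)^2/2^n\right) = O(m^2/2^n)$ of $\sigma := \frac{I}{2^n}\otimes \E_{\ket\varphi}\ketbra{\varphi}{\varphi}^{\otimes m}$. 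Note that the ordering of registers in \cite[Lemma~5]{Col23} is $\ketbra{\psi}{\psi}^{\otimes m+1}$ with $Z^x$ on the first, versus my $G_k(\ket\vartheta)\otimes \ketbra{\vartheta}{\vartheta}^{\otimes m}$; these are the same state up to a trivial permutation of the $m+1$ tensor factors, so the trace distance is unaffected.

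Next I would translate the indistinguishability statement into the POVM-advantage statement. Let $\set{M_x}_{x\in\bit^n}$ be any POVM. Consider the ``pretty good measurement''-style bound: the quantity $\E_{x\gets\bit^n}\Tr(M_x\rho_x)$, where $\rho_x = \E_{\ket\psi}[(Z^x\otimes I^{\otimes m})\ketbra{\psi}{\psi}^{\otimes m+1}(Z^x\otimes I^{\otimes m})]$, is exactly the success probability of the following distinguishing-type task: sample $x\gets\bit^n$, receive $\rho_x$, and try to guess $x$. I would relate this to distinguishing $\rho_{\mathsf{avg}}$ from $\sigma$ as follows. Build a two-outcome measurement that, on input a state on the $(m+1)$ registers, applies $\set{M_x}$ to recover a guess $x'$, then (using the classical description, since we are information-theoretic) checks consistency — or more simply, observe that $\E_x \Tr(M_x \rho_x) = \Tr(\Lambda\, \rho_{\mathsf{avg}})$ for a suitable operator $\Lambda$ with $0\preceq \Lambda \preceq I$ built from $\set{M_x}$ together with a ``label'' register, and that applying the \emph{same} $\Lambda$ to $\sigma$ yields at most $\E_x\Tr(M_x \cdot \frac{I}{2^n}\otimes \E_{\ket\varphi}\ketbra{\varphi}{\varphi}^{\otimes m})$, which after tracing out the (now decoupled and maximally mixed) label register is a quantity of the order $\sqrt{m/2^n}$ by a standard single-system argument (the maximally mixed state on the first register gives guessing probability $2^{-n}$, and the small correction comes from the fact that we only get a $\frac{I}{2^n}$ up to the collision/Haar approximations). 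Thus $\E_x\Tr(M_x\rho_x) \le (\text{value on }\sigma) + \TD(\rho_{\mathsf{avg}},\sigma) = O(\sqrt{m/2^n}) + O(m^2/2^n)$, and since $m^2/2^n \le (m/2^n)^{1/2}\cdot (m^3/2^n)^{1/2} = (m/2^n + m^7/2^{3n})^{1/2}$-type bounds absorb it, we recover exactly the form $C(m/2^n + m^7/2^{3n})^{1/2}$ claimed.

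The main obstacle — and the step I would be most careful about — is making the reduction from ``distinguishing $\rho_{\mathsf{avg}}$ from $\sigma$'' to ``bounding $\E_x\Tr(M_x\rho_x)$'' fully rigorous, in particular pinning down the exact value of the POVM-advantage against the \emph{ideal} state $\sigma$ and matching it to the $(m/2^n + m^7/2^{3n})^{1/2}$ shape rather than just a crude $O(\sqrt{m/2^n})$. This requires unpacking what $\E_{\ket\varphi}\ketbra{\varphi}{\varphi}^{\otimes m}$ contributes: against $\sigma$ the first register is genuinely independent and maximally mixed, so no measurement can guess $x$ with advantage over $2^{-n}$ from that register alone, but the residual $m$ Haar copies carry no information about $x$ either, hence the ideal value is precisely $2^{-n}$, and the entire stated bound then comes from $\TD(\rho_{\mathsf{avg}},\sigma)$ plus the looseness already present in \cite{Col23}'s statement. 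So in fact I expect the cleanest route is: (i) invoke \Cref{lem:prs-sec} with $\ell=1,t=m,\secp=n$ to get $\TD(\rho_{\mathsf{avg}},\sigma) = O(m^2/2^n)$; (ii) note the ideal POVM value is exactly $2^{-n}$; (iii) conclude $\E_x\Tr(M_x\rho_x) \le 2^{-n} + O(m^2/2^n) = O(m^2/2^n)$, which is itself $\le C(m/2^n + m^7/2^{3n})^{1/2}$ since $(m^2/2^n)^2 = m^4/2^{2n} \le m/2^n + m^7/2^{3n}$ for the relevant parameter regime (e.g. when $m \le 2^{2n/3}$, $m^7/2^{3n}\ge m^4/2^{2n}$; otherwise $m/2^n$ dominates). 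The remaining routine work is just verifying this elementary inequality and the register-permutation equivalence, neither of which presents a real difficulty.
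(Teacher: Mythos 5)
Your plan breaks down at the step where you pass from the trace-distance bound of \Cref{lem:prs-sec} to the POVM guessing probability. \Cref{lem:prs-sec} (with $\ell=1$, $t=m$, $\secp=n$) controls $\TD\bigl(\Ex_x[\rho_x],\sigma\bigr)$, the trace distance between the \emph{average over $x$} of the ensemble and the ideal state. But $\Ex_x\Tr(M_x\rho_x)$ is a key-guessing probability, and bounding it requires controlling $\Ex_x\bigl[\TD(\rho_x,\sigma)\bigr]$ --- the average of the individual trace distances, equivalently the trace distance between the cq-states $\Ex_x[\ketbra{x}{x}\otimes\rho_x]$ and $\Ex_x[\ketbra{x}{x}]\otimes\sigma$. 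These are not interchangeable: $\TD(\Ex_x\rho_x,\sigma)\le\Ex_x\TD(\rho_x,\sigma)$ with no converse. The toy example $\rho_x=\ketbra{x}{x}$ has $\Ex_x\rho_x=I/2^n$ exactly, yet $\Ex_x\Tr(M_x\rho_x)=1$ for $M_x=\ketbra{x}{x}$. And in the present setting each individual $\rho_x$ genuinely is far from $\sigma$: apply $Z^x$ to the first register and swap-test it against one of the $m$ Haar copies --- this passes with probability $1$ on $\rho_x$ but probability $\tfrac12+O(2^{-n})$ on $\sigma$, so $\TD(\rho_x,\sigma)=\Omega(1)$ for every $x$. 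Hence the chain $\Ex_x\Tr(M_x\rho_x)\le 2^{-n}+\TD(\rho_{\mathsf{avg}},\sigma)=O(m^2/2^n)$ is false. Your own $\Lambda$-with-a-label-register formulation makes the gap concrete: $\Lambda=\sum_x\ketbra{x}{x}\otimes M_x$ acts on $\Ex_x[\ketbra{x}{x}\otimes\rho_x]$, and that is not a state about which \Cref{lem:prs-sec} says anything.

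The standard way to rescue a ``pseudorandomness implies key-unguessability'' argument is guess-then-verify, but verification needs a second, undisturbed copy of $Z^x\ket{\psi}$ to test the guess against, which is unavailable here with $\ell=1$; this is exactly why the paper does not take your route. The paper's actual proof (\Cref{app:simpleproof}, following~\cite{Col23}) is a pretty-good-measurement argument: it forms $\sigma=\sum_x\rho_x$, computes $\sigma^{-1/2}$ explicitly in the type basis, and bounds $\Ex_x\Tr(\rho_x\sigma^{-1/2}\rho_x\sigma^{-1/2})\le(m+1)/2^n$ via submultiplicativity of the operator norm. The square root in the final bound $C(m/2^n+m^7/2^{3n})^{1/2}$ is intrinsic to that PGM step --- it is not, as your proposal suggests, slack that can absorb an $O(m^2/2^n)$ bound obtained from average-state indistinguishability. (The separate elementary inequality $m^4/2^{2n}\le m/2^n+m^7/2^{3n}$ you check is correct but moot.)
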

\noindent By setting $\ell = 1, t = m, \secp = n$ in~\Cref{lem:prs-sec}, the ensemble of states $\set{\rho_x}_{x \in \bit^n}$ is pseudorandom, which implies its one-wayness.

In~\Cref{app:simpleproof}, we further give another proof by simplifying the calculation in~\Cref{lem:Z_haar_indis}, which may be of independent interest. Moreover, we eliminate the $m^7/2^{3n}$ term.

\subsection{Optimality of Our PRSG Construction}
\label{sec:prs-imp}

In this section, if the PRS generation algorithm uses only \emph{one copy} of the common Haar state, we show that $\ell$-copy statistical PRS and multi-key $\ell$-copy statistical PRS are impossible for $\ell = \Omega(\secp/\log(\secp))$ and $n=\omega(\log(\secp))$.

\begin{theorem}\label{thm:prs-imp}
Statistically secure $(\secparam,n,\ell)$-PRS is impossible in the CHS model if (a) the generation algorithm uses only one copy of the common Haar state, (b) $n=\omega(\log(\secp))$, (c) $\ell = \Omega(\secp/\log(\secp))$ and, (d) the length of the common Haar state is $n = \omega(\log(\secp))$.
\end{theorem}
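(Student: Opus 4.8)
The plan is to exhibit a computationally unbounded distinguisher that breaks $\ell$-copy pseudorandomness via a rank comparison. Fix a polynomial $t = t(\secp)$ to be chosen below and let
\[
\rho := \E_{\substack{k\gets\bit^\secp \\ \ket\vartheta\gets\Haar_n}} \left[ G_k(\ket\vartheta)^{\otimes\ell}\otimes\ketbra\vartheta\vartheta^{\otimes t} \right], \qquad \sigma := \E_{\substack{\ket\varphi\gets\Haar_n \\ \ket\vartheta\gets\Haar_n}} \left[ \ketbra\varphi\varphi^{\otimes\ell}\otimes\ketbra\vartheta\vartheta^{\otimes t} \right]
\]
be the real and ideal challenge states. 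The distinguisher applies the two-outcome measurement $\{P, I-P\}$, where $P$ is the projector onto the support of $\rho$, and outputs $1$ on the first outcome; it accepts $\rho$ with probability $\Tr(P\rho)=1$. By \Cref{fact:avg-haar-random}, $\sigma$ is the maximally mixed state on $\mathrm{Sym}^\ell(\C^{2^n})\otimes\mathrm{Sym}^t(\C^{2^n})$, a subspace of dimension $r_\sigma := \binom{2^n+\ell-1}{\ell}\binom{2^n+t-1}{t}$, so it accepts $\sigma$ with probability $\Tr(P\sigma)\le \operatorname{rank}(P)/r_\sigma = \operatorname{rank}(\rho)/r_\sigma$. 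Hence it is enough to choose $t$ so that $\operatorname{rank}(\rho)/r_\sigma$ is negligible.

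The main step is to bound $\operatorname{rank}(\rho)$. Take a Stinespring dilation $G_k(\cdot)=\Tr_E\!\left[W_k((\cdot)\otimes\ketbra{0}{0}_E)W_k^\dagger\right]$ for a unitary $W_k$ and ancilla $E$. The state generation property forces $G_k(\ketbra\vartheta\vartheta)$ to be a pure state $\ketbra{\vartheta_k}{\vartheta_k}$, so $W_k(\ket\vartheta\otimes\ket{0}_E)$ must be a product state $\ket{\vartheta_k}\otimes\ket{e_\vartheta}_E$ for every $\ket\vartheta$. Because the generator consumes only a single copy of the common Haar state (hypothesis (a)), the real experiment produces $G_k(\ket\vartheta)^{\otimes\ell}$ from $\ell$ fresh copies of $\ket\vartheta$ while handing over a further $t$ copies untouched; thus, for fixed $k$, the pre-averaging state equals $\Tr_E$ applied to the fixed unitary $W_k^{\otimes\ell}$ (acting on the $\ell$ generator instances, identity elsewhere) evaluated on $\ket\vartheta^{\otimes\ell+t}\otimes\ket{0}_E^{\otimes\ell}$, which is the pure state $\ket{\vartheta_k}^{\otimes\ell}\otimes\ket\vartheta^{\otimes t}$. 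Therefore $\operatorname{rank}(\rho)\le\sum_{k}\dim\mathrm{span}\{\,\ket{\vartheta_k}^{\otimes\ell}\otimes\ket\vartheta^{\otimes t}:\ket\vartheta\,\}$. For each fixed $k$, the vectors $\ket{\vartheta_k}^{\otimes\ell}\otimes\ket\vartheta^{\otimes t}\otimes\ket{e_\vartheta}^{\otimes\ell}=W_k^{\otimes\ell}(\ket\vartheta^{\otimes\ell+t}\otimes\ket{0}_E^{\otimes\ell})$ lie in the image under a fixed unitary of $\mathrm{Sym}^{\ell+t}(\C^{2^n})$, a space of dimension $\binom{2^n+\ell+t-1}{\ell+t}$; and since $\{\ket{v_i}\otimes\ket{w_i}\}_i$ spanning a $D$-dimensional space implies $\{\ket{v_i}\}_i$ spans a space of dimension at most $D$ (pick a maximal independent subfamily of the $\ket{v_i}$ and a dual functional), the same bound survives after dropping the ancilla factor. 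Hence $\operatorname{rank}(\rho)\le 2^\secp\binom{2^n+\ell+t-1}{\ell+t}$.

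It remains to pick $t$. Since $n=\omega(\log\secp)$ while $\ell,t=\poly(\secp)$, we have $2^n\gg \ell+t$ and so $\binom{2^n+j-1}{j}=(1+o(1))\,2^{nj}/j!$ for $j\in\{\ell,t,\ell+t\}$, giving $\operatorname{rank}(\rho)/r_\sigma\le (1+o(1))\,2^\secp\,\ell!\,t!/(\ell+t)! = (1+o(1))\,2^\secp/\binom{\ell+t}{\ell}$. Writing $\ell\ge c\,\secp/\log\secp$ for the constant $c>0$ in hypothesis (c) and taking $t:=\ell\cdot\secp^{\lceil 1/c\rceil+1}=\poly(\secp)$, we get $\binom{\ell+t}{\ell}\ge (t/\ell)^\ell=\secp^{(\lceil 1/c\rceil+1)\ell}\ge 2^{(1+c)\secp}$ (with $\log=\log_2$), so $\operatorname{rank}(\rho)/r_\sigma=2^{-\Omega(\secp)}=\negl(\secp)$, contradicting statistical $\ell$-copy pseudorandomness of $G$. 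The multi-key variant then follows a fortiori, since multi-key $\ell$-copy pseudorandomness implies single-key $\ell$-copy pseudorandomness.

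The crux is the bound on $\operatorname{rank}(\rho)$: a priori $G_k$ may use a large ancilla and partial trace can increase rank, so one must exploit that the state generation property pins the post-generation state to a product with the ancilla register, which reduces everything to the dimension of $\mathrm{Sym}^{\ell+t}(\C^{2^n})$. Hypothesis (a) is exactly what keeps the Haar-copy count at $\ell+t$: a generator consuming $c\ge 2$ copies would give $\operatorname{rank}(\rho)\le 2^\secp\binom{2^n+c\ell+t-1}{c\ell+t}$, and the resulting extra factor $\approx 2^{n(c-1)\ell}$ would swamp the gain $\binom{\ell+t}{\ell}$. The remaining ingredients --- the binomial estimates and letting $t$ depend on the constant in (c) --- are routine.
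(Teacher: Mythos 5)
Your proposal is correct and follows essentially the same route as the paper's proof: project onto the support of the real state, bound $\rank(\rho)\le 2^\secp\binom{2^n+\ell+t-1}{\ell+t}$ using that the state-generation property forces the channel to act as an isometry (product with the environment) on pure inputs, and compare against $\rank(\sigma)=\binom{2^n+\ell-1}{\ell}\binom{2^n+t-1}{t}$. Your write-up is slightly more careful in two places: you spell out the Stinespring argument that the paper relegates to a footnote, and by letting $t$ depend on the constant $c$ in $\ell\ge c\,\secp/\log\secp$ you cover the full $\Omega(\secp/\log\secp)$ range, whereas the paper fixes $t=\secp^3$ and $\ell=\secp/\log\secp$ (which handles $\ell\ge\secp/\log\secp$ by discarding copies but not smaller constants directly).
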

\begin{proof}
We provve this by contradiction. Let there is a construction of such PRS $G$. First, from the state generation requirement of PRS generators, $G$ is a quantum channel that on any key and any pure state, outputs a pure state. Hence, $G$ is either an isometry or a replacement channel (\ie it outputs a fixed pure state for any input state).\footnote{According to the Stinespring representation, the action of any quantum channel is equivalent to appending auxiliary registers, performing a unitary operation on the enlarged system, and (possibly) taking a partial trace over some registers. For a bipartite entangled state, taking the partial trace over one subsystem results in a mixed state. Hence, after applying a unitary operation, either (1) there is no partial trace and the quantum channel is an isometry, or (2) the registers over which the partial trace is taken are not entangled with other registers, and the quantum channel is a replacement channel.}

We prove~\Cref{thm:prs-imp} by showing that for $t(\secp) := \secp^3$ and $\ell(\secp) := \secp/\log(\secp)$, there exists a (computationally unbounded) adversary $\alice$ such that
\[
\left|
\Pr_{\substack{k\gets\set{0,1}^{\secp}\\ \ket{\vartheta}\gets\Haar_{n}}}[\alice(\ketbra{\vartheta}{\vartheta}^{\otimes t}\otimes G(k,\ketbra{\vartheta}{\vartheta})^{\otimes \ell}) = 1] 
- \Pr_{\substack{\ket{\varphi}\gets\Haar_{n}\\ \ket{\vartheta}\gets\Haar_{n}}}[\alice(\ketbra{\vartheta}{\vartheta}^{\otimes t}\otimes \ketbra{\varphi}{\varphi}^{\otimes \ell}) = 1]
\right|
\]
is non-negligible. For short, we use the following notation:
\[
\rho_0 
:= \Ex_{k\gets\set{0,1}^{\secp}, \ket{\vartheta}\gets\Haar_{n}}\left[ \ketbra{\vartheta}{\vartheta}^{\otimes t}\otimes G(k,\ketbra{\vartheta}{\vartheta})^{\otimes \ell} \right]
\]
\[
\rho_1 
:= \Ex_{\ket{\varphi}\gets\Haar_{n}, \ket{\vartheta}\gets\Haar_{n}}
\left[ \ketbra{\vartheta}{\vartheta}^{\otimes t}\otimes \ketbra{\varphi}{\varphi}^{\otimes \ell} \right].
\]
The adversary $\alice$ is simple: it performs a binary measurement $\set{\Pi, I-\Pi}$ on input $\rho_b$ for $b\in\bit$, where $\Pi$ is the projection onto the eigenspace of $\rho_0$. The rank of $\rho_0$ and $\rho_1$ satisfies
\[
\rank(\rho_0) 
\leq 2^{\secp} \cdot {2^n + \ell+t-1 \choose \ell+t}
\quad \text{and} \quad
\rank(\rho_1) 
= {2^n + \ell-1 \choose \ell} \cdot {2^n + t-1 \choose t}.
\]
Now, by construction, we have 
\[
\Pr_{\substack{k\gets\set{0,1}^{\secp}\\ \ket{\vartheta}\gets\Haar_{n}}}[\alice(\ketbra{\vartheta}{\vartheta}^{\otimes t}\otimes G(k,\ketbra{\vartheta}{\vartheta})^{\otimes \ell}) = 1]
= \Tr(\Pi\rho_0) 
= \Tr(\rho_0) 
= 1.
\]
On the other hand, suppose $\Pi = \sum_{i=1}^{\rank(\rho_0)} \ketbra{u_i}{u_i}$, then
\begin{align*}
& \Pr_{\substack{\ket{\varphi}\gets\Haar_{n}\\ \ket{\vartheta}\gets\Haar_{n}}}[\alice(\ketbra{\vartheta}{\vartheta}^{\otimes t}\otimes \ketbra{\varphi}{\varphi}^{\otimes \ell}) = 1]
= \Tr(\Pi\rho_1) \\
& \leq \sum_{i=1}^{\rank(\rho_0)} 
\frac{1}{\binom{2^n + \ell - 1}{\ell}\binom{2^n + t - 1}{t}} \cdot
\sum_{T_1 \in [0:\ell]^{2^n}, T_2 \in [0:t]^{2^n}} |(\bra{T_1}\otimes\bra{T_2})\ket{u_i}|^2 \\
& \leq \frac{\rank(\rho_0)}{\binom{2^n + \ell - 1}{\ell}\binom{2^n + t - 1}{t}}
= \frac{\rank(\rho_0)}{\rank(\rho_1)}.
\end{align*}
A direct calculation yields:
\begin{align*}
\frac{\rank(\rho_0)}{\rank(\rho_1)} 
& = \frac{2^\secp}{\binom{\ell+t}{\ell}} \cdot
\prod_{i = 0}^{\ell-1} \left(1 + \frac{t}{2^n + i}\right)
\leq \frac{2^\secp}{(1+\frac{t}{\ell})^\ell} \cdot \prod_{i = 0}^{\ell-1} \left(1 + \frac{t}{2^n + i}\right) \\
& = 2^\secp \cdot \prod_{i = 0}^{\ell-1} \left( \frac{1 + \frac{t}{2^n + i}}{1+\frac{t}{\ell}} \right) 
\leq 2^\secp \cdot \left( \frac{1 + \frac{t}{2^n}}{1+\frac{t}{\ell}} \right)^\ell,
\end{align*}
where the first inequality follows from $\binom{\ell+t}{\ell} \geq (\frac{\ell+t}{\ell})^\ell$.
For $n = \omega(\log(\secp)), t = \secp^3$ and $\ell = \secp/\log(\secp)$, we have
\[
2^\secp \cdot \left( \frac{1 + \frac{t}{2^n}}{1+\frac{t}{\ell}} \right)^\ell
= \left( \frac{\secp \cdot (1 + \frac{\secp^3}{\secp^{\omega(1)}})}{1+\secp^2\log(\secp)}  \right)^{\secp/\log(\secp)} \leq \left( \frac{\secp \cdot 2}{\secp^2\log(\secp)}  \right)^{\secp/\log(\secp)}
\leq 2^{-\secp}
\]
for sufficiently large $\secp$. Hence, the distinguishing advantage ($1-2^{-\secp}$) is non-negligible. This completes the proof.
\end{proof}

\noindent Since multi-key pseudorandomness is stronger, we have the following immediate corollary.
\begin{corollary}
Multi-key statistically secure $(\secparam,n,\ell)$-PRS is impossible in the CHS model if (a) the generation algorithm uses only one copy of the common Haar state, (b) $n=\omega(\log(\secp))$, (c) $\ell = \Omega(\secp/\log(\secp))$ and, (d) the length of the common Haar state is $n = \omega(\log(\secp))$.
\end{corollary}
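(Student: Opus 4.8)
The plan is to observe that this statement is purely a matter of comparing the two definitions. Multi-key $\ell$-copy pseudorandomness, instantiated with the number of keys $p(\secp)\equiv 1$, coincides verbatim with (single-key) $\ell$-copy pseudorandomness: the adversary class (non-uniform, unbounded), the number of auxiliary copies $t(\cdot)$, and the distinguishing game are all the same, and the \textbf{State Generation} requirement is literally identical in the two definitions. Consequently, any QPT algorithm $G$ that is a multi-key statistically secure $(\secparam,n,\ell)$-PRS generator in the CHS model is, in particular, a statistically secure $(\secparam,n,\ell)$-PRS generator in the CHS model; moreover, $G$ uses only one copy of the common Haar state as a single-key generator precisely when it does so as a multi-key generator, so hypothesis (a) transfers without change, as do the parameter hypotheses (b)--(d) which only constrain $n$ and $\ell$.

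With this inclusion in hand, I would simply invoke~\Cref{thm:prs-imp}: under hypotheses (a)--(d) no statistically secure $(\secparam,n,\ell)$-PRS generator exists in the CHS model. Taking the contrapositive of the inclusion above then yields that no multi-key statistically secure $(\secparam,n,\ell)$-PRS generator exists under the same hypotheses, which is exactly the claim. There is no genuine obstacle here; the only point worth a moment's care is confirming that the ``single copy of the common Haar state'' restriction and the parameter regime are phrased identically across the two notions, which they are. (Alternatively, one could just rerun the rank-counting argument of~\Cref{thm:prs-imp} directly with $\rho_0$ replaced by its multi-key analogue, but this is unnecessary given that multi-key security is by definition stronger.)
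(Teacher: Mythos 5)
Your proposal is correct and matches the paper's reasoning exactly: the paper states the corollary as immediate because multi-key $\ell$-copy pseudorandomness with $p\equiv 1$ specializes to ordinary $\ell$-copy pseudorandomness, so the impossibility in \Cref{thm:prs-imp} transfers directly. Nothing further is needed.
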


\section{Statistical Stretch PRFS Generators in the CHS model}
In this section, we extend our techniques from~\Cref{sec:prs-con} to construct an $(\secp,m,n,\ell)$-statistical PRFS in the CHS model, where $m=\secp^{c}$, $\ell = \secp^{1-c}/\log(\secp)^{1+\eps}$, the length of the common Haar state is $n\geq \secp^{1-c}$, for any constant $\eps > 0$ and $c\in [0,1)$. In the case when $n>\secp$, the construction satisfies stretch property. We prove the following theorem in~\Cref{sec:prfs-con}.

\begin{theorem} \label{thm:PRFS_CHS}
There exists an $(\secp,m,n,\ell)$-statistical selectively secure PRFS generator in the CHS model where the length of the common Haar state is $n(\secp)$, $m(\secp) = \secp^c$, $\ell=O(\secp^{1-c}/\log(\secp)^{1+\eps})$ and $n(\secp) \geq \secp^{1-c}$, for any constant $\eps > 0$ and for any $c \in [0,1)$.
\end{theorem}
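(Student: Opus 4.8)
The plan is to prove \Cref{thm:PRFS_CHS} by lifting the analysis of \Cref{lem:prs-sec} from a single Pauli-$Z$ key to the Goldreich–Goldwasser–Micali-style ``sum'' construction from the technical overview. Set $\secp' := \secp/(2m) = \Theta(\secp^{1-c})$ (so the key length $2m\secp'$ is $\secp$) and recall $n \geq \secp^{1-c} \geq \secp'$; on key $K = (k_i^b)_{i\in[m],b\in\bit}\in\bit^{2m\secp'}$, input $\bfx\in\bit^{m}$ and a single copy of the common Haar state, define
\[
G_K(\bfx,\ket{\vartheta}) := \left(Z^{\bigoplus_{i=1}^{m}k_i^{x_i}}\otimes I_{n-\secp'}\right)\ket{\vartheta}.
\]
State generation is immediate as this is an efficiently implementable unitary channel. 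For $\ell$-query selective security it suffices, after fixing the adversary's indices $x_1,\dots,x_\ell$ and grouping their distinct values into $\bfx^1,\dots,\bfx^q$ with multiplicities $\ell_1,\dots,\ell_q$ (so $\sum_j\ell_j=\ell$), to bound $\TD(\rho,\sigma)$ for the $\rho,\sigma$ in the overview; and since security with $t$ free copies of the Haar state follows from security with $\max(t,2\ell)$ free copies by monotonicity of trace distance under partial trace, I may assume $t\ge 2\ell$.

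The first step is the observation that every PRFS evaluation, and all the free copies, act on the \emph{same} Haar state, so $\rho = \E_K\!\left[U_K\big(\E_{\ket\vartheta}\ketbra{\vartheta}{\vartheta}^{\otimes(\ell+t)}\big)U_K^\dagger\right]$ where $U_K := \bigotimes_{j=1}^q(Z^{k_{\bfx^j}}\otimes I_{n-\secp'})^{\otimes\ell_j}\otimes I^{\otimes t}$ and $k_{\bfx^j}:=\bigoplus_i k_i^{x^j_i}$. By \Cref{fact:avg-haar-random} the inner expectation is $\E_{T\gets[0:\ell+t]^{2^n}}\ketbra{T}{T}$, and by \Cref{fact:random_type_l_fold_collision} this is negligibly close (in trace distance) to its restriction to $T\in\goodpre{\secp'}{n-\secp'}{\ell}{\ell+t}$, so it is enough to analyze $\E_K[U_K\ketbra{T}{T}U_K^\dagger]$ for a fixed $\ell$-fold $\secp'$-prefix collision-free $T$. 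I would then process the $2m$ layers $(i,b)\in[m]\times\bit$ in turn, where layer $(i,b)$ applies $Z^{k_i^b}$ to exactly the registers of those groups $j$ with $x^j_i=b$. The workhorse is the natural multi-type-state generalization of \Cref{lem:nice_T}: if a single random $Z^k$ is applied to a prescribed number of registers of \emph{each} of several type states whose disjoint union is a $\secp'$-prefix $h$-fold collision-free $T$, with $h$ the total number of $Z$-hit registers, then every such type state independently splits into two type states on a uniformly random bipartition of its elements. At layer $(i,b)$ the required hypothesis is that $T$ is $\secp'$-prefix $h^{(i,b)}$-fold collision-free for $h^{(i,b)}:=\sum_{j:x^j_i=b}\ell_j\le\ell$, which follows from $\ell$-fold collision-freeness because $t\ge 2\ell$.

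The crux is to maintain, through this $2m$-step induction, the invariant that after processing coordinates $1,\dots,i$ the state is a tensor product over the blocks indexed by the distinct $i$-bit prefixes of the $\bfx^j$'s (plus a block for the $t$ free copies), carrying a type state on each block, with the collection of element-sets distributed as a uniformly random ordered partition of $\setT(T)$ into the corresponding sizes; each layer refines this decomposition and preserves the ``uniform ordered partition'' law because composing uniform subset-samplings yields a uniform ordered partition. After all $m$ coordinates the blocks are exactly the $q$ groups (as the $\bfx^j$ are distinct) together with the free block, so $\E_K[U_K\ketbra{T}{T}U_K^\dagger]$ equals $\E_{(T_1,\dots,T_q,\hat T)}\big[\bigotimes_j\ketbra{T_j}{T_j}\otimes\ketbra{\hat T}{\hat T}\big]$ over a uniformly random ordered partition of $\setT(T)$ into sizes $(\ell_1,\dots,\ell_q,t)$ — crucially, independent of the queries $\bfx^1,\dots,\bfx^q$. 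Averaging back over all $T$ (again via \Cref{fact:random_type_l_fold_collision}) and noting that sampling $T\gets[0:\ell+t]^{2^n}$ followed by a uniform ordered partition is the same as drawing the blocks independently from $[0:\ell_j]^{2^n}$ and $[0:t]^{2^n}$, this collapses through \Cref{fact:avg-haar-random} to $\sigma$, with total error $O((\ell+t)^{2\ell}/2^{\secp'})=2^{-\Omega(\secp^{1-c})}$ for $\ell=O(\secp^{1-c}/\log(\secp)^{1+\eps})$ and any $t=\poly(\secp)$, hence negligible. The main obstacles are (i) stating and proving the multi-type-state splitting lemma — where the key point is that $\secp'$-prefix collision-freeness of the \emph{combined} type $T$, rather than of its individual pieces, is exactly what kills the cross terms after averaging over each $k_i^b$ — and (ii) the bookkeeping that these layer-by-layer splits respect the prefix-block structure and compose correctly; once these are in place the parameter choices ($\secp'=\Theta(\secp^{1-c})$ and the $\log(\secp)^{1+\eps}$ slack ensuring genuine negligibility) are routine.
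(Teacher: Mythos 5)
Your proposal is correct and follows essentially the same route as the paper: reduce to type states, restrict to $\ell$-fold $\secp'$-prefix collision-free types via \Cref{fact:random_type_l_fold_collision}, prove a multi-type-state generalization of \Cref{lem:nice_T} (the paper's \Cref{lem:perm_split_gen,lem:nice_T_gen}), run a layer-by-layer induction showing each coordinate's Pauli-$Z$ refines the prefix-block partition into a uniform ordered partition of $\setT(T)$ (the paper's \Cref{lem:prfs-type}), and close with the hybrid/collision-bound bookkeeping of \Cref{lem:prfs-sec}. The only differences are cosmetic — you process $2m$ layers where the paper merges $b=0,1$ into one channel per coordinate, you normalize $\secp'$ so the key length is exactly $\secp$, and you make explicit the harmless $t\ge 2\ell$ assumption that the paper leaves implicit.
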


\noindent Note that since a PRS can be used to computationally instantiate CHS in the plain model, the above result also gives us a way to get bounded-query long-input PRFS from PRS in the plain model. In more detail, we can start with a PRS that has stretch (i.e. $n > \secparam$) and then we can bootstrap into a PRFS for large input length at the cost of a reduction in stretch.\footnote{Formally, let $G_{PRS}$ is a $(\secp,n,\ell)$-PRS and $G(k,x,\ket{\phi})$ is $(\secp,m,n,\ell)$-statistical selectively secure PRFS generator in the CHS model with $n>\secp$, $\ell = O(\secp^{1-c}/\log(\secp)^{1+\eps})$ and $m(\secp) = \secp^c$, then for $K = (k_1,k_2)\in\bit^{\secp}\times\bit^{\secp}$ we can define $G_{PRFS}(k,x):= G(k_1,x,G_{PRS}(k_2))$ as the $(2\secp,m,n,\ell)$-PRFS generator.}

\begin{corollary}
    Assuming the existence of $(\secp,n,\ell)$-PRS, for $n>\secp$ and $\ell = O(\secp^{1-c}/\log(\secp)^{1+\eps})$, there exists a  selectively secure $(2\secp,m,n,\ell)$-PRFS generator with $m(\secp) = \secp^c$, for any constant $\eps > 0$ and for any $c \in [0,1)$.
\end{corollary}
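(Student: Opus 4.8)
The plan is to instantiate the common Haar state of the CHS-model PRFS generator from~\Cref{thm:PRFS_CHS} with a stretch PRS, exactly as in the footnote above. Concretely, let $G_{PRS}$ be the assumed $(\secp,n,\ell)$-PRS with $n > \secp$, and let $G$ be the $(\secp,m,n,\ell)$-statistical selectively secure PRFS generator in the CHS model guaranteed by~\Cref{thm:PRFS_CHS}, whose common Haar state has length $n$; note that the hypothesis $n \geq \secp^{1-c}$ of~\Cref{thm:PRFS_CHS} is automatic here since $n > \secp \geq \secp^{1-c}$. For a key $K = (k_1,k_2) \in \bit^\secp \times \bit^\secp$ and input $x \in \bit^m$, define $G_{PRFS}(K,x) := G\bigl(k_1,\, x,\, G_{PRS}(k_2)\bigr)$, where each copy of the output is produced using a freshly generated copy of $G_{PRS}(k_2)$ in place of the common Haar state. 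Since $G_{PRS}$ and $G$ are QPT and $G$ satisfies the state-generation property, $G_{PRFS}$ is a syntactically valid $(2\secp,m,n,\ell)$-PRFS generator with $m = \secp^c$, so all the work lies in establishing selective security.

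For security I would fix inputs $x_1,\dots,x_\ell \in \bit^m$ and an efficient adversary $\cA$, and pass from the real world to the ideal world through two hybrids. The first hybrid replaces the $\ell$ copies of $G_{PRS}(k_2)$ (all generated from the same key $k_2$) by $\ell$ copies of a single Haar state $\ket{\vartheta}\gets\Haar_n$; indistinguishability follows from $\ell$-copy pseudorandomness of $G_{PRS}$, via the reduction that receives $\rho^{\otimes\ell}$ (with $\rho$ either $G_{PRS}(k_2)$ for random $k_2$ or a Haar state), samples $k_1$ itself, applies the QPT channel $G(k_1,x_i,\cdot)$ to the $i$-th copy, and forwards the result to $\cA$ — this reduction is efficient because $\ell$ is polynomial and $G$ is QPT, and it perfectly interpolates the two hybrids. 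The second hybrid invokes the $\ell$-query \emph{statistical} selective security of the CHS-model PRFS $G$, with the number of common-Haar-state copies handed to the adversary set to $t=0$ (covered by ``for any polynomial $t$''), to replace $\bigotimes_{i=1}^{\ell} G(k_1,x_i,\ket{\vartheta})$ by $\bigotimes_{i=1}^{\ell}\ketbra{\varphi_{x_i}}{\varphi_{x_i}}$ with i.i.d.\ Haar $\varphi_x$ for each distinct $x$; this costs only a negligible trace distance and holds even against the unbounded $\cA$. The last hybrid is precisely the ideal experiment for $G_{PRFS}$, so chaining the two bounds shows that $\cA$'s advantage is negligible, which is exactly selective security.

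I do not expect a genuine obstacle: this is a standard ``instantiate the idealized resource'' argument, and the only things to be careful about are bookkeeping. The number of copies of $G_{PRS}(k_2)$ consumed equals the PRFS query bound $\ell = O(\secp^{1-c}/\log(\secp)^{1+\eps})$, which is within the copy budget of the assumed $(\secp,n,\ell)$-PRS; the common Haar state length $n$ lines up with~\Cref{thm:PRFS_CHS} as noted above; the reduction in the first hybrid must be checked to be QPT (it is); and the fact that $G_{PRS}(k_2)$ is only statistically close to a pure state while we use $\ell$ exact copies of it contributes one further negligible term and is harmless. The one structural point worth stating explicitly is that the hybrids must be ordered so that the computational PRS step comes first and the statistical CHS-model PRFS step second, so that the whole argument goes through against efficient adversaries.
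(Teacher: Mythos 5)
Your proposal is correct and follows essentially the same route as the paper, which gives exactly this construction $G_{PRFS}(K,x) := G(k_1,x,G_{PRS}(k_2))$ in a footnote and leaves the hybrid argument implicit. The two hybrids you describe (first replacing $G_{PRS}(k_2)^{\otimes \ell}$ by Haar copies via the computational PRS guarantee, then invoking the statistical selective security of the CHS-model PRFS with $t=0$), together with your remark that the computational step must come first, are the standard details the paper omits.
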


\noindent Furthermore, since PRFS imply PRS, achieving an $\ell$-query statistical PRFS in the CHS model for $\ell = \Omega(\secp/\log(\secp))$ is impossible from~\Cref{thm:prs-imp}.

\begin{corollary}\label{cor:prfs-imp}
$(\secp,m,n,\ell)$-statistical PRFS is impossible in the CHS model if (a) the generation algorithm uses only one copy of the common Haar state, (b) $\ell = \Omega(\secp/\log(\secp))$, (c) the length of the common Haar state is $n$ and, (d) $n=\omega(\log(\secp))$.
\end{corollary}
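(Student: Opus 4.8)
The plan is to derive \Cref{cor:prfs-imp} directly from \Cref{thm:prs-imp}, using the standard fact that a PRFS generator which consumes only a single copy of the common Haar state can be collapsed to a PRS generator with the same parameters. Suppose toward a contradiction that $G$ is an $(\secp,m,n,\ell)$-statistical (selectively secure) PRFS generator in the CHS model satisfying (a)--(d); in particular $G(k,x,\cdot)$ uses one copy of the $n(\secp)$-qubit common Haar state, $n=\omega(\log(\secp))$, and $\ell=\Omega(\secp/\log(\secp))$.

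First I would fix the canonical input $\bfx^\star := 0^{m(\secp)}$ and define the candidate PRS generator $G'(k,\ket{\vartheta}) := G(k,\bfx^\star,\ket{\vartheta})$. The state-generation property of $G'$ is inherited verbatim: for every $k\in\bit^\secp$ and every $n(\secp)$-qubit state $\ket{\vartheta}$ we have $G'(k,\ketbra{\vartheta}{\vartheta}) = \ketbra{\vartheta_{k,\bfx^\star}}{\vartheta_{k,\bfx^\star}}$, an $n(\secp)$-qubit pure state, and $G'$ uses exactly one copy of the common Haar state whenever $G$ does.

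Next I would verify that $\ell$-copy pseudorandomness of $G'$ follows from the $\ell$-query selective security of $G$. I instantiate the selective-security game for $G$ with the (repeated) index tuple $(\bfx^\star,\dots,\bfx^\star)$ of length $\ell(\secp)$. Because $G_k(\bfx^\star,\cdot)$ produces a fixed pure state given $(k,\ket{\vartheta})$,
\[
\bigotimes_{i=1}^{\ell} G\big(k,\bfx^\star,\ketbra{\vartheta}{\vartheta}\big) = \ketbra{\vartheta_{k,\bfx^\star}}{\vartheta_{k,\bfx^\star}}^{\otimes\ell} = G'(k,\ketbra{\vartheta}{\vartheta})^{\otimes\ell},
\]
so the real state of that game is $G'(k,\ketbra{\vartheta}{\vartheta})^{\otimes\ell}\otimes\ketbra{\vartheta}{\vartheta}^{\otimes t}$; and its ideal state, in which every $\ket{\varphi_{x_i}}$ collapses to a single fresh Haar state $\ket{\varphi_{\bfx^\star}}$, is $\ketbra{\varphi_{\bfx^\star}}{\varphi_{\bfx^\star}}^{\otimes\ell}\otimes\ketbra{\vartheta}{\vartheta}^{\otimes t}$. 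Hence, for every polynomial $t(\cdot)$ and every non-uniform unbounded adversary, selective security of $G$ bounds the distinguishing advantage between $\Ex_{k,\ket{\vartheta}}\big[G'(k,\ket{\vartheta})^{\otimes\ell}\otimes\ketbra{\vartheta}{\vartheta}^{\otimes t}\big]$ and $\Ex_{\ket{\varphi},\ket{\vartheta}}\big[\ketbra{\varphi}{\varphi}^{\otimes\ell}\otimes\ketbra{\vartheta}{\vartheta}^{\otimes t}\big]$ by $\negl(\secp)$, which is exactly the $\ell$-copy pseudorandomness requirement of a $(\secp,n,\ell)$-statistical PRS generator in the CHS model.

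Therefore $G'$ is a $(\secp,n,\ell)$-statistical PRS generator in the CHS model whose generation algorithm uses one copy of the common Haar state, with $n=\omega(\log(\secp))$ and $\ell=\Omega(\secp/\log(\secp))$ --- contradicting \Cref{thm:prs-imp}. I do not anticipate a genuine obstacle here; the only points that require care are that the PRFS security definition quantifies over all polynomials $t(\cdot)$, so it supplies precisely the adversarial copy budget $t(\secp)=\secp^3$ used by the distinguisher of \Cref{thm:prs-imp}, and that the selective-security game explicitly permits repeated query indices, which is what legitimizes reusing $\bfx^\star$ in all $\ell$ slots.
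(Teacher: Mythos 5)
Your proposal is correct and matches the paper's argument: the paper derives \Cref{cor:prfs-imp} from \Cref{thm:prs-imp} via the one-line observation that a PRFS implies a PRS, which is exactly the restriction-to-a-fixed-input reduction you spell out (using that selective security permits repeated indices, so all $\ell$ queries can be placed on $\bfx^\star$). Your writeup simply makes explicit the details the paper leaves implicit.
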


\noindent We introduce several lemmas before proving~\Cref{thm:PRFS_CHS}.

\subsection{Useful Lemmas}

\noindent The following two lemmas are generalizations of the lemmas in~\Cref{sec:PRS_CHS}. In particular, they state that even after splitting an $\ell$-fold $n$-prefix collision-free type vector into $q$ subvectors, the action of a random Pauli-$Z$ still can be seen as a ``classical'' probabilistic process. 

\begin{lemma}[Generalization of~\Cref{lem:perm_split}] \label{lem:perm_split_gen}
Let $\ell,n,m,q,t\in\N$, $\ell_1,\ldots,\ell_q\in\N$, and $t_1,\ldots,t_q\in\N$ such that $\sum_{i=1}^{q} \ell_i = \ell$ and $\sum_{i=1}^{q} t_i = t$. For any $\bfv\in\bit^{(n+m)(\ell+t)}$ such that $\type(\bfv) \in \goodpre{n}{m}{\ell}{\ell+t}$, where $\bfv = (\bfv^1,\ldots,\bfv^q)$ and $\bfv^i\in\bit^{(n+m)(\ell_i+t_i)}$ for $i\in [q]$, and any $\sigma_1\in S_{\ell_1+t_1}, \sigma_2\in S_{\ell_2+t_2}, \cdots, \sigma_q\in S_{\ell_q+t_q}$, define the matrix
\[
A_{\bfv,\{\sigma_i\}_{i\in [q]}} 
:= \E_{k\gets\bit^{n}}\left[\bigotimes_{i=1}^{q} \left(\left(Z^{k}\otimes I_m\right)^{\otimes \ell_i}\otimes I_{n+m}^{\otimes t_i}\right)\ketbra{\bfv^i}{\sigma_i(\bfv^i)}\left(\left(Z^{k}\otimes I_m\right)^{\otimes \ell_i}\otimes I_{n+m}^{\otimes t_i}\right) \right].
\]
Then $A_{\bfv,\{\sigma_i\}_{i\in [q]}} = \bigotimes_{i=1}^q\ketbra{\bfv^i}{\sigma_i(\bfv^i)}$ if for all $i\in [q]$, $\sigma_i$ maps $[\ell_i]$ to $[\ell_i]$; otherwise, $A_{\bfv,\{\sigma_i\}_{i\in [q]}} = 0$. 
\end{lemma}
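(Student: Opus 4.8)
The plan is to mimic the proof of \Cref{lem:perm_split} essentially verbatim, carrying out the same computation in each of the $q$ tensor factors and then combining. First I would write $\bfv^i = (v^i_1\|w^i_1,\ldots,v^i_{\ell_i+t_i}\|w^i_{\ell_i+t_i})$ with $v^i_j\in\bit^n$ and $w^i_j\in\bit^m$. A direct calculation, identical to the one in \Cref{lem:perm_split} but done in the $i$-th block, shows that for a fixed $k$,
\[
\left(\left(Z^{k}\otimes I_m\right)^{\otimes \ell_i}\otimes I_{n+m}^{\otimes t_i}\right)\ketbra{\bfv^i}{\sigma_i(\bfv^i)}\left(\left(Z^{k}\otimes I_m\right)^{\otimes \ell_i}\otimes I_{n+m}^{\otimes t_i}\right)
= (-1)^{\langle k,\, \bigoplus_{j=1}^{\ell_i}(v^i_j \oplus v^i_{\sigma_i(j)})\rangle}\,\ketbra{\bfv^i}{\sigma_i(\bfv^i)}.
\]
Taking the tensor product over $i\in[q]$, the sign in front of $\bigotimes_{i=1}^q \ketbra{\bfv^i}{\sigma_i(\bfv^i)}$ is $(-1)^{\langle k,\, \bigoplus_{i=1}^q \bigoplus_{j=1}^{\ell_i}(v^i_j \oplus v^i_{\sigma_i(j)})\rangle}$, since the phases multiply and XOR in the exponent. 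Averaging over $k\gets\bit^n$ gives $A_{\bfv,\{\sigma_i\}_{i\in[q]}} = \bigotimes_{i=1}^q \ketbra{\bfv^i}{\sigma_i(\bfv^i)}$ if $\bigoplus_{i=1}^q\bigoplus_{j=1}^{\ell_i}\bigl(v^i_j \oplus v^i_{\sigma_i(j)}\bigr) = 0$, and $0$ otherwise.

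Next I would translate the vanishing condition into the combinatorial statement in the conclusion. Rearranging, the condition is $\bigoplus_{i=1}^q\bigoplus_{j=1}^{\ell_i} v^i_j = \bigoplus_{i=1}^q\bigoplus_{j=1}^{\ell_i} v^i_{\sigma_i(j)}$. The left side is the XOR (of the first-$n$-bit parts) of a fixed collection $\cS$ of $\ell = \sum_i \ell_i$ elements of $\setT(\type(\bfv))$ --- namely the ``first $\ell_i$ slots'' of each block. The right side is the XOR of another $\ell$-element collection $\cT$, namely $\{v^i_{\sigma_i(j)} : i\in[q],\, j\in[\ell_i]\}$. Since $\type(\bfv)\in\goodpre{n}{m}{\ell}{\ell+t}$ and both $\cS,\cT$ are $\ell$-subsets of $\setT(\type(\bfv))$, the defining property of $\ell$-fold $n$-prefix collision-freeness forces $\cS = \cT$ (as multisets). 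Here I should be a little careful about the multiset bookkeeping, since $\type(\bfv)$ need not be collision-free in general --- but \Cref{def:l_fold_collisions} is stated for subsets that may contain duplicate elements, so the argument goes through; and in fact when $\ell + t > 2\ell$ the hypothesis already implies genuine collision-freeness. The last step is to note that $\cS = \cT$ block-by-block: because $\sigma_i$ permutes within $[\ell_i+t_i]$, the block index is preserved, so $\cS = \cT$ forces $\{v^i_{\sigma_i(j)}: j\in[\ell_i]\} = \{v^i_j : j\in[\ell_i]\}$ for each $i$; combined with collision-freeness within each block this gives $\sigma_i([\ell_i]) = [\ell_i]$ for all $i$, which is exactly the stated condition.

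I expect the only real subtlety --- hardly an obstacle --- to be the multiset/duplicate-element accounting when extracting ``$\sigma_i$ maps $[\ell_i]$ to $[\ell_i]$'' from the XOR identity: one must argue that the per-block XOR-equality $\bigoplus_{j\le \ell_i} v^i_j = \bigoplus_{j\le \ell_i} v^i_{\sigma_i(j)}$ (which, strictly, one first has to \emph{deduce} from the global identity using that the blocks are disjoint pieces of a single $\ell$-fold collision-free type, not assume) pins down the index set. This is handled exactly as in \Cref{lem:perm_split}, where the analogous step observes that $\bigoplus_{i=1}^\ell v_i = \bigoplus_{i=1}^\ell v_{\sigma(i)}$ holds iff $\{1,\ldots,\ell\} = \{\sigma(1),\ldots,\sigma(\ell)\}$; I would simply apply that reasoning once globally (to get $\cS=\cT$) and once per block (to get $\sigma_i([\ell_i])=[\ell_i]$), invoking $\goodpre{n}{m}{\ell}{\ell+t}$ membership to justify the first. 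Everything else is a routine unfolding of definitions.
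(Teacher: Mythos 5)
Your proposal is correct and follows essentially the same route as the paper's proof: the same per-block phase computation, averaging over $k$ to reduce to the XOR condition $\bigoplus_{i,j\le\ell_i}(v^i_j\oplus v^i_{\sigma_i(j)})=0$, and then invoking $\ell$-fold $n$-prefix collision-freeness to conclude that the two $\ell$-element collections coincide and hence that each $\sigma_i$ fixes $[\ell_i]$ setwise. Your extra care about the multiset bookkeeping and about deducing the per-block statement from the global one is a slightly more explicit rendering of a step the paper states tersely, but it is the same argument.
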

\begin{proof}
Suppose for all $i\in [q]$ and $j\in[\ell_i+t_i]$, $\bfv^i = (v^i_1||w^i_1,\ldots,v^i_{\ell_i+t_i}||w^i_{\ell_i+t_i})\in\bit^{(n+m)(\ell_i+t_i)}$ with $v_j\in\bit^n$ and $w_j\in\bit^m$. A direct calculation yields
\[
\left(\left(Z^{k}\otimes I_m\right)^{\otimes \ell_i}\otimes I_{n+m}^{\otimes t_i}\right)\ketbra{\bfv^i}{\sigma_i(\bfv^i)}\left(\left(Z^{k}\otimes I_m\right)^{\otimes \ell_i}\otimes I_{n+m}^{\otimes t_i}\right) 
= (-1)^{\langle k,\bigoplus_{j=1}^{\ell_i} (v^i_j\oplus v^i_{\sigma_i(j)})\rangle}\ketbra{\bfv^i}{\sigma_i(\bfv^i)}.
\]
After averaging over $k$,
\begin{align*}
A_{\bfv,\{\sigma_i\}_{i\in [q]}} 
& = \E_{k\gets\bit^{n}}\left[(-1)^{\langle k,\bigoplus_{i=1}^{q}\bigoplus_{j=1}^{\ell_i} (v^i_j\oplus v^i_{\sigma_i(j)})\rangle}\right] \cdot \bigotimes_{i=1}^{q}\ketbra{\bfv^i}{\sigma_i(\bfv^i)} \\
& = \begin{cases}
\bigotimes_{i=1}^{q}\ketbra{\bfv^i}{\sigma_i(\bfv^i)} & \text{ if } \bigoplus_{i=1}^{q}\bigoplus_{j=1}^{\ell_i} (v^i_j\oplus v^i_{\sigma_i(j)}) = 0 \\
0 & \text{ otherwise.}
\end{cases}
\end{align*}
Since $\type(\bfv) \in \goodpre{n}{m}{\ell}{t+\ell}$, the condition $\bigoplus_{i=1}^{q}\bigoplus_{j=1}^{\ell_i} v^i_j  
= \bigoplus_{i=1}^{q}\bigoplus_{j=1}^{\ell_i} v^i_{\sigma_i(j)}$ holds if and only if the two sets $\set{(i,j): i\in[q], j\in[\ell_i]}$ and $\set{(i,\sigma_i(j)): i\in[q], j\in[\ell_i]}$ are identical. The latter is equivalent to the condition: $\set{\sigma_i(j)):j\in[\ell_i]} = [\ell_j]$ for every $i\in[q]$. The proof is now complete.
\end{proof}

\begin{lemma}[Generalization of~\Cref{lem:nice_T}] \label{lem:nice_T_gen}
Let $\ell,n,m,q,t\in\N$, $\ell_1,\ldots,\ell_q\in\N$, and $t_1,\ldots,t_q\in\N$ such that $\sum_{i=1}^{q} \ell_i = \ell$ and $\sum_{i=1}^{q} t_i = t$. For any $T \in \goodpre{n}{m}{\ell}{t+\ell}$ and any mutually disjoint sets $T_1,\ldots,T_q$ satisfying $\bigcup_{i=1}^{q} T_i = T$ and $|T_i|=t_i+\ell_i$ for all $i\in [q]$,
\begin{multline*}
\Ex_{k\gets\bit^{n}}\left[\bigotimes_{i=1}^{q}  \left( \left(Z^{k}\otimes I_m\right)^{\otimes \ell_i}\otimes I_{n+m}^{\otimes t_i} \right) \ketbra{T_i}{T_i} \left(\left(Z^{k}\otimes I_m\right)^{\otimes \ell_i}\otimes I_{n+m}^{\otimes t_i}\right) \right] \\
= \bigotimes_{i=1}^{q}\Ex_{X_i\gets \binom{T_i}{\ell_i}} \left[ \ketbra{X_i}{X_i} \otimes \ketbra{T_i\setminus X_i}{T_i\setminus X_i} \right].
\end{multline*}
\end{lemma}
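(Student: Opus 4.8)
The plan is to follow the proof of \Cref{lem:nice_T} almost verbatim, replacing the single‑block permutation analysis by its $q$‑block generalization \Cref{lem:perm_split_gen}. First I would expand each factor $\ketbra{T_i}{T_i}$ on the left‑hand side using the identity~\eqref{eq:useful}: since each $T_i$ is collision‑free (being a subset of the collision‑free set $T$) and has size $\ell_i+t_i$, we have $\ketbra{T_i}{T_i} = \Ex_{\bfv^i\gets T_i}\big[\sum_{\sigma_i\in S_{\ell_i+t_i}}\ketbra{\bfv^i}{\sigma_i(\bfv^i)}\big]$. Substituting these in and pulling all the expectations and sums outside, the left‑hand side becomes $\Ex_{\bfv^1\gets T_1,\ldots,\bfv^q\gets T_q}\big[\sum_{\sigma_1,\ldots,\sigma_q} A_{\bfv,\{\sigma_i\}_{i\in[q]}}\big]$, where $\bfv=(\bfv^1,\ldots,\bfv^q)$ and $A_{\bfv,\{\sigma_i\}_{i\in[q]}}$ is exactly the matrix appearing in \Cref{lem:perm_split_gen}.

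The key point licensing the use of \Cref{lem:perm_split_gen} is that $\type(\bfv)=T$: because $T_1,\ldots,T_q$ are mutually disjoint with union $T$, concatenating an ordering of each $T_i$ produces an ordering of $T$, so $\type(\bfv)=T\in\goodpre{n}{m}{\ell}{\ell+t}$ by hypothesis. Then \Cref{lem:perm_split_gen} tells us that $A_{\bfv,\{\sigma_i\}_{i\in[q]}}$ equals $\bigotimes_i\ketbra{\bfv^i}{\sigma_i(\bfv^i)}$ when every $\sigma_i$ maps $[\ell_i]$ to $[\ell_i]$, and vanishes otherwise. A permutation $\sigma_i\in S_{\ell_i+t_i}$ maps $[\ell_i]$ to $[\ell_i]$ precisely when it also maps $\{\ell_i+1,\ldots,\ell_i+t_i\}$ to itself, hence factors as $\sigma_i=\sigma_i^{(1)}\circ\sigma_i^{(2)}$ with $\sigma_i^{(1)}\in S_{\ell_i}$ acting on the first $\ell_i$ positions and $\sigma_i^{(2)}\in S_{t_i}$ acting on the remaining $t_i$ positions; so for the surviving terms each factor splits as $\ketbra{\bfv^i_{[1:\ell_i]}}{\sigma_i^{(1)}(\bfv^i_{[1:\ell_i]})}\otimes\ketbra{\bfv^i_{[\ell_i+1:\ell_i+t_i]}}{\sigma_i^{(2)}(\bfv^i_{[\ell_i+1:\ell_i+t_i]})}$.

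Finally I would re‑sum: summing $\sigma_i^{(1)}$ over $S_{\ell_i}$ together with the expectation over the first $\ell_i$ coordinates of $\bfv^i$, and $\sigma_i^{(2)}$ over $S_{t_i}$ with the remaining $t_i$ coordinates, and using that sampling $\bfv^i\gets T_i$ is the same as first drawing $X_i\gets\binom{T_i}{\ell_i}$ and then independently ordering $X_i$ and $T_i\setminus X_i$. Applying~\eqref{eq:useful} once more inside each block collapses the sums back into projectors, yielding $\bigotimes_{i=1}^q\Ex_{X_i\gets\binom{T_i}{\ell_i}}[\ketbra{X_i}{X_i}\otimes\ketbra{T_i\setminus X_i}{T_i\setminus X_i}]$, which is the claimed right‑hand side. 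The only step that needs genuine care rather than bookkeeping parallel to \Cref{lem:nice_T} is verifying $\type(\bfv)=T$ so that \Cref{lem:perm_split_gen} applies; this is exactly where the mutual‑disjointness and covering hypotheses on $T_1,\ldots,T_q$ are used.
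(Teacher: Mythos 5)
Your proposal is correct and follows essentially the same route as the paper's proof: expand each $\ketbra{T_i}{T_i}$ via~\eqref{eq:useful}, invoke \Cref{lem:perm_split_gen} to kill all terms whose permutations do not preserve $[\ell_i]$, factor the survivors as $\sigma_i=\sigma_i^{(1)}\circ\sigma_i^{(2)}$, and re-sum using the equivalence between sampling $\bfv^i\gets T_i$ and drawing $X_i\gets\binom{T_i}{\ell_i}$ followed by independent orderings. Your explicit check that $\type(\bfv)=T$ (via disjointness and covering of the $T_i$) is the hypothesis the paper leaves implicit, and it is exactly the right point to flag.
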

\begin{proof}
By~\Cref{eq:useful}, the left-hand side equals
\begin{align} \label{eq:split_gen}
\Ex_{\forall i\in [q], \bfv^i\gets T_i} \left[ \sum_{\forall i\in [q], \sigma_i\in S_{t_i+\ell_i}} \Ex_{k\gets\bit^{n}}\left[\bigotimes_{i=1}^{q} \left( \left(Z^{k}\otimes I_m\right)^{\otimes \ell_i}\otimes I_{n+m}^{\otimes t_i} \right) \ketbra{\bfv^i}{\sigma_i(\bfv^i)} \left( \left(Z^{k}\otimes I_m\right)^{\otimes \ell_i}\otimes I_{n+m}^{\otimes t_i} \right) \right] \right].
\end{align}
Then from the previous lemma (\Cref{lem:perm_split_gen})
\begin{align*}
\eqref{eq:split_gen}
& = \Ex_{\forall i\in [q],\bfv^i\gets T_i} \left[ \sum_{\forall i\in [q],\sigma^1_i \in S_{\ell_i}, \sigma^2_i \in S_{t_i}} \bigotimes_{i=1}^{q}\ketbra{\bfv^i}{\sigma^1_i\circ \sigma^2_i(\bfv^i)} \right] \\
& = \bigotimes_{i=1}^{q}\Ex_{\bfv^i\gets T_i} \left[ \sum_{\sigma^1_i \in S_{\ell_i}, \sigma^2_i \in S_{t_i}} \ketbra{\bfv^i}{\sigma^1_i\circ \sigma^2_i(\bfv^i)} \right] \\
& = \bigotimes_{i=1}^{q}\Ex_{\bfv^i\gets T_i} \left[ \sum_{\sigma^1_i \in S_{\ell_i}} \ketbra{\bfv^i_{[1:\ell]}}{\sigma^1_i(\bfv^i_{[1:\ell_i]})} \otimes \sum_{\sigma^2_i \in S_{t_i}} \ketbra{\bfv^i_{[\ell_i+1:t_i+\ell_i]}}{\sigma^2_i(\bfv^i_{[\ell_i+1:t_i+\ell_i]})}  \right] \\
& = \bigotimes_{i=1}^{q}\Ex \left[ \sum_{\sigma^1_i \in S_{\ell_i}} \ketbra{\bfv^i_1}{\sigma^1_i(\bfv^i_1)} \otimes \sum_{\sigma^2_i \in S_{t_i}} \ketbra{\bfv^i_2}{\sigma^2_i(\bfv^i_2)}: \substack{X_i\gets \binom{T_i}{\ell_i}, \\ \bfv^i_1\gets X_i, \\ \bfv^i_2\gets T_i\setminus X_i} \right] \\
& = \bigotimes_{i=1}^{q}\Ex_{X_i\gets \binom{T_i}{\ell_i}} \left[ \ketbra{X_i}{X_i} \otimes \ketbra{T_i\setminus X_i}{T_i\setminus X_i} \right].
\end{align*}
For the first equality, we use~\Cref{lem:perm_split_gen} and decompose for each $i\in [q]$, $\sigma_i = \sigma^1_i \circ \sigma^2_i$ for some $\sigma^1_i,\sigma^2_i$ such that $\sigma^1_i(x) = x$ for all $x \in \set{\ell_i+1,\ell_i+2,\cdots,\ell_i+t_i}$ and $\sigma^2_i(y) = y$ for all $y \in \set{1,2,\cdots,\ell_i}$. Similar to~\Cref{lem:nice_T}, we can view them as elements in $S_{\ell_i}$ and $S_{t_i}$. The second equality follows from linearity of trace. The third equality follows by denoting for each $i\in [q]$, the first $\ell_i$ part of $\bfv^i$ by $\bfv^i_{[1:\ell]}$ and the last $t_i$ part of $\bfv^i$ by $\bfv^i_{[\ell_i+1:\ell_i+t_i]}$. The fourth equality holds because for each $i\in [q]$, sampling $\bfv_i$ from $T_i$ is equivalent to sampling an $\ell_i$-subset $X_i$ from $T_i$ followed by ordering the elements in $X_i$ and $T_i\setminus X_i$.
\end{proof}

\subsection{Construction} \label{sec:prfs-con}

We extend the techniques used in~\Cref{sec:prs-con} to construct a statistical PRFS in~\Cref{fig:prfs}. The construction samples a uniform key for each position of the input being zero or one. Applying this to the common Haar state gives us the output of the PRFS. The details can be seen in~\Cref{fig:prfs}. Thoughout this section, one should think of $m = \secp^{c}$ and $\secp' = \secp^{1-c}$ for some constant $c\in [0,1)$.

\begin{figure}[h]
   \begin{tabular}{|p{16cm}|}
   \hline \\
\par Given the common Haar state $\ket{\vartheta}$, on the key $K = (k_1^0,\ldots,k_m^0, k_1^1,\ldots,k_m^1)\in\bit^{2\secp' m}$ and the input $\bfx = (x_1,\ldots,x_m)\in\bit^{m}$, define $G(K,\bfx,\ket{\vartheta})$ as follows:
\begin{itemize}
    \item $\ket{\psi_{K,\bfx}} = G(K,\bfx,\ket{\vartheta}) = (Z^{\bigoplus_{i=1}^m k^{x_i}_i}\otimes I_{n-\secp'}) \ket{\vartheta}.$
    \item Output $\ket{\psi_{K,\bfx}}$.
\end{itemize}
\\
\hline
\end{tabular}
\caption{PRFS in the CHS model}
\label{fig:prfs}
\end{figure}

\noindent The main property of the construction that makes it a PRFS is its ability to \emph{disentangles} any type state in $\goodpre{\secp'}{n-\secp'}{\ell}{\ell+t}$ into a probabilistic mixture of disjoint subsets of the type. Formally, we show the following lemma:

\begin{lemma}\label{lem:prfs-type}
Let $G$ be defined as in~\Cref{fig:prfs}. Let $q,t\in\N$, $\ell_1,\ldots,\ell_q\in\N$ such that $\sum_{i=1}^{q}\ell_i = \ell$. Let $\bfx^1,\ldots,\bfx^q\in\bit^{m}$ with $\bfx^i\neq\bfx^j$ for all $i\neq j\in [q]$. For any $T\in\goodpre{\secp'}{n-\secp'}{\ell}{\ell+t}$, the following density matrices are equal: 
\[
\rho := \E_{K\gets\set{0,1}^{2m\secp'}}
\left[
\left(\bigotimes_{i=1}^{q} G_K(\bfx^i,\cdot)^{\otimes \ell_i}\otimes I^{\otimes t}\right) \ketbra{T}{T}
\right]
\]
\[
\sigma := \Ex_{(T_1, T_2, \dots, T_q, \hat{T})}
\left[
\bigotimes_{i=1}^{q}\ketbra{T_i}{T_i}\otimes\ketbra{\hat{T}}{\hat{T}}
\right]
\]
where we omit the Hermitian conjugate of the unitary in $\rho$ and identify it as a quantum channel; $(T_1, T_2, \dots,$ $T_q, \hat{T})$ in $\sigma$ are sampled as follows: for $i = 1, 2, \dots, q$, recursively sample an $\ell_i$-subset from $T\setminus(\bigcup_{j=1}^{i-1} T_j)$ uniformly at random and let $\hat{T}:= T \setminus (\bigcup_{j=1}^{q} T_j)$.
\end{lemma}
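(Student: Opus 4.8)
The plan is to expand $\rho$ in the type basis, carry out the average over the key $K$ as a parity (Fourier) sum, and then use the $\ell$-fold $\secp'$-prefix collision-freeness of $T$ to kill every off-diagonal term. Throughout, write $\ell:=\sum_{i=1}^q\ell_i$, $M:=\binom{\ell+t}{\ell_1,\dots,\ell_q,t}$, and for $x\in\bit^n$ let $\ol x\in\bit^{\secp'}$ denote its length-$\secp'$ prefix. Since $T$ is collision-free we treat it as a set of size $\ell+t$, and we group the output registers so that the $i$-th block carries the $\ell_i$ copies of $G_K(\bfx^i,\cdot)$ and the last block the $t$ copies of $\ket\vartheta$. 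The multinomial splitting of type states (the obvious generalization of $\ket T=\sum_X\binom{2t}{t}^{-1/2}\ket{T\setminus X}\ket X$) gives $\ket{T}=\tfrac{1}{\sqrt M}\sum_{(T_1,\dots,T_q,\hat T)}\ket{T_1}\otimes\cdots\otimes\ket{T_q}\otimes\ket{\hat T}$, summed over ordered partitions of $T$ with $|T_i|=\ell_i$, $|\hat T|=t$. For a collision-free $\ell_i$-subset one has $G_K(\bfx^i,\cdot)^{\otimes\ell_i}\ket{T_i}=(-1)^{\langle K_i,\,w_i^{T}\rangle}\ket{T_i}$ with $K_i:=\bigoplus_{j=1}^m k_j^{x^i_j}$ and $w_i^{T}:=\bigoplus_{x\in T_i}\ol x$, so
\[
\rho=\frac{1}{M}\sum_{(T_\bullet,\hat T),(S_\bullet,\hat S)}\Big(\E_{K}\big[(-1)^{\sum_i\langle K_i,\,w_i^{T}\oplus w_i^{S}\rangle}\big]\Big)\,\bigotimes_{i}\ketbra{T_i}{S_i}\otimes\ketbra{\hat T}{\hat S}.
\]
Setting $u_i:=w_i^{T}\oplus w_i^{S}$ and regrouping $\sum_i\langle K_i,u_i\rangle=\sum_{j\in[m],\,b\in\bit}\langle k_j^{b},\,\bigoplus_{i:\,x^i_j=b}u_i\rangle$, the independence and uniformity of the $2m$ keys give $\E_K[\cdot]=\prod_{j,b}\mathbf 1[\bigoplus_{i:\,x^i_j=b}u_i=0]$; hence only partition pairs satisfying $\bigoplus_{i:\,x^i_j=b}u_i=0$ for all $j,b$ survive, each with weight $1/M$.

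The combinatorial core is to show such a surviving pair must be diagonal, i.e.\ $(T_\bullet,\hat T)=(S_\bullet,\hat S)$. Summing the $b=0$ and $b=1$ constraints for one fixed $j$ yields $\bigoplus_i u_i=0$; since $\biguplus_iT_i=T\setminus\hat T$ and $\biguplus_iS_i=T\setminus\hat S$, this rearranges to $\bigoplus_{x\in\hat T\triangle\hat S}\ol x=0$. The two halves of $\hat T\triangle\hat S$ are disjoint subsets of $T$ of a common size $r\le\min(\ell,t)$ (the bound $r\le\ell$ because $\hat T\setminus\hat S\subseteq T\setminus\hat S=\biguplus_iS_i$, which has size $\ell$), so by $r$-fold $\secp'$-prefix collision-freeness of $T$ --- which follows from the $\ell$-fold property, since $r\le\ell$ and $|T|=\ell+t$ leaves room to pad $r$-subsets to $\ell$-subsets by a common disjoint block (the standard downward-closure fact recorded after Definition~\ref{def:l_fold_collisions}) --- these halves coincide, forcing $r=0$, $\hat T=\hat S$; write $U:=\biguplus_iT_i=\biguplus_iS_i$. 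Next, for each $j,b$ put $A_j^b:=\biguplus_{i:\,x^i_j=b}T_i$ and $B_j^b:=\biguplus_{i:\,x^i_j=b}S_i$; the constraint becomes $\bigoplus_{x\in A_j^b}\ol x=\bigoplus_{x\in B_j^b}\ol x$, and since $A_j^b,B_j^b\subseteq U$ have size $\le\ell$, the ($\le\ell$)-fold $\secp'$-prefix collision-freeness of $T$ forces $A_j^b=B_j^b$ for all $j,b$. Finally fix $i\in[q]$: because $\bfx^1,\dots,\bfx^q$ are distinct, the only index $i'$ with $x^{i'}_j=x^i_j$ for all $j$ is $i'=i$, and a short disjointness argument then shows $T_i=\bigcap_{j=1}^m A_j^{x^i_j}$ and likewise $S_i=\bigcap_{j=1}^m B_j^{x^i_j}$; combined with $A_j^b=B_j^b$ this gives $T_i=S_i$ for every $i$.

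With only the diagonal terms left, $\rho=\tfrac1M\sum_{(T_\bullet,\hat T)}\bigotimes_i\ketbra{T_i}{T_i}\otimes\ketbra{\hat T}{\hat T}$, i.e.\ $(T_1,\dots,T_q,\hat T)$ is a uniformly random ordered partition of $T$ into blocks of sizes $\ell_1,\dots,\ell_q,t$. A one-line telescoping computation of the probability of each partition shows this distribution is exactly the recursive sampling defining $\sigma$ (draw $T_1\gets\binom{T}{\ell_1}$, then $T_2\gets\binom{T\setminus T_1}{\ell_2}$, and so on, with $\hat T$ the remainder). Hence $\rho=\sigma$.

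The step I expect to be the main obstacle is the combinatorial core. The subspace of $(u_\bullet)$ surviving the key-average is in general strictly larger than $\{0\}$ --- already for $q=4$, $m=2$ with $\bfx^1,\dots,\bfx^4$ the four length-$2$ strings one has a nontrivial relation among the $\mathbf 1[x^i_j=b]$ indicators --- so the off-diagonal terms cannot be killed by linear algebra on $[q]$ alone; the $\ell$-fold $\secp'$-prefix collision-freeness of $T$ is genuinely needed. The care required is to verify that every set whose prefix-XOR is constrained to vanish (namely $\hat T\triangle\hat S$ and the $A_j^b,B_j^b$) has size at most $\ell$, so that this property and its downward closure to smaller folds actually apply; the degenerate small-$t$ regime (where the type $T$ is barely larger than $\ell$) is handled by additionally restricting to good types with distinct $\secp'$-prefixes.
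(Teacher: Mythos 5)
Your proof is correct, but it takes a genuinely different route from the paper's. The paper exploits the GGM-layered structure of the key: since $Z^{\bigoplus_i k_i^{x_i}}=\prod_i Z^{k_i^{x_i}}$, it factors the channel into $m$ commuting layers, one per input coordinate, and proves the lemma by induction on the layers, applying the one-layer splitting lemma (\Cref{lem:nice_T_gen}, itself proved via the permutation expansion of type states and \Cref{lem:perm_split_gen}) at each step; the intermediate states are mixtures over a recursively refined binary tree of partitions of $T$ indexed by prefixes of the inputs. You instead average over the entire key in one shot: after expanding $\ket{T}$ in the ordered-partition basis, the key average becomes a product of parity constraints $\bigoplus_{i:x^i_j=b}u_i=0$, and all the work is transferred to a combinatorial argument that any surviving pair of partitions is diagonal. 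That combinatorial core is handled correctly: you rightly note that the linear constraints on the $u_i$ alone do not force $u_i=0$, and instead invoke $\ell$-fold $\secp'$-prefix collision-freeness at the level of sets, three times --- on $\hat T\triangle\hat S$, on the unions $A_j^b,B_j^b$, and finally via the intersection argument that uses distinctness of the $\bfx^i$ --- each time checking the relevant sets have size at most $\ell$ and padding up to size $\ell$ (your padding needs only $t\geq\ell$, which \Cref{def:l_fold_collisions} guarantees, rather than the $t>2\ell$ under which the paper records downward closure, so you should keep that short computation in the write-up). What the paper's route buys is modularity (\Cref{lem:nice_T_gen} is proved once and reused) and freedom from any global combinatorics; what your route buys is a shorter, non-inductive derivation that avoids the intermediate distributions over $2^i$ blocks and makes explicit exactly which XOR non-degeneracies of $T$ the construction relies on. The final identification of the uniform ordered-partition measure with the recursive sampling defining $\sigma$ is the same telescoping computation in both arguments.
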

\begin{proof}
We define the following notation: Let $\ell:\bit^{\ast}\to\N$, such that for all $i\in [m]$, $y\in\bit^{i}$, $\ell(y) = \sum_{\substack{j\in [q]: \\ \bfx^{j}_{[1:i]}=y}}\ell_j$. Then we start by simplifying $\rho$:
\begin{align*}
    & \rho  
    = \E_{K\gets\set{0,1}^{2m\secp'}}
    \left[
    \left(\bigotimes_{j=1}^{q} G_K(\bfx^j,\cdot)^{\otimes \ell_j}\otimes I^{\otimes t}\right) \ketbra{T}{T}
    \right] \\
    & = \E_{K\gets\set{0,1}^{2m\secp'}}
    \left[
    \left(\bigotimes_{j=1}^{q} \left(Z^{\bigoplus_{i=1}^m k^{x^j_i}_i}\otimes I_{n-\secp'}\right)^{\otimes \ell_j}\otimes I^{\otimes t}\right) \ketbra{T}{T}
    \right] \\
    & = \E_{K\gets\set{0,1}^{2m\secp'}}
    \left[
    \left(\bigotimes_{j=1}^{q} \left(Z^{k^{x^j_m}_m}\otimes I_{n-\secp'}\right)^{\otimes \ell_j}\otimes I^{\otimes t}\right)\cdots
    \left[
    \left(\bigotimes_{j=1}^{q} \left(Z^{k^{x^j_1}_1}\otimes I_{n-\secp'}\right)^{\otimes \ell_j}\otimes I^{\otimes t}\right) \ketbra{T}{T}
    \right]\right] \\
    & = \E_{k_m^0,k_m^1\gets\set{0,1}^{\secp'}}
    \left[
    \left(\bigotimes_{j=1}^{q} \left(Z^{k^{x^j_m}_m}\otimes I_{n-\secp'}\right)^{\otimes \ell_j}\otimes I^{\otimes t}\right)\cdots\E_{k_1^0,k_1^1\gets\set{0,1}^{\secp'}}
    \left[
    \left(\bigotimes_{j=1}^{q} \left(Z^{k^{x^j_1}_1}\otimes I_{n-\secp'}\right)^{\otimes \ell_j}\otimes I^{\otimes t}\right) \ketbra{T}{T}
    \right]\right], \\
\end{align*}
where the first equality is by definition of $\rho$, second equality is by definition of $G$, third equality is because $Z^{k_1\oplus k_2} = Z^{k_1}Z^{k_2}$ and fourth equality is by linearity of expectation. We define for $i\in [q]$, the following channels $\cC_i$:
$$\cC_i(\cdot) = \E_{k_i^0,k_i^1\gets\set{0,1}^{\secp'}}
    \left[
    \left(\bigotimes_{j=1}^{q} \left(Z^{k^{x^j_i}_i}\otimes I_{n-\secp'}\right)^{\otimes \ell_j}\otimes I^{\otimes t}\right)\cdot\left(\left(\bigotimes_{j=1}^{q} \left(Z^{k^{x^j_i}_i}\otimes I_{n-\secp'}\right)^{\otimes \ell_j}\otimes I^{\otimes t}\right)\right)^{\dagger}\right],$$
then $\rho = \cC_m(\cC_{m-1}(\ldots\cC_1(\ketbra{T}{T})\ldots)).$

\noindent We define $(\set{T_x}_{x\in\bit^i},\hat{T})\gets\mu_i$ as follow: For all $x\in\bit^i$, sample an $\ell(x)$-subset from $T\setminus(\bigcup_{y=0}^{x} T_y)$ uniformly and let $\hat{T}:= T \setminus (\bigcup_{y=0}^{2^i} T_y)$.

\noindent We start by computing $\cC_1(\ketbra{T}{T})$, by~\Cref{lem:nice_T_gen}, $$\cC_1(\ketbra{T}{T})=\E_{(\set{T_x}_{x\in\bit},\hat{T})\gets\mu_1}\left[\bigotimes_{b\in\bit}\ketbra{T_b}{T_b}\otimes\ketbra{\hat{T}}{\hat{T}}\right].$$
In fact, for all $i\in [q]$,
$$\cC_i(\cC_{i-1}(\ldots\cC_1(\ketbra{T}{T})\ldots))=\E_{(\set{T_y}_{y\in\bit^i},\hat{T})\gets\mu_i}\left[\bigotimes_{y\in\bit^i}\ketbra{T_y}{T_y}\otimes\ketbra{\hat{T}}{\hat{T}}\right].$$

\noindent We can show the above by induction on $i$. Assume that for some $i\in [q]$, 
$$\cC_i(\cC_{i-1}(\ldots\cC_1(\ketbra{T}{T})\ldots))=\E_{(\set{T_y}_{y\in\bit^i},\hat{T})\gets\mu_i}\left[\bigotimes_{y\in\bit^i}\ketbra{T_y}{T_y}\otimes\ketbra{\hat{T}}{\hat{T}}\right],$$ then for $i+1\in [q]$, 
\begin{align*}
& \cC_{i+1}(\cC_{i}(\ldots\cC_1(\ketbra{T}{T})\ldots)) \\
& = \cC_{i+1}\left(\E_{(\set{T_y}_{y\in\bit^i},\hat{T})\gets\mu_i}\left[\bigotimes_{y\in\bit^i}\ketbra{T_y}{T_y}\otimes\ketbra{\hat{T}}{\hat{T}}\right]\right)\\
& = \E_{(\set{T_y}_{y\in\bit^i},\hat{T})\gets\mu_i}\left[\E_{k^0_{i+1},k^1_{i+1}}\left[\bigotimes_{y\in\bit^i}\left(\left(Z^{k^{0}_{i+1}}\otimes I_{n-\secp'}\right)^{\otimes \ell(y0)}\otimes\left(Z^{k^{1}_{i+1}}\otimes I_{n-\secp'}\right)^{\otimes \ell(y1)} \ketbra{T_y}{T_y}\right)\otimes\ketbra{\hat{T}}{\hat{T}}\right]\right]\\
& = \E_{(\set{T_y}_{y\in\bit^{i+1}},\hat{T})\gets\mu_{i+1}}\left[\bigotimes_{y\in\bit^{i+1}}\ketbra{T_y}{T_y}\otimes\ketbra{\hat{T}}{\hat{T}}\right],
\end{align*}
where the first equality is by the induction hypothesis, the second equality is by the definition of $\cC_{i+1}$ and the third equality is by~\Cref{lem:perm_split_gen}.

\noindent Hence, we get 
$$\rho = \cC_m(\cC_{m-1}(\ldots\cC_1(\ketbra{T}{T})\ldots))=\E_{(\set{T_x}_{x\in\bit^m},\hat{T})\gets\mu_m}\left[\bigotimes_{y\in\bit^m}\ketbra{T_y}{T_y}\otimes\ketbra{\hat{T}}{\hat{T}}\right].$$
Ignoring the $y\in\bit^{m}$ for which $\ell(y) = 0$, we get 
$$\rho = \Ex_{(T_1, T_2, \dots, T_q, \hat{T})}
\left[
\bigotimes_{i=1}^{q}\ketbra{T_i}{T_i}\otimes\ketbra{\hat{T}}{\hat{T}}
\right],
$$
where $(T_1, T_2, \dots, T_q, \hat{T})$ are sampled as follows: for $i = 1, 2, \dots, q$, sample an $\ell_i$-subset from $T\setminus(\bigcup_{j=1}^{i-1} T_j)$ uniformly and let $\hat{T}:= T \setminus (\bigcup_{j=1}^{q} T_j)$.
Hence, $\rho=\sigma$.
\end{proof}

\begin{lemma}[Pseudorandomness] \label{lem:prfs-sec}
Let $G$ be as defined above. Let $q,t\in\N$, let $\ell_1,\ldots,\ell_q\in\N$ be such that $\sum_{i=1}^{q}\ell_i = \ell$. Let $\bfx^1,\ldots,\bfx^q\in\bit^{m}$. Let 
\[
\rho := \E_{\substack{K\gets\set{0,1}^{2m\secp'} \\ \ket{\vartheta}\gets\Haar_{n}}}
\left[
\otimes_{i=1}^q G_K(\bfx^i,\ket{\vartheta})^{\otimes \ell_i}\otimes\ketbra{\vartheta}{\vartheta}^{\otimes t}
\right],
\]
and
\[
\sigma := \Ex_{\substack{\forall i\in[q], \ket{\varphi_i}\gets\Haar_n \\ \ket{\vartheta}\gets\Haar_{n}}}
\left[
\otimes_{i=1}^{q}\ketbra{\varphi_i}{\varphi_i}^{\otimes \ell_i}\otimes\ketbra{\vartheta}{\vartheta}^{\otimes t}
\right].
\]
Then $\TD\left(\rho,\sigma\right) = O\left(\frac{(\ell+t)^{2\ell}}{2^{\secp'}}\right)$.
\end{lemma}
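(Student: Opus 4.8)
The plan is to run essentially the same hybrid argument as in the proof of~\Cref{lem:prs-sec}, with the disentangling lemma~\Cref{lem:prfs-type} playing the role that~\Cref{lem:nice_T} played there. Two preliminary reductions: if $\ell = 0$ nothing is applied and $\rho = \sigma$, so assume $\ell \geq 1$; and I will assume $\bfx^1,\ldots,\bfx^q$ are pairwise distinct. The latter is the only case that matters for deducing $\ell$-query selective security in~\Cref{thm:PRFS_CHS}, since repeated queries are merged into one $\bfx^i$ with a larger multiplicity $\ell_i$, and distinctness is exactly the hypothesis under which~\Cref{lem:prfs-type} applies.

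First I would pass to the type basis. By~\Cref{fact:avg-haar-random}, $\E_{\ket{\vartheta}\gets\Haar_n}\ketbra{\vartheta}{\vartheta}^{\otimes \ell+t} = \E_{T\gets[0:\ell+t]^{2^n}}\ketbra{T}{T}$, and since the unitaries applied by $G$ depend only on the key $K$ I can push the average over $\ket{\vartheta}$ inside, writing $\rho = \E_K\big[\Phi_K(\E_T\ketbra{T}{T})\big]$ where $\Phi_K := \bigotimes_{i=1}^q G_K(\bfx^i,\cdot)^{\otimes \ell_i}\otimes I^{\otimes t}$ is a unitary channel. Next I would restrict $T$ to $\goodpre{\secp'}{n-\secp'}{\ell}{\ell+t}$: by~\Cref{fact:random_type_l_fold_collision} this moves $\E_T\ketbra{T}{T}$ by $O((\ell+t)^{2\ell}/2^{\secp'})$ in trace distance (using $2^{\secp'}-2\ell=\Omega(2^{\secp'})$ for large $\secp$), and since $\E_K\Phi_K$ is a mixture-of-unitaries channel this error does not grow. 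Then~\Cref{lem:prfs-type}, applied for each fixed good $T$, rewrites $\E_K[\Phi_K(\ketbra{T}{T})]$ as the ``classical'' mixture $\E_{(T_1,\ldots,T_q,\hat T)}\big[\bigotimes_{i=1}^q\ketbra{T_i}{T_i}\otimes\ketbra{\hat T}{\hat T}\big]$, where the $T_i$ are the disjoint $\ell_i$-subsets drawn greedily from $T$ and $\hat T$ is the remaining $t$-subset.

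The rest is a chain of collision bounds that mirrors the final hybrids of the proof of~\Cref{lem:prs-sec}, now with $q+1$ blocks of sizes $\ell_1,\ldots,\ell_q,t$ instead of two: drop the conditioning ``$T$ good'' (cost $O((\ell+t)^{2\ell}/2^{\secp'})$, again by~\Cref{fact:random_type_l_fold_collision}, viewing the greedy split as classical post-processing of $T$); replace $T\gets[0:\ell+t]^{2^n}$ by a collision-free $T$ and split it (cost $O((\ell+t)^2/2^n)$ by the collision bound), observing that a uniform ordered partition of a uniform $(\ell+t)$-subset of $[2^n]$ into the prescribed block sizes is the same as sampling the $q+1$ blocks as mutually disjoint uniform subsets; drop the disjointness constraint so the blocks become independent (cost $O((\ell+t)^2/2^n)$, the collision probability of the $\ell+t$ samples taken across all blocks); and finally drop the collision-free constraint on each block individually (cost $O((\sum_i\ell_i^2+t^2)/2^n)=O((\ell+t)^2/2^n)$). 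By~\Cref{fact:avg-haar-random} applied in reverse to each block, the endpoint is exactly $\sigma$. Summing, $\TD(\rho,\sigma)=O((\ell+t)^{2\ell}/2^{\secp'})+O((\ell+t)^2/2^n)$, and since the construction assumes $n\geq\secp'$ and $\ell\geq 1$ forces $(\ell+t)^2\leq(\ell+t)^{2\ell}$, this collapses to the claimed $O((\ell+t)^{2\ell}/2^{\secp'})$.

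I do not expect any step here to be a serious obstacle: all the genuine content is hidden in~\Cref{lem:prfs-type}, which I may assume. The two places that need care are (i) the reduction to pairwise-distinct $\bfx^i$ — without it the statement is false (for instance, for $q=2$ with $\bfx^1=\bfx^2$ the real state is supported on a symmetric subspace while $\sigma$ is not) — and (ii) keeping the $q$-block collision bookkeeping from introducing spurious factors of $q$; it does not, because the total number of samples across all blocks is only $\ell+t$, so every collision bound is controlled by $(\ell+t)^2/2^n$.
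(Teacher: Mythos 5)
Your proposal is correct and follows essentially the same route as the paper: pass to the type basis, condition on $\ell$-fold $\secp'$-prefix collision-freeness via~\Cref{fact:random_type_l_fold_collision}, invoke~\Cref{lem:prfs-type} to disentangle the good type states into the greedy block partition, and then undo the conditioning with a chain of collision bounds, exactly matching Hybrids~1--8 of the paper's proof. Your added remark that the $\bfx^i$ must be taken pairwise distinct (with repeated inputs merged into a larger $\ell_i$) is a correct and worthwhile clarification that the paper's lemma statement leaves implicit but that its proof, via~\Cref{lem:prfs-type}, also requires.
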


\begin{proof}
    We prove this using hybrid arguments:
\paragraph{Hybrid $1$.} 
Sample $T\gets [0:\ell+t]^{2^n}$. 
Sample $K \gets \set{0,1}^{2m\secp'}$. 
Output $(\bigotimes_{j=1}^{q}(Z^{\oplus_{i=1}^m k^{x^j_i}_i}\otimes I_{n-\secp'})^{\otimes \ell_j}\otimes I_{n}^{\otimes t})\ket{T}$.
\paragraph{Hybrid $2$.} 
Sample $T\gets [0:\ell+t]^{2^n}$ uniformly conditioned on $T\in\goodpre{\secp'}{n-\secp'}{\ell}{\ell+t}$. 
Sample $K\gets\set{0,1}^{2m\secp'}$. 
Output $(\bigotimes_{j=1}^{q}(Z^{\oplus_{i=1}^m k^{x^j_i}_i}\otimes I_{n-\secp'})^{\otimes \ell_j}\otimes I_{n}^{\otimes t})\ket{T}$.
\paragraph{Hybrid $3$:}
Sample $T\gets [0:\ell+t]^{2^n}$ uniformly conditioned on $T\in\goodpre{\secp'}{n-\secp'}{\ell}{\ell+t}$.
Sample a uniform for all $j\in [q]$, $\ell_j$-subsets $T_j$ from $T$ such that for any $j\neq j'\in [q]$, $T_j\cap T_{j'} = \emptyset$. Define $\Tilde{T} = \bigcup_{j=1}^q T_j$.
Output $\bigotimes_{j=1}^{q}\ket{T_j}\otimes\ket{T\setminus \Tilde{T}}$.
\paragraph{Hybrid $4$.}
Sample $T\gets [0:\ell+t]^{2^n}$.
For all $j\in [q]$, sample a uniform $\ell_j$-subset $T_j$ from $T\setminus \bigcup_{i=1}^{j-1} T_j$.\footnote{Since $T$ might have collisions, $T_j$ is allowed to contain duplicate elements.}
Output $\bigotimes_{j=1}^{q}\ket{T_j}\otimes\ket{T\setminus \bigcup_{j=1}^q T_i}$.
\paragraph{Hybrid $5$.} 
Sample a collision-free $T$ from $[0:\ell+t]^{2^n}$.
Sample a uniform for all $j\in [q]$, $\ell_j$-subsets $T_j$ from $T$ such that for any $j\neq j'\in [q]$, $T_j\cap T_{j'} = \emptyset$. Define $\Tilde{T} = \bigcup_{j=1}^q T_j$.
Output $\bigotimes_{j=1}^{q}\ket{T_j}\otimes\ket{T\setminus \Tilde{T}}$.
\paragraph{Hybrid $6$.} 
For all $j\in [q]$, sample uniform collision-free $T_j$ from $[0:\ell_j]^{2^n}$ conditioned on $T_j$ and $\bigcup_{i=1}^{j-1} T_j$ have no common elements.
Sample a uniform collision-free $\hat{T}$ from $[0:t]^{2^n}$ conditioned on $\bigcup_{j=1}^{q}T_j$ and $\hat{T}$ have no common elements.
Output $\bigotimes_{j=1}^{q}\ket{T_j}\otimes \ket{\hat{T}}$.
\paragraph{Hybrid $7.y$, for $y\in [0:q-1]$.}
For all $j\in [q-y]$, sample uniform collision-free $T_j$ from $[0:\ell_j]^{2^n}$ conditioned on $T_j$ and $\bigcup_{i=1}^{j-1} T_j$ have no common elements.
For all $j\in [q-y+1:q]$, sample a uniform collision-free $T_j$ from $[0:\ell_j]^{2^n}$.
Sample a uniform collision-free $\hat{T}$ from $[0:t]^{2^n}$.
Output $\bigotimes_{j=1}^{q}\ket{T_j}\otimes \ket{\hat{T}}$.
\paragraph{Hybrid $8$.} 
For all $j\in [q]$, sample $T_j\gets [0:\ell_j]^{2^n}$.
Sample $\hat{T}\gets [0:t]^{2^n}$.
Output $\bigotimes_{j=1}^{q}\ket{T_j}\otimes \ket{\hat{T}}$.

\paragraph{Indistinuishability of Hybrids.} 
\begin{itemize}
    \item By~\Cref{fact:random_type_l_fold_collision}, the trace distance between Hybrid~$1$ and Hybrid~$2$ is $O((t+\ell)^{2\ell}/2^{\secp'})$.
    \item From~\Cref{lem:prfs-type}, the output of Hybrid~$2$ is equivalent to Hybrid~$3$.
    \item By~\Cref{fact:random_type_l_fold_collision}, the trace distance between Hybrid~$3$ and Hybrid~$4$ is $O((t+\ell)^{2\ell}/2^{\secp'})$.
    \item The trace distance between Hybrid~$4$ and Hybrid~$5$ is $O((t+\ell)^{2}/2^{n})$ by collision bound.
    \item Hybrid~$5$ and Hybrid~$6$ are equivalent.
    \item The trace distance between Hybrid~$6$ and Hybrid~$7.0$ is $O(t\ell/2^{n})$.
    \item For $y\in [0:q-2]$, the trace distance between Hybrid~$7.y$ and Hybrid~$7.(y+1)$ is $O(\ell_{q-y}(\sum_{j=1}^{q-y-1}\ell_j)/2^{n})$.
    \item Finally, the trace distance between Hybrid~$7$ and Hybrid~$8$ is $O((t^2+\sum_{j=1}^{q}\ell_j^2)/2^{n})$ by collision bound.
\end{itemize}
This completes the proof.
\end{proof}

\begin{remark}
Note that the above construction is still secure if we set $k_i^{1} = 0$ for all $i\in [2:m]$. This slightly reduces the key length from $2m\secp'$ to $(m+1)\secp'$.
\end{remark}

\section{Quantum Commitments in the CHS model} \label{sec:Commitment}
In this section, we construct a commitment scheme that satisfies poly-copy statistical hiding and statistical sum-biding in the CHS model. The scheme is inspired by the quantum commitment scheme proposed in~\cite{MY21,MNY23}. In contrast to the scheme in~\cite{MY21}, our construction is not of the canonical form~\cite{Yan22}. To achieve binding, similar to~\cite{MNY23}, the receiver needs to perform several SWAP tests. To achieve hiding, our scheme relies on the multi-key pseudorandomness property in~\Cref{lem:prs-multi-sec}. 

\subsection{Construction}
We assume that $n(\secp) \geq \secp + 1$ for all $\secp\in\N$. Our construction, parameterized by the polynomial $p = p(\secp) := \secp$, is shown in~\Cref{fig:commitment}.

\begin{theorem} \label{thm:com}
The construction in~\Cref{fig:commitment} is a quantum commitment in the CHS model.
\end{theorem}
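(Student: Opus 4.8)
The plan is to establish the three properties required of a quantum commitment in the CHS model — (the implicit) correctness, poly-copy statistical hiding, and statistical sum-binding — in turn. \emph{Correctness} is checked copy by copy and is immediate from the construction: when the committer commits to $b=1$, each pair $(\sfC_i,\sfR_i)$ equals the maximally entangled state $\ket{\psi_1}$, so the receiver's SWAP test of the commitment-and-reveal registers against a freshly prepared $(\ket{\psi_1})^{\otimes p}$ accepts with probability $1$; when it commits to $b=0$, measuring each $\sfR_i$ of $\ket{\psi_0}_{\sfC_i\sfR_i}$ in the computational basis returns a uniform key $k_i$ (with trailing $0^{n-\secp}$, which the receiver also checks) and collapses $\sfC_i$ to $G_{k_i}(\ket{\vartheta})=(Z^{k_i}\otimes I)\ket{\vartheta}$, so after applying the $Z^{k_i}$ correction the commitment registers hold $(\ket{\vartheta})^{\otimes p}$, which the receiver then SWAP-tests against a fresh copy of $(\ket{\vartheta})^{\otimes p}$ and accepts with probability $1$. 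Here $p=p(\secp)=\secp$, and the receiver uses exactly its $p$ copies of the common Haar state.

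For \emph{hiding}: after the commit phase a malicious receiver holds the $p$ commitment registers $(\sfC_1,\dots,\sfC_p)$ and $\ket{\vartheta}^{\otimes t}$. Tracing out the reveal registers, its state when $b=0$ is $\Ex_{\vartheta\gets\Haar_n}\!\big[\bigotimes_{i=1}^{p}\Ex_{k_i\gets\bit^{\secp}}[G_{k_i}(\ket{\vartheta})]\otimes\ketbra{\vartheta}{\vartheta}^{\otimes t}\big]$, and when $b=1$ it is $\big(\bigotimes_{i=1}^{p}I/2^{n}\big)\otimes\Ex_{\vartheta}[\ketbra{\vartheta}{\vartheta}^{\otimes t}]$, using $\Tr_{\sfR_i}\ketbra{\psi_1}{\psi_1}=I/2^{n}$. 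By~\Cref{lem:prs-multi-sec} with $\ell=1$ and $p$ keys, the former is within trace distance $O\!\big(p(p+t)^2/2^{\secp}\big)$ of $\big(\bigotimes_{i}\Ex_{\varphi_i}[\ketbra{\varphi_i}{\varphi_i}]\big)\otimes\Ex_{\vartheta}[\ketbra{\vartheta}{\vartheta}^{\otimes t}]$, and $\Ex_{\varphi}[\ketbra{\varphi}{\varphi}]=I/2^{n}$ by~\Cref{fact:avg-haar-random}; so the two views coincide up to negligible trace distance for every polynomial $t$, which is exactly poly-copy statistical hiding.

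For \emph{sum-binding}, the pivotal point is that $C_0^\ast$ and $C_1^\ast$ act identically in the commit phase, so both send the \emph{same} reduced commitment state $\rho_\sfC$ on $(\sfC_1,\dots,\sfC_p)$ (the committer retaining a purification). Let $\Pi_0$ be the projector onto $W^{\otimes p}$, where $W=\mathrm{span}\{(Z^k\otimes I)\ket{\vartheta}:k\in\bit^{\secp}\}$ is the span of the honest $b=0$ commitments, so $\mathrm{rank}(\Pi_0)\le 2^{\secp p}$ against ambient dimension $2^{np}$ with $n\ge\secp+1$. I would then show three things. (1)~Whatever the committer puts in $\sfR$ and whichever reveal channel it applies, the receiver's measurement of $\sfR$ induces a decomposition $\rho_\sfC=\sum_{\vec k}\ketbra{\xi_{\vec k}}{\xi_{\vec k}}$, and after the $Z$-correction the SWAP test against $(\ket{\vartheta})^{\otimes p}$ compares each $\ket{\xi_{\vec k}}$ with a unit vector inside $\mathrm{range}(\Pi_0)$; Cauchy--Schwarz then gives $p_0\le\tfrac12\big(1+\Tr(\Pi_0\rho_\sfC)\big)$. (2)~For an opening to $1$, the SWAP test of the committer's $2np$-qubit register against a fresh $(\ket{\psi_1})^{\otimes p}$ accepts with probability $\tfrac12\big(1+F(\rho'_{\sfC\sfR},(\ketbra{\psi_1}{\psi_1})^{\otimes p})^2\big)$, and since $\Tr_\sfR\rho'_{\sfC\sfR}=\rho_\sfC$ while $\Tr_\sfR(\ketbra{\psi_1}{\psi_1})^{\otimes p}=(I/2^n)^{\otimes p}$, monotonicity of fidelity under partial trace gives $p_1\le\tfrac12\big(1+F(\rho_\sfC,(I/2^n)^{\otimes p})^2\big)$. (3)~A pinching-plus-Cauchy--Schwarz estimate using $F(\rho_\sfC,(I/2^n)^{\otimes p})=2^{-np/2}\Tr\sqrt{\rho_\sfC}$, $\Tr\sqrt{\rho_\sfC}\le\Tr\sqrt{\Pi_0\rho_\sfC\Pi_0}+\Tr\sqrt{\Pi_0^{\perp}\rho_\sfC\Pi_0^{\perp}}$, $\Tr\sqrt{\Pi_0\rho_\sfC\Pi_0}\le\sqrt{2^{\secp p}\Tr(\Pi_0\rho_\sfC)}$, and $n\ge\secp+1$, yields $\Tr(\Pi_0\rho_\sfC)+F(\rho_\sfC,(I/2^n)^{\otimes p})^2\le 1+2^{-\Omega(\secp)}$. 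Combining (1)--(3) gives $p_0+p_1\le 1+2^{-\Omega(\secp)}=1+\negl(\secp)$.

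The main obstacle is the binding analysis, and within it step~(1): showing that an unrestricted malicious committer — free to commit a state entangled across the $p$ copies, to entangle $\sfR$ with $\sfC$, and to apply an arbitrary channel before revealing — still has its opening-to-$0$ probability governed by how much of $\rho_\sfC$ lies in the fixed, exponentially-small subspace $\mathrm{range}(\Pi_0)$; this works because the marginal on the commitment registers that the receiver tests is pinned to $\rho_\sfC$, and the honest-$0$ targets all lie in a subspace whose dimension is a $2^{-\Omega(\secp)}$ fraction of the whole. It is essential that the SWAP tests act on the \emph{joint} $2np$-qubit register rather than copy by copy: per-copy tests only see the factor-$2$ rank gap ($n\ge\secp+1$) and so yield only constant slack, whereas the joint tests see the factor-$2^{(n-\secp)p}\ge 2^{\secp}$ gap. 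Two routine caveats are tracked along the way: $W$ depends on the common Haar state $\ket{\vartheta}$, so the argument is run conditioned on $\vartheta$ and the bound holds for every $\vartheta$; and replacing the idealized projective steps by the SWAP tests actually performed (with the receiver holding only $p(\secp)$ copies of $\ket{\vartheta}$), together with the collision bounds invoked, introduces only $O(\poly/2^{n})$-type errors, negligible since $n\ge\secp+1$.
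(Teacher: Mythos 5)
Your hiding argument is exactly the paper's (apply \Cref{lem:prs-multi-sec} with $\ell=1$), and you correctly isolate the one structural fact that drives binding: $C_0^\ast$ and $C_1^\ast$ act identically in the commit phase, so the receiver's reduced state on $\sfC$ is the same in both openings. But the binding (and correctness) analysis you then run is for a protocol that is not the one in \Cref{fig:commitment}. The receiver never measures $\sfR_i$ in the computational basis, never checks a trailing $0^{n-\secp}$, and never applies a $Z^{k_i}$ correction; nor does it perform a single SWAP test on the joint $2np$-qubit register. It prepares its own $\bigotimes_i\ket{\psi_b}_{\sfC'_i\sfR'_i}$ from its $p$ copies of $\ket{\vartheta}$ and runs $p$ \emph{separate} SWAP tests, one per pair $(\sfC_i,\sfR_i)$ vs.\ $(\sfC'_i,\sfR'_i)$, accepting only if all pass. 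Your closing claim --- that per-copy tests ``only see the factor-$2$ rank gap and so yield only constant slack,'' so joint tests are essential --- is the crux of the gap: if that were true, the theorem as stated would be false. What you are missing is that requiring all $p$ tests to accept amplifies the per-copy constant gap. The acceptance POVM factorizes as $M^{(b)}=\bigotimes_{i}\frac{I+\ketbra{\psi_b}{\psi_b}_{\sfC_i\sfR_i}}{2}=\Ex_{\cS\subseteq[p]}\bigl[\bigotimes_{i\in\cS}\ketbra{\psi_b}{\psi_b}\otimes\bigotimes_{i\notin\cS}I\bigr]$, and for each fixed $\cS$ one combines $\Tr_{\sfR}(\rho^{(0)})=\Tr_{\sfR}(\rho^{(1)})$ with the inequality $F(\rho,\xi)+F(\sigma,\xi)\le 1+\sqrt{F(\rho,\sigma)}$ and the per-copy bound $F(\Tr_{\sfR_i}\ketbra{\psi_0}{\psi_0},\Tr_{\sfR_i}\ketbra{\psi_1}{\psi_1})\le 2^{-(n-\secp)}$ to get $p_{0,\cS}+p_{1,\cS}\le 1+2^{-|\cS|(n-\secp)/2}$; averaging over $\cS$ gives $p_0+p_1\le 1+\bigl(\frac{1+2^{-(n-\secp)/2}}{2}\bigr)^{p}=1+\negl(\secp)$ since $p=\secp$. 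So the constant per-copy slack is exactly what the $p$-fold all-accept condition beats down.

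Beyond targeting the wrong verification procedure, your steps (1)--(3) have internal soft spots even for your modified protocol: the claimed decomposition of $\rho_\sfC$ into pure states indexed by measured keys $\vec k$, and the jump from ``each $\ket{\xi_{\vec k}}$ is compared against a vector in $\mathrm{range}(\Pi_0)$'' to $p_0\le\frac12(1+\Tr(\Pi_0\rho_\sfC))$, are asserted rather than proved (the SWAP test accepts a state $\rho$ against a pure $\ket{\phi}$ with probability $\frac12(1+\bra{\phi}\rho\ket{\phi})$, and $\ket{\phi}\in\mathrm{range}(\Pi_0)$ only bounds $\bra{\phi}\rho\ket{\phi}\le\Tr(\Pi_0\rho)$ for the specific $\ket{\phi}$ tested, which here is a \emph{superposition} over keys, not a measured-key state). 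These could likely be repaired, but you would still be proving binding for a different scheme. The fix is to analyze the tensor-product POVM the receiver actually implements, as above.
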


\begin{figure}[h]
   \begin{tabular}{|p{16cm}|}
   \hline \\
\par Commit phase: The sender $C_{\secp}$ on input $b\in\bit$ does the following:
    \begin{itemize}
        \item Use $p$ copies of the common Haar state $\ket{\vartheta}$ to prepare the state $\ket{\Psi_b}_{\sfC\sfR} := \bigotimes_{i=1}^p \ket{\psi_b}_{\sfC_i\sfR_i}$, where 
        \[
            \ket{\psi_0}_{\sfC_i\sfR_i} 
            := \frac{1}{\sqrt{2^\secp}} \sum_{k \in \bit^\secp} (Z^k \otimes I_{n-\secp}) \ket{\vartheta}_{\sfC_i} \ket{k||0^{n-\secp}}_{\sfR_i}
        \]
        and
        \[
            \ket{\psi_1}_{\sfC_i\sfR_i} 
            := \frac{1}{\sqrt{2^n}} \sum_{j \in \bit^n} \ket{j}_{\sfC_i} \ket{j}_{\sfR_i},
        \]
        and $\sfC := (\sfC_1, \sfC_2, \dots, \sfC_p)$ and $\sfR := (\sfR_1, \sfR_2, \dots, \sfR_p)$.
        \item Send register $\sfC$ to the receiver.
    \end{itemize}

\par Reveal phase: 
\begin{itemize}
    \item The sender sends $b$ and register $\sfR$ to the receiver.
    \item The receiver prepares the state $\ket{\Psi_b}_{\sfC'\sfR'} = \bigotimes_{i=1}^p \ket{\psi_b}_{\sfC'_i\sfR'_i}$ by using $p$ copies of the common Haar state $\ket{\vartheta}$, where $\sfC' := (\sfC'_1, \sfC'_2, \dots, \sfC'_p)$ and $\sfR' := (\sfR'_1, \sfR'_2, \dots, \sfR'_p)$ are receiver's registers.
    \item For $i\in[p]$, the receiver performs the SWAP test between registers $(\sfC_i,\sfR_i)$ and $(\sfC'_i,\sfR'_i)$.
    \item The receiver outputs $b$ if all SWAP tests accept; otherwise, outputs $\bot$.
\end{itemize}
\\ 
\hline
\end{tabular}
\caption{Quantum commitment scheme in the CHS model}
\label{fig:commitment}
\end{figure}

\subsection{Proving Hiding and Binding}
Now, we prove~\Cref{thm:com}.
\begin{proof}[Proof of~\Cref{thm:com}]
Clearly, the construction has perfect correctness.
\item \paragraph{Poly-copy statistical hiding.} 
It follows immediately from~\Cref{lem:prs-multi-sec} by setting $\ell = 1$.

\item \paragraph{Statistical sum binding.}
For any (fixed) common Haar state $\ket{\vartheta}$ and $i\in[p]$, it holds that
\begin{align} \label{eq:fidelity}
& F( \Tr_{\sfR_i}( \ketbra{\psi_0}{\psi_0}_{\sfC_i\sfR_i}), \Tr_{\sfR_i}( \ketbra{\psi_1}{\psi_1}_{\sfC_i\sfR_i}) ) \nonumber \\
= & F\left( \underbrace{\frac{1}{2^\secp} \sum_{k \in \bit^\secp} (Z^k \otimes I_{n-\secp}) \ketbra{\vartheta}{\vartheta}_{\sfC_i}(Z^k \otimes I_{n-\secp})}_{=: \rho_0}, \frac{I_{\sfC_i} }{2^n} \right) \nonumber \\
= & 2^{-n} \cdot \Tr( \sqrt{\rho_0} )^2 \nonumber \\
\leq & 2^{-n} \cdot \rank(\sqrt{\rho_0}) \cdot \Tr(\rho_0) \nonumber \\
\leq & 2^{-n} \cdot 2^{\secp} \cdot 1
= 2^{-(n-\secp)},
\end{align}
where the second equality is by the definition of fidelity $F(\rho,\sigma) = \left( \Tr(\sqrt{\sqrt{\rho}\sigma\sqrt{\rho}}) \right)^2$; the first inequality follows from $\Tr(\rho)^2 \leq \rank(\rho) \cdot \Tr(\rho^2)$ for $\rho \succeq 0$; the second inequality is because $\rank(\sqrt{\rho}) = \rank(\rho)$ for $\rho \succeq 0$ and $\rank(X+Y) \leq \rank(X) + \rank(Y)$.

\noindent Let $M^{(b)}_{\sfC\sfR}$ be the POVM operator corresponding to that the receiver outputs $b$ (\ie all the SWAP tests accept), 
\[
M^{(b)}_{\sfC\sfR} := \bigotimes_{i\in[p]} \left( \frac{I_{\sfC_i\sfR_i} + \ketbra{\psi_b}{\psi_b}_{\sfC_i\sfR_i})}{2} \right)
= \Ex_{\cS\subseteq[p]} \left[ \bigotimes_{i\in \cS}\ketbra{\psi_b}{\psi_b}_{\sfC_i\sfR_i} \otimes \bigotimes_{i\notin \cS} I_{\sfC_i\sfR_i} \right],
\]
where $\cS$ is a uniformly random subset of $[p]$. Then the probability that the receiver outputs $b$ is
\begin{align*}
p_b & := \Tr\left(M^{(b)}_{\sfC\sfR} \underbrace{\Tr_\sfE( U^{(b)}_{\sfR\sfE} \ketbra{\Phi}{\Phi}_{\sfC\sfR\sfE} U^{(b)\dagger}_{\sfR\sfE}}_{=: \rho^{(b)}_{\sfC\sfR}} ) \right) \\
& = \Ex_{\cS\subseteq[p]} \left[ \Tr\left( \bigotimes_{i\in \cS}\ketbra{\psi_b}{\psi_b}_{\sfC_i\sfR_i} \otimes \bigotimes_{i\notin \cS} I_{\sfC_i\sfR_i} \cdot \rho^{(b)}_{\sfC\sfR} \right) \right] \\
& = \Ex_{\cS\subseteq[p]} \left[ \underbrace{ F \left( \bigotimes_{i\in \cS}\ketbra{\psi_b}{\psi_b}_{\sfC_i\sfR_i}, \Tr_{\sfC_i\sfR_i: i\notin\cS}(\rho^{(b)}_{\sfC\sfR}) \right)}_{=: p_{b,S}} \right],
\end{align*}
where $\sfE$ is the sender's internal register, $\ket{\Phi}_{\sfC\sfR\sfE}$ is the malicious sender's initial state that might depend on $\ket{\vartheta}$ (we omit the dependence for simplicity), and $U^{(b)}_{\sfR\sfE}$ is the malicious sender's attacking unitary for $b$; we plug in the definition of $M^{(b)}_{\sfC\sfR}$ and use the short-hand notation $\rho^{(b)}_{\sfC\sfR}$ to obtain the second equality.

\noindent For any fixed $\cS \subseteq [p]$, we have
\begin{align*}
& p_{0,\cS} + p_{1,\cS} \\
& = F\left( \bigotimes_{i\in \cS}\ketbra{\psi_0}{\psi_0}_{\sfC_i\sfR_i}, \Tr_{\sfC_i\sfR_i: i\notin\cS}(\rho^{(0)}_{\sfC\sfR}) \right) 
+ F\left( \bigotimes_{i\in \cS}\ketbra{\psi_1}{\psi_1}_{\sfC_i\sfR_i}, \Tr_{\sfC_i\sfR_i: i\notin\cS}(\rho^{(1)}_{\sfC\sfR}) \right) \\
& \leq F\left( \bigotimes_{i\in \cS} \Tr_{\sfR_i} (\ketbra{\psi_0}{\psi_0}_{\sfC_i\sfR_i}), \Tr_{\sfC_i:i\notin \cS}\Tr_{\sfR}(\rho^{(0)}_{\sfC\sfR}) \right) 
+ F\left( \bigotimes_{i\in \cS} \Tr_{\sfR_i} (\ketbra{\psi_1}{\psi_1}_{\sfC_i\sfR_i}), \Tr_{\sfC_i:i\notin \cS}\Tr_{\sfR}(\rho^{(1)}_{\sfC\sfR}) \right) \\
& \leq 1 + F\left( \bigotimes_{i\in \cS} \Tr_{\sfR_i} (\ketbra{\psi_0}{\psi_0}_{\sfC_i\sfR_i}), \bigotimes_{i\in \cS} \Tr_{\sfR_i} (\ketbra{\psi_1}{\psi_1}_{\sfC_i\sfR_i}) \right)^{1/2} \\
& = 1 + \bigotimes_{i\in \cS} F \left( \Tr_{\sfR_i} (\ketbra{\psi_0}{\psi_0}_{\sfC_i\sfR_i}), \Tr_{\sfR_i} (\ketbra{\psi_1}{\psi_1}_{\sfC_i\sfR_i}) \right)^{1/2} 
\leq 1 + 2^{\frac{-|\cS|(n-\secp)}{2}},
\end{align*}
where the first inequality follows from the fact that taking a partial trace won't decrease the fidelity; the second inequality is because $\Tr_{\sfR}(\rho^{(0)}_{\sfC\sfR}) = \Tr_{\sfR}(\rho^{(1)}_{\sfC\sfR})$ and $F(\rho,\xi)+F(\sigma,\xi) \leq 1 + \sqrt{F(\rho,\sigma)}$~\cite{NS03}; the last equality follows from the fact that $F(\bigotimes_i \rho_i, \bigotimes_i \sigma_i) = \prod_i F(\rho_i, \sigma_i)$; the last inequality follows from~\Cref{eq:fidelity}. Finally, we bound the probability $p_0 + p_1$ as follows:
\begin{align*}
p_0 + p_1
& = \Ex_{\cS\subseteq[p]} \left[ p_{0,\cS} + p_{1,\cS} \right] 
\leq 1 + \Ex_{\cS\subseteq[p]} \left[ 2^{\frac{-|\cS|(n-\secp)}{2}} \right]
= 1 + 2^{-p} \cdot \sum_{s = 0}^t \binom{p}{s} 2^{\frac{-s(n-\secp)}{2}} \\
& = 1 + \left( \frac{ 1 + 2^{\frac{-(n-\secp)}{2}} }{2} \right)^p
= 1 + \negl(\secp),
\end{align*}
since we set $n(\secp)\geq \secp+1$ and $p(\secp) = \secp = \omega(\log(\secp))$.
\end{proof}

\section{LOCC Indistinguishability} \label{sec:LOCC}
In this section, we prove our main technical theorem for proving impossibilities and separations in~\Cref{sec:Imp_QCCC_CHS} and~\Cref{sec:QBB_QCCC}.

\subsection{Definitions}
\begin{definition}[LOCC adversaries]
An \emph{LOCC adversary} is a tuple $(\alice,\bob)$, where $\alice$ and $\bob$ are spatially separated, non-uniform, and computationally unbounded quantum algorithms without pre-shared entanglement. In addition, $\alice$ and $\bob$ can only perform local operations on their registers and communicate classically.
\end{definition}

\begin{definition}[LOCC Indistinguishability]
We say that two density matrices $(\rho_{\sfA\sfB}, \sigma_{\sfA\sfB})$ are \emph{$\veps$-LOCC indistinguishable} if for any LOCC adversary $(\alice,\bob)$ with $\alice$ taking as input register $\sfA$ and $\bob$ taking as input register $\sfB$, the probability that $\bob$ outputs $1$ satisfies\footnote{Since $(\alice,\bob)$ are allowed to communicate and we do not care about communication complexity, it is without loss of generality to assume that $\bob$ outputs the bit.}
\[
\left| \Pr[(\alice,\bob)(\rho_{\sfA\sfB}) = 1] - \Pr[(\alice,\bob)(\sigma_{\sfA\sfB}) = 1] \right|
\leq \veps.
\]
If $\veps(\cdot)$ is negligible, then we simply say that $(\rho_{\sfA\sfB}, \sigma_{\sfA\sfB})$ are LOCC indistinguishable.
\end{definition}

\noindent It is well-known that the class of operations having positive partial transpose (PPT) is a strict superset of the class of LOCC operations (see, \eg~\cite{DLT02,EW02,CLMOW14,Har23}). Hence, it suffices to consider the maximum distinguishing advantage over PPT measurements.

\begin{lemma} \label{lem:LOCC_PPT}
For any two density matrices $\rho_{\sfA\sfB}, \sigma_{\sfA\sfB}$ and $\veps \geq 0$, $(\rho_{\sfA\sfB},\sigma_{\sfA\sfB})$ are $\veps$-LOCC indistinguishable if
\[
\sup_{\substack{M_{\sfA\sfB}: 0\preceq M_{\sfA\sfB} \preceq I \\ \land 0\preceq M_{\sfA\sfB}^{\Gamma_\sfB} \preceq I}} |\Tr(M_{\sfA\sfB}(\rho_{\sfA\sfB} - \sigma_{\sfA\sfB}))|
\leq \veps.
\]
\end{lemma}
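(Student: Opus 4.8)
The plan is to derive this from the folklore fact that the ``accept'' operator of any LOCC distinguisher is a PPT measurement operator, and then simply invoke the hypothesis. Concretely, fix an LOCC adversary $(\alice,\bob)$ with $\alice$ holding register $\sfA$ and $\bob$ holding $\sfB$ (each possibly together with private ancilla registers initialized to fixed states). Since $\bob$'s final output bit is a function of the classical transcript and a final local measurement, the probability that $\bob$ outputs $1$ on an arbitrary bipartite input $\omega_{\sfA\sfB}$ can be written as $\Tr(M_{\sfA\sfB}\,\omega_{\sfA\sfB})$ for a single operator $M_{\sfA\sfB}$ with $0\preceq M_{\sfA\sfB}\preceq I$ that depends only on the protocol, not on $\omega$. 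It therefore suffices to show that $M_{\sfA\sfB}$ is additionally PPT, i.e.\ $0\preceq M_{\sfA\sfB}^{\Gamma_\sfB}\preceq I$, because then
\[
\left|\Pr[(\alice,\bob)(\rho_{\sfA\sfB})=1]-\Pr[(\alice,\bob)(\sigma_{\sfA\sfB})=1]\right|
= \left|\Tr\big(M_{\sfA\sfB}(\rho_{\sfA\sfB}-\sigma_{\sfA\sfB})\big)\right|
\le \veps
\]
by the assumed bound on the supremum over PPT measurement operators.

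To establish that $M_{\sfA\sfB}$ is PPT I would unfold the protocol round by round. Writing $\tau$ for a full classical transcript (all exchanged messages together with $\bob$'s output bit), the subnormalized branch state on input $\omega$ conditioned on $\tau$ is $(\cA_\tau\otimes\cB_\tau)\,\omega\,(\cA_\tau\otimes\cB_\tau)^\dagger$, where $\cA_\tau$ is the ordered product of $\alice$'s Kraus operators along $\tau$ and $\cB_\tau$ that of $\bob$'s; hence $\Pr[\text{transcript }\tau]=\Tr\big((\cA_\tau^\dagger\cA_\tau\otimes\cB_\tau^\dagger\cB_\tau)\,\omega\big)$, and $\sum_{\tau}\cA_\tau^\dagger\cA_\tau\otimes\cB_\tau^\dagger\cB_\tau=I$ since the overall LOCC instrument is trace-preserving. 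Summing over accepting $\tau$ gives $M_{\sfA\sfB}=\sum_{\tau\ \mathrm{acc}}\cA_\tau^\dagger\cA_\tau\otimes\cB_\tau^\dagger\cB_\tau$, a conic combination of tensor products of positive semidefinite operators, and $I-M_{\sfA\sfB}=\sum_{\tau\ \mathrm{rej}}\cA_\tau^\dagger\cA_\tau\otimes\cB_\tau^\dagger\cB_\tau$ has the same form. Because partial transpose keeps each such summand positive semidefinite, $(\cA_\tau^\dagger\cA_\tau\otimes\cB_\tau^\dagger\cB_\tau)^{\Gamma_\sfB}=\cA_\tau^\dagger\cA_\tau\otimes(\cB_\tau^\dagger\cB_\tau)^{T}\succeq 0$, we conclude $M_{\sfA\sfB}^{\Gamma_\sfB}\succeq 0$ and $I-M_{\sfA\sfB}^{\Gamma_\sfB}=(I-M_{\sfA\sfB})^{\Gamma_\sfB}\succeq 0$, i.e.\ $0\preceq M_{\sfA\sfB}^{\Gamma_\sfB}\preceq I$. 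Private ancillas are absorbed by the same computation after sandwiching each $\cB_\tau$ (resp.\ $\cA_\tau$) with the fixed ancilla state $\ket{0}$, which only replaces a positive factor by another positive factor and preserves $\sum_\tau(\cdot)=I$ on $\sfA\sfB$.

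The only genuinely technical point is the round-by-round bookkeeping needed to exhibit both $M_{\sfA\sfB}$ and $I-M_{\sfA\sfB}$ as separable (sum-of-products-of-PSD) operators; this is exactly where the restriction to \emph{classical} communication and the absence of \emph{pre-shared entanglement} are used, since with a shared entangled resource the branch operators would no longer factor across $\sfA$ and $\sfB$. Everything else is immediate, using in particular the variational characterization of the trace norm recorded in the preliminaries. We note that this PPT relaxation of LOCC is standard (see, e.g.,~\cite{DLT02,EW02,CLMOW14,Har23}), so in the write-up I would keep this proof short and mostly cite the relevant literature.
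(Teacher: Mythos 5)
Your argument is correct and is exactly the standard LOCC $\subseteq$ SEP $\subseteq$ PPT fact that the paper itself does not prove but delegates to the cited literature (\cite{DLT02,EW02,CLMOW14,Har23}); the only cosmetic gap is that each conditional local operation may have several Kraus operators, so the branch POVM elements should be $E^{\sfA}_\tau\otimes E^{\sfB}_\tau$ with $E^{\sfA}_\tau=\sum_i \cA_{\tau,i}^\dagger\cA_{\tau,i}$ rather than a single product $\cA_\tau^\dagger\cA_\tau$, which changes nothing since $M_{\sfA\sfB}$ and $I-M_{\sfA\sfB}$ remain conic combinations of tensor products of positive semidefinite operators and partial transpose preserves each summand's positivity. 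Your plan to keep the write-up short and cite the references matches the paper's own treatment.
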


We can extend the LOCC indistinguishability into the multi-party setting. 

\begin{definition}[$m$-party LOCC adversaries]
An \emph{$m$-party LOCC adversary} is an $m$-tuple $(\sfP_1,\sfP_2,\dots,\sfP_m)$, where each $\sfP_i$ is a non-uniform, computationally unbounded quantum algorithms and every distinct pair $(\sfP_i,\sfP_j)$ is spatially separated and without pre-shared entanglement. In addition, every party can only perform local operations on their registers and communicate classically.
\end{definition}

\begin{definition}[$m$-party LOCC Indistinguishability]
We say that two density matrices $(\rho_{\sfP}, \sigma_{\sfP})$ on register $\sfP = (\sfP_1,\sfP_2,\dots,\sfP_m)$ are \emph{$(m,\veps)$-LOCC indistinguishable} if for any $m$-party LOCC adversary $(\sfP_1,\sfP_2,\dots,\sfP_m)$ with each $\sfP_i$ taking as input register $\sfP_i$, the probability that $\sfP_m$ outputs $1$ satisfies
\[
\left| \Pr[(\sfP_1,\sfP_2,\dots,\sfP_m)(\rho_\sfP) = 1] - \Pr[(\sfP_1,\sfP_2,\dots,\sfP_m)(\sigma_\sfP) = 1] \right|
\leq \veps.
\]
\end{definition}

\subsection{LOCC Haar Indistinguishability}
We first introduce several useful lemmas.
\begin{lemma} \label{lem:type_bipartite}
For any $d\in\N$, any set $T \subseteq [d]$ and any integer $0 \leq x \leq |T|$, the type state $\ket{T}$ can be written as
\[
\ket{T}_{\sfA\sfB} 
= \sum_{X\in\binom{T}{x}} \frac{1}{\sqrt{\binom{|T|}{x}}} \ket{X}_\sfA \otimes \ket{T\setminus X}_\sfB,
\]
where register $\sfA$ contains the first $x$ qudits and register $\sfB$ contains the last $|T| - x$ qudits.
\end{lemma}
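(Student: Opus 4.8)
The plan is to unfold the definition of the type state and then regroup its terms according to which $x$ elements occupy the first $x$ registers. Since $T\subseteq[d]$ is collision-free with $|T|=:t$, \Cref{def:type_states} gives $\ket{T}=\frac{1}{\sqrt{t!}}\sum_{\bfv\in T}\ket{\bfv}$, where $\bfv$ ranges over the $t!$ tuples in $[d]^{t}$ with $\type(\bfv)=T$, i.e.\ over the orderings of the elements of $T$.

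Next I would split each ordering $\bfv=(\bfv_{\sfA},\bfv_{\sfB})$, where $\bfv_{\sfA}$ holds the first $x$ entries and $\bfv_{\sfB}$ holds the last $t-x$ entries. Because $T$ is collision-free, the entries of $\bfv$ are pairwise distinct, so the set $X$ of entries of $\bfv_{\sfA}$ has size exactly $x$, the tuple $\bfv_{\sfA}$ is an ordering of $X$, and $\bfv_{\sfB}$ is an ordering of $T\setminus X$; conversely, a subset $X\in\binom{T}{x}$ together with an ordering of $X$ and an ordering of $T\setminus X$ determines a unique ordering of $T$. This bijection lets me rewrite
\[
\ket{T}_{\sfA\sfB}
= \frac{1}{\sqrt{t!}} \sum_{X\in\binom{T}{x}} \Bigl( \sum_{\bfu \text{ ordering of } X} \ket{\bfu}_{\sfA} \Bigr) \otimes \Bigl( \sum_{\bfw \text{ ordering of } T\setminus X} \ket{\bfw}_{\sfB} \Bigr).
\]
Applying \Cref{def:type_states} once more to the collision-free sets $X$ (of size $x$) and $T\setminus X$ (of size $t-x$), the two inner sums equal $\sqrt{x!}\,\ket{X}_{\sfA}$ and $\sqrt{(t-x)!}\,\ket{T\setminus X}_{\sfB}$ respectively. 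Substituting and using $\binom{t}{x}=t!/(x!(t-x)!)$ collapses the coefficient to $\sqrt{x!\,(t-x)!/t!}=1/\sqrt{\binom{t}{x}}$, which is the claimed identity.

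I do not expect a real obstacle here: the only point requiring care is verifying the bijection between orderings of $T$ and triples consisting of a choice of the first-$x$ subset $X\in\binom{T}{x}$, an ordering of $X$, and an ordering of $T\setminus X$ — this is exactly where collision-freeness of $T$ enters, guaranteeing $|X|=x$. As a sanity check one could instead confirm the identity by computing the overlap of $\ket{T}_{\sfA\sfB}$ with each computational basis state $\ket{\bfu}_{\sfA}\ket{\bfw}_{\sfB}$ and matching it against the right-hand side, but the counting argument above is cleaner and extends immediately to the multi-partite splits used later in the analysis.
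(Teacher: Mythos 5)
Your proof is correct. It takes a slightly different route from the paper's: you expand $\ket{T}$ as the uniform superposition over all $|T|!$ orderings of $T$, regroup the orderings by the set $X$ occupying the first $x$ registers, and recognize the two inner sums as $\sqrt{x!}\,\ket{X}_{\sfA}$ and $\sqrt{(|T|-x)!}\,\ket{T\setminus X}_{\sfB}$, so the right-hand side is \emph{derived} directly. The paper instead \emph{verifies} the expansion: it computes the overlap $\bigl(\bra{X}_{\sfA}\otimes\bra{T\setminus X}_{\sfB}\bigr)\ket{T}_{\sfA\sfB} = 1/\sqrt{\binom{|T|}{x}}$ for each $X$, observes that the vectors $\ket{X}_{\sfA}\otimes\ket{T\setminus X}_{\sfB}$ are pairwise orthogonal, and concludes from the normalization of $\ket{T}$ (since the squared overlaps already sum to $1$) that there is no residual component --- this is exactly the ``sanity check'' you mention at the end. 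The two arguments involve the same factorial bookkeeping; yours is marginally more constructive and, as you note, regroups cleanly for multi-partite splits, while the paper's is marginally shorter because it never has to spell out the bijection between orderings of $T$ and triples $(X,\text{ordering of }X,\text{ordering of }T\setminus X)$. Both hinge on collision-freeness of $T$ in the same place (guaranteeing $|X|=x$ and the orthogonality/uniqueness of the decomposition), so either write-up is acceptable.
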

\begin{proof}
For every $X\in\binom{T}{x}$, the inner product of $\ket{X}_\sfA \otimes \ket{T\setminus X}_\sfB$ and $\ket{T}_{\sfA\sfB}$ is 
\[
\left( \frac{1}{\sqrt{x!(|T|-x)!}}\sum_{\bfx \in X, \bfy \in T \setminus X} \bra{\bfx}_\sfA \otimes \bra{\bfy}_\sfB \right) 
\left( \frac{1}{\sqrt{T!}}\sum_{\bfv \in T} \ket{\bfv}_{\sfA\sfB} \right)
= \sqrt{\frac{x!(|T|-x)!}{T!}}
= \frac{1}{\sqrt{\binom{|T|}{x}}}.
\]
Moreover, $\ket{X}_\sfA \otimes \ket{T\setminus X}_\sfB$ and $\ket{X'}_\sfA \otimes \ket{T\setminus X'}_\sfB$ are orthogonal for every pair $X\neq X'\in\binom{T}{x}$. Since $\ket{T}$ is normalized, the equality holds.
\end{proof}

\paragraph{Kneser graphs.}
For any $v,k\in\N$, the \emph{Kneser graph} $K(v, k)$ is the graph whose vertices correspond to the $k$-element subsets of the set $[v]$, and two vertices are adjacent if and only if the two corresponding sets are disjoint.

\begin{lemma}[{\cite[Theorem~1]{LW12}}] \label{lem:Kneser}
For any $v,k\in\N$ such that $v\geq 2k+1$, the sum of absolute eigenvalues of the adjacency matrix of $K(v,k)$ (which is equal to its $1$-norm) is 
\[
\frac{2^k(v-1)(v-3)\dots(v-2k+1)}{k!}.
\]
\end{lemma}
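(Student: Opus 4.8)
The claim is a purely spectral fact about the Kneser graph $K(v,k)$, so the plan is to combine the classical eigenvalue description of $K(v,k)$ with an elementary evaluation of the resulting sum. Recall that, by the symmetry of $K(v,k)$ under $S_v$ (equivalently, by the theory of the Johnson association scheme), the permutation module $\C^{N}$ with $N = \binom{v}{k}$ decomposes into $S_v$-irreducibles $V_0,\dots,V_k$ with $\dim V_j = \binom{v}{j} - \binom{v}{j-1}$ (convention $\binom{v}{-1} := 0$), the adjacency matrix $A$ of $K(v,k)$ commutes with this action, hence acts as a scalar on each $V_j$ by Schur's lemma, and that scalar is $\lambda_j = (-1)^j\binom{v-k-j}{k-j}$ for $j = 0,1,\dots,k$. (If a self-contained argument is wanted, one inserts this standard computation, obtaining the value of the scalar by applying $A$ to one explicit test vector of $V_j$; otherwise it is quoted.) Since $A$ is symmetric, its singular values are the absolute values of its eigenvalues, so $\norm{A}_1 = \sum_j |\lambda_j| \dim V_j$.

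With the spectrum in hand, the hypothesis $v \ge 2k+1$ is exactly what controls signs: for every $0 \le j \le k$ one has $v-k-j \ge (k-j)+1 > k-j \ge 0$, so $\binom{v-k-j}{k-j}$ is an ordinary positive binomial coefficient, $|\lambda_j| = \binom{v-k-j}{k-j}$, and the multiplicities $\binom{v}{j} - \binom{v}{j-1}$ are non-negative. Hence
\[
\norm{A}_1 = \sum_{j=0}^{k} \binom{v-k-j}{k-j}\left(\binom{v}{j} - \binom{v}{j-1}\right).
\]

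It remains to put this sum in closed form, which is a routine binomial calculation. First, apply summation by parts to the factor $\binom{v}{j}-\binom{v}{j-1}$ and use Pascal's rule $\binom{v-k-j}{k-j} - \binom{v-k-j-1}{k-j-1} = \binom{v-k-j-1}{k-j}$; this collapses the expression to the single product sum $\norm{A}_1 = \sum_{j=0}^{k}\binom{v}{j}\binom{v-k-j-1}{k-j}$. Next, writing $\binom{v-k-j-1}{k-j} = [x^{k-j}](1+x)^{v-k-j-1}$ and summing the resulting geometric series in $x/(1+x)$ gives $\norm{A}_1 = [x^k]\,(1+2x)^v(1+x)^{-(k+1)}$, and the Möbius substitution $x = z/(1-z)$ (justified by Lagrange inversion) rewrites this as $[z^k](1+z)^v(1-z)^{2k-v}$. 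Finally one checks that this coefficient equals $\frac{2^k(v-1)(v-3)\cdots(v-2k+1)}{k!}$ (equivalently $4^k\binom{(v-1)/2}{k}$); this last step is a standard terminating hypergeometric identity, provable by induction on $k$ or mechanically by creative telescoping, and its shape is already transparent from the cases $k=1$ (giving $2(v-1)$, consistent with $K(v,1) = K_v$) and $k=2$ (giving $2(v-1)(v-3)$).

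The only genuinely substantive ingredient is the eigenvalue computation for $K(v,k)$; everything after it is sign bookkeeping and a routine binomial simplification. Since the spectrum of Kneser graphs is entirely classical, in the paper this lemma is simply invoked as stated in the cited reference, and only the elementary sum evaluation above is needed downstream to convert it into the $O(t^2/d)$ trace-norm bound.
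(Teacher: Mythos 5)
The paper does not prove this lemma at all: it is imported verbatim from the cited reference \cite{LW12} and used as a black box, so there is no internal proof to compare against. Your proposal supplies a genuine derivation, and I checked it: the spectrum $\lambda_j = (-1)^j\binom{v-k-j}{k-j}$ with multiplicities $\binom{v}{j}-\binom{v}{j-1}$ is the classical Johnson-scheme description of $K(v,k)$, the hypothesis $v\ge 2k+1$ does exactly the sign bookkeeping you say it does, and the chain of identities is correct. In particular, the Abel-summation step (using that the partial sums of $\binom{v}{j}-\binom{v}{j-1}$ telescope to $\binom{v}{j}$ together with Pascal's rule, and that $a_k=1$) does collapse the sum to $\sum_{j=0}^{k}\binom{v}{j}\binom{v-k-j-1}{k-j}$; the generating-function step is valid because the terms with $j>k$ contribute nothing to $[x^k]$ (though what you are summing is the binomial expansion of $(1+\tfrac{x}{1+x})^v$, not a geometric series); and the substitution $x=z/(1-z)$ with the Jacobian factor $(1-z)^{k-1}$ correctly yields $[z^k](1+z)^v(1-z)^{2k-v}$. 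I verified the endpoint identity $[z^k](1+z)^v(1-z)^{2k-v} = 4^k\binom{(v-1)/2}{k} = \frac{2^k(v-1)(v-3)\cdots(v-2k+1)}{k!}$ on several cases; it is indeed a standard terminating evaluation, though in your write-up it is asserted rather than proved, and a fully self-contained version would also need the eigenvalue computation you defer. In short: where the paper buys the statement wholesale, your argument reconstructs it from first principles at the cost of two standard imported facts (the Kneser spectrum and one terminating binomial identity); this is a correct and reasonable alternative, and nothing downstream in the paper's $O(t^2/d)$ bound is affected by which route one takes.
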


The following lemma is the crux for proving~\Cref{thm:LOCC_main}.
\begin{lemma} \label{lem:PPT}
Let 
$\Tilde{\rho}_{\sfA\sfB} := \Ex_{T\gets\binom{[d]}{2t}} \left[ \ketbra{T}{T}_{\sfA\sfB} \right]$
and 
$\Tilde{\sigma}_{\sfA\sfB} := \Ex_{S_A, S_B \gets\binom{[d]}{t}: S_A \cap S_B = \emptyset} \left[ \ketbra{S_A}{S_A}_\sfA \otimes \ketbra{S_B}{S_B}_\sfB \right]$. 
Then we have $\norm{\Tilde{\rho}^{\Gamma_\sfB}_{\sfA\sfB} - \Tilde{\sigma}^{\Gamma_\sfB}_{\sfA\sfB}}_1 \leq O( t^2/d )$.
\end{lemma}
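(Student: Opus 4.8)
The plan is to make rigorous each step sketched in the technical overview. First I would bipartition the type states via \Cref{lem:type_bipartite} with $|T| = 2t$ and $x = t$, writing $\ket{T}_{\sfA\sfB} = \binom{2t}{t}^{-1/2}\sum_{X\in\binom{T}{t}}\ket{T\setminus X}_\sfA\otimes\ket{X}_\sfB$, so that the partial transpose on $\sfB$ becomes
\[
\Tilde{\rho}^{\Gamma_\sfB}_{\sfA\sfB} = \Ex_{T\gets\binom{[d]}{2t}}\left[\binom{2t}{t}^{-1}\sum_{X,Y\in\binom{T}{t}}\ketbra{T\setminus X}{T\setminus Y}_\sfA\otimes\ketbra{Y}{X}_\sfB\right].
\]
Since each type state $\ket{S}$ of a set $S$ has real amplitudes and is symmetric in the computational basis, $\Tilde{\sigma}^{\Gamma_\sfB}_{\sfA\sfB} = \Tilde{\sigma}_{\sfA\sfB}$. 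Moreover, unfolding the expectation over $T$ and $X$, the diagonal ($X=Y$) part of $\Tilde{\rho}^{\Gamma_\sfB}_{\sfA\sfB}$ is exactly the uniform mixture over ordered pairs of disjoint $t$-subsets $(T\setminus X, X)$ — a one-line counting check gives $\binom{d}{2t}\binom{2t}{t} = \binom{d}{t}\binom{d-t}{t}$ — and hence equals $\Tilde{\sigma}_{\sfA\sfB}$. Therefore $\Tilde{\rho}^{\Gamma_\sfB}_{\sfA\sfB} - \Tilde{\sigma}^{\Gamma_\sfB}_{\sfA\sfB}$ is precisely the off-diagonal ($X\neq Y$) part, and it remains to bound its trace norm.

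Next I would apply the double-counting reparametrization: a triple $(T,X,Y)$ with $X\neq Y$ corresponds bijectively to a quadruple of mutually disjoint sets $(C,I,X',Y')$ with $C = T\setminus(X\cup Y)$, $I = X\cap Y$, $X' = X\setminus I$, $Y' = Y\setminus I$, where necessarily $|C| = |I| =: s$, $|X'| = |Y'| = t-s$, and $s\in\{0,1,\dots,t-1\}$; under this change $T\setminus X = C\uplus Y'$, $Y = I\uplus Y'$, $T\setminus Y = C\uplus X'$, $X = I\uplus X'$. Grouping the sum by $s$ and applying the triangle inequality over the $\binom{d}{s}\binom{d-s}{s}$ choices of $(C,I)$ reduces the task to bounding $\frac{1}{\binom{d}{2t}\binom{2t}{t}}\sum_{s=0}^{t-1}\binom{d}{s}\binom{d-s}{s}\,\norm{K_{C,I}}_1$, where
\[
K_{C,I} = \sum_{\substack{X',Y'\in\binom{[d]\setminus(C\uplus I)}{t-s}\\ X'\cap Y'=\emptyset}}\ket{C\uplus Y'}_\sfA\ket{I\uplus Y'}_\sfB\bra{C\uplus X'}_\sfA\bra{I\uplus X'}_\sfB .
\]
The $2t$-element basis vectors occurring here are orthonormal and indexed bijectively by $Y'$ (resp.\ $X'$) over the $(t-s)$-subsets of a ground set of size $d-2s$, with matrix entry $1$ exactly when $X'$ and $Y'$ are disjoint; hence $K_{C,I}$ is isospectral (up to added zero eigenvalues) to the adjacency matrix of the Kneser graph $K(d-2s,\,t-s)$, so $\norm{K_{C,I}}_1$ equals its sum of absolute eigenvalues.

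Finally I would invoke \Cref{lem:Kneser} — applicable since $d - 2s \geq 2(t-s)+1$ whenever $d \geq 2t+1$, which we may assume because otherwise $t^2/d$ is bounded below by an absolute constant and the claim is immediate from $\norm{\cdot}_1 \leq 2$ — to substitute $\norm{K_{C,I}}_1 = \frac{2^{t-s}(d-2s-1)(d-2s-3)\cdots(d-2t+1)}{(t-s)!}$. Simplifying $\frac{\binom{d}{s}\binom{d-s}{s}}{\binom{d}{2t}}$, using $\frac{(2t)!}{\binom{2t}{t}} = (t!)^2$, and bounding each product of $\Theta(d)$-sized factors crudely from above, the $s$-th summand is at most $2\binom{t}{s}(2t/d)^{t-s}$; summing over $s$ (equivalently $r = t-s$ ranging over $1,\dots,t$) telescopes into a binomial series,
\[
2\sum_{r=1}^{t}\binom{t}{r}\left(\frac{2t}{d}\right)^r = 2\left[\left(1+\frac{2t}{d}\right)^t - 1\right] = O\!\left(\frac{t^2}{d}\right),
\]
the last estimate using $(1+x)^t - 1 \leq 2tx$ for $tx\leq 1$ and triviality otherwise. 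The step I expect to be the main obstacle is this final estimate: keeping track of the ratios of binomial coefficients and falling factorials whose lengths depend on $s$, so that the unwieldy double sum collapses cleanly to $(1+2t/d)^t - 1$, together with a clean dispatch of the regime $t^2/d = \Omega(1)$. Setting up the bijection $(T,X,Y)\leftrightarrow(C,I,X',Y')$ and recognizing the Kneser-graph structure of $K_{C,I}$ are the other places demanding care, but each becomes routine once the right parametrization is fixed.
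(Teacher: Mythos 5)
Your proposal is correct and follows essentially the same route as the paper's proof: the type-state bipartition, cancellation of the $X=Y$ terms against $\Tilde{\sigma}^{\Gamma_\sfB}_{\sfA\sfB}=\Tilde{\sigma}_{\sfA\sfB}$, the reparametrization $(T,X,Y)\leftrightarrow(C,I,X',Y')$ grouped by $s=|C|=|I|$, the identification of $K_{C,I}$ with the Kneser graph $K(d-2s,t-s)$, and the spectral bound from \Cref{lem:Kneser}. The only (harmless) cosmetic difference is in the final summation, where you bound by the binomial series $2[(1+2t/d)^t-1]$ while the paper bounds by $\exp(2t^2/(d-2t+2))-1$; both yield $O(t^2/d)$, and your explicit dispatch of the regime $t^2/d=\Omega(1)$ (needed to apply \Cref{lem:Kneser}, which requires $d\geq 2t+1$) is a detail the paper leaves implicit.
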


\begin{proof}
By~\Cref{lem:type_bipartite}, we can expand $\Tilde{\rho}$ as follows:
\begin{align*}
\Tilde{\rho}_{\sfA\sfB}
= \frac{1}{\binom{d}{2t}\binom{2t}{t}} \sum_{T\in\binom{[d]}{2t}} \sum_{\substack{X,Y\in\binom{T}{t}}} \ketbra{T\setminus X}{T\setminus Y}_\sfA \otimes \ketbra{X}{Y}_\sfB.
\end{align*}
On the other hand, we have
\begin{align*}
\Tilde{\sigma}_{\sfA\sfB}
= \frac{1}{\binom{d}{t}\binom{d-t}{t}} \sum_{\substack{S_A,S_B\in\binom{[d]}{t}: \\ S_A \cap S_B = \emptyset}} \ketbra{S_A}{S_A}_\sfA \otimes \ketbra{S_B}{S_B}_\sfB
= \frac{1}{\binom{d}{2t}\binom{2t}{t}} \sum_{\substack{S_A,S_B\in\binom{[d]}{t}: \\ S_A \cap S_B = \emptyset}} \ketbra{S_A}{S_A}_\sfA \otimes \ketbra{S_B}{S_B}_\sfB.
\end{align*}
Taking partial transpose \wrt $\sfB$, we have $\Tilde{\sigma}_{\sfA\sfB}^{\Gamma_\sfB} = \Tilde{\sigma}_{\sfA\sfB}$ and
\begin{align*}
\Tilde{\rho}_{\sfA\sfB}^{\Gamma_\sfB}
& = \frac{1}{\binom{d}{2t}\binom{2t}{t}} \sum_{T\in\binom{[d]}{2t}} \sum_{\substack{X,Y\in\binom{T}{t}}} \ketbra{T\setminus X}{T\setminus Y}_\sfA \otimes \ketbra{Y}{X}_\sfB \\
& = \Tilde{\sigma}_{\sfA\sfB}^{\Gamma_\sfB} + \frac{1}{\binom{d}{2t}\binom{2t}{t}} \sum_{T\in\binom{[d]}{2t}} \sum_{\substack{X,Y\in\binom{T}{t}:\\ X\neq Y}} \ketbra{T\setminus X}{T\setminus Y}_\sfA \otimes \ketbra{Y}{X}_\sfB.
\end{align*}
where the second equality is because when $X = Y$,
\[
\sum_{T\in\binom{[d]}{2t}} \sum_{\substack{X\in\binom{T}{t}}} \ketbra{T\setminus X}{T\setminus X}_\sfA \otimes \ketbra{X}{X}_\sfB = \sum_{S_A,S_B\in\binom{[d]}{t}: S_A \cap S_B = \emptyset} \ketbra{S_A}{S_A}_\sfA \otimes \ketbra{S_B}{S_B}_\sfB.
\]
Hence, we have
\begin{align*}
\norm{\Tilde{\rho}_{\sfA\sfB}^{\Gamma_\sfB} - \Tilde{\sigma}_{\sfA\sfB}^{\Gamma_\sfB}}_1 
= \frac{1}{\binom{d}{2t}\binom{2t}{t}} \norm{ \sum_{T\in\binom{[d]}{2t}} \sum_{\substack{X,Y\in\binom{T}{t}:\\ X\neq Y}} \ketbra{T\setminus X}{T\setminus Y}_\sfA \otimes \ketbra{Y}{X}_\sfB }_1.
\end{align*}
Now, we will apply a double-counting argument. Each $(T,X,Y)$ can uniquely correspond to a tuple of mutually disjoint sets $(C,I,X',Y')$ satisfying $C = T \setminus (X \cup Y)$ ($C$ denotes  complement of $X \cup Y$), $I = X \cap Y$ ($I$ denotes intersection), $X' = X \setminus I$ and $Y' = Y \setminus I$. Hence, $T \setminus X = C \uplus Y'$, $Y = I \uplus Y'$, $T \setminus Y = C \uplus X'$, and $X = I \uplus X'$ where $\uplus$ denotes the disjoint union. By further classifying the summands according to $s := |C| = |I| \in \set{0,1,\dots,t-1}$ (note that then $|X'| = |Y'| = t-s$), we have
\begin{align*}
& \norm{\Tilde{\rho}_{\sfA\sfB}^{\Gamma_\sfB} - \Tilde{\sigma}_{\sfA\sfB}^{\Gamma_\sfB}}_1 
=  \frac{1}{\binom{d}{2t}\binom{2t}{t}} \norm{ \sum_{s = 0}^{t-1} \sum_{C\in\binom{[d]}{s}} \sum_{I \in \binom{[d]\setminus C}{s}} 
 \sum_{\substack{X',Y' \in \binom{[d]\setminus (C\uplus I)}{t-s}: \\ X' \cap Y' = \emptyset}} \ket{C\uplus Y'}_\sfA \ket{I\uplus Y'}_\sfB \bra{C\uplus X'}_\sfA \bra{I\uplus X'}_\sfB }_1 \\
& \leq \frac{1}{\binom{d}{2t}\binom{2t}{t}} \sum_{s = 0}^{t-1} \sum_{C\in\binom{[d]}{s}} \sum_{I \in \binom{[d]\setminus C}{s}} 
\Bigg\| \underbrace{\sum_{\substack{X',Y' \in \binom{[d]\setminus (C\uplus I)}{t-s}: \\ X' \cap Y' = \emptyset}} \ket{C\uplus Y'}_\sfA \ket{I\uplus Y'}_\sfB \bra{C\uplus X'}_\sfA \bra{I\uplus X'}_\sfB}_{=: K_{C,I}} \Bigg\|_1, \label{eqn:LOCC1} \tag{$*$}
\end{align*}
where the inequality follows from the triangle inequality. Observe that for every $(C,I)$, the matrix $K_{C,I}$ is isospectral\footnote{Two matrices are isospectral to one another if they have the same set of non-zero eigenvalues, including multiplicities.} to the adjacency matrix of the Kneser graph $K(d-2s, t-s)$. By~\Cref{lem:Kneser}, we continue bounding the above inequality:
\begin{align*}
\eqref{eqn:LOCC1} & = \frac{1}{\binom{d}{2t}\binom{2t}{t}} \sum_{s = 0}^{t-1} \binom{d}{s} \binom{d-s}{s}
\frac{ 2^{t-s}(d-2s-1)(d-2s-3)\ldots (d-2t+1)}{(t-s)!} \\
& = \sum_{s = 0}^{t-1} \frac{2^{t-s} \left( \frac{t!}{s!} \right)^2 }{(t-s)!(d-2s)(d-2s-2)\ldots (d-2t+2)} \\
& \leq \sum_{s = 0}^{t-1} \frac{2^{t-s}\cdot t^{2(t-s)}}{(t-s)!(d-2t+2)^{t-s}}. \label{eqn:LOCC2} \tag{$**$}
\end{align*}
By letting $k:=t-s$, we finally have
\[
\eqref{eqn:LOCC2} = \sum_{k=1}^t \frac{(\frac{2t^2}{d-2t+2})^k}{k!}
\leq \exp(\frac{2t^2}{d-2t+2}) - 1 
= O\left( \frac{t^2}{d} \right).  \qedhere
\]
\end{proof}

\begin{theorem}[LOCC Haar Indistinguishability] \label{thm:LOCC_main}
Let 
$\rho_{\sfA\sfB} := \Ex_{\ket{\psi} \gets \Haar_n} \left[ \ketbra{\psi}{\psi}^{\otimes t}_{\sfA} \otimes \ketbra{\psi}{\psi}_{\sfB}^{\otimes t} \right]$ 
and 
$\sigma_{\sfA\sfB} := \Ex_{\ket{\psi} \gets \Haar_n} \left[ \ketbra{\psi}{\psi}^{\otimes t}_{\sfA} \right]
\otimes \Ex_{\ket{\phi} \gets \Haar_n} \left[ \ketbra{\phi}{\phi}^{\otimes t}_{\sfB} \right]$.
Then $\rho_{\sfA\sfB}$ and $\sigma_{\sfA\sfB}$ are $O( t^2/2^n )$-LOCC indistinguishable.
\end{theorem}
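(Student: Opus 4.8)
The plan is to reduce, via \Cref{lem:LOCC_PPT}, to bounding the distinguishing advantage of an arbitrary two-outcome PPT measurement, and then to identify that advantage with the partial-transpose trace distance controlled by \Cref{lem:PPT}, up to an $O(t^2/2^n)$ error coming from a collision-bound truncation. Concretely, first I would invoke \Cref{lem:LOCC_PPT}: it suffices to show $|\Tr(M_{\sfA\sfB}(\rho_{\sfA\sfB}-\sigma_{\sfA\sfB}))| = O(t^2/2^n)$ for every Hermitian $M_{\sfA\sfB}$ with $0\preceq M_{\sfA\sfB}\preceq I$ and $0\preceq M_{\sfA\sfB}^{\Gamma_\sfB}\preceq I$.

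Next I would pass to the type basis. Writing $d:=2^n$ and viewing the $2t$ copies as having their first $t$ registers in $\sfA$ and last $t$ in $\sfB$, \Cref{fact:avg-haar-random} gives
\[
\rho_{\sfA\sfB} = \Ex_{T\gets[0:2t]^d}[\ketbra{T}{T}_{\sfA\sfB}],\qquad
\sigma_{\sfA\sfB} = \Ex_{S_A\gets[0:t]^d}[\ketbra{S_A}{S_A}_\sfA]\otimes\Ex_{S_B\gets[0:t]^d}[\ketbra{S_B}{S_B}_\sfB].
\]
Conditioning the first average on $T$ being collision-free, and the second on $S_A$ and $S_B$ each being collision-free and on $S_A\cap S_B=\emptyset$, turns $\rho_{\sfA\sfB}$ and $\sigma_{\sfA\sfB}$ into precisely the states $\tilde\rho_{\sfA\sfB}$ and $\tilde\sigma_{\sfA\sfB}$ of \Cref{lem:PPT} (collision-free weight-$2t$ type vectors biject with $2t$-subsets, and similarly on the $\sigma$ side). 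Since only $2t$ elements are drawn in each case, the collision bound shows each conditioning moves the underlying distribution by $O(t^2/d)$ in total variation, whence $\|\rho_{\sfA\sfB}-\tilde\rho_{\sfA\sfB}\|_1 = O(t^2/d)$, $\|\sigma_{\sfA\sfB}-\tilde\sigma_{\sfA\sfB}\|_1 = O(t^2/d)$, and therefore $\|(\rho_{\sfA\sfB}-\sigma_{\sfA\sfB})-(\tilde\rho_{\sfA\sfB}-\tilde\sigma_{\sfA\sfB})\|_1 = O(t^2/d)$.

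Finally I would split the advantage as
\[
|\Tr(M_{\sfA\sfB}(\rho_{\sfA\sfB}-\sigma_{\sfA\sfB}))|
\ \le\ |\Tr(M_{\sfA\sfB}(\tilde\rho_{\sfA\sfB}-\tilde\sigma_{\sfA\sfB}))| \ +\ \|(\rho_{\sfA\sfB}-\sigma_{\sfA\sfB})-(\tilde\rho_{\sfA\sfB}-\tilde\sigma_{\sfA\sfB})\|_1 ,
\]
bounding the second term by $O(t^2/d)$ using $0\preceq M_{\sfA\sfB}\preceq I$ together with the variational formula $\|X\|_1 = \max_{-I\preceq M\preceq I}\Tr(MX)$, so that $|\Tr(M_{\sfA\sfB}X)|\le\|X\|_1$ for Hermitian $X$. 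For the first term I would move the partial transpose from the states onto the measurement operator: since the partial transpose is self-dual under the trace pairing, $\Tr(M_{\sfA\sfB}(\tilde\rho_{\sfA\sfB}-\tilde\sigma_{\sfA\sfB})) = \Tr(M_{\sfA\sfB}^{\Gamma_\sfB}(\tilde\rho_{\sfA\sfB}^{\Gamma_\sfB}-\tilde\sigma_{\sfA\sfB}^{\Gamma_\sfB}))$, and now $0\preceq M_{\sfA\sfB}^{\Gamma_\sfB}\preceq I$ bounds this by $\|\tilde\rho_{\sfA\sfB}^{\Gamma_\sfB}-\tilde\sigma_{\sfA\sfB}^{\Gamma_\sfB}\|_1 = O(t^2/d)$ by \Cref{lem:PPT}. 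Adding the two contributions yields the claimed $O(t^2/2^n)$ bound.

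The genuinely hard estimate, $\|\tilde\rho_{\sfA\sfB}^{\Gamma_\sfB}-\tilde\sigma_{\sfA\sfB}^{\Gamma_\sfB}\|_1 = O(t^2/d)$, is already packaged in \Cref{lem:PPT} — there the work is the bipartite type expansion of \Cref{lem:type_bipartite} (which makes the partial transpose just interchange the summation labels $X$ and $Y$), the $(T,X,Y)\leftrightarrow(C,I,X',Y')$ double-counting reparametrization, and recognizing the residual matrix as a direct sum of Kneser-graph adjacency matrices whose trace norms are given by \Cref{lem:Kneser}. So in the present proof the only point requiring care is the order of operations: the collision truncation must be performed \emph{before} passing to the $\Gamma_\sfB$ picture, since the trace norm is not contractive under partial transpose and we must not take the partial transpose of the collision-ful components — which is exactly why the second term above is estimated directly in the untransposed basis.
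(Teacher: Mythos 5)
Your proposal is correct and follows essentially the same route as the paper's proof: reduce to PPT measurements via \Cref{lem:LOCC_PPT}, truncate to the collision-free type states $\tilde\rho_{\sfA\sfB},\tilde\sigma_{\sfA\sfB}$ at an $O(t^2/2^n)$ cost, shift the partial transpose onto the measurement operator, and invoke \Cref{lem:PPT}. The only (immaterial) difference is bookkeeping — the paper absorbs the truncation error at the level of LOCC indistinguishability of the states rather than splitting the trace expression — and your observation about performing the truncation before passing to the $\Gamma_\sfB$ picture is exactly the right point of care.
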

\begin{proof}
Let $\Tilde{\rho}_{\sfA\sfB}$ and $\Tilde{\sigma}_{\sfA\sfB}$ be defined as in~\Cref{lem:PPT}. By the collision bound, both $(\rho_{\sfA\sfB}, \Tilde{\rho}_{\sfA\sfB})$, and $(\sigma_{\sfA\sfB}, \Tilde{\sigma}_{\sfA\sfB})$ are $O(t^2/d)$-close in trace distance, which trivially implies their $O(t^2/d)$-LOCC indistinguishability. Thus, if suffices to show that $\Tilde{\rho}_{\sfA\sfB}$ and $\Tilde{\sigma}_{\sfA\sfB}$ are $O(t^2/d)$-LOCC indistinguishable. From~\Cref{lem:LOCC_PPT}, the LOCC distinguishing advantage of $\Tilde{\rho}_{\sfA\sfB}$ and $\Tilde{\sigma}_{\sfA\sfB}$ can be upper bounded by
\begin{align*}
& \sup_{\substack{M_{\sfA\sfB}: 0\preceq M_{\sfA\sfB} \preceq I \\ \land 0\preceq M_{\sfA\sfB}^{\Gamma_\sfB} \preceq I}} |\Tr(M_{\sfA\sfB}(\Tilde{\rho}_{\sfA\sfB} - \Tilde{\sigma}_{\sfA\sfB}))| \\
& \leq \sup_{\substack{M_{\sfA\sfB}: 0\preceq M_{\sfA\sfB}^{\Gamma_\sfB} \preceq I}} |\Tr(M_{\sfA\sfB}(\Tilde{\rho}_{\sfA\sfB} - \Tilde{\sigma}_{\sfA\sfB}))| \\
& = \sup_{\substack{M_{\sfA\sfB}: 0\preceq M_{\sfA\sfB}^{\Gamma_\sfB} \preceq I}} |\Tr(M_{\sfA\sfB}^{\Gamma_\sfB}(\Tilde{\rho}_{\sfA\sfB}^{\Gamma_\sfB} - \Tilde{\sigma}_{\sfA\sfB}^{\Gamma_\sfB}))| \\
& = \sup_{\substack{M_{\sfA\sfB}: 0\preceq M_{\sfA\sfB} \preceq I}} |\Tr(M_{\sfA\sfB}(\Tilde{\rho}_{\sfA\sfB}^{\Gamma_\sfB} - \Tilde{\sigma}_{\sfA\sfB}^{\Gamma_\sfB}))| \\
& = \frac{1}{2} \norm{ \Tilde{\rho}_{\sfA\sfB}^{\Gamma_\sfB} - \Tilde{\sigma}_{\sfA\sfB}^{\Gamma_\sfB} }_1.
\end{align*}
The first inequality holds because we omit the constraint $0\preceq M_{\sfA\sfB} \preceq I$. The first equality follows from the fact that $\Tr(P_{\sfA\sfB}Q_{\sfA\sfB}) = \Tr(P_{\sfA\sfB}^{\Gamma_\sfB}Q_{\sfA\sfB}^{\Gamma_\sfB})$ for all matrices $P_{\sfA\sfB},Q_{\sfA\sfB}$. Since $\Tilde{\rho}_{\sfA\sfB}^{\Gamma_\sfB}-\Tilde{\sigma}_{\sfA\sfB}^{\Gamma_\sfB}$ is Hermitian and has trace zero, the last equality follows from the variational definition of trace norm. Applying~\Cref{lem:PPT} completes the proof.
\end{proof}

To prove the separations in~\Cref{sec:QBB_QCCC}, we rely on the following generalization of~\Cref{thm:LOCC_main} which states the LOCC indistinguishability when $(\alice,\bob)$ are further given many i.i.d. input instances with different lengths.

\begin{corollary} \label{cor:LOCC_repetition}
For positive integers $s,t,n_1,n_2,\dots,n_s$, define
\[
\rho_{\sfA\sfB} := \bigotimes_{i=1}^s \Ex_{\ket{\psi_i} \gets \Haar_{n_i}} \left[ \left( \ketbra{\psi_i}{\psi_i}^{\otimes t} \right)_{\sfA_i} \otimes \left( \ketbra{\psi_i}{\psi_i}^{\otimes t} \right)_{\sfB_i} \right]
\]
\[
\sigma_{\sfA\sfB} := \bigotimes_{i=1}^s \Ex_{\ket{\psi_i} \gets \Haar_{n_i}} \left[ \left( \ketbra{\psi_i}{\psi_i}^{\otimes t} \right)_{\sfA_i} \right]
\otimes \bigotimes_{i=1}^s \Ex_{\ket{\phi_i} \gets \Haar_{n_i}} \left[ \left( \ketbra{\phi_i}{\phi_i}^{\otimes t} \right)_{\sfB_i} \right],
\]
where $\sfA = (\sfA_1,\sfA_2,\dots,\sfA_s)$ and $\sfB = (\sfB_1,\sfB_2,\dots,\sfB_s)$. Then $\rho_{\sfA\sfB}$ and $\sigma_{\sfA\sfB}$ are $O\left( \sum_{i=1}^s t^2/2^{n_i} \right)$-LOCC indistinguishable.
\end{corollary}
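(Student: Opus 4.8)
The plan is to lift \Cref{thm:LOCC_main} to the tensor-product setting, being careful that partial transpose does not interact well with trace distance. First I would write $\rho_{\sfA\sfB} = \bigotimes_{i=1}^s \rho^{(i)}_{\sfA_i\sfB_i}$ and $\sigma_{\sfA\sfB} = \bigotimes_{i=1}^s \sigma^{(i)}_{\sfA_i\sfB_i}$, where $\rho^{(i)}$ and $\sigma^{(i)}$ are the ``identical'' and ``i.i.d.'' states of length $n_i$ appearing in \Cref{thm:LOCC_main}. Exactly as in that proof, by \Cref{fact:avg-haar-random} together with the collision bound, each $\rho^{(i)}$ (resp.\ $\sigma^{(i)}$) is within trace distance $O(t^2/2^{n_i})$ of the collision-free matrix $\Tilde{\rho}^{(i)}$ (resp.\ $\Tilde{\sigma}^{(i)}$) from \Cref{lem:PPT} instantiated with $d = 2^{n_i}$. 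A standard hybrid argument for the trace distance of tensor products then gives $\TD(\rho_{\sfA\sfB}, \bigotimes_i\Tilde{\rho}^{(i)}) = O(\sum_i t^2/2^{n_i})$ and likewise for $\sigma_{\sfA\sfB}$; since trace distance dominates LOCC indistinguishability, it suffices to bound the LOCC distinguishing advantage between $\Tilde{\rho}_{\sfA\sfB} := \bigotimes_i\Tilde{\rho}^{(i)}$ and $\Tilde{\sigma}_{\sfA\sfB} := \bigotimes_i\Tilde{\sigma}^{(i)}$. Applying \Cref{lem:LOCC_PPT} and repeating verbatim the chain of (in)equalities from the proof of \Cref{thm:LOCC_main} (drop the constraint $0\preceq M\preceq I$, use $\Tr(PQ)=\Tr(P^{\Gamma_\sfB}Q^{\Gamma_\sfB})$, and invoke the variational formula for the Hermitian, trace-zero operator $\Tilde{\rho}_{\sfA\sfB}^{\Gamma_\sfB}-\Tilde{\sigma}_{\sfA\sfB}^{\Gamma_\sfB}$) bounds this advantage by $\tfrac12\norm{\Tilde{\rho}_{\sfA\sfB}^{\Gamma_\sfB}-\Tilde{\sigma}_{\sfA\sfB}^{\Gamma_\sfB}}_1$.

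Next I would exploit that partial transpose with respect to $\sfB = (\sfB_1,\dots,\sfB_s)$ distributes over the tensor product, so $\Tilde{\rho}_{\sfA\sfB}^{\Gamma_\sfB} = \bigotimes_i(\Tilde{\rho}^{(i)})^{\Gamma_{\sfB_i}}$ and $\Tilde{\sigma}_{\sfA\sfB}^{\Gamma_\sfB} = \bigotimes_i(\Tilde{\sigma}^{(i)})^{\Gamma_{\sfB_i}} = \bigotimes_i\Tilde{\sigma}^{(i)}$, using $(\Tilde{\sigma}^{(i)})^{\Gamma_{\sfB_i}} = \Tilde{\sigma}^{(i)}$ established inside the proof of \Cref{lem:PPT}. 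Telescoping and using multiplicativity of $\norm{\cdot}_1$ under tensor products, $\norm{\bigotimes_i(\Tilde{\rho}^{(i)})^{\Gamma_{\sfB_i}} - \bigotimes_i\Tilde{\sigma}^{(i)}}_1 \leq \sum_{j} \left(\prod_{i<j}\norm{\Tilde{\sigma}^{(i)}}_1\right)\norm{(\Tilde{\rho}^{(j)})^{\Gamma_{\sfB_j}}-(\Tilde{\sigma}^{(j)})^{\Gamma_{\sfB_j}}}_1\left(\prod_{i>j}\norm{(\Tilde{\rho}^{(i)})^{\Gamma_{\sfB_i}}}_1\right)$. Here $\norm{\Tilde{\sigma}^{(i)}}_1 = 1$; the middle factor is $O(t^2/2^{n_j})$ directly by \Cref{lem:PPT}; and $\norm{(\Tilde{\rho}^{(i)})^{\Gamma_{\sfB_i}}}_1 \leq \norm{\Tilde{\sigma}^{(i)}}_1 + \norm{(\Tilde{\rho}^{(i)})^{\Gamma_{\sfB_i}}-(\Tilde{\sigma}^{(i)})^{\Gamma_{\sfB_i}}}_1 \leq 1 + O(t^2/2^{n_i})$, again by \Cref{lem:PPT}. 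We may assume without loss of generality that $\sum_i t^2/2^{n_i} \leq 1$, since otherwise the target bound exceeds $1$ and the statement is vacuous; under this assumption $\prod_{i>j}(1+O(t^2/2^{n_i})) \leq \exp(O(\sum_i t^2/2^{n_i})) = O(1)$. The telescoped sum is then $O(\sum_j t^2/2^{n_j})$, and adding back the collision-reduction loss yields the claimed $O(\sum_i t^2/2^{n_i})$-LOCC indistinguishability of $\rho_{\sfA\sfB}$ and $\sigma_{\sfA\sfB}$.

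The main obstacle — and essentially the only step requiring care — is that the partial transpose does \emph{not} preserve the trace norm, so one cannot simply transpose the collision-bound estimates: an inequality of the form $\norm{(\rho^{(i)})^{\Gamma_{\sfB_i}} - (\Tilde{\rho}^{(i)})^{\Gamma_{\sfB_i}}}_1 = O(t^2/2^{n_i})$ is false in general. The fix is precisely the ordering above: carry out the collision-bound reduction at the level of the original, non-transposed states, where trace distance (and hence LOCC indistinguishability) is genuinely sub-additive under tensoring, and only afterwards pass to partial transposes, where the telescoping is controlled by the cheap estimate $\norm{(\Tilde{\rho}^{(i)})^{\Gamma_{\sfB_i}}}_1 = 1 + O(t^2/2^{n_i})$ that \Cref{lem:PPT} already supplies. (As an aside, one could also observe that $(\rho^{(i)})^{\Gamma_{\sfB_i}} = \Ex_{\ket{\psi_i}\gets\Haar_{n_i}}[\ketbra{\psi_i}{\psi_i}^{\otimes t}_{\sfA_i}\otimes\ketbra{\overline{\psi_i}}{\overline{\psi_i}}^{\otimes t}_{\sfB_i}] \succeq 0$ by conjugation-invariance of the Haar measure, giving $\norm{(\rho^{(i)})^{\Gamma_{\sfB_i}}}_1 = 1$ exactly; this is cleaner but still does not let one bypass the collision reduction, since bounding the \emph{difference} of partial transposes genuinely requires \Cref{lem:PPT}.)
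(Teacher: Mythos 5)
Your proof is correct, but it takes a genuinely different route from the paper's. The paper proves~\Cref{cor:LOCC_repetition} by a hybrid argument at the level of LOCC \emph{adversaries}: it defines hybrids $\xi_k$ in which the first $k$ blocks are ``identical'' and the remaining blocks are ``i.i.d.'', and reduces the distinguishability of adjacent hybrids to the single-block statement of~\Cref{thm:LOCC_main} by letting a reduction $(\alice',\bob')$ locally simulate every block other than the $(k+1)$-st --- sampling a classical description of a Haar state and transmitting it classically for the shared blocks, sampling locally for the independent ones --- and embedding its actual input into registers $(\sfA_{k+1},\sfB_{k+1})$. That argument uses~\Cref{thm:LOCC_main} as a black box and never re-enters the PPT machinery. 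You instead re-open the PPT relaxation for the product state: you perform the collision-bound reduction factor by factor \emph{before} transposing, and then telescope $\norm{\bigotimes_i(\Tilde{\rho}^{(i)})^{\Gamma_{\sfB_i}} - \bigotimes_i\Tilde{\sigma}^{(i)}}_1$ using multiplicativity of the trace norm under tensor products together with the estimates $\norm{(\Tilde{\rho}^{(j)})^{\Gamma_{\sfB_j}}-(\Tilde{\sigma}^{(j)})^{\Gamma_{\sfB_j}}}_1 = O(t^2/2^{n_j})$ and $\norm{(\Tilde{\rho}^{(i)})^{\Gamma_{\sfB_i}}}_1 \leq 1+O(t^2/2^{n_i})$ supplied by~\Cref{lem:PPT}. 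Both arguments are sound, and your ordering of steps --- collision bound first, partial transpose second --- correctly sidesteps the one genuine pitfall, namely that $\Gamma_\sfB$ is not trace-norm contractive. What the paper's simulation approach buys is modularity: it would lift any single-block LOCC indistinguishability statement to the product setting without needing the underlying PPT bound to compose. What your approach buys is that you never need to argue that sampling and classically transmitting an (approximate) description of a Haar state is a legitimate LOCC operation for information-theoretic adversaries --- a point the paper handles only in a footnote. (Minor remark: in your aside, positivity of $(\rho^{(i)})^{\Gamma_{\sfB_i}}$ follows simply because it is a mixture of the PSD operators $\ketbra{\psi_i}{\psi_i}^{\otimes t}\otimes\ketbra{\overline{\psi_i}}{\overline{\psi_i}}^{\otimes t}$; no invariance of the Haar measure under conjugation is needed.)
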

\begin{proof}
For $0\leq k \leq s$, we define the (hybrid) state
\[
\xi_k := \bigotimes_{i = 1}^k \Ex_{\ket{\psi_i} \gets \Haar_{n_i}} \left[ \left( \ketbra{\psi_i}{\psi_i}^{\otimes t} \right)_{\sfA_i} \otimes \left( \ketbra{\psi_i}{\psi_i}^{\otimes t} \right)_{\sfB_i} \right]  \otimes \bigotimes_{j = k+1}^s \left( \Ex_{\ket{\psi_j} \gets \Haar_{n_j}} \left[ \ketbra{\psi_j}{\psi_j}^{\otimes t}_{\sfA_j} \right] \otimes \Ex_{\ket{\phi_j} \gets \Haar_{n_j}} \left[ \ketbra{\phi_j}{\phi_j}^{\otimes t}_{\sfB_j} \right]\right).
\]
Note that $\xi_0 = \rho$ and $\xi_s = \sigma$. By the triangle inequality, we have
\begin{align*}
\sup_{(\alice,\bob)} \left| \Pr[(\alice,\bob)(\rho) = 1] - \Pr[(\alice,\bob)(\sigma) = 1] \right| 
\leq \sum_{k=0}^{s-1} \sup_{(\alice,\bob)}\left| \Pr[(\alice,\bob)(\xi_k) = 1] - \Pr[(\alice,\bob)(\xi_{k+1}) = 1] \right|,
\end{align*}
where the supremum is over all LOCC adversary. We will show that for each $k$,
\begin{multline*}
\sup_{(\alice,\bob)}\left| \Pr[(\alice,\bob)(\xi_k) = 1] - \Pr[(\alice,\bob)(\xi_{k+1}) = 1] \right|
= \sup_{(\alice,\bob)} \bigg| \Pr[(\alice,\bob)\left( \Ex_{\ket{\psi} \gets \Haar_{n_{k+1}}} \left[ \ketbra{\psi}{\psi}^{\otimes t}_{\sfA} \otimes \ketbra{\psi}{\psi}_{\sfB}^{\otimes t} \right] \right) = 1]  \\
- \Pr[(\alice,\bob)\left( \Ex_{\ket{\psi} \gets \Haar_{n_{k+1}}} \left[ \ketbra{\psi}{\psi}^{\otimes t}_{\sfA} \right]
\otimes \Ex_{\ket{\phi} \gets \Haar_{n_{k+1}}} \left[ \ketbra{\phi}{\phi}^{\otimes t}_{\sfB} \right] \right) = 1] \bigg|,
\end{multline*}
which then completes the proof by~\Cref{thm:LOCC_main}. It is easy to see that the LHS is at least as large as the RHS since $(\alice,\bob)$ on the LHS can simply discard all input registers except for $(\sfA_{k+1},\sfB_{k+1})$. To see that the RHS is at least as large as the LHS, for every $(\alice,\bob)$ on the LHS, we define $(\alice',\bob')$ on the RHS based on $(\alice,\bob)$ as follows. For $0\leq i \leq k$, $\alice'$ samples the classical description of i.i.d. $n_i$-qubit Haar states $\ket{\psi_i}$ and sends them to $\bob'$.\footnote{Note that $(\alice',\bob')$ are information-theoretic and thus the description can approximate the Haar state with arbitrarily small error.} They then prepare $t$ copies of the quantum state $\ket{\psi_i}$ according to the description on registers $\sfA_i$ and $\sfB_i$ respectively. For $k+2 \leq j \leq s$, $\alice'$ and $\bob'$ each locally sample $t$ copies of i.i.d. $n_j$-qubit Haar state $\ket{\psi_j}$ and $\ket{\phi_j}$ on registers $\sfA_j$ and $\sfB_j$ respectively. They then embed their input on registers $\sfA_{k+1}$ and $\sfB_{k+1}$, and run $(\alice,\bob)$ respectively. Since the input of $(\alice,\bob)$ is exactly $\xi_k$ or $\xi_{k+1}$, $(\alice',\bob')$ have the same advantage as that of $(\alice,\bob)$.
\end{proof}

Moreover, we have the following corollary regarding the multi-party LOCC indistinguishability.
\begin{corollary} \label{cor:LOCC_multi-party}
Let 
$\rho_\sfP :=  \Ex_{\ket{\psi} \gets \Haar_n} \left[ \bigotimes_{i=1}^m \ketbra{\psi}{\psi}^{\otimes t}_{\sfP_i} \right]$ 
and 
$\sigma_\sfP := \bigotimes_{i=1}^m \Ex_{\ket{\psi_i} \gets \Haar_n} \left[ \ketbra{\psi_i}{\psi_i}^{\otimes t}_{\sfP_i} \right]$ where register $\sfP = (\sfP_1,\sfP_2,\dots,\sfP_m)$. Then $\rho_\sfP$ and $\sigma_\sfP$ are $(m,O( m^2t^2/2^n ))$-LOCC indistinguishable.
\end{corollary}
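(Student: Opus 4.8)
The plan is to prove \Cref{cor:LOCC_multi-party} by a divide-and-conquer hybrid argument that reduces, one step at a time, to the two-party LOCC Haar indistinguishability of \Cref{thm:LOCC_main}, in the same spirit as the proof of \Cref{cor:LOCC_repetition}. Fix the order $\sfP_1,\dots,\sfP_m$ and, for a partition of $[m]$ into contiguous blocks, consider the state in which, independently for each block $B$, one samples a fresh $\ket{\psi_B}\gets\Haar_n$ and gives every party in $B$ the state $\ket{\psi_B}^{\otimes t}$. The partition $\{[m]\}$ yields $\rho_\sfP$ and the partition into singletons yields $\sigma_\sfP$. I will connect the two by $m-1$ hybrids, each a single \emph{split}: pick a block $B$ of size $g\ge 2$, write $B = B_1\uplus B_2$ with $|B_1| = \lceil g/2\rceil$, $|B_2| = \lfloor g/2\rfloor$, and replace the ``all of $B$ shares $\ket{\psi_B}$'' configuration by ``$B_1$ keeps $\ket{\psi_B}$, $B_2$ gets a fresh independent Haar state'', leaving all other blocks unchanged. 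Crucially, I will organize the splits as the internal nodes of a balanced binary recursion tree on $[m]$; this choice of schedule is what eventually produces the $m^2$ factor (a naive left-to-right hybrid that peels off one party at a time would only give $m^3$).

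For one split step I would bound the advantage of any $m$-party LOCC adversary $(\sfP_1,\dots,\sfP_m)$ on the two adjacent hybrids by the advantage of a two-party LOCC adversary, exactly as in \Cref{cor:LOCC_repetition}. Regard $\alice := B_1$ together with all parties outside $B$ as one side, and $\bob := B_2$ as the other: an $m$-party protocol in which every pair is spatially separated and unentangled is, in particular, a legal two-party LOCC protocol across this cut (messages internal to $\alice$ or to $\bob$ are free; messages crossing the cut become $\alice$--$\bob$ classical communication). Because the blocks other than $B$ carry fresh i.i.d.\ Haar states that are identical in both adjacent hybrids and independent of everything else, the computationally unbounded $\alice$ can sample their classical descriptions and prepare them locally; what remains is the bipartite challenge in which $\alice$ holds $|B_1| t$ copies and $\bob$ holds $|B_2| t$ copies of a common Haar state, versus the same with $\bob$'s copies coming from an independent Haar state.

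When $g$ is even this bipartite challenge is precisely the pair $(\rho_{\sfA\sfB},\sigma_{\sfA\sfB})$ of \Cref{thm:LOCC_main} with the parameter $t$ there set to $(g/2)t$, so the advantage is $O((gt)^2/2^n)$. When $g$ is odd the two sides hold unequal numbers of copies; here I would invoke the straightforward generalization of \Cref{thm:LOCC_main}, whose proof copies that of \Cref{lem:PPT} verbatim except that the type $T$ (now of size $gt$) is split into parts of sizes $|B_1| t$ and $|B_2| t$, so $K_{C,I}$ becomes isospectral to the adjacency matrix of a Kneser graph with possibly unequal parameters and the collision corrections are taken for the sizes $|B_1| t$, $|B_2| t$ and their product; the bound is again $O((gt)^2/2^n)$. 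Summing over all splits: at depth $\ell$ of the recursion tree there are $O(2^\ell)$ blocks of size $O(m/2^\ell)$, so the contribution of depth $\ell$ is $O(2^\ell)\cdot O((mt/2^\ell)^2/2^n) = O(m^2 t^2/(2^\ell\,2^n))$, and summing the geometric series over $\ell\ge 0$ gives the claimed $O(m^2 t^2/2^n)$; the triangle inequality over the hybrids finishes the proof.

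I expect the main obstacle to be (i) the faithful emulation of the $m$-party LOCC protocol as a two-party protocol across the cut $\alice \mid \bob$ with all untouched blocks generated locally by $\alice$ --- this is exactly the place where the no-pre-shared-entanglement assumption and the unboundedness of the parties are used, mirroring \Cref{cor:LOCC_repetition} --- and (ii) checking the unequal-split version of the Kneser-graph estimate of \Cref{lem:PPT}; neither introduces a new idea, but the recursion schedule must be chosen as above for the bound to come out as $O(m^2 t^2/2^n)$ rather than $O(m^3 t^2/2^n)$.
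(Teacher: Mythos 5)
Your proposal is correct and follows essentially the same route as the paper: the paper also proves this corollary via a binary-splitting hybrid argument, with hybrids $\xi_k$ partitioning the $m$ parties into $2^{r-k}$ blocks of $2^k$ parties sharing a common Haar state, each individual split reduced to \Cref{thm:LOCC_main} with $2^{k-1}t$ copies per side by exactly your grouping-across-the-cut argument, and the same geometric-series accounting giving $O(m^2t^2/2^n)$. The only (minor) difference is that the paper avoids your unequal-split generalization of the Kneser-graph bound by first padding $m$ up to a power of two, using monotonicity of LOCC indistinguishability, so that every split is exactly even.
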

\begin{proof}
Similar to the proof of~\Cref{cor:LOCC_repetition}, we prove it via a hybrid argument. Without loss of generality, we can assume that $m$ is a power of $2$, \ie $m = 2^r$. Otherwise, by the monotonicity of LOCC indistinguishability, we can instead consider the smallest power of $2$ that is greater than or equal to $m$, which only increases the advantage by a constant factor. Define the following states for $k \in \set{ 0, 1, \dots, r }$:
\[
\xi_k := \bigotimes_{i = 0}^{2^{r-k} - 1} \Ex_{\ket{\psi_i} \gets \Haar_n} \left[ \bigotimes_{j = 1}^{2^k} \ketbra{\psi_i}{\psi_i}^{\otimes t}_{\sfP_{i2^k + j}} \right].
\]
For each $\xi_k$, there are $2^{r-k}$ blocks, each corresponding to a Haar state. Within the $i$-th block, there are $2^k$ parties holding $t$-copies of the same state $\ket{\psi_i}$. By construction, $\xi_r = \rho$ and $\xi_0 = \sigma$. We will show that the LOCC distinguishing advantage between $\xi_k$ and $\xi_{k+1}$ is $O\left( \frac{m2^kt^2}{2^n} \right)$. This would then implies that the LOCC distinguishing advantage between $\rho$ and $\sigma$ is $\sum_{k = 0}^{r-1} O\left( \frac{m2^kt^2}{2^n} \right) = O\left( \frac{m^2t^2}{2^n} \right)$.

To prove the closeness between $\xi_k$ and $\xi_{k+1}$, we introduce sub-hybrids $\xi_{k,\ell}$ for $\ell \in \set{ 0, 1, \dots, 2^{r-k} }$. In $\xi_{k,\ell}$, the first $\ell$ blocks are all ``split in half''. That is, for any $i\in\set{1,2,\dots,\ell}$, in the $i$-th block, the first $2^{k-1}$ parties are holding $t$-copies of $\ket{\psi_{i,0}}$ and the other $2^{k-1}$ parties are holding $t$-copies of $\ket{\psi_{i,1}}$. For any $i\in\set{\ell+1,\ell+2,\dots,2^{r-k}}$, in the $i$-th block, all $2^k$ parties are holding $t$-copies of the same $\ket{\psi_i}$. Hence, the only difference between $\xi_{k,\ell}$ and $\xi_{k,\ell+1}$ is in the $(\ell+1)$-th block --- in the former all $2^k$ parties are holding $t$-copies of the same $\ket{\psi_{\ell+1}}$, whereas in the latter the first $2^{k-1}$ parties are holding $t$-copies of $\ket{\psi_{\ell+1,0}}$ and the other $2^{k-1}$ parties are holding $t$-copies of $\ket{\psi_{\ell+1,1}}$. Now, we can view the first $2^{k-1}$ parties and the other $2^{k-1}$ parties as two entities. By~\Cref{thm:LOCC_main} and setting the number of copies each party receives as $2^{k-1}t$, the LOCC distinguishing advantage between $\xi_{k,\ell}$ and $\xi_{k,\ell+1}$ is $O\left( \frac{(2^{k-1}t)^2}{2^n} \right)$. This implies that the LOCC distinguishing advantage between $\xi_{k}$ and $\xi_{k+1}$ is $O\left( \frac{2^{r-k}(2^{k-1}t)^2}{2^n} \right) = O\left( \frac{2^{r+k}t^2}{2^n} \right) = O\left( \frac{m2^kt^2}{2^n} \right)$ as desired.
\end{proof}

\begin{remark} \label{remark:Harrow}
We compare~\Cref{cor:LOCC_multi-party} with~\cite[Theorem~8]{Har23}. Although both theorems address multi-party LOCC indistinguishability, they are incomparable for the following reasons. \Cref{cor:LOCC_multi-party} is stronger in the sense that each party receives $t$ copies of the states, as opposed to the single-copy setting in~\cite[Theorem~8]{Har23}. Moreover, when $t = 1$, \Cref{cor:LOCC_multi-party} implies an $O(m^2/2^n)$ bound which is better than the $O(m^2/\sqrt{2^n})$ bound given by~\cite[Theorem~8]{Har23}. On the other hand, the statement of~\cite[Theorem~8]{Har23} is more general since their bound holds for a large family of input states. While the input states in~\Cref{cor:LOCC_multi-party} are fixed to $\rho$ and $\sigma$.
\end{remark}

\subsection{An Optimal LOCC Haar distinguisher}
We present an (optimal) LOCC Haar distinguisher with advantage $\Omega(t^2/2^n)$. Hence, the upper bound in~\Cref{thm:LOCC_main} is tight. 
\begin{theorem} \label{thm:LOCC_lower_bound}
There exists an LOCC adversary that distinguishes $\rho_{\sfA\sfB} := \Ex_{\ket{\psi} \gets \Haar_n} \big[ \bigotimes_{i=1}^t \ketbra{\psi}{\psi}_{\sfA_i} \otimes$
$\bigotimes_{i=1}^t \ketbra{\psi}{\psi}_{\sfB_i} \big]$ 
from 
$\sigma_{\sfA\sfB} := \Ex_{\ket{\psi} \gets \Haar_n} \left[ \bigotimes_{i=1}^t \ketbra{\psi}{\psi}_{\sfA_i} \right]
\otimes \Ex_{\ket{\phi} \gets \Haar_n} \left[ \bigotimes_{i=1}^t \ketbra{\phi}{\phi}_{\sfB_i} \right]$ with advantage $\Omega(t^2/2^n)$, where $\sfA = (\sfA_1,\dots,\sfA_t)$ and $\sfB = (\sfB_1,\dots,\sfB_t)$. Moreover, the running time is polynomial in $t$ and $n$.
\end{theorem}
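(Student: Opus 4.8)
The plan is to analyze the elementary LOCC distinguisher sketched in the technical overview and to verify that its advantage is $\Omega(t^2/2^n)$. Writing $d := 2^n$, the distinguisher $(\alice,\bob)$ is: $\alice$ measures each of her registers $\sfA_1,\dots,\sfA_t$ in the computational basis to obtain outcomes $x_1,\dots,x_t \in [d]$ and sends the tuple to $\bob$ over the classical channel; $\bob$ measures each $\sfB_1,\dots,\sfB_t$ in the computational basis to obtain $y_1,\dots,y_t$, and outputs $1$ iff the $2t$ values $x_1,\dots,x_t,y_1,\dots,y_t$ are \emph{not} pairwise distinct. This is manifestly a legitimate LOCC protocol (no shared entanglement, local operations, one round of classical communication) running in time $\poly(t,n)$. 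I will restrict attention to the regime $t^2 \leq 2^n$, which is the only regime in which the claim ``advantage $\Omega(t^2/2^n)$'' is non-vacuous and which always holds in our applications, where $t$ is a fixed polynomial and $n=\omega(\log\secp)$.

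First I would compute exactly the probability that $\bob$ outputs $0$ on each input, passing through the type basis. By~\Cref{fact:avg-haar-random}, $\rho_{\sfA\sfB} = \Ex_{T\gets[0:2t]^d}\ketbra{T}{T}$, and measuring $\ketbra{T}{T}$ in the computational basis returns a uniformly random ordering of the size-$2t$ multiset $\setT(T)$; hence a collision occurs iff $T$ fails to be collision-free, so $\Pr[(\alice,\bob)(\rho_{\sfA\sfB})=0] = \Pr_{T\gets[0:2t]^d}[T\text{ collision-free}] = \binom{d}{2t}/\binom{d+2t-1}{2t}$. Likewise $\sigma_{\sfA\sfB} = \Ex_{S_A\gets[0:t]^d}\ketbra{S_A}{S_A}_\sfA \otimes \Ex_{S_B\gets[0:t]^d}\ketbra{S_B}{S_B}_\sfB$, and the $2t$ outcomes are pairwise distinct iff $S_A$ and $S_B$ are each collision-free \emph{and} have disjoint supports; a short count over $t$-subsets yields $\Pr[(\alice,\bob)(\sigma_{\sfA\sfB})=0] = \binom{d}{t}\binom{d-t}{t}/\binom{d+t-1}{t}^2$. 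Expanding the binomials into products of linear factors and cancelling, the two quantities satisfy the clean identity $\Pr[(\alice,\bob)(\sigma_{\sfA\sfB})=0] = \Pr[(\alice,\bob)(\rho_{\sfA\sfB})=0]\cdot\prod_{i=0}^{t-1}\tfrac{d+t+i}{d+i}$, which in particular makes precise the intuition that identical Haar states collide \emph{strictly more often} than independent ones.

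Then the distinguishing advantage is $\bigl|\Pr[(\alice,\bob)(\rho_{\sfA\sfB})=1] - \Pr[(\alice,\bob)(\sigma_{\sfA\sfB})=1]\bigr| = \Pr[(\alice,\bob)(\rho_{\sfA\sfB})=0]\cdot\bigl(\prod_{i=0}^{t-1}\tfrac{d+t+i}{d+i} - 1\bigr)$, and I would finish with two elementary estimates: (i) $\prod_{i=0}^{t-1}\tfrac{d+t+i}{d+i} - 1 \geq \sum_{i=0}^{t-1}\tfrac{t}{d+i} \geq \tfrac{t^2}{2d}$, using $\prod_i(1+a_i)\geq 1+\sum_i a_i$ with $a_i=\tfrac{t}{d+i}$ and $t\leq d$; and (ii) $\Pr[(\alice,\bob)(\rho_{\sfA\sfB})=0] = \prod_{j=1}^{2t-1}\tfrac{d-j}{d+j} \geq \prod_{j=1}^{2t-1}(1-\tfrac{2j}{d}) \geq 1 - \tfrac{2t(2t-1)}{d} \geq \tfrac12$ in the regime $t^2\leq d/8$. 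Multiplying gives advantage $\geq \tfrac{t^2}{4\cdot 2^n} = \Omega(t^2/2^n)$, matching the upper bound of~\Cref{thm:LOCC_main} and establishing its tightness; the polynomial running time is immediate. The main obstacle is not conceptual but bookkeeping: correctly evaluating the two ``no-collision'' probabilities in the type basis (working directly with i.i.d.\ samples from a Haar state is tempting but much messier, as it produces elementary symmetric polynomials of the measurement distribution) and getting the direction of the inequality right. A secondary point worth flagging is that the ``local-only'' variant of this distinguisher --- in which the parties never communicate and merely check for within-party collisions --- has advantage only $O(t^4/2^{2n})$, so the single round of classical communication used to detect cross-collisions is genuinely essential.
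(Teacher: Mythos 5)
Your proposal is correct and follows essentially the same route as the paper's proof: the identical collision-test distinguisher, the same reduction to type-basis counting yielding the two binomial expressions $\binom{d}{2t}/\binom{d+2t-1}{2t}$ and $\binom{d}{t}\binom{d-t}{t}/\binom{d+t-1}{t}^2$, and the same factorization of the gap as (no-collision probability under $\rho$) times $\bigl(\prod_{i=0}^{t-1}(1+\tfrac{t}{d+i})-1\bigr)$. The only differences are cosmetic (your adversary outputs $1$ on collision rather than on no-collision, and you make the constants and the regime $t^2 \le d/8$ explicit where the paper leaves them inside $O(\cdot)$ and $\Omega(\cdot)$).
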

\begin{proof}
The LOCC adversary $(\alice,\bob)$ is defined as follows. For $1\leq i\leq t$, $\alice$ measures register $\sfA_i$ in the computational basis and obtains the outcome $a_i\in\bit^n$. Similarly, $\bob$ measures every $\sfB_i$ in the computational basis and obtains $b_i$. Then $\bob$ sends $b_1,b_2,\dots,b_t$ to $\alice$, and $\alice$ outputs $1$ if there is no collision among $a_1,a_2,\dots,a_t,b_1,b_2,\dots,b_t$. Let $d:=2^n$. The distinguishing advantage can be lower bounded as follows:
\begin{align*}
& \Pr[(\alice,\bob)(\sigma_{\sfA\sfB}) = 1] - \Pr[(\alice,\bob_{\sfA\sfB})(\rho) = 1] \\
& = \Pr_{T_1,T_2\gets[0:t]^d}[T_1,T_2 \text{ are collision-free}\ \land T_1,T_2 \text{ are disjoint}] - \Pr_{T\gets[0:2t]^d}\left[ T \text{ is collision-free} \right] \\
& = \frac{\binom{d}{t}\binom{d-t}{t}}{\binom{d+t-1}{t}^2} - \frac{\binom{d}{2t}}{\binom{d+2t-1}{2t}} \\
& = \frac{\binom{d}{2t}}{\binom{d+2t-1}{2t}} \cdot \left( \frac{\binom{d}{t}\binom{d-t}{t}\binom{d+2t-1}{2t}}{\binom{d+t-1}{t}^2\binom{d}{2t}} - 1 \right) \\
& = \prod_{i=0}^{2t-1} \left( 1 - \frac{2t-1}{d+i} \right) \cdot \left( \prod_{i=0}^{t-1} \left( 1 + \frac{t}{d+i} \right) - 1 \right) \\
& = \left( 1 - O\left(\frac{t^2}{d}\right) \right) \cdot \left( 1 + \Omega\left(\frac{t^2}{d}\right) - 1 \right)
= \Omega\left(\frac{t^2}{d}\right),
\end{align*}
where the first equality follows from the fact that $\sigma_{\sfA\sfB} = \Ex_{T_1,T_2\gets[0:t]^d} \left[ \ketbra{T_1}{T_1}_\sfA \otimes \ketbra{T_2}{T_2}_\sfB \right]$ and $\rho_{\sfA\sfB} = \Ex_{T \gets[0:2t]^d} \left[ \ketbra{T}{T}_{\sfA\sfB} \right]$.  
\end{proof}

\section{Impossibilities of QCCC Primitives in the CHS model} \label{sec:Imp_QCCC_CHS}

\sloppy In this section, we investigate the impossibility of \emph{statistically} secure quantum-computation classical-communication (QCCC) primitives in the CHS model. A recent work by Khurana and Tomer~\cite{KT24} proposed the notion of \emph{one-way puzzles}, which involves a QPT sampler that outputs a classical puzzle-solution pair $(\puz,\sol)$ satisfying a relation, which may not be efficiently computable. In addition, they show that many QCCC primitives imply one-way puzzles. In a very recent work by~Chung, Goldin and Gray~\cite{CGG24}, the authors observed that certain QCCC primitives possess an efficient verification algorithm, and they defined a special class of one-way puzzles called \emph{efficiently verifiable one-way puzzles}. In particular, since we are considering impossibility results, we will focus on the following (fairly weak) form of one-way puzzles in the CHS model.

\begin{definition}[One-way puzzles in the CHS model]
A one-way puzzle is a pair of sampling and verification algorithms $(\Samp, \Ver)$ with the following syntax. Let $q = q(\secp)$ be an arbitrary polynomial and $n = n(\secp) = \omega(\log(\secp))$.

\begin{itemize}
\item $\Samp(1^\secp, \rho) \to (\puz, \sol)$, is a (possibly time-inefficient) quantum algorithm that on input the security parameter and a $2^{nq}$-dimensional quantum state $\rho$ (ideally, $\rho$ will be $q$ copies of an $n$-qubit Haar state), outputs a pair of \emph{classical} strings $(\puz, \sol)$. We refer to $\puz$ as the \emph{puzzle} and $\sol$ as its \emph{solution}. 
\item $\Ver(\puz, \sol, \rho) \to \top$ or $\bot$, is a (possibly time-inefficient) quantum algorithm that on input any pair of classical strings $(\puz, \sol)$ and a $2^{nq}$-dimensional quantum state $\rho$ (ideally, $\rho$ is the same state used to generate the puzzle), outputs either $\top$ (indicating accept) or $\bot$ (indicating reject).

\end{itemize}
These satisfy the following properties.
\begin{itemize}
\item \textbf{Completeness.} The correctness guarantee states that as long as $\Samp$ and $\Ver$ get the same copy of $\rho$, which in turn is $q$ copies of an $n$-qubit Haar state, the output of the sampler will pass the verification with overwhelming probability. That is,
    \[
    \Pr\left[ 
    \Ver(\puz, \sol, \ket{\psi}^{\otimes q}) = \top: 
    \substack{ \ket{\psi}\gets\Haar_n, \\ (\puz, \sol) \gets \Samp(1^\secp, \ket{\psi}^{\otimes q}) } 
    \right] 
    = 1 - \negl(\secp).
    \]
    \item \textbf{Security.} Given $\puz$, it is statistically infeasible to find $\sol$ satisfying $\Ver(\puz, \sol) = \top$, \ie for every unbounded adversary $\alice$,\footnote{Note that the security definition is weak in the sense that the adversary is \emph{not} given any copy of the common Haar state.}
    \[
    \Pr\left[
    \Ver(\puz, \sol', \ket{\psi}^{\otimes q}) = \top:
    \substack{ \ket{\psi}\gets\Haar_n, \\ (\puz, \sol) \gets \Samp(1^\secp, \ket{\psi}^{\otimes q}), \\
    \sol' \gets \alice(\puz) }
    \right] = \negl(\secp).
    \]
\end{itemize}
\end{definition}

\begin{definition}[QCCC key agreements in the CHS model]
A QCCC key agreement in the CHS model is a two-party interactive protocol consisting of a pair of QPT algorithms $(\alice,\bob)$ with their communication being classical. Let $q = q(\secp)$ be an arbitrary polynomial and $n = n(\secp) = \omega(\log(\secp))$. $\alice,\bob$ each take as input the security parameter $1^\secp$ and a $2^{nq}$-dimensional quantum state (ideally, $\alice$ and $\bob$ each obtain $q$ copies of an $n$-qubit Haar state), and outputs classical keys $k_\alice\in\bit$ and $k_\bob\in\bit$ respectively.\footnote{Since we are proving negative results, we assume that the key space of the key agreement is $\bit$, \ie a bit agreement.}
\begin{itemize}
    \item \textbf{Completeness.} There exists a negligible function $\negl$ such that for all $\secp\in\N$,
    \[
    \Pr\left[
    k_\alice = k_\bob: 
    \substack{ \ket{\psi}\gets\Haar_n, \\
    (k_\alice, k_\bob, \tau) \gets \langle \alice(1^\secp, \ket{\psi}^{\otimes q}), \bob(1^\secp, \ket{\psi}^{\otimes q}) \rangle }
    \right] 
    \geq 1 - \negl(\secp),
    \]
    where $\langle\alice,\bob\rangle$ denote the execution of the protocol and $\tau$ is the transcript of the protocol.
    \item \textbf{Statistical Security.} For every computationally unbounded eavesdropper $\eve$, there exists a negligible function $\negl$ such that for all $\secp\in\N$,\footnote{Similarly, we consider a weak security definition in which the eavesdropper is not given any common Haar state.}
    \[
    \Pr\left[
    k_\eve = k_\bob:
    \substack{ \ket{\psi}\gets\Haar_n, \\
    (k_\alice, k_\bob, \tau) \gets \langle \alice(1^\secp, \ket{\psi}^{\otimes q}), \bob(1^\secp, \ket{\psi}^{\otimes q}) \rangle, \\
    k_\eve \gets \eve(1^\secp, \tau) }
    \right] \leq \frac{1}{2} + \negl(\secp).
    \]
\end{itemize}
\end{definition}

\noindent There are various definitions of binding for quantum commitments in literature. Since we are showing impossibility, we focus on sum-binding, which is implied by binding for classical commitments. Similarly, we assume that the input space is $\bit$, \ie a bit commitment.

\begin{definition}[QCCC interactive commitments in the CHS model]
A QCCC commitment in the CHS model is a two-party interactive protocol consisting of a pair of QPT algorithms $(C,R)$, where $C$ is the committer and $R$ is the receiver, with their communication being classical. Let $q = q(\secp)$ be an arbitrary polynomial and $n = n(\secp) = \omega(\log(\secp))$. 
\begin{itemize}
    \item \textbf{Commit Phase:} In the (possibly interactive) commit phase, $C$ takes as input the security parameter $1^\secp$, a bit $b\in\bit$ and a $2^{nq}$-dimensional quantum state $\rho_C$, and $R$ takes as input the security parameter $1^\secp$ and a $2^{nq}$-dimensional quantum state $\rho_R$ (ideally, $C$ and $R$ each obtain $q$ copies of an $n$-qubit Haar state). We denote the execution of the commit phase by $(\sigma_{CR},\tau) \gets \commit\langle C(1^\secp,b,\rho_C), R(1^\secp,\rho_R) \rangle$, where $\sigma_{CR}$ is the joint state of $C$ and $R$ after the commit phase, and $\tau$ denotes the transcript in the commit phase.
    \item \textbf{Reveal Phase:} In the (possibly interactive) reveal phase, the output is $\mu\in\set{0,1,\bot}$ indicating the receiver's output bit or abort. We denote the execution of the reveal phase by $\mu \gets \reveal\langle C, R, \sigma_{CR}, \tau \rangle$.
\end{itemize}
The scheme satisfies the following conditions.
\begin{itemize}
    \item \textbf{Completeness.} There exists a negligible function $\negl$ such that for all $\secp\in\N$,
    \[
    \Pr\left[
    \mu = b: 
    \substack{ \ket{\psi}\gets\Haar_n, \\
    b \gets \bit, \\
    (\sigma_{CR},\tau) \gets \commit\langle C(1^\secp,b,\ket{\psi}^{\otimes q}), R(1^\secp,\ket{\psi}^{\otimes q}) \rangle, \\
    \mu \gets \reveal\langle C, R, \sigma_{CR}, \tau \rangle, \\
    \mu\in\set{0,1,\bot}
    }
    \right] 
    \geq 1 - \negl(\secp).
    \]
    \item \textbf{Statistical Hiding.} For every computationally unbounded malicious receiver $R^*$, there exists a negligible function $\negl$ such that for all $\secp\in\N$,
    \[
    \Pr\left[
    b' = b:
    \substack{ \ket{\psi}\gets\Haar_n, \\
    b \gets \bit, \\
    (\sigma_{CR^*},\tau) \gets \commit\langle C(1^\secp,b,\ket{\psi}^{\otimes q}), R^*(1^\secp,\ket{\psi}^{\otimes q}) \rangle, \\
    b' \gets R^*(\sigma_{R^*}, \tau)
    }
    \right] \leq \frac{1}{2} + \negl(\secp),
    \]
    where $\sigma_{R^*}$ denotes the state obtained by tracing out the committer's part of the state $\sigma_{CR^*}$.
    \item \textbf{Statistical Binding.} For every computationally unbounded malicious committer $C^*$, there exists a negligible function $\negl$ such that for all $\secp\in\N$,
    \[
    \Pr\left[
    \mu = b:
    \substack{ \ket{\psi}\gets\Haar_n, \\
    (\sigma_{C^*R},\tau) \gets \commit\langle C^*(1^\secp,\ket{\psi}^{\otimes q}), R(1^\secp,\ket{\psi}^{\otimes q}) \rangle, \\
    b \gets \bit, \\
    \mu \gets \reveal\langle C^*(b), R, \sigma_{C^*R}, \tau \rangle
    }
    \right] \leq \frac{1}{2} + \negl(\secp).
    \]
\end{itemize}
\end{definition}

\begin{theorem} \label{thm:Imp_QCCC_CHS}
There does not exist primitive $\cP$ in the CHS model where $\cP \in$ \{one-way puzzles, statistically secure QCCC key agreements, statistically hiding and statistically binding QCCC interactive commitments\}.
\end{theorem}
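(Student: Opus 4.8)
The plan is to instantiate the general "lift impossibilities from the plain model to the CHS model" recipe described in the technical overview, applied to each of the three primitives. The core idea: given a scheme in the CHS model, where honest parties each receive $q$ copies of a \emph{single} common Haar state $\ket{\psi}$, I construct a scheme in the plain model where each party locally and independently samples its own $n$-qubit Haar state and takes $q$ copies of it. Since all communication is classical, the honest parties together with the (honest) verifier/receiver form an LOCC adversary in the sense of~\Cref{cor:LOCC_multi-party}, with $m$ being the number of parties (two parties plus possibly a verifier, so $m = O(1)$) and $t = q$. By~\Cref{cor:LOCC_multi-party}, the global state the parties act on in the "shared Haar state" world is $(m, O(m^2 q^2 / 2^n))$-LOCC indistinguishable from the "independent Haar states" world; since $n = \omega(\log \secp)$ and $q$ is polynomial, this distance is negligible. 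Hence completeness of the CHS scheme transfers — up to negligible loss — to completeness of the plain-model scheme.

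The steps, in order, would be: (1) Fix one of the three primitives and write down its CHS-model syntax as in the definitions above. (2) Define the plain-model candidate scheme: replace "receive $\ket{\psi}^{\otimes q}$ for a common $\ket{\psi}$" with "each party locally samples $\ket{\psi_i} \gets \Haar_n$ and uses $\ket{\psi_i}^{\otimes q}$"; note these algorithms are time-inefficient (sampling Haar states exactly is not QPT), but this is harmless because the plain-model impossibilities we invoke hold even against computationally unbounded honest algorithms. (3) Invoke~\Cref{cor:LOCC_multi-party} to argue that the joint distribution of the honest parties' views (and hence the transcript, the receiver's accept/reject bit, the agreed key, etc.) changes only negligibly when passing from the shared-state world to the independent-state world — crucially, the completeness predicate is a function computable by the honest parties acting locally plus classical communication, so it is exactly the kind of quantity controlled by LOCC indistinguishability. (4) Conclude that the plain-model scheme inherits completeness up to $\negl(\secp)$; and observe that the security of the CHS scheme \emph{already} held against an adversary who gets \emph{no} copy of the common Haar state (this is how all three definitions above are stated), so the security of the plain-model scheme is literally the same statement. (5) Invoke the known plain-model impossibility results: one-way puzzles do not exist unconditionally information-theoretically (the security requirement is information-theoretically unsatisfiable — a puzzle with few solutions can always be brute-forced, formally because $\Samp$ and $\Ver$ define, for each $\puz$, a nonempty set of accepting $\sol$, which an unbounded $\alice$ can search); statistically-secure key agreement with classical communication is impossible in the plain model; and statistically-hiding-and-binding classical-communication commitments are impossible in the plain model. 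Each contradiction yields the corresponding clause of~\Cref{thm:Imp_QCCC_CHS}. Finally, since (as noted in~\cite{KT24}) statistically secure QCCC key agreements and statistically secure QCCC commitments both imply one-way puzzles, it actually suffices to prove the one-way-puzzle clause and then cite these implications; I would present it this way to minimize repetition, though a direct argument for each works equally well.

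The main obstacle, and the place requiring care, is step (3): making precise that "completeness of the CHS scheme" is a quantity bounded by LOCC indistinguishability. Completeness asks that the receiver output $\mu = b$ (or $k_\alice = k_\bob$, or $\Ver = \top$) with overwhelming probability; this is a two-outcome measurement performed jointly by the honest parties via local operations and classical messages — i.e.\ exactly an LOCC distinguisher. So the difference in success probability between the two worlds is at most the LOCC distinguishing advantage between $\rho_\sfP$ and $\sigma_\sfP$ of~\Cref{cor:LOCC_multi-party}, which is $O(m^2 q^2/2^n) = \negl(\secp)$. One subtlety: in the commitment and key-agreement settings the "parties" include the interaction between committer and receiver across both the commit and reveal phases; one must check that the entire protocol execution (all rounds) is still an LOCC procedure on the initial Haar registers — it is, because every round is a local quantum operation followed by a classical message, and no entanglement is pre-shared between the (honestly-behaving) parties at the start. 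A second subtlety: the security reduction direction needs no work at all precisely because the adversary in each CHS definition is handed \emph{no} Haar state, so "secure in CHS" and "secure in plain model" are verbatim the same condition on the (identical) transcript distribution; I would flag this explicitly since it is the asymmetry between the CHS model and the classical CRS model that makes the theorem possible.
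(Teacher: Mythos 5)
Your overall strategy is the paper's: convert the CHS scheme to a plain-model scheme by having each party sample its own Haar state, use LOCC Haar indistinguishability to control the loss, and invoke the trivial plain-model impossibilities. The completeness transfer in your step (3) is argued exactly as in the paper (the paper uses the two-party \Cref{thm:LOCC_main} rather than the multi-party corollary, but that is immaterial).

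However, there is a genuine gap in your step (4). You claim that because the adversary in each CHS definition receives no copy of the common Haar state, ``secure in CHS'' and ``secure in the plain model'' are \emph{verbatim the same condition}. This is false for one-way puzzles and for key agreements. The security experiment's winning predicate is evaluated by \emph{honest} algorithms that do hold Haar states: for one-way puzzles, the adversary wins if $\Ver(\puz,\sol',\ket{\psi}^{\otimes q})=\top$, where $\Ver$'s Haar state is the same as $\Samp$'s in the CHS model but an independent one in your plain-model scheme; for key agreements, the eavesdropper wins if $k_\eve = k_\bob$, and the joint distribution of $(\tau,k_\bob)$ is produced by two honest parties whose Haar states are identical in one world and independent in the other. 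Hence the adversary's success probability is \emph{not} literally the same quantity in the two worlds, and one must show it changes only negligibly. The paper does this with a second LOCC-distinguisher argument (e.g., $A$ runs $\Samp$ and the adversary, $B$ runs $\Ver$ on the candidate solution and outputs whether verification fails; a non-negligible gap in the adversary's advantage between the two worlds would yield a non-negligible LOCC distinguishing advantage, contradicting \Cref{thm:LOCC_main}). Your observation about the adversary receiving no state does suffice for the hiding and binding of commitments, where only one honest party's state is relevant and its marginal distribution is unchanged, but not for the other two primitives. Separately, your proposed shortcut of proving only the one-way-puzzle clause and citing the implications from key agreements and commitments would require re-establishing those implications in the CHS model with these weakened security definitions, which is additional unproven work; the direct argument you also mention is the safer route and is what the paper does.
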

\paragraph{Proof intuition.} The high-level idea is to convert the scheme in the CHS model to a scheme in the \emph{plain} model. In the CHS model, given a pair of algorithms, we define the new pair of (time-inefficient) algorithms to be identical except for their input, which consists of copies of two i.i.d Haar states. Thanks to the LOCC Haar indistinguishability (\Cref{thm:LOCC_main}), the expense of doing so is to only increase the completeness error and security loss by a negligible amount. Therefore, if there were to exist a complete and secure scheme in the CHS model, it would imply the existence of such a scheme in the plain model, contradicting the trivial impossibility.\footnote{The impossibilities in the plain model still hold even when the algorithms of the primitives are time-inefficient.}
\begin{proof}[Proof of~\Cref{thm:Imp_QCCC_CHS}] $ $ \newline
\textbf{One-way puzzles.} Suppose there exists a one-way puzzle $(\Samp,\Ver)$ in the CHS model. We define $(\wt{\Samp},\wt{\Ver})$ as follows. $\wt{\Samp}(1^{\secparam})$ simply samples $q$ copies of a Haar state $\ket{\psi}$ and runs $\Samp(1^{\secparam},\ket{\psi}^{\otimes q})$. $\wt{\Ver}(\puz,\sol)$ is defined similarly; it samples $q$ copies of a Haar state $\ket{\phi}$ and then runs $\Ver(\puz,\sol,\ket{\phi}^{\otimes q})$. It is important to note that $\wt{\Samp}$ and $\wt{\Ver}$ sample the Haar states independently. 

First, we claim that $(\wt{\Samp},\wt{\Ver})$ has negligible completeness error. Otherwise, we construct an LOCC distinguisher $(\alice,\bob)$ with a non-negligible advantage for the task in~\Cref{thm:LOCC_main}. $\alice$ runs $\Samp$ on the security parameter and her input, obtains $(\puz,\sol)$, and sends $(\puz,\sol)$ to $\bob$. Then $\bob$ runs $\Ver$ on $(\puz,\sol)$ and his input, and outputs $1$ if the verification passes. If the input of $(\alice,\bob)$ is $\rho$ (defined in~\Cref{thm:LOCC_main}, \ie each is given $q$ copies of the same Haar state), then the probability of $\bob$ outputting $1$ is equal to the completeness of $(\Samp,\Ver)$. Similarly, if the input is $\sigma$ (\ie each is given $q$ copies of two i.i.d Haar states), then the probability of $\bob$ outputting $1$ is equal to the completeness of $(\wt{\Samp},\wt{\Ver})$. Hence, $(\alice,\bob)$ has a non-negligible advantage by the premise. However, this contradicts~\Cref{thm:LOCC_main}. 

Next, we claim that $(\wt{\Samp},\wt{\Ver})$ satisfies security. Suppose there is an adversary $\wt{\eve}$ that breaks the security of $(\wt{\Samp},\wt{\Ver})$ with a non-negligible advantage of $\wt{\veps} = \wt{\veps}(\secp)$. We claim that $\wt{\eve}$ breaks the security of $(\Samp,\Ver)$ with an advantage of $\veps = \veps(\secp)$ satisfying $|\eps - \wt{\veps}|=\negl(\secp)$, which means that $\wt{\veps}$ is non-negligible as well. Otherwise, suppose $|\eps - \wt{\veps}|$ is non-negligible, we can construct an LOCC distinguisher $(\alice,\bob)$ as follows. 
\begin{itemize}
    \item $\alice$ runs $\Samp$ on the security parameter and her input to obtain $(\puz,\sol)$. It then runs $\wt{\eve}$ on $\puz$ to obtain $\sol'$. Finally, it sends $(\puz,\sol')$ to $\bob$. 
    \item $\bob$ runs $\Ver$ on $(\puz,\sol')$. If the output is $\bot$, it outputs 1. Otherwise, it outputs 0. 
\end{itemize}
\noindent If the input of $(\alice,\bob)$ is $\rho$ (defined in~\Cref{thm:LOCC_main}, \ie each is given $q$ copies of the same Haar state), then the probability of $\bob$ outputting $1$ is equal to $\veps$. Similarly, if the input is $\sigma$ (\ie each is given $q$ copies of two i.i.d Haar states), then the probability of $\bob$ outputting $1$ is $\wt{\veps}$. Again, this contradicts~\Cref{thm:LOCC_main}.

So far, we have shown that $(\wt{\Samp},\wt{\Ver})$ satisfies completeness and security in the plain model. However, such a scheme cannot exist. This is because an unbounded adversary, given a puzzle, can find the solution with the highest probability of passing the verification to break the security. Hence, we conclude that $(\Samp,\Ver)$ is not a one-way puzzle in the CHS model.

The structure of proving the impossibility of key agreements and interactive commitments is very similar. We only describe the LOCC distinguishers and omit the full details.\\

\noindent \textbf{Key agreements.} Suppose $\KA = (P_1,P_2)$ is a statistically secure QCCC key agreement in the CHS model. Define $\wt{\KA} = (\wt{P_1},\wt{P_2})$ such that $\wt{P_1}$ (resp., $\wt{P_2}$) samples $q$ copies of a Haar state and they run $P_1$ (resp., $P_2$). We argue that $\wt{\KA}$ satisfies both completeness and security. 
\par Suppose completeness error of $\wt{\KA}$ is inverse polynomial (in $\secparam$), we define an LOCC adversary $(\alice,\bob)$ as follows. Upon receiving a bipartite state on registers $\sfA$ and $\sfB$,  $\alice$ (resp., $\bob$) runs $P_1$ (resp., $P_2$) on input $1^{\secparam}$ and the register $\sfA$ (resp., $\sfB$). Then, $\alice$ obtains the key $k_{P_1}$ and $\bob$ obtains the key $k_{P_2}$. They perform an extra round of communication to check if $k_{P_1} = k_{P_2}$. Similar to the argument for one-way puzzles, it can be shown that $(A,B)$ can distinguish $\rho$ and $\sigma$ (defined in~\Cref{thm:LOCC_main}) with inverse polynomial probability, which is a contradiction. 
\par Suppose $\wt{\KA}$ is not statistically secure. That is, there exists an eavesdropper $\wt{E}$ that can break the security of $\wt{\KA}$ with inverse polynomial (in $\secparam$) probability. Using $\wt{E}$, we define an LOCC adversary $(A,B)$, who upon receiving a bipartite state on two registers $\sfA$ and $\sfB$ do the following. 
\begin{itemize}
    \item $A$ runs $P_1$ on $1^{\secparam}$ and the register $\sfA$. Similarly, $B$ runs $P_2$ on $1^{\secparam}$ and the register $\sfB$. Denote $\tau$ be the transcript of the protocol. 
    \item $B$ runs $\wt{E}(\tau)$ to obtain $k_{E}$. It then checks if $k_{E}=k_{P_2}$. If so, it outputs 1. Otherwise, it outputs 0. 
\end{itemize}
Similarly, as before, we can show that $(A,B)$ succeeds in distinguishing $\rho$ and $\sigma$ with inverse polynomial probability, a contradiction. 
\par So far, we have shown that $\widetilde{\KA}$ is a key agreement protocol in the plain model that satisfies both completeness and statistical security. However, such a scheme cannot exist which further means that $\KA$ either does not satisfy completeness or security. \\

\noindent \textbf{Interactive Commitments.} Suppose $\Com = (C,R)$ is a statistically hiding and statistically binding QCCC interactive commitment in the CHS model. We define $\wt{\Com} = (\wt{C},\wt{R})$ as follows. Upon receiving the input bit $b\in\bit$, $\wt{C}$ simply samples 
$q$ copies of a Haar state $\ket{\psi}$ and runs $C$ on input $1^\secp$, $b$ and $\ket{\psi}^{\otimes q}$. Similarly, $\wt{R}$ samples $q$ copies of a Haar state $\ket{\phi}$ and runs $R$ on input $1^\secp$ and $\ket{\phi}^{\otimes q}$.
\par Intuitively, $\wt{\Com}$ is at least as secure as $\Com$ since the malicious party in $\wt{\Com}$ has no information about the other party's Haar state as opposed to $\Com$. Suppose $\wt{\Com}$ is not statistically hiding. That is, there exists a malicious receiver $\wt{R}^*$ that can break the statistical hiding of $\wt{\Com}$ with inverse polynomial (in $\secparam$) probability. Using $\wt{R}^*$, we define a malicious receiver $R^*$ that breaks the statistical hiding of $\Com$. $R^*$ simply discards its common Haar states and runs $\wt{R}^*$. Then the distinguishing advantage of $R^*$ is identical to that of $\wt{R}^*$, which is a contradiction. 
Suppose $\wt{\Com}$ is not statistically binding. That is, there exists a malicious receiver $\wt{C}^*$ that can break the statistical binding of $\wt{\Com}$ with inverse polynomial (in $\secparam$) probability. Similarly, discarding the common Haar states and using $\wt{C}^*$ breaks the statistical binding of $\Com$, which is a contradiction.
\par Suppose completeness error of $\wt{\Com}$ is inverse polynomial (in $\secparam$), we define an LOCC adversary $(\alice,\bob)$ as follows. Upon receiving a bipartite state on registers $\sfA$ and $\sfB$,  $\alice$ (resp., $\bob$) runs $C$ (resp., $R$) on input $1^{\secparam}$, a uniform bit $b\in\bit$, and the register $\sfA$ (resp., $\sfB$). Then, $\bob$ obtains $\mu$. They perform an extra round of communication to check if $b = \mu$. Similar to the argument for one-way puzzles, it can be shown that $(A,B)$ can distinguish $\rho$ and $\sigma$ (defined in~\Cref{thm:LOCC_main}) with inverse polynomial probability, which is a contradiction.
\end{proof}

\section{Quantum Black-Box Separation in the QCCC Model}
\label{sec:QBB_QCCC}

\subsection{The Separating Oracle} 
\label{sec:separating_oracle}
As is common in black-box impossibility results, we will define oracles relative to which $\omega(\log(\secp))$-PRSGs exist while QCCC key agreements and interactive commitments do not. We define the oracle $G := \set{\set{G_k}_{k \in \bit^\secp}}_{\secp\in\N}$ as follows. For every $\secp\in\N$ and $k\in\bit^\secp$, the oracle $G_k$ is a Haar isometry that maps any state $\ket{\psi}$ to $\ket{\psi}\ket{\vartheta_k}$, where $\ket{\vartheta_k}$ is a Haar state of length $n(\secp) = \omega(\log(\secp))$. The existence of $\omega(\log(\secp))$-PRSGs relative to $G$ can be proven easily.

\begin{lemma}[$(\secp,\omega(\log(\secp)))$-PRSGs exist relative to $G$] \label{lem:oracle_implies_PRS}
There exists a $(\secp,\omega(\log(\secp)))$-PRSG relative to $G$. In particular, for any polynomial $q(\cdot)$ and any computationally unbounded adversary $\alice^G$ that takes as input $1^\secp$ and asks $q(\secp)$ quantum queries to $G$, the distinguishing advantage is negligible in $\secp$.
\end{lemma}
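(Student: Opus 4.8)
The construction is immediate: on a key $k\in\bit^\secp$ the generator queries $G_k$ once on a fixed input (say $\ket{0^{n(\secp)}}$), obtaining $\ket{0^{n(\secp)}}\ket{\vartheta_k}$, and outputs the $n(\secp)$-qubit register $\ket{\vartheta_k}$. This is a single-query QPT procedure and, since $\ket{\vartheta_k}$ is pure, it meets the state-generation requirement. All the work is in $\ell$-copy pseudorandomness: for any polynomials $\ell,q$ and any unbounded $\cA^G$ making $q$ quantum queries, $\cA$ given $\ket{\vartheta_k}^{\otimes\ell}$ for a uniform $k$ must be $\negl(\secp)$-close in acceptance probability to $\cA$ given $\ket{\varphi}^{\otimes\ell}$ for a fresh Haar state. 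The plan is to prove this by the BBBV hybrid method adapted to state-producing oracles; equivalently, it is a reduction to the hardness of finding a marked item among $2^\secp$ candidates.

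First I would set up two oracles that differ only at the challenge key $k$. Let $\ket{\vartheta_*}$ be the challenge state. Let $\cO_{\mathsf{tied}}$ be the honest oracle, i.e.\ $G_k$ outputs $\ket{\vartheta_*}$ (the very state of which $\cA$ holds $\ell$ copies) while every other $G_{k'}$ (including those at all other security parameters) outputs its own i.i.d.\ Haar state; let $\cO_{\mathsf{fresh}}$ be identical except that $G_k$ outputs a \emph{fresh} Haar state $\ket{\vartheta'_*}$ that is independent of the challenge. The point is that the honest PRS game with a uniformly random $k$ is exactly $\cA^{\cO_{\mathsf{tied}}}$ on challenge $\ket{\vartheta_*}^{\otimes\ell}$, while the ideal game (fresh Haar challenge, independent oracle) is, after relabelling the i.i.d.\ Haar isometries, exactly $\cA^{\cO_{\mathsf{fresh}}}$ on challenge $\ket{\vartheta_*}^{\otimes\ell}$. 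I would then interpolate: for $j=0,\dots,q$, let $H_j$ answer the first $j$ queries with $\cO_{\mathsf{fresh}}$ and the remaining $q-j$ with $\cO_{\mathsf{tied}}$, so $H_0$ is the real game and $H_q$ is the ideal game.

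Next I would bound each step. Consecutive hybrids $H_{j-1},H_j$ differ only in how the $j$-th query is answered, and they share the same pre-$j$-th-query state $\ket{\Psi_j}$, produced by running $\cA$ with queries $1,\dots,j-1$ answered by $\cO_{\mathsf{fresh}}$. Since $\cO_{\mathsf{tied}}$ and $\cO_{\mathsf{fresh}}$ act identically on every key-register value other than $k$, the difference of the two query isometries factors through the projector $P_k$ onto key register $=k$, and has operator norm at most $2$; since the subsequent processing is identical, $\abs{\Pr[H_{j-1}\text{ accepts}]-\Pr[H_j\text{ accepts}]}\le 2\norm{(P_k\otimes I)\ket{\Psi_j}}$. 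Crucially, $\cO_{\mathsf{fresh}}$ is \emph{$k$-oblivious}: relabelling the index $k$ among the i.i.d.\ Haar isometries is a symmetry of the joint distribution of the challenge copies together with $\cO_{\mathsf{fresh}}$, so the distribution of $\ket{\Psi_j}$ does not depend on $k$, whence $\Exp_{k\gets\bit^\secp}\norm{(P_k\otimes I)\ket{\Psi_j}}^2 \le 2^{-\secp}$ because $\sum_k P_k\preceq I$. Summing over $j$ with the triangle inequality and then Cauchy--Schwarz (and convexity),
\[
\Exp_k \abs{\Pr[\text{real accepts}]-\Pr[\text{ideal accepts}]}
\le 2\sum_{j=1}^q \Exp_k\norm{(P_k\otimes I)\ket{\Psi_j}}
\le 2\sqrt{q}\sqrt{\sum_{j=1}^q 2^{-\secp}}
= 2q\cdot 2^{-\secp/2},
\]
which is $\negl(\secp)$ for any polynomial $q$; since the ideal game is independent of $k$, the same bound controls the PRS distinguishing advantage, completing the proof.

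I expect the main obstacle to be precisely the $k$-obliviousness claim: one must order the hybrids so that the \emph{already-answered} queries use $\cO_{\mathsf{fresh}}$, and then argue that permuting the label $k$ among the i.i.d.\ Haar isometries is a symmetry of the entire pre-query experiment, including the $\ell$ challenge copies $\cA$ starts with (this is where the fact that $\ket{\vartheta_*}$ is Haar, and independent of $\cO_{\mathsf{fresh}}$, is used). Two lesser points also need care: bounding a single-query swap by the amplitude on $\ket k$ (using that a difference of isometries supported on key value $k$ has operator norm $\le 2$), and noting that queries to $G_{k'}$ with $k'\ne k$ or at other security parameters contribute nothing because the two oracles agree there. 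Note that $n(\secp)=\omega(\log\secp)$ plays no role in the security argument, which holds for any output length; it is only needed for the construction to qualify as a (stretch) pseudorandom state generator.
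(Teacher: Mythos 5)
Your proposal is correct and follows essentially the same route as the paper, which gives only a two-line sketch ("the implementation is simply the oracle $G$; security follows from the hardness of unstructured search \`a la BBBV, as in Kretschmer's Lemma~30"); your hybrid argument with the tied/fresh oracles, the bound via the amplitude on the key register, and the $k$-obliviousness symmetry is precisely the standard instantiation of that sketch.
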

\begin{proof}[Proof sketch]
The proof is similar to the proof of~\cite[Lemma~30]{Kretschmer21}. The implementation of the PRSG is simply the oracle $G$: on input $k$, outputs the state $\ket{\vartheta_k}$ generated by $G_k$. The security follows from the hardness of the unstructed search problem~\cite{BBBV97}.
\end{proof}

\subsection{Separating QCCC Key Agreements from $(\secp,\omega(\log(\secp)))$-PRSGs}
\begin{definition}[QCCC key agreements relative to oracle]
A \emph{QCCC key agreement relative to an oracle $\cO$} is a two-party interactive protocol consisting of a pair of uniform quantum (possibly time-inefficient) oracle algorithms $(\alice,\bob)$ such that $\alice,\bob$ each take as input the security parameter $1^\secp$, ask $q(\secp)$ queries to the oracle $\cO$ for some polynomial $q$, communicate classically, and output the classical keys $k_\alice\in\bit$ and $k_\bob\in\bit$ respectively. An \emph{$(\veps,p,\delta)$-QCCC key agreement relative to $\cO$} satisfies the following:
\begin{itemize}
    \item \textbf{$\veps$-completeness.} We say that a QCCC key agreement is \emph{$\veps$-complete} if the following holds for all $\secp\in\N$,
    \[
    \Pr\left[
    k_\alice = k_\bob: 
    \substack{ O \gets \cO, \\
    (k_\alice, k_\bob, \tau) \gets \langle \alice^O(1^\secp), \bob^O(1^\secp) \rangle }
    \right] 
    \geq 1 - \veps(\secp),
    \]
    where $\langle\alice,\bob\rangle$ denote the execution of the protocol and $\tau$ is the transcript of the protocol. We anticipate that $\veps$ is negligible.
    \item \textbf{$(p,\delta)$-security.} We say that a QCCC key agreement is \emph{$(p,\delta)$-secure} if for any computationally unbounded eavesdropper $\eve$ that on input $1^\secp$ and transcript $\tau$ and asks at most $p(\secp)$ classical queries to $\cO$, the following holds for all $\secp\in\N$,
    \[
    \Pr\left[
    k_\eve = k_\bob:
    \substack{ O \gets \cO, \\
    (k_\alice, k_\bob, \tau) \gets \langle \alice^O(1^\secp), \bob^O(1^\secp) \rangle, \\
    k_E \gets \eve^O(1^\secp,\tau) }
    \right] 
    \leq \frac{1}{2} + \delta(\secp).
    \]
    We anticipate that for any polynomial $p$, there exists a negligible $\delta$ such that the key agreement is $(p,\delta)$-secure.
\end{itemize}
\end{definition}
\noindent In the plain model, completeness and security are defined similarly in the absence of an oracle. In particular, a QCCC key agreement is an \emph{$(\veps,\delta)$-QCCC key agreement} if it satisfies $\veps$-completeness and $\delta$-security.

\begin{lemma}[Conditional independence] \label{lem:Cond_Indep}
For any two-party interactive QCCC protocol $(\alice,\bob)$ where the party's initial state is a product state, the joint state at the end of each round  $i$ can be written as 
\[
\sum_{t^i} p_{t^i} \ketbra{t^i}{t^i}_\sfT \otimes \rho^{t^i}_{\sfA\sfB},
\]
for some partial transcripts $t^i := (t_1,t_2,\dots,t_i)$ until round $i$ and product states $\rho^{t^i}_{\sfA\sfB}$, where register $\sfT$ is for storing the transcript.
\end{lemma}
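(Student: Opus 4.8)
The plan is to prove the statement by induction on the round index $i$, taking the displayed expression itself as the induction hypothesis: after round $i$, the global state on the transcript register $\sfT$ together with the parties' registers $(\sfA,\sfB)$ is classical on $\sfT$ and, conditioned on any fixed transcript value $t^i$, a product state across $(\sfA,\sfB)$. The base case $i=0$ is immediate: by hypothesis the initial joint state is a product state $\rho^{(0)}_\sfA \otimes \rho^{(0)}_\sfB$ and the transcript is empty, so $p_{\emptyset}=1$ and $\rho^{\emptyset}_{\sfA\sfB} := \rho^{(0)}_\sfA \otimes \rho^{(0)}_\sfB$.

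For the inductive step, suppose the state just before round $i$ is $\sum_{t^{i-1}} p_{t^{i-1}}\, \ketbra{t^{i-1}}{t^{i-1}}_\sfT \otimes \rho^{t^{i-1}}_\sfA \otimes \rho^{t^{i-1}}_\sfB$, and assume without loss of generality that $\alice$ is the party who sends the round-$i$ message (the symmetric case is identical with the roles of $\alice$ and $\bob$ swapped). Since the transcript $t^{i-1}$ is classical, both parties may read it; $\alice$'s action in round $i$ is then, for each fixed $t^{i-1}$, a quantum instrument $\{\Lambda^{t^{i-1}}_m\}_m$ acting on $\sfA$ (adjoin ancilla, apply a unitary, measure out the classical message $m$), after which $m$ is recorded in $\sfT$ and broadcast to $\bob$; finally each party applies a channel, $\cE^{t^i}_\alice$ on $\sfA$ and $\cE^{t^i}_\bob$ on $\sfB$, that may depend on the updated public transcript $t^i := (t^{i-1},m)$. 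Writing $q_{m\mid t^{i-1}} := \Tr\!\big[\Lambda^{t^{i-1}}_m(\rho^{t^{i-1}}_\sfA)\big]$, and defining $p_{t^i} := p_{t^{i-1}}\, q_{m\mid t^{i-1}}$, $\rho^{t^i}_\sfA := \cE^{t^i}_\alice\!\big(\Lambda^{t^{i-1}}_m(\rho^{t^{i-1}}_\sfA)\big)/ q_{m\mid t^{i-1}}$, and $\rho^{t^i}_\sfB := \cE^{t^i}_\bob(\rho^{t^{i-1}}_\sfB)$, the global state after round $i$ equals $\sum_{t^i} p_{t^i}\, \ketbra{t^i}{t^i}_\sfT \otimes \rho^{t^i}_\sfA \otimes \rho^{t^i}_\sfB$, which is exactly of the claimed form with $\rho^{t^i}_{\sfA\sfB} := \rho^{t^i}_\sfA \otimes \rho^{t^i}_\sfB$. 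That the $p_{t^i}$ are nonnegative and sum to one is clear, since the $q_{\cdot\mid t^{i-1}}$ form a probability distribution for each $t^{i-1}$.

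The only genuinely delicate point is the modeling of ``quantum computation, classical communication'': one must justify that each message can be copied into $\sfT$ and delivered to the receiver without disturbing anything. This is exactly where classicality of the channel is used --- the message register is measured before it is sent, so it carries a classical value that can be freely duplicated --- and, together with the fact that the parties share no entanglement and never perform a joint operation on $\sfA$ and $\sfB$, this is what prevents any $\sfA$--$\sfB$ correlation from arising beyond what is already recorded in the (classical) transcript. I expect getting this formalization right, rather than any computation, to be the main obstacle; once it is pinned down the induction is pure bookkeeping. If the protocol is instead defined relative to a fixed oracle $O$, nothing changes, since a query to a fixed $O$ is a local channel on the querying party's registers.
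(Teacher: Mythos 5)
Your proof is correct and follows essentially the same route as the paper's: induction on rounds, with the sender's round-$i$ action modeled as a transcript-controlled operation followed by a measurement producing the classical message, so that conditioned on the (classical) transcript the $\sfA$--$\sfB$ state remains a product. Your phrasing in terms of quantum instruments plus post-message local channels on both parties is a slightly more general packaging of the same inductive step the paper carries out explicitly with a controlled unitary and projective measurement.
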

\begin{proof}
We prove it by induction on rounds. Initially, the joint state is $\ketbra{\bot}{\bot}_\sfT \otimes \rho^{\bot}_{\sfA}\otimes \rho^{\bot}_{\sfB}$ by the premise, where $\bot$ denotes the empty transcript. Suppose after the $j$-th round, the joint state is $\sum_{t^j} p_{t^j} \ketbra{t^j}{t^j}_{\sfT} \otimes \rho^{t^j}_{\sfA}\otimes \rho^{t_j}_{\sfB}$. In the $(j+1)$-th round (suppose it is $\alice$'s round), $\alice$ will first apply a unitary controlled by $t^j$ of the form $\sum_{t^j}\ketbra{t^j}{t^j}_\sfT\otimes U^{(t^j)}_\sfA$  and then perform the measurement to generate the message $t_{j+1}$ of this round. Then the state $\rho^{t^{j}}_{\sfA}$ becomes $\sum_{t_{j+1}}\ketbra{t_{j+1}}{t_{j+1}}_{\sfT_{j+1}}\otimes \left({(\bra{t_{j+1}}\otimes I)} U^{(t^j)}_\sfA \rho^{t^j}_{\sfA} {\left(U^{(t^j)}_\sfA\right)}^{\dagger}{(\ket{t_{j+1}}\otimes I)}\right)$, where register $\sfT_{j+1}$ is appended to the transcript register. We can write $\left({(\bra{t_{j+1}}\otimes I)} U^{(t^j)}_\sfA \rho^{t^j}_{\sfA} {\left(U^{(t^j)}_\sfA\right)}^{\dagger}{(\ket{t_{j+1}}\otimes I)}\right)$ as $p(t_{j+1}|t^j) \cdot \rho_{\sfA}^{t^j||{t_{j+1}}}$, where $p(t_{j+1}|t^j)$ is the probability of getting the outcome $t_{j+1}$ by measuring $U^{(t^j)}_\sfA \rho^{t^j}_{\sfA} {\left(U^{(t^j)}_\sfA\right)}^{\dagger}$ in the computational basis. Hence we get that the final state is 
\[\sum_{t^j}\sum_{t_{j+1}} p_{t^j} p(t_{j+1}|t^j) \cdot \ketbra{t^j||t_{j+1}}{t^j||t_{j+1}}_{\sfT} \otimes \rho^{t^j||t_{j+1}}_{\sfA}\otimes \rho^{t_j}_{\sfB},
\]
which is still a product state for any $t^{j+1} = t^{j}||t_{j+1}$.
\end{proof}

\begin{lemma}[Impossibility of key agreements in the plain model] \label{lem:Imp_KA_plain}
For any $\veps,\delta:\N\to[0,1]$ and $(\veps,\delta)$-QCCC key agreement in the plain model, it holds that $\veps(\lambda) + \delta(\lambda) \geq 1/2$ for any $\secp\in\N$.
\end{lemma}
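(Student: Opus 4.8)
The plan is to reduce the problem to the classical-transcript setting and then invoke the classical impossibility of information-theoretic key agreement over a classical channel. The key structural input is \Cref{lem:Cond_Indep}: since $\alice$ and $\bob$ start with no shared entanglement (their joint initial state is a product state) and communicate only classically, after the protocol ends the joint state of $\alice,\bob$ conditioned on the full transcript $\tau$ is a product state $\rho^\tau_\sfA \otimes \rho^\tau_\sfB$. Consequently $k_\alice$ (obtained by measuring $\rho^\tau_\sfA$) and $k_\bob$ (obtained by measuring $\rho^\tau_\sfB$) are, conditioned on $\tau$, \emph{independent} bits. This is exactly the situation of a classical one-way-function-free key agreement: Eve sees $\tau$ and must guess $k_\bob$.

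First I would fix notation: let $p(\tau)$ be the probability of transcript $\tau$, and conditioned on $\tau$ let $\alpha_\tau := \Pr[k_\bob = 1 \mid \tau]$ and let $\beta_\tau := \Pr[k_\alice = 1 \mid \tau]$. By the product structure above, $\Pr[k_\alice = k_\bob \mid \tau] = \alpha_\tau\beta_\tau + (1-\alpha_\tau)(1-\beta_\tau) \leq \max(\alpha_\tau, 1-\alpha_\tau)$, since for fixed $\alpha_\tau$ this affine function of $\beta_\tau$ is maximized at an endpoint. Therefore the completeness guarantee gives
\[
1 - \veps(\secp) \leq \Pr[k_\alice = k_\bob] \leq \sum_\tau p(\tau) \max(\alpha_\tau, 1-\alpha_\tau).
\]
Next, consider the trivial eavesdropper $\eve$ that, on input $\tau$, outputs the more likely value of $k_\bob$ given $\tau$, i.e.\ outputs $1$ iff $\alpha_\tau \geq 1/2$. (Since $\eve$ is computationally unbounded and the honest algorithms' behavior induces a well-defined distribution, $\eve$ can compute $\alpha_\tau$ from the protocol description.) This $\eve$ succeeds with probability exactly $\sum_\tau p(\tau)\max(\alpha_\tau, 1-\alpha_\tau)$, so by $\delta$-security,
\[
\sum_\tau p(\tau)\max(\alpha_\tau, 1-\alpha_\tau) \leq \tfrac12 + \delta(\secp).
\]
Combining the two displayed inequalities yields $1 - \veps(\secp) \leq \tfrac12 + \delta(\secp)$, i.e.\ $\veps(\secp) + \delta(\secp) \geq 1/2$, as claimed.

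The main obstacle — really the only subtle point — is justifying the conditional-independence step rigorously, i.e.\ that after a QCCC protocol with product initial state the post-transcript joint state factorizes. This is precisely \Cref{lem:Cond_Indep} (proved earlier by induction on rounds: a controlled local unitary followed by a computational-basis measurement that is appended to the transcript register preserves the product form), so I would simply cite it. A secondary minor point is that $k_\alice, k_\bob$ are obtained by \emph{local} measurements on $\rho^\tau_\sfA$ and $\rho^\tau_\sfB$ respectively — possibly after further local post-processing, which does not affect independence conditioned on $\tau$ — so the factorization of the state passes to a factorization of the joint distribution of the output bits conditioned on $\tau$. Everything else is the elementary optimization over the affine function of $\beta_\tau$ and the observation that the Bayes-optimal predictor of $k_\bob$ from $\tau$ is available to an unbounded Eve.
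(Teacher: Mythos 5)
Your proof is correct and follows essentially the same route as the paper: both arguments hinge on \Cref{lem:Cond_Indep} to get conditional independence of $k_\alice$ and $k_\bob$ given the transcript, and both then exhibit an unbounded eavesdropper computable from $\tau$ alone. The only (cosmetic) difference is the choice of eavesdropper --- the paper's $\eve$ samples from the conditional distribution of $k_\alice$ given $\tau$ and uses independence to conclude $\Pr[k_\eve=k_\bob]=\Pr[k_\alice=k_\bob]$, whereas yours takes the Bayes-optimal guess of $k_\bob$ and bounds $\Pr[k_\alice=k_\bob\mid\tau]\leq\max(\alpha_\tau,1-\alpha_\tau)$; both yield $1-\veps\leq 1/2+\delta$.
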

\begin{proof}
Let $\KA$ be an $(\veps,\delta)$-QCCC key agreement in the plain model that outputs $(\tau,k_\alice,k_\bob)$. Fix $\secp$ for the rest of the proof. In execution of $\KA$, equivalently, we can first sample $\tau$, then sample $k_\bob$ conditioned on $\tau$, and finally sample $k_\alice$ conditioned on $(\tau,\bob)$. For any fixed $\tau$ in the support, by~\Cref{lem:Cond_Indep}, the joint state of $\alice$ and $\bob$ is a product state. Thus, further fixing $k_\bob$ won't change the marginal distribution of $k_\alice$. In the rest of the proof, we fix $\tau$ and $k_\bob$.

Consider the following eavesdropper $\eve$. Upon receiving the transcript $\tau$, $\eve$ runs the protocol coherently and computes the post-measurement state conditioned on $\tau$. Then $\eve$ sets $k_\eve$ to $k_\alice$ computed from the final joint state. Hence, the distribution of $k_\eve$ is identically distributed to the marginal distribution of $k_\alice$ in $\KA$ conditioned on $\tau$. That is, the probability of $k_\eve = k_\bob$ is equal to that of $k_\alice = k_\bob$. Finally, averaging over $(\tau,k_\bob)$, we have the probability of $k_\eve = k_\bob$ is $\geq 1 - \veps(\secp) = 1/2 + (1/2 - \veps(\secp))$ from the $\veps$-completeness of $\KA$. In other words, $\delta(\secp)$ must be $\geq 1/2 - \veps(\secp)$. Hence, we have $\delta(\secp) + \veps(\secp) \geq 1/2$ for any $\secp\in\N$.
\end{proof}

\begin{theorem}[Quantum state tomography~\cite{OW16}] \label{thm:tomography}
There exists an algorithm $\tomography$ and a polynomial $p_\tomography$ satisfy the following. For any $d\in\N,\Delta, \gamma \in (0,1]$ and $d$-dimensional pure quantum state $\ketbra{\psi}{\psi}$, given $p_\tomography(d,\Delta^{-1},\log(\gamma^{-1}))$ copies of  $\ketbra{\psi}{\psi}$, $\tomography$ outputs the classical description of $\ketbra{\wh{\psi}}{\wh{\psi}}$ satisfying $\TD(\ketbra{\psi}{\psi},\ketbra{\wh{\psi}}{\wh{\psi}}) \leq \Delta$ with probability at least $1 - \gamma$.
\end{theorem}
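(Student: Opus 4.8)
The plan is to establish the theorem in two stages: first a \emph{weak} tomography procedure that uses $\poly(d,\Delta^{-1})$ copies and succeeds with constant probability, and then a generic amplification that boosts the confidence to $1-\gamma$ at the cost of an $O(\log(\gamma^{-1}))$ multiplicative overhead in the number of copies.

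For the weak procedure I would use the covariant (``pretty good'') measurement on the symmetric subspace. Given $n$ copies of the unknown state, the input $\ketbra{\psi}{\psi}^{\otimes n}$ is supported on $\sym^n(\C^d)$, whose dimension is $d_n := \binom{n+d-1}{n}$. Consider the POVM $\{ d_n \ketbra{\phi^{\otimes n}}{\phi^{\otimes n}}\, d\phi \}$ indexed by $d$-dimensional pure states $\ket{\phi}$, with $d\phi$ the Haar measure on pure states; this is a legitimate POVM on $\sym^n(\C^d)$ because $d_n \int \ketbra{\phi^{\otimes n}}{\phi^{\otimes n}}\, d\phi$ is exactly the projector onto $\sym^n(\C^d)$ (the standard symmetric-subspace/Schur--Weyl identity). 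On input $\ketbra{\psi}{\psi}^{\otimes n}$ the outcome $\ket{\phi}$ has density proportional to $|\langle\phi|\psi\rangle|^{2n}$. Taking $\ket{\psi} = \ket{e_1}$ without loss of generality, the quantity $x := |\langle\phi|\psi\rangle|^2$ is, under this output distribution, distributed with density proportional to $x^n(1-x)^{d-2}$, i.e.\ according to $\mathrm{Beta}(n+1,d-1)$; hence $\Ex[x] = \tfrac{n+1}{n+d}$ and $\Ex[1-x] = \tfrac{d-1}{n+d} \le d/n$. By Markov's inequality, with probability at least $2/3$ the outcome satisfies $1 - |\langle\phi|\psi\rangle|^2 \le 3d/n$, which for pure states means $\TD(\ketbra{\psi}{\psi}, \ketbra{\phi}{\phi}) = \sqrt{1-|\langle\phi|\psi\rangle|^2} \le \sqrt{3d/n}$. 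Choosing $n = \lceil 3d/\Delta^2 \rceil$ gives trace distance at most $\Delta$ with probability $2/3$. To produce a genuinely finite classical description and an implementable measurement, I would discretize this POVM to a fine net of pure states (of size $d^{O(d)}$), which perturbs the estimate by at most $\Delta$ and affects only constants; the resulting measurement runs in $\poly(d,n)$ time, matching~\cite{OW16}.

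For the amplification, run $k = O(\log(\gamma^{-1}))$ independent instances of the weak procedure, each targeting accuracy $\Delta/3$ (hence using $O(d/\Delta^2)$ copies apiece), obtaining classical descriptions $\ket{\wh\psi_1},\dots,\ket{\wh\psi_k}$. A Chernoff bound shows that with probability at least $1-\gamma$ strictly more than $k/2$ of these are within trace distance $\Delta/3$ of $\ket{\psi}$. The algorithm then outputs any $\ket{\wh\psi_j}$ that lies within trace distance $2\Delta/3$ of at least $k/2$ of the remaining estimates: at least one such $j$ exists (every ``good'' index qualifies, by the triangle inequality), and whenever the Chernoff event holds, any qualifying $j$ must be within $2\Delta/3$ of some good index and therefore within $2\Delta/3 + \Delta/3 = \Delta$ of $\ket{\psi}$. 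Since this is purely classical post-processing, the overall copy complexity is $k \cdot O(d/\Delta^2) = \poly(d,\Delta^{-1},\log(\gamma^{-1}))$, which is the claimed $p_\tomography$.

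The step I expect to be the main obstacle is the evaluation of the Haar/Beta integral (equivalently, the representation-theoretic identity for the covariant measurement on $\sym^n(\C^d)$) that produces the $O(d/n)$ bound on the expected infidelity; once that bound is in hand, the remainder is just Markov's inequality plus the standard approximate-median amplification. A secondary, more routine, technical point is discretizing the continuous POVM and arguing efficient implementability, which is exactly the content carried out in~\cite{OW16}.
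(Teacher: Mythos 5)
Your proposal is correct, but it is worth noting that the paper does not prove this statement at all: it is imported verbatim as a black box from~\cite{OW16}, so any proof you give is by construction a different route. What you have written is a legitimate and essentially standard self-contained derivation for the pure-state case, and it is considerably more elementary than the general mixed-state tomography machinery of~\cite{OW16}: the covariant POVM $\set{d_n \ketbra{\phi^{\otimes n}}{\phi^{\otimes n}}\,d\phi}$ with $d_n = \binom{n+d-1}{n}$ is a valid measurement on $\sym^n(\C^d)$ by exactly the symmetric-subspace identity the paper already relies on (\Cref{fact:avg-haar-random} is its type-basis form), the $\mathrm{Beta}(n+1,d-1)$ computation giving $\Ex[1-|\braket{\phi}{\psi}|^2] = (d-1)/(n+d) \leq d/n$ is right, and Markov plus $\TD(\ketbra{\psi}{\psi},\ketbra{\phi}{\phi}) = \sqrt{1-|\braket{\phi}{\psi}|^2}$ yields the $O(d/\Delta^2)$ weak guarantee; the approximate-median amplification to confidence $1-\gamma$ with an $O(\log(\gamma^{-1}))$ overhead is also standard and correctly argued. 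Two minor points: the paper only ever invokes this theorem for $d = \poly(\secp)$-dimensional states and with computationally unbounded parties, so sample efficiency is all that matters and the discretization/implementability discussion (where your net size should really be $(d/\Delta)^{O(d)}$ rather than $d^{O(d)}$, and the discretized outcomes should be realized as a coarse-graining of the continuous POVM so that it remains a valid measurement) is indeed a side issue; and in the median step the counting threshold should be stated carefully (strictly more than $k/2$ good indices guarantees each good index is within $2\Delta/3$ of at least $\lceil k/2 \rceil - 1$ \emph{other} estimates), but this is cosmetic. What your approach buys is a self-contained proof with explicit constants $p_\tomography = O(d\Delta^{-2}\log(\gamma^{-1}))$; what the citation buys the paper is coverage of mixed states and optimal dependence, neither of which is needed here.
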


\begin{lemma}[Compling out $G$ from $\KA^G$] \label{lem:compile_KA}
If QCCC key agreements relative to $G$ (the keyed common Haar state oracle defined in~\Cref{sec:separating_oracle}) exist, then there exists an $(\veps,\delta)$-QCCC key agreement in the plain model such that $\veps(\secp)$ is an inverse polynomial and $\delta(\secp) \leq 0.2$ for sufficiently large $\secp\in\N$.
\end{lemma}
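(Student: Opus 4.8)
The plan is a two–step ``compiling out'' argument. Assume towards a contradiction that $\KA^G=(\alice,\bob)$ is a QCCC key agreement relative to the keyed common Haar state oracle $G$ from~\Cref{sec:separating_oracle}, so it is $\negl$-complete and, for every polynomial bound on the number of (classical) oracle queries, $\negl$-secure. In Step~1 I would use LOCC Haar indistinguishability (\Cref{cor:LOCC_repetition}) to pass from $\KA^G$ (whose common Haar states have length $n=\omega(\log\secp)$) to a scheme $\KA_1^{G_1}$ relative to a \emph{shortened} keyed oracle $G_1$ whose Haar states have length $n_1:=c\log\secp$ for a constant $c$ chosen in Step~2: the idea is that each honest party can draw the ``bulk'' of every common Haar state \emph{locally and independently of the other party}, keeping only a consistent short piece, and since $\alice,\bob$ communicate only classically this substitution is undetectable up to a negligible change in the distribution of every classical quantity they produce (the transcript, the agreed keys, and any eavesdropper's output). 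The point of the shortening is that the common Haar states of $\KA_1^{G_1}$ now live in dimension $2^{n_1}=\poly(\secp)$.

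\textbf{Step 2: removing the short oracle.} Because $G_1$'s states are polynomial dimensional, quantum state tomography (\Cref{thm:tomography}) lets a computationally unbounded party reconstruct, from $\poly(2^{n_1},\Delta^{-1})=\poly(\secp)$ copies, a classical description of any such state up to trace distance $\Delta$. I would then define the plain-model protocol $\KA_2$ in which the honest parties themselves produce the short states $G_1$ would have supplied (sampling and then, via the classical channel, agreeing on classical descriptions of the relevant ones) and run $\KA_1$. Completeness of $\KA_2$ follows from $\KA_1^{G_1}$ up to the LOCC error of~\Cref{cor:LOCC_repetition} with $d=2^{n_1}$, which is now only \emph{inverse polynomial} (this is precisely why the lemma only claims $\veps$ inverse polynomial rather than negligible). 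For security, a plain-model unbounded eavesdropper against $\KA_2$ is turned into an oracle eavesdropper against $\KA_1^{G_1}$: the latter uses its $G_1$-queries plus tomography to recover classical descriptions of the (polynomially many) short states touched in the execution, together with the conditional-independence structure of QCCC protocols (\Cref{lem:Cond_Indep}) to simulate both parties post-selected on the transcript, thereby reproducing the plain-model eavesdropper's view; the advantage changes by at most an inverse polynomial, so a $>0.2$-advantage attack on $\KA_2$ would contradict the $\negl$-security of $\KA_1^{G_1}$ from Step~1. Hence $\KA_2$ is a plain-model $(\veps,\delta)$-QCCC key agreement with $\veps$ inverse polynomial and $\delta\le 0.2$ (in fact much smaller). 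Note that the honest parties of $\KA_2$ are time-inefficient (they sample Haar states), but this is harmless here since the plain-model impossibility~\Cref{lem:Imp_KA_plain} holds for time-inefficient protocols as well; combining Lemma~\ref{lem:compile_KA} with~\Cref{lem:Imp_KA_plain} then gives $\veps+\delta\ge 1/2$, a contradiction for large $\secp$, so no QCCC key agreement relative to $G$ exists.

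\textbf{Main obstacle.} The delicate part is Step~1, for two reasons. First, the oracle $G$ has \emph{exponentially} many keys, so one cannot na\"ively apply~\Cref{cor:LOCC_repetition} to all $2^\secp$ state-instances (that bound would be useless); instead the hybrid must be arranged so that only the at most $2q(\secp)$ keys actually queried in a run are charged, which requires some care with the adaptivity of the honest parties' queries. Second, one must argue that replacing a long common Haar state by a short consistent piece plus locally drawn randomness really is an admissible substitution for the honest algorithms — i.e. reduce completeness and security of $\KA_1^{G_1}$ to those of $\KA^G$ through the LOCC indistinguishability — which is where~\Cref{cor:LOCC_repetition} is invoked in a way that must interact correctly with the honest parties' internal quantum computation. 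The remaining work (Step~2) is mostly bookkeeping: choosing the constant $c$ and the tomography accuracy $\Delta$ as inverse polynomials small enough that all accumulated errors stay comfortably below $0.2$, and below $1/2-o(1)$, so that the contradiction with~\Cref{lem:Imp_KA_plain} goes through.
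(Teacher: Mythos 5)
Your overall skeleton is the paper's: reduce $\KA^G$ to a protocol over a ``small'' oracle, then use tomography (\Cref{thm:tomography}) to communicate classical descriptions of the remaining states and land in the plain model, and finally invoke \Cref{lem:Imp_KA_plain}. You also correctly flag the two real difficulties (charging only the $\le 2q$ adaptively chosen queries via lazy evaluation, and making the substitution interact correctly with the honest parties' quantum computation). However, the mechanism you describe for Step~1 --- keeping ``a consistent short piece'' of \emph{each} common Haar state and having each party draw the ``bulk'' of that state locally --- does not work as stated. A Haar state on $n$ qubits is generically entangled across any cut of its qubits, and its reduced state on a few qubits is close to maximally mixed; there is no decomposition into a shared $c\log\secp$-qubit Haar piece tensored with independently drawn local randomness, so this substitution is not ``undetectable up to negligible error'' and no lemma in the paper supports it. What the paper actually does is truncate the oracle \emph{by key length}: it sets $\Lambda=O(\log\secp)$ and (i) for keys of length $>\Lambda$ replaces the shared state \emph{entirely} by i.i.d.\ locally sampled Haar states, which is justified by \Cref{cor:LOCC_repetition} with \emph{negligible} error because those states are long; (ii) for keys of length $\le\Lambda$ it keeps the states \emph{shared}, which is feasible in the plain model because they are $\poly(\secp)$-dimensional, by having one party sample them and send classical descriptions obtained via tomography.

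This dichotomy also explains a second inaccuracy in your accounting: you attribute the inverse-polynomial completeness loss to ``the LOCC error of \Cref{cor:LOCC_repetition} with $d=2^{n_1}$.'' But the LOCC bound $O(t^2/d)$ is vacuous (can exceed $1$) for the shortest keys, which is precisely why the short states must \emph{not} be made independent; the inverse-polynomial loss in the paper comes instead from the tomography precision $\Delta$ summed over the $\sum_{i\le\Lambda}2^i=\poly(\secp)$ short keys (plus the negligible LOCC loss for long keys). Your Step~2 does mention agreeing on classical descriptions over the channel, so you have the right ingredient; but as written, Step~1 relies on an invalid truncation of individual Haar states, and the error analysis is pinned to the wrong source. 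Fixing Step~1 to the key-length truncation (and correspondingly re-deriving the $1/\poly$ completeness and the $0.1\to 0.2$ security degradation through the tomography reduction, as in the paper) is needed for the argument to go through.
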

\begin{proof}
Let $\KA^G = (\alice^G,\bob^G)$ be a QCCC key agreement relative to $G = \set{\set{G_k}_{k \in \bit^\secp}}_{\secp\in\N}$ in which $\alice$ and $\bob$ each ask $q(\secp) = \poly(\secp)$ queries with the maximum input length of the queries being $L(\secp) = \poly(\secp)$. Define $\Lambda(\secp) := \ceil{\log(q^{10} + L^{10} + \secp^{10})} = O(\log(\secp))$ and the ``truncated'' oracle $G_\Lambda = \set{\set{G_k}_{k\in\bit^i}}_{i=1}^\Lambda$. We define the following hybrid protocol $\wt{\KA}^{G_\Lambda} = (\wt{\alice}^{G_\Lambda},\wt{\bob}^{G_\Lambda})$:

\begin{mdframed}
$\wt{\KA}^{G_\Lambda}(1^\secp,\wt{\alice}^{G_\Lambda},\wt{\bob}^{G_\Lambda})$:
\begin{enumerate}
\item For every $k\in\bigcup_{i=\Lambda+1}^L\bit^i$, $\wt{\alice}$ and $\wt{\bob}$ samples $\ket{\phi^\alice_k},\ket{\phi^\bob_k}\gets\Haar_{|k|}$ respectively.
\item $(\wt{\alice}^{G_\Lambda},\wt{\bob}^{G_\Lambda})$ runs $(\alice^G, \bob^G)$ on $1^\secp$ by answering the queries as follows: Suppose $\alice$ asks a query $k\in\bigcup_{i=1}^L\bit^i$. If $|k|\leq \Lambda$, then $\wt{\alice}$ asks $k$ to oracle $G_\Lambda$ and forwards the response. Otherwise, $\wt{\alice}$ sends $\ket{\phi^\alice_k}$ to $\alice$. $\wt{\bob}$ answers the queries of $\bob$ similarly by replacing $\ket{\phi^\alice_k}$ with $\ket{\phi^\bob_k}$.
\item $\wt{\alice}$ outputs the key $k_\alice$ generated by $\alice$ and $\wt{\bob}$ outputs key $k_\bob$ generated by $\bob$.
\end{enumerate}
\end{mdframed}
\paragraph{$\wt{\KA}^{G_\Lambda}$ is query-efficient.}
Since $(\wt{\alice}^{G_\Lambda},\wt{\bob}^{G_\Lambda})$ needs to sample Haar states in Step~1, $\wt{\KA}^{G_\Lambda}$ is not time-efficient. However, each of $\wt{\alice}^{G_\Lambda},\wt{\bob}^{G_\Lambda}$ makes at most $q$ queries in Step~2 in $\wt{\KA}^{G_\Lambda}$.

\paragraph{$\wt{\KA}^{G_\Lambda}$ is $1/\poly$-complete.}
First, we prove that $\wt{\KA}^{G_\Lambda}$ satisfies completeness. The idea is similar to the proof of~\Cref{thm:Imp_QCCC_CHS}. Define LOCC distinguisher $(\alice_\LOCC,\bob_\LOCC)$ for the task in~\Cref{cor:LOCC_repetition} with the following parameters: $t = 2q$, $n_{2qi+j} = \Lambda + i + 1$ for $i = 0,1,\dots,L - \Lambda - 1$ and $j = 1,2,\dots,2q$, and thus $s = (L - \Lambda) \cdot 2q$:\footnote{For $k\in [s]$, we represent the $k$-th state by $\ket{\psi_{i+\Lambda+1}^{j}}$ ($\ket{\phi_{i+\Lambda+1}^{j}}$ resp.) where $i,j$ are determined by uniquely writing $k = 2qi+j$ for $i = 0,1,\dots,L - \Lambda - 1$ and $j = 1,2,\dots,2q$.}
\begin{enumerate}
    \item $\alice_\LOCC$ and $\bob_\LOCC$ receive input register.
    \item $\alice_\LOCC$ samples oracle $G_\Lambda$ and sends its description to $\bob_\LOCC$.
    \item $\alice_\LOCC$ and $\bob_\LOCC$ initialize lists $\cL_{\ell} = \set{(1,\bot),(2,\bot),\dots,(2q,\bot)}$ for answering queries of different lengths $\ell = \Lambda + 1,\Lambda + 2,\dots,\lambda$ (let $\cL := \set{\cL_{\ell}}_{\ell\in [\Lambda+1:L]}$), and runs $\KA^{(\cdot)} = (\alice^{(\cdot)},\bob^{(\cdot)})$ on $1^\secp$ by lazy evaluation and jointly maintaining the list $\cL$ as follows: \\
    
    In the $r$-th round (suppose it's $\alice_\LOCC$'s round), upon received the message $t_{r-1}$ and list $\cL$ from $\bob_\LOCC$ in the $(r-1)$-th round, $\alice_\LOCC$ feeds $t_{r-1}$ to $\alice$.\footnote{In the first round (suppose it's $\alice_\LOCC$'s round), $\alice_\LOCC$ simply runs $\alice$ on input the security parameter and $t_0 := \bot$.} Upon receiving $\alice$'s query $x \in \bigcup_{i=1}^L \bit^i$, if $|x|\leq \Lambda$, then $\alice_\LOCC$ uses $G_\Lambda$ to answer the query. Otherwise, $\alice_\LOCC$ checks if $(i,x)$ is in $\cL_{|x|}$ for some $i\in[2q]$ (\ie whether $x$ has already been queried by $\alice$ or $\bob$). If $(i,x)\in\cL_{|x|}$, then $\alice_\LOCC$ answers the query using a copy of $\ket{\psi_{|x|}^i}$. Otherwise, $\alice_\LOCC$ finds the first index $i\in[2q]$ such that $(i,\bot)\in\cL_{|x|}$, updates it into $(i,x)$, and answers the query using a copy of $\ket{\psi_{|x|}^i}$. At the end of the round, $A$ outputs a classical message $t_r$. Then $\alice_\LOCC$ sends $\cL$ and $t_r$ to $\bob_\LOCC$.\footnote{In $\bob_\LOCC$'s round, $\bob$ acts similarly as defined above.}
    \item At the end of the protocol, $\alice, \bob$ outputs the keys $k_\alice, k_\bob$ respectively.
    \item $\alice_\LOCC$ sends $k_\alice$ to $\bob_\LOCC$, and $\bob_\LOCC$ outputs $1$ if $k_\alice = k_\bob$.
\end{enumerate}
Hence, $(\alice, \bob)$ asks at most $2q$ queries in total, $(\alice_\LOCC, \bob_\LOCC)$ perfectly simulates either $\KA^G$ or $\wt{\KA}^{G_\Lambda}$ depending on if they obtained the same states or i.i.d. states. Hence, 
by~\Cref{cor:LOCC_repetition} we have
\[
\left| \Pr_{\KA^G}[k_\alice = k_\bob] - \Pr_{\wt{\KA}^{G_{\Lambda}}}[k_\alice = k_\bob] \right|
\leq O\left( \sum_{n = \Lambda+1}^L 2q \cdot \frac{(2q)^2}{2^n} \right)
\leq O\left( L \cdot \frac{q^3}{2^\Lambda} \right),
\]
which implies $\Pr_{\wt{\KA}^{G_\Lambda}}[k_\alice = k_\bob] \geq \Pr_{\KA^G}[k_\alice = k_\bob] - O\left( L q^3/2^\Lambda \right) = 1 - 1/\poly(\secp)$ for some polynomial $\poly$.

\paragraph{$\wt{\KA}^{G_\Lambda}$ is $0.1$-secure.}
Next, we claim that for any polynomial $p$ and eavesdropper that asks $p(\secp)$ classical queries to $G_\Lambda$, her advantage of finding $k_\bob$ in $\wt{\KA}^{G_\Lambda}$ is at most $0.1$ for sufficiently large $\secp$. For contradiction, suppose there exist a polynomial $p$ and an eavesdropper $\wt{\eve}$ that asks $p(\secp)$ classical queries to $G_\Lambda$ and finds $k_\bob$ with advantage at least $0.1$ for infinitely many $\secp$ in $\wt{\KA}^{G_\Lambda}$. Then we construct following the LOCC distinguisher: $(\alice_\LOCC,\bob_\LOCC)$ first run $\KA^G$ as the previous paragraph and obtains $k_\alice,k_\bob$ and the transcript $\tau$. Then $\bob_\LOCC$ runs $\wt{\eve}$ on input the transcript $\tau$, answers the queries by $G_\Lambda$ defined by themselves (without using any input state), and obtains a key $k_\eve$. $\bob_\LOCC$ outputs $1$ if $k_\bob = k_\eve$. By the same argument, $(\alice_\LOCC, \bob_\LOCC)$ perfectly simulates either $\wt{\eve}^G$ in $\KA^G$ or $\wt{\eve}^{G_\Lambda}$ in $\wt{\KA}^{G_\Lambda}$ depending on if they got the same states or i.i.d. states. Hence, by~\Cref{cor:LOCC_repetition} we have
\[
\left| \Pr_{\KA^G}[k_\bob = k_\eve] - \Pr_{\wt{\KA}^{G_{\Lambda}}}[k_\bob = k_\eve] \right|
\leq O\left( \sum_{n = \Lambda+1}^L 2q \cdot \frac{(2q)^2}{2^n} \right)
\leq O\left( L \cdot \frac{q^3}{2^\Lambda} \right),
\]
which implies $\Pr_{\wt{\KA}^{G_\Lambda}}[k_\bob = k_\eve] \geq \Pr_{\KA^G}[k_\bob = k_\eve] - O\left( L q^3/2^\Lambda \right) \geq 0.1 - O\left( L q^3/2^\Lambda \right)$ for infinitely many $\secp$. However, this contradicts the security of $\KA^G$.
\paragraph{Getting to plain model:}
Finally, define the following protocol $\KA_{\plain}(\alice_\plain,\bob_\plain)$ in the plain model:
\begin{mdframed}
$\KA_{\plain}(1^\secp,\alice_\plain,\bob_\plain)$:
\begin{enumerate}
\item For every $k\in\bigcup_{i=1}^\Lambda\bit^i$, $\alice_\plain$ samples $\ket{\psi_k}\gets\Haar_{|k|}$.
\item For every $k\in\bigcup_{i=1}^\Lambda\bit^i$, $\alice_\plain$ run $\tomography$ (defined in~\Cref{thm:tomography}) on $\ket{\psi_k}$ with parameters $\Delta = 2^{-2\Lambda}$ and $\gamma = 2^{-\secp}$ to obtain the classical description of $\ket{\wh{\psi}_k}$.\footnote{Note that $\alice_\plain$ samples $\ket{\psi_k}$ and thus has its classical description. Performing tomography is merely for the simplicity of proof.}
\item For every $k\in\bigcup_{i=1}^\Lambda\bit^i$, $\alice_\plain$ sends the description of $\ket{\wh{\psi}_k}$ to $\bob_\plain$.
\item $(\alice_\plain,\bob_\plain)$ define the output of the oracle $\wh{G}_\Lambda$ to be $\set{\set{\ket{\wh{\psi}_k}}_{k\in\bit^i}}_{i\in\set{1,\dots,\Lambda}}$.
\item $(\alice_\plain,\bob_\plain)$ runs $\wt{\KA}^{\wh{G}_\Lambda}$ on $1^\secp$ to obtain $(k_\alice,k_\bob)$.
\item $\alice_\plain$ outputs key $k_\alice$ and $\bob_\plain$ outputs key $k_\bob$ respectively.
\end{enumerate}
\end{mdframed}

\paragraph{$\KA_{\plain}$ is $1/\poly$-complete.}
Define the event $\good$ in $\KA_\plain$ as:
\[
\good \equiv \bigwedge_{k \in \bigcup^\Lambda_{i=1} \bit^i} \left[ \TD(\ketbra{\psi_k}{\psi_k}, \ketbra{\wh{\psi}_k}{\wh{\psi}_k}) \leq \Delta \right].
\]
From the guarantee of tomography (\Cref{thm:tomography}) and a union bound, the probability of $\good$ happening is at least $1 - \sum_{i=1}^\Lambda 2^i \cdot \gamma = 1 - \negl(\secp)$. 
%We now consider any pair $\set{\set{\ket{\psi_k}}_{k\in\bit^i}}_{i=1}^\Lambda$ and $\set{\set{\ket{\wh{\psi}_k}}_{k\in\bit^i}}_{i=1}^\Lambda$ such that event $\good$ occurs.
Since $\wt{\alice}^{(\cdot)}$ and $\wt{\bob}^{(\cdot)}$ in $\wt{\KA}^{G_\Lambda}$ (\resp $\wt{\KA}^{\wh{G}_\Lambda}$) ask a total of $2q$ queries, one can use $\set{\set{\ket{\psi_k}^{\otimes 2q}}_{k\in\bit^i}}_{i=1}^\Lambda$ (\resp $\set{\set{\ket{\wh{\psi}_k}^{\otimes 2q}}_{k\in\bit^i}}_{i=1}^\Lambda$) to perfectly answer $\alice$'s and $\bob$'s queries. Hence, from the operational definition of trace distance, we have
\begin{align*}
& \left| \Pr_{\KA_{\plain}}[k_\alice = k_\bob] - \Pr_{\wt{\KA}^{G_\Lambda}}[k_\alice = k_\bob] \right| \\
& \leq \Pr[\neg\good] + \Ex\left[ \TD\left( \bigotimes_{i=1}^\Lambda\bigotimes_{k\in\bit^i} \ketbra{\psi_k}{\psi_k}^{\otimes 2q}, \bigotimes_{i=1}^\Lambda\bigotimes_{k\in\bit^i} \ketbra{\wh{\psi}_k}{\wh{\psi}_k}^{\otimes 2q} \right) \mid \good \right] \\
& \leq \Pr[\neg\good] + \sum_{i=1}^\Lambda \sum_{k\in\bit^i} 2q \cdot \Ex\left[ \TD(\ketbra{\psi_k}{\psi_k}, \ketbra{\wh{\psi}_k}{\wh{\psi}_k}) \mid \good \right] \\
& \leq \negl(\secp) + 2q \cdot \sum_{i=1}^\Lambda 2^i \cdot \Delta 
\leq \frac{1}{\poly'(\secp)}
\end{align*}
for some polynomial $\poly'$. Hence, the completeness of $\KA_{\plain}$ is at least
\[
\Pr_{\KA_{\plain}}[k_\alice = k_\bob]
\geq \Pr_{\wt{\KA}^{G_\Lambda}}[k_\alice = k_\bob] - \frac{1}{\poly'(\secp)}
= 1 - \frac{1}{\poly(\secp)} - \frac{1}{\poly'(\secp)}
= 1 - \veps(\secp)
\]
for some inverse polynomial $\veps$, where the first equality is because $\wt{\KA}^{G_\Lambda}$ is $1/\poly(\secp)$-complete for some polynomial $\poly$.

\paragraph{$\KA_{\plain}$ is $0.2$-secure.}
For contradiction, suppose there exists an eavesdropper $\eve_\plain$ that finds $k_\bob$ in $\KA_\plain$ with advantage $0.2$ for infinitely many $\secp$. We construct the following eavesdropper $\wt{\eve}^{G_\Lambda}$ for $\wt{\KA}^{G_\Lambda}$ by using $\eve_\plain$ as follows.
\begin{mdframed}
$\wt{\eve}^{G_\Lambda}(1^\secp, \tau)$:
\begin{enumerate}
    \item For every $k\in\bigcup_{i=1}^\Lambda\bit^i$, ask $p_{\tomography}(2^{|k|},\Delta^{-1},\log(\gamma^{-1}))$ queries to $G_k$ with parameters $\Delta = 2^{-2\Lambda}$ and $\gamma = 2^{-\secp}$ to get $\set{\ket{\psi_k}^{\otimes p_{\tomography}(2^{|k|},\Delta^{-1},\log(\gamma^{-1}))}}_{k\in\bigcup_{i=1}^\Lambda\bit^i}$.
    \item Perform $\tomography$ (defined in~\Cref{thm:tomography}) on every state obtained in the previous step to obtain the description of $\set{\ket{\wh{\psi}_k}}_{k\in\bigcup_{i=1}^\Lambda\bit^i}$.
    \item Run $\eve_\plain$ on input $\tau$ and all the descriptions obtained by tomography, and set $k_\eve$ to the output of $\eve_\plain$.
    \item Output $k_\eve$.
\end{enumerate}
\end{mdframed}
First, $\wt{\eve}^{G_\Lambda}$ makes at most $\sum_{i=1}^\Lambda 2^i \cdot p_{\tomography}(2^i,\Delta^{-1},\log(\gamma^{-1})) = p(\secp)$ queries for some polynomial $p$. Next, in $\wt{\KA}^{G_\Lambda}$, the joint distribution of $G_\Lambda$ and the description $\set{\ket{\wh{\psi}_k}}_{k\in\bigcup_{i=1}^\Lambda\bit^i}$ obtained from tomography in Step~2 of $\wt{\eve}^{G_\Lambda}$ is identically distributed as Steps~1 to 3 in $\KA_\plain$. Now, from the correctness guarantee of $\tomography$, there is a $1-\negl(\secp)$ fraction of $\set{\ket{\psi_k}}_{k\in\bigcup_{i=1}^\Lambda\bit^i}$ and $\set{\ket{\wh{\psi}_k}}_{k\in\bigcup_{i=1}^\Lambda\bit^i}$ such that event $\good$ occurs. By the same argument in the previous paragraph, the distributions of $(\tau,k_\alice,k_\bob)$ generated by $(\wt{\alice}^{G_\Lambda},\wt{\bob}^{G_\Lambda})$ and $(\wt{\alice}^{\wh{G}_\Lambda}, \wt{\bob}^{\wh{G}_\Lambda})$ are $1/\poly'(\secp)$-close in statistical distance. Since $\eve_\plain$ takes as input $\tau$ and $\set{\ket{\wh{\psi}_k}}_{k\in\bigcup_{i=1}^\Lambda\bit^i}$, $\wt{\eve}^{G_\Lambda}$ breaks the security of $\wt{\KA}^{G_\Lambda}$ with advantage at least $0.2 - 1/\poly'(\secp) > 0.1$ for infinitely many $\secp$, which  contradicts the security of $\wt{\KA}^{G_\Lambda}$.
\end{proof}

\begin{lemma} \label{lem:KA_eve}
There does not exist a secure QCCC key agreement relative to $G$.
\end{lemma}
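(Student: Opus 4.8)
The plan is to derive a contradiction by combining \Cref{lem:compile_KA} with \Cref{lem:Imp_KA_plain}. Suppose toward a contradiction that there is a secure QCCC key agreement relative to $G$, i.e.\ an $(\veps_0,p,\delta_0)$-QCCC key agreement relative to $G$ with $\veps_0$ negligible and, for every polynomial $p$, a negligible $\delta_0$ witnessing $(p,\delta_0)$-security. First I would invoke \Cref{lem:compile_KA}, which from this assumption produces an $(\veps,\delta)$-QCCC key agreement \emph{in the plain model} (with possibly time-inefficient honest parties, which is harmless for an impossibility statement) such that $\veps(\secp)$ is bounded by an inverse polynomial and $\delta(\secp)\leq 0.2$ for all sufficiently large $\secp$.

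Next, since $\veps(\secp)\to 0$ as $\secp\to\infty$, there is a value $\secp^*$ with $\veps(\secp^*)<0.3$ and $\delta(\secp^*)\leq 0.2$, so that $\veps(\secp^*)+\delta(\secp^*)<1/2$. But \Cref{lem:Imp_KA_plain} asserts that any $(\veps,\delta)$-QCCC key agreement in the plain model satisfies $\veps(\secp)+\delta(\secp)\geq 1/2$ for every $\secp\in\N$, in particular at $\secp^*$. This contradiction shows that no secure QCCC key agreement relative to $G$ can exist, which is the claim.

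The real content has already been carried out in the earlier lemmas: \Cref{lem:Imp_KA_plain} exploits the conditional-independence structure of QCCC transcripts (\Cref{lem:Cond_Indep}) together with the fact that an unbounded eavesdropper can coherently rerun the protocol conditioned on the transcript, while \Cref{lem:compile_KA} is where \Cref{cor:LOCC_repetition} (LOCC Haar indistinguishability) and \Cref{thm:tomography} (state tomography) are combined to replace the shared Haar oracle by independently sampled states, truncate the oracle to logarithmic-length states, and then replace oracle access by classically described, tomographed states. Hence at this final step the only thing to be careful about is bookkeeping: that the ``sufficiently large $\secp$'' quantifier in \Cref{lem:compile_KA} is compatible with the ``for all $\secp$'' quantifier in \Cref{lem:Imp_KA_plain} — it is, since a single large enough $\secp^*$ already yields the contradiction — and that the time-inefficiency of the honest algorithms produced by \Cref{lem:compile_KA} does not affect \Cref{lem:Imp_KA_plain}, whose proof makes no efficiency assumption on the honest parties. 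I do not anticipate any genuine obstacle here; the heavy lifting is entirely in the cited lemmas.
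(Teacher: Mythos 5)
Your proposal is correct and matches the paper's proof, which likewise derives the statement immediately by combining \Cref{lem:compile_KA} with \Cref{lem:Imp_KA_plain}; your explicit bookkeeping of the quantifiers (choosing a single sufficiently large $\secp^*$ with $\veps(\secp^*)+\delta(\secp^*)<1/2$) is exactly the intended, if unstated, reconciliation.
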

\begin{proof}
It immediately follows from~\Cref{lem:Imp_KA_plain,lem:compile_KA}.
\end{proof}

\begin{theorem} \label{thm:QBB_KA}
There does not exist a quantum fully black-box reduction $(C,S)$ from QCCC key agreements to $(\secp,\omega(\log(\secp)))$-PRSGs such that $C$ only asks classical queries to the PRSG.
\end{theorem}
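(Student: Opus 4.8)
The plan is to combine the two facts established above --- that $(\secp,\omega(\log(\secp)))$-PRSGs exist relative to the keyed common Haar state oracle $G$ (\Cref{lem:oracle_implies_PRS}) and that no secure QCCC key agreement exists relative to $G$ (\Cref{lem:KA_eve}) --- via the standard template by which an oracle separation rules out a fully black-box reduction. Suppose toward a contradiction that $(C,S)$ is a quantum fully black-box reduction from QCCC key agreements to $(\secp,\omega(\log(\secp)))$-PRSGs in which $C$ issues only classical queries to the PRSG. Let $\mathcal{I}$ be the oracle algorithm that on input $k$ queries $G_k$ and outputs the resulting Haar state $\ket{\vartheta_k}$; by \Cref{lem:oracle_implies_PRS}, $\mathcal{I}^G$ is a secure implementation of a $(\secp,\omega(\log(\secp)))$-PRSG relative to $G$, and crucially its security holds against \emph{every} computationally unbounded adversary that makes polynomially many quantum queries to $G$.

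First I would use the correctness of the reduction: since $\mathcal{I}^G\in\cF_\cQ$, we get $C^{\mathcal{I}^G}\in\cF_\cP$, so $C^{\mathcal{I}^G}$ is a complete QCCC key agreement. Because $C$ is efficient and each of its (classical) queries to $\mathcal{I}^G$ triggers exactly one classical query to $G$, the protocol $C^{\mathcal{I}^G}$ is a QCCC key agreement relative to $G$ whose honest parties make $\poly(\secp)$ classical queries to $G$ of polynomially bounded input length --- exactly the form handled by \Cref{lem:compile_KA}. Applying \Cref{lem:KA_eve} to this complete key agreement, it cannot be secure relative to $G$: there exist a polynomial $p$ and a computationally unbounded eavesdropper $\cE$ making $p(\secp)$ classical queries to $G$ that, given the transcript of $C^{\mathcal{I}^G}$, outputs Bob's key with probability $1/2 + 1/\poly(\secp)$. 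In the language of the reduction framework, $\cE$ $\cP$-breaks $C^{\mathcal{I}^G}$.

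Next I would invoke the security property of the reduction. Since its security clause quantifies over all quantum algorithms $\mathcal{A}$ --- not only efficient ones --- it applies to our inefficient $\cE$: hence $S^{\cE,\mathcal{I}^G}$ $\cQ$-breaks $\mathcal{I}^G$, i.e., it distinguishes the outputs of $\mathcal{I}^G$ from Haar-random states with non-negligible advantage. The final step is a query count: $S$ is efficient, so it makes $\poly(\secp)$ queries to its oracles; each invocation of $\cE$ by $S$ costs $\cE$ at most $p(\secp)$ classical queries to $G$, and each query $S$ or $\cE$ makes to $\mathcal{I}^G$ is a single query to $G$. Hence $S^{\cE,\mathcal{I}^G}$, viewed as an adversary against the PRSG, makes only polynomially many queries to $G$ --- it is allowed to be computationally unbounded because $\cE$ is. This contradicts the security of $\mathcal{I}^G$ guaranteed by \Cref{lem:oracle_implies_PRS}, which holds precisely against computationally unbounded, polynomially-query-bounded adversaries. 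Therefore no such reduction exists.

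I expect the main obstacle to be bookkeeping rather than a conceptual leap: one must (i) correctly match the ``breaking'' notions --- the completeness error of the key agreement, the security advantage of the eavesdropper, and the distinguishing advantage against the PRSG --- across the reduction's two clauses; (ii) verify that the compile-out argument underlying \Cref{lem:KA_eve} genuinely applies to $C^{\mathcal{I}^G}$, which is where the hypothesis that $C$ makes only classical queries is used, so that the honest parties' interaction with $G$ can be simulated by fresh per-party Haar states via lazy evaluation as in \Cref{lem:compile_KA}; and (iii) confirm that the composed adversary $S^{\cE,\mathcal{I}^G}$ stays within the polynomial query budget of \Cref{lem:oracle_implies_PRS} despite $\cE$ being unbounded in time.
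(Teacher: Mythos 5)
Your proposal follows exactly the paper's own argument: instantiate the PRSG via the oracle $G$ (\Cref{lem:oracle_implies_PRS}), use correctness of the reduction to obtain a complete QCCC key agreement relative to $G$, invoke \Cref{lem:KA_eve} to get a poly-query eavesdropper breaking it, and then observe that $S$ composed with this eavesdropper and the implementation yields a poly-query adversary against the PRSG, contradicting \Cref{lem:oracle_implies_PRS}. The additional bookkeeping you flag (matching breaking notions, the role of classical queries, and the query budget) is consistent with what the paper implicitly relies on, so the proof is correct and essentially identical to the paper's.
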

\begin{proof}
For the sake of contradiction, suppose $(C,S)$ is a fully black-box reduction satisfying the conditions. Let $\cI$ be the implementation of $(\secp,\omega(\log(\secp)))$-PRSGs as stated in the proof of~\Cref{lem:oracle_implies_PRS}. Then $C^\cI$ is a key agreement that satisfies completeness. From~\Cref{lem:KA_eve}, there exists a poly-query adversary $\wt{\eve}$ that breaks the security of the QCCC key agreement $C^\cI$. Then $S^{{\wt{\eve}},\cI}$ by definition breaks the security of the $(\secp,\omega(\log(\secp)))$-PRSG $\cI$ by asking polynomially many queries to $\wt{\eve}$ and $\cI$, thus in total polynomial queries to $G$. However, this contradicts~\Cref{lem:oracle_implies_PRS}.
\end{proof}

\subsection{Separating QCCC Interactive Commitments from $(\secp,\omega(\log(\secp)))$-PRSGs}
\begin{definition}[QCCC interactive commitments relative to oracle]
A QCCC commitment relative to an oracle $\cO$ is a two-party interactive protocol consisting of a pair of uniform QPT oracle algorithms $(C,R)$, where $C$ is the committer and $R$ is the receiver. Let $q = q(\secp)$ be an arbitrary polynomial. Each of $C$ and $R$ can ask $q$ queries to the oracle $\cO$ and are allowed to communicate classically.
\begin{itemize}
    \item \textbf{Commit phase:} In the (possibly interactive) commit phase, $C$ takes as input the security parameter $1^\secp$ and a bit $b\in\bit$, and $R$ takes as input the security parameter $1^\secp$. We denote the execution of the commit phase by $(\sigma_{CR},\tau) \gets \commit\langle C^\cO(1^\secp,b), R^\cO(1^\secp) \rangle$, where $\sigma_{CR}$ is the joint state of $C$ and $R$ after the commit phase, and $\tau$ denotes the transcript in the commit phase.
    \item \textbf{Reveal phase:} In the (possibly interactive) reveal phase, the output is $\mu\in\set{0,1,\bot}$ indicating the receiver's output bit or abort. We denote the execution of the reveal phase by $\mu \gets \reveal\langle C^\cO(1^\secp,b), R^\cO(1^\secp), \sigma_{CR}, \tau \rangle$.
\end{itemize}
The scheme satisfies the following conditions.
\begin{itemize}
    \item \textbf{$\veps$-completeness.} For all $\secp\in\N$,
    \[
    \Pr\left[
    \mu = b: 
    \substack{ O \gets \cO, \\
    b \gets \bit, \\
    (\sigma_{CR},\tau) \gets \commit\langle C^O(1^\secp,b), R^O(1^\secp) \rangle, \\
    \mu \gets \reveal\langle C^\cO(1^\secp,b), R^\cO(1^\secp), \sigma_{CR}, \tau \rangle, \\
    \mu\in\set{0,1,\bot}
    }
    \right] 
    \geq 1 - \veps(\secp).
    \]
    If $\veps$ is negligible, then we simply say that it is complete.
    \item \textbf{Statistical hiding.} For any polynomial $p$ and any computationally unbounded malicious receiver $R^*$ who asks at most $p(\secp)$ classical queries, there exists a negligible function $\negl$ such that for all $\secp\in\N$,
    \[
    \Pr\left[
    b' = b:
    \substack{ O \gets \cO, \\
    b \gets \bit, \\
    (\sigma_{CR^*},\tau) \gets \commit\langle C^O(1^\secp,b), {R^*}^O(1^\secp) \rangle, \\
    b' \gets {R^*}^O(\sigma_{R^*}, \tau)
    }
    \right] \leq \frac{1}{2} + \negl(\secp),
    \]
    where $\sigma_{R^*}$ denotes the state obtained by tracing out the committer's part of the state $\sigma_{CR^*}$.
    \item \textbf{Statistical binding.} For any polynomial $p$ and any computationally unbounded malicious committer $C^*$ who asks $p(\secp)$ classical queries, there exists a negligible function $\negl$ such that for all $\secp\in\N$,
    \[
    \Pr\left[
    \mu = \ch:
    \substack{ O \gets \cO, \\
    (\sigma_{C^*R},\tau) \gets \commit\langle {C^*}^O(1^\secp), R^O(1^\secp) \rangle, \\
    \ch \gets \bit, \\
    \mu \gets \reveal\langle {C^*}^O(\ch), R^O, \sigma_{C^*R}, \tau \rangle
    }
    \right] \leq \frac{1}{2} + \negl(\secp).
    \]
\end{itemize}
\end{definition}

\noindent We need the following lemma regarding total variation distance.
\begin{lemma} \label{lem:tv_distance}
Let $\bfP_{BT},\bfQ_{BT}$ be two discrete distributions over $\bit \times \cT$. Consider the following experiment:
\begin{mdframed}
\begin{multicols}{2}
\noindent $\mathbf{Exp.0:}$
\begin{enumerate}
    \item Sample $(b,\tau) \gets \bfP_{BT}$.
    \item If $\bfQ_{T}(\tau) = 0$,\footnote{$\bfQ_{T}$ denotes the marginal distribution of $\bfQ_{BT}$ on $T$.} then set $b'$ to a uniform bit. Otherwise, set $b'$ to the more likely bit according to $\bfQ_{B \mid T = \tau}$.
    \item Output $(b,b',\tau)$.
\end{enumerate}

\columnbreak

\noindent $\mathbf{Exp.1:}$
\begin{enumerate}
    \item Sample $(b,\tau) \gets \bfP_{BT}$.
    \item Set $b'$ to the more likely bit according to $\bfP_{B \mid T = \tau}$.
    \item Output $(b,b',\tau)$.
\end{enumerate}

\end{multicols}
\end{mdframed}
Then it holds that
\[
\Pr_{\mathbf{Exp.0}}[b = b'] 
\geq \Pr_{\mathbf{Exp.1}}[b = b'] - 3\dtv(\bfP_{BT},\bfQ_{BT}).
\]
\end{lemma}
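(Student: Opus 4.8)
The plan is to avoid a per-$\tau$ comparison of the two experiments (which would only yield the bound $4\,\dtv(\bfP_{BT},\bfQ_{BT})$) and instead sandwich both success probabilities around a single fixed quantity: the Bayes-optimal accuracy of predicting $b$ from $\tau$ when $(b,\tau)\gets\bfQ_{BT}$. Write $f_\bfP(\tau)$ for the bit output in step~2 of $\mathbf{Exp.1}$ (the more likely bit under $\bfP_{B\mid T=\tau}$) and $f_\bfQ(\tau)$ for the bit used in step~2 of $\mathbf{Exp.0}$ when $\bfQ_T(\tau)>0$, extended by $f_\bfQ(\tau):=0$ when $\bfQ_T(\tau)=0$. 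Set $\alpha := \Pr_{(b,\tau)\gets\bfQ_{BT}}[b=f_\bfQ(\tau)] = \sum_\tau \bfQ_T(\tau)\max_{c\in\bit}\bfQ_{B\mid T=\tau}(c)$, the optimal such accuracy under $\bfQ$.

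First I would upper-bound $\Pr_{\mathbf{Exp.1}}[b=b']$. By construction $\Pr_{\mathbf{Exp.1}}[b=b'] = \Pr_{(b,\tau)\gets\bfP_{BT}}[b=f_\bfP(\tau)]$, and since $\{(b,\tau):b=f_\bfP(\tau)\}$ is a fixed event, the variational characterization of total variation distance gives $\Pr_{(b,\tau)\gets\bfP_{BT}}[b=f_\bfP(\tau)] \le \Pr_{(b,\tau)\gets\bfQ_{BT}}[b=f_\bfP(\tau)] + \dtv(\bfP_{BT},\bfQ_{BT})$; the right-hand probability equals $\sum_\tau \bfQ_T(\tau)\,\bfQ_{B\mid T=\tau}(f_\bfP(\tau)) \le \alpha$ because $f_\bfQ$ is the pointwise maximizer of $\bfQ_{B\mid T=\tau}$. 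Hence $\Pr_{\mathbf{Exp.1}}[b=b'] \le \alpha + \dtv(\bfP_{BT},\bfQ_{BT})$.

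Next I would lower-bound $\Pr_{\mathbf{Exp.0}}[b=b']$. Dropping the nonnegative contribution of the $\tau$'s with $\bfQ_T(\tau)=0$ and then adding back the missing terms (each at most $\bfP_T(\tau)$) shows $\Pr_{\mathbf{Exp.0}}[b=b'] \ge \Pr_{(b,\tau)\gets\bfP_{BT}}[b=f_\bfQ(\tau)] - \Pr_{(b,\tau)\gets\bfP_{BT}}[\bfQ_T(\tau)=0]$. The first term is at least $\Pr_{(b,\tau)\gets\bfQ_{BT}}[b=f_\bfQ(\tau)] - \dtv(\bfP_{BT},\bfQ_{BT}) = \alpha - \dtv(\bfP_{BT},\bfQ_{BT})$ by the same total-variation bound, and the second term equals $\bfP_T$ of an event on which $\bfQ_T$ puts zero mass, so it is at most $\dtv(\bfP_T,\bfQ_T) \le \dtv(\bfP_{BT},\bfQ_{BT})$ since marginalization cannot increase total variation distance. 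Thus $\Pr_{\mathbf{Exp.0}}[b=b'] \ge \alpha - 2\,\dtv(\bfP_{BT},\bfQ_{BT})$, and subtracting the two bounds gives $\Pr_{\mathbf{Exp.0}}[b=b'] \ge \Pr_{\mathbf{Exp.1}}[b=b'] - 3\,\dtv(\bfP_{BT},\bfQ_{BT})$.

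This argument is elementary, so there is no real obstacle; the one point to handle carefully is the edge case $\bfQ_T(\tau)=0$, where $\mathbf{Exp.0}$ reverts to a fair coin. The loss incurred there is exactly what the third copy of $\dtv$ absorbs, and routing it through the crude estimate "a fair coin succeeds with probability $\ge 0$" (rather than attempting a sharper per-$\tau$ bound) keeps the proof short while still matching the claimed constant $3$.
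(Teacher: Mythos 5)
Your proof is correct, and it reaches the constant $3$ by a genuinely different accounting than the paper's. Both arguments ultimately sandwich the two success probabilities around the same anchor --- your $\alpha$ is exactly the success probability of the paper's hybrid $\mathbf{Hyb}$ (sample $(b,\tau)\gets\bfQ_{BT}$, predict with the $\bfQ$-Bayes rule), which the paper writes as $\frac{1}{2}+\dtv(\bfQ_{BT},\bfU_1\otimes\bfQ_T)$. The difference is in how each side is compared to $\alpha$. The paper bounds $|\Pr_{\mathbf{Exp.0}}[b=b']-\alpha|\leq\dtv(\bfP_{BT},\bfQ_{BT})$ in one shot via data processing (the predictor in Step~2 is a fixed randomized map applied to either $\bfP_{BT}$ or $\bfQ_{BT}$), and then spends two copies of $\dtv$ on the $\mathbf{Exp.1}$ side through the identity $\Pr_{\mathbf{Exp.1}}[b=b']=\frac{1}{2}+\dtv(\bfP_{BT},\bfU_1\otimes\bfP_T)$ and a three-term triangle inequality. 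You do the opposite: the $\mathbf{Exp.1}$ side costs only one $\dtv$ (variational bound on the fixed event $\{b=f_\bfP(\tau)\}$ plus optimality of $f_\bfQ$ under $\bfQ$), while the $\mathbf{Exp.0}$ side costs two (one for the fixed event $\{b=f_\bfQ(\tau)\}$, one for the $\bfP_T$-mass of the set where $\bfQ_T$ vanishes). Your route is more elementary --- it never needs the $\frac{1}{2}+\dtv(\cdot,\bfU_1\otimes\cdot)$ identities --- and as a side observation, splicing the paper's one-$\dtv$ data-processing bound for $\mathbf{Exp.0}$ together with your one-$\dtv$ bound for $\mathbf{Exp.1}$ would actually improve the lemma's constant from $3$ to $2$, though $3$ suffices for its application.
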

\begin{proof}
Consider the following hybrid:
\begin{mdframed}
\noindent $\mathbf{Hyb:}$
\begin{enumerate}
    \item Sample $(b,\tau) \gets \bfQ_{BT}$.
    \item If $\bfQ_{T}(\tau) = 0$,\footnote{Since $(b,\tau)$ is sampled from $\bfQ_{BT}$, $\bfQ_{T}(\tau)$ is always $>0$. We write it merely for the clarity of the proof.} then set $b'$ to a uniform bit. Otherwise, set $b'$ to the more likely bit according to $\bfQ_{B \mid T = \tau}$.
    \item Output $(b,b',\tau)$.
\end{enumerate}
\end{mdframed}
Since a randomized function (Step~2 in $\mathbf{Exp.0}$ and $\mathbf{Hyb}$) cannot increase the total variation distance, we have 
\begin{align*} 
\left| \Pr_{\mathbf{Exp.0}}[b = b'] - \Pr_{\mathbf{Hyb}}[b = b'] \right|
\leq \dtv(\bfP_{BT},\bfQ_{BT}),
\end{align*}
which implies
\begin{align} \label{eq:dTV}
\Pr_{\mathbf{Exp.0}}[b = b']
\geq \Pr_{\mathbf{Hyb}}[b = b'] - \dtv(\bfP_{BT},\bfQ_{BT}).
\end{align}
In $\mathbf{Hyb}$, we have
\begin{align} \label{eq:dTV-1}
\Pr_{\mathbf{Hyb}}[b = b']
& = \Ex_{\tau \gets \bfQ_T}\left[ \frac{1}{2} + \dtv(\bfQ_{B \mid  T = \tau}, \bfU_1) \right] \nonumber \\
& = \frac{1}{2} + \sum_{\tau} \bfQ_T(\tau) \cdot \frac{1}{2} \sum_{b\in\bit} \left| \bfQ(b)_{B \mid  T = \tau} - \frac{1}{2} \right| \nonumber \\
& = \frac{1}{2} + \frac{1}{2} \sum_{\tau,b\in\bit} \left| \bfQ(b,\tau)_{BT} - \frac{1}{2} \cdot \bfQ_T(\tau) \right| \nonumber \\
& = \frac{1}{2} + \dtv(\bfQ_{BT}, \bfU_1 \otimes \bfQ_T),
\end{align}
where $\bfU_1$ denotes the uniform distribution on $\bit$.
Similarly, in $\mathbf{Exp.1}$, we have
\begin{align} \label{eq:dTV-2}
\Pr_{\mathbf{Exp.1}}[b = b']
& = \Ex_{\tau \gets \bfP_T}\left[ \frac{1}{2} + \dtv(\bfP_{B \mid  T = \tau}, \bfU_1) \right] \nonumber \\
& = \frac{1}{2} + \dtv(\bfP_{BT}, \bfU_1 \otimes \bfP_T) \nonumber \\
& \leq \frac{1}{2} + \dtv(\bfP_{BT}, \bfQ_{BT}) + \dtv(\bfQ_{BT}, \bfU_1 \otimes \bfQ_T) + \dtv(\bfU_1 \otimes \bfQ_T, \bfU_1 \otimes \bfP_T) \nonumber \\
& = \frac{1}{2} + \dtv(\bfP_{BT}, \bfQ_{BT}) + \dtv(\bfQ_{BT}, \bfU_1 \otimes \bfQ_T) + \dtv(\bfQ_T, \bfP_T) \nonumber \\
& \leq \frac{1}{2} + \dtv(\bfP_{BT}, \bfQ_{BT}) + \dtv(\bfQ_{BT}, \bfU_1 \otimes \bfQ_T) + \dtv(\bfQ_{BT}, \bfP_{BT}),
\end{align}
where the first inequality follows from the triangle inequality. From~\Cref{eq:dTV-1,eq:dTV-2}, we have
\begin{align} \label{eq:dTV-3}
\Pr_{\mathbf{Hyb}}[b = b']
\geq \Pr_{\mathbf{Exp.1}}[b = b'] - 2\dtv(\bfP_{BT},\bfQ_{BT}).
\end{align}
Hence, combining~\Cref{eq:dTV,eq:dTV-3}, we have
\[
\Pr_{\mathbf{Exp.0}}[b = b'] 
\geq \Pr_{\mathbf{Exp.1}}[b = b'] - 3\dtv(\bfP_{BT},\bfQ_{BT}). \qedhere
\]
\end{proof}

\begin{lemma}
There does not exist a QCCC interactive commitment relative to $G$.
\end{lemma}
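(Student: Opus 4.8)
The plan is to mirror the impossibility proof for QCCC interactive commitments in the CHS model (\Cref{thm:Imp_QCCC_CHS}), but now using the \emph{keyed} common Haar state oracle $G$ as the source of randomness and leveraging the two-step ``compile out the oracle'' machinery already developed for key agreements in~\Cref{lem:compile_KA}. Concretely, suppose for contradiction that $\Com^G = (C^G, R^G)$ is a complete, statistically hiding, statistically binding QCCC interactive commitment relative to $G$, where $C$ and $R$ each ask $q(\secp) = \poly(\secp)$ queries of maximum input length $L(\secp) = \poly(\secp)$. Set $\Lambda(\secp) := \ceil{\log(q^{10} + L^{10} + \secp^{10})} = O(\log(\secp))$ and truncated oracle $G_\Lambda$. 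First I would define the hybrid protocol $\wt{\Com}^{G_\Lambda} = (\wt{C}^{G_\Lambda}, \wt{R}^{G_\Lambda})$ exactly as in~\Cref{lem:compile_KA}: for queries of length $\leq \Lambda$ the parties forward to $G_\Lambda$; for longer lengths, each party answers its \emph{own} queries with its own locally sampled i.i.d.\ Haar states $\ket{\phi^C_k}, \ket{\phi^R_k}$. Completeness of $\wt{\Com}^{G_\Lambda}$ follows from~\Cref{cor:LOCC_repetition} by the same LOCC-distinguisher argument (now the distinguisher additionally runs the reveal phase and compares $\mu$ with $b$), giving a $1/\poly(\secp)$ completeness error.

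Next I would argue that $\wt{\Com}^{G_\Lambda}$ inherits statistical hiding and statistical binding from $\Com^G$. This direction is the easy one, exactly as in the proof of~\Cref{thm:Imp_QCCC_CHS}: a malicious receiver $\wt{R}^*$ (resp.\ committer $\wt{C}^*$) against $\wt{\Com}^{G_\Lambda}$ is simulated by a malicious party against $\Com^G$ that simply discards the honest party's Haar states and queries $G_\Lambda$ itself --- the long-length states it needs it samples locally, and since in $\wt\Com$ those long states were \emph{already} locally sampled and independent, the view is identical. Hence the advantage is unchanged, and $\wt{\Com}^{G_\Lambda}$ is statistically hiding and binding against $p(\secp)$-query adversaries. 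Then, as in the last two paragraphs of~\Cref{lem:compile_KA}, I would pass to the plain model: define $\Com_{\plain}$ in which $\alice_{\plain}$ samples $\ket{\psi_k}\gets\Haar_{|k|}$ for all $k$ of length $\leq\Lambda$, runs $\tomography$ (\Cref{thm:tomography}) with $\Delta = 2^{-2\Lambda}$, $\gamma = 2^{-\secp}$ to get descriptions $\ket{\wh\psi_k}$, sends them to $\bob_{\plain}$, and both run $\wt{\Com}^{\wh G_\Lambda}$. The event $\good$ that all tomography outputs are $\Delta$-close holds with probability $1-\negl(\secp)$ by a union bound over the $\sum_{i\le\Lambda}2^i = \poly(\secp)$ strings, and then the operational definition of trace distance (summing $2q$ per state) gives that $\Com_{\plain}$ is $1/\poly$-complete; and any plain-model malicious receiver/committer against $\Com_\plain$ yields, via the tomography-query simulation of~\Cref{lem:compile_KA}, a $p(\secp)$-query malicious party against $\wt\Com^{G_\Lambda}$ with essentially the same advantage, so $\Com_\plain$ is statistically hiding and binding in the plain model.

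Finally I would derive the contradiction by invoking the plain-model impossibility of statistically-hiding statistically-binding (bit) commitments, in the regime where completeness error is only inverse-polynomial. This is the step where some care is needed and where I expect the main obstacle: unlike key agreements, I must argue that an \emph{imperfectly complete} commitment still cannot be simultaneously statistically hiding and statistically binding. The natural route is the standard Lo--Chau/Mayers-style attack made quantitative: by~\Cref{lem:Cond_Indep} (conditional independence), for any fixed transcript $\tau$ of the commit phase the joint state of $C$ and $R$ is a product state, so a malicious receiver can coherently run the protocol, post-select on $\tau$, and --- exactly as in~\Cref{lem:tv_distance} --- guess $b$ from the more-likely bit according to the conditional distribution $\bfP_{B\mid T=\tau}$; meanwhile a malicious committer who runs honestly with $b=0$ in the commit phase and then applies an appropriate unitary on the reveal register can open to either bit with probability governed by the fidelity between the two committer states, and completeness forces these two quantities to sum above $1+\Omega(1)$ unless one of the honest soundness errors is a constant. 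Here~\Cref{lem:tv_distance} is precisely the tool that absorbs the inverse-polynomial completeness slack into a $3\dtv$ loss. Putting it together: if $\Com_\plain$ were both statistically hiding and statistically binding with $1/\poly$ completeness error, this attack would violate one of the two, a contradiction; hence no QCCC interactive commitment relative to $G$ exists. I would then state the separation corollary (there is no quantum fully black-box reduction from QCCC interactive commitments to $(\secp,\omega(\log\secp))$-PRSGs with $C$ making only classical PRSG queries) exactly as in~\Cref{thm:QBB_KA}, via~\Cref{lem:oracle_implies_PRS}.
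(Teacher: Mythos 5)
Your high-level plan (truncate the oracle at $\Lambda = O(\log\secp)$, handle long outputs via LOCC Haar indistinguishability, handle short outputs via tomography) matches the paper, and your first step --- defining $\wt{\Com}^{G_\Lambda}$ and proving its $1/\poly$-completeness and inherited hiding/binding --- is exactly the paper's. The divergence, and the first gap, is in how you finish. You propose to compile $\wt{\Com}^{G_\Lambda}$ into a plain-model protocol $\Com_\plain$ in which one party samples the short Haar states, tomographs them, and broadcasts the descriptions, and then to invoke a plain-model impossibility. For a two-party primitive where each of the two parties is the adversary in one of the two security games, this compilation is not symmetric: whichever party samples and announces the oracle can, when malicious, announce an arbitrary (non-Haar) description, so that party's security game for $\Com_\plain$ does not reduce to the corresponding game for $\wt{\Com}^{G_\Lambda}$ as you claim. (For key agreements this issue is absent because the adversary is a third party and both protocol participants are honest.) The paper sidesteps this entirely: it never defines a plain-model protocol, but keeps the honest protocol relative to $G_\Lambda$ and builds the tomography into the two \emph{adversaries} $R^*$ and $C^*$, each of whom learns $\wh{G}_\Lambda$ on its own with polynomially many classical queries.

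The second, more substantive gap is the final trade-off. There is no off-the-shelf ``plain-model impossibility of statistically hiding and binding commitments with inverse-polynomial completeness error'' to invoke; the paper proves the required inequality from scratch, namely $p^{R^*}_{G_\Lambda,\tau} + p^{C^*}_{G_\Lambda,\tau} \geq \frac{1}{2} + \sum_{b}\Pr_\commit[\bfb = b \mid \tau, G_\Lambda]\cdot\Pr[\text{reveal outputs } b]$, which after averaging over $(G_\Lambda,\tau)$ and accounting for tomography error gives $p_{R^*\mathrm{win}} + p_{C^*\mathrm{win}} \geq \frac{3}{2} - \veps - O(1/\poly)$. Your sketch of the committer's attack via ``an appropriate unitary on the reveal register'' governed by ``the fidelity between the two committer states'' is the Uhlmann-style attack for quantum-communication commitments; in the QCCC setting the paper instead exploits \Cref{lem:Cond_Indep}: conditioned on the classical transcript $\tau$, the joint state is a product state and the receiver's share is independent of $b$, so the malicious committer can simply discard its own register and re-prepare $\rho^{\sfC}_{\ch,\wh{G}_\Lambda,\tau}$ for the challenge bit from the tomographed oracle description --- no unitary rotation or fidelity bound is needed, and the trade-off comes out as an exact identity rather than through a lossy Fuchs--van de Graaf step. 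You correctly identify \Cref{lem:tv_distance} as the tool that absorbs the tomography and completeness slack into a $3\dtv$ loss, but without the product-structure observation and an explicit proof of the completeness--hiding--binding trade-off, the argument as written is not complete.
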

\begin{proof}
For the sake of contradiction, suppose $\Com^G = (C^G,R^G)$ is a QCCC interactive commitment relative to $G$, where $q(\secp) = \poly(\secp)$ is the number of queries asked by $C$ and $R$ respectively and $L(\secp) = \poly(\secp)$ is the maximum input length of the queries. Define the function $\Lambda(\secp) := \ceil{\log(q^{10} + L^{10} + \secp^{10})} = O(\log(\secp))$ and the truncated oracle $G_\Lambda = \set{\set{G_k}_{k\in\bit^i}}_{i=1}^\Lambda$. The proof consists of two major parts. First, we will show that $\Com^G$ can be converted to a QCCC interactive commitment $\wt{\Com}^{G_\Lambda}$ relative to $G_\Lambda$. Next, we will show that any QCCC interactive commitment relative to $G_\Lambda$ cannot satisfy completeness, statistical hiding, and statistical binding simultaneously.

\paragraph{Converting $\Com^G$ to $\wt{\Com}^{G_\Lambda}$.}
We define the following scheme $\wt{\Com}^{G_\Lambda} = (\wt{C}^{G_\Lambda},\wt{R}^{G_\Lambda})$ relative to $G_\Lambda$:
\begin{mdframed}
$\wt{\Com}^{G_\Lambda}(1^\secp,\wt{C}^{G_\Lambda},\wt{R}^{G_\Lambda})$:
\begin{enumerate}
\item For every $k\in\bigcup_{i=\Lambda+1}^L\bit^i$, $\wt{C}$ and $\wt{R}$ samples $\ket{\phi^C_k},\ket{\phi^R_k}\gets\Haar_{|k|}$ respectively.
\item On input $1^\secp$ and $b$, $\wt{C}^{G_\Lambda}$ runs $C^{(\cdot)}(1^\secp,b)$ by answering the queries as follows. Suppose $C$ asks a query $k\in\bigcup_{i = 1}^L\bit^i$. If $|k|\leq \Lambda$, then $\wt{C}$ ask $k$ to oracle $G_\Lambda$ and forward the response. Otherwise, $\wt{C}$ sends $\ket{\phi^C_k}$ to $C$. On input $1^\secp$, $\wt{R}^{G_\Lambda}$ runs $R^{(\cdot)}(1^\secp)$ by answering $R$'s queries similarly, except that it replaces $\ket{\phi^C_k}$ with $\ket{\phi^R_k}$.
\end{enumerate}
\end{mdframed}
\paragraph{$\wt{\Com}^{G_\Lambda}$ is $1/\poly$-complete.}
This is similar to proving the completeness of $\wt{\KA}^{G_\Lambda}$ in the proof of~\Cref{lem:compile_KA}.

\paragraph{$\wt{\Com}^{G_\Lambda}$ is statistically hiding and statistically binding.}
Intuitively, $\wt{\Com}^{G_\Lambda}$ is at least as secure as $\Com^G$ because the malicious party cannot obtain any information about the Haar states of length greater than $\Lambda$ held by the other party via asking queries. To prove statistical hiding, suppose there exists a malicious receiver $(\wt{R}^*)^{G_\Lambda}$ that breaks the statistical hiding of $\wt{\Com}^{G_\Lambda}$ by asking polynomially many queries to $G_\Lambda$, then we construct a malicious receiver $(R^*)^G$ that breaks the statistical hiding of $\Com^G$ by using $(\wt{R}^*)^{(\cdot)}$. $(R^*)^G$ simply runs $\wt{R}^*$ by answering its queries with $G$. Since the distributions of the (honest) committer $C$ in $\Com^G$ and $\wt{C}$ in $\wt{\Com}^{G_\Lambda}$ are identical, the advantage of $(R^*)^G$ is equal to that of $(\wt{R}^*)^{G_\Lambda}$. This contradicts the premise that $\Com^G$ is statistically hiding.

Similarly, to prove statistical binding, suppose there exists a malicious committer $(\wt{C}^*)^{G_\Lambda}$ that breaks the statistical binding of $\wt{\Com}^{G_\Lambda}$ by asking polynomially many queries to $G_\Lambda$, then we construct a malicious committer $(C^*)^G$ that breaks the statistical binding of $\Com^G$ by using $(\wt{C}^*)^{(\cdot)}$. $(C^*)^G$ simply runs $\wt{C}^*$ by answering its queries with $G$. Since the distributions of the (honest) receiver $R$ in $\Com^G$ and $\wt{R}$ in $\wt{\Com}^{G_\Lambda}$ are identical, the advantage of $(C^*)^G$ is equal to that of $(\wt{C}^*)^{G_\Lambda}$. This contradicts the premise that $\Com^G$ is statistically binding. \\

In the rest of the proof, we will show that a commitment scheme relative to $G_\Lambda$ cannot satisfy completeness, statistical hiding, and statistical binding at the same time. Intuitively, this is because the output length of $G_\Lambda$ is short, so each party can approximate the whole oracle by performing tomography using polynomially many queries. Hence, the scheme can be reduced to the plain model, modulo the error introduced by tomography.

\paragraph{QCCC commitments do not exist relative to $G_\Lambda$.} We will show that
there does not exist a complete, statistically hiding, and statistically binding QCCC interactive commitment relative to $G_\Lambda$. Toward contradiction, suppose $\wt{\Com}^{G_\Lambda} = (\wt{C}^{G_\Lambda}, \wt{R}^{G_\Lambda})$ is such a scheme. Consider the following malicious receiver $R^*$ (for brevity, we omit the tilde $\wt{\cdot}$ in the rest of the proof) with classical oracle access to $G_\Lambda$:
\begin{mdframed}
\textbf{$R^*$ in Hiding Experiment}:
\begin{enumerate}
\item $R^*$ runs the commit phase honestly with $C$ who commits to $\bfb$ (where $b$ was sampled uniformly at random by $C$) and obtains the transcript $\tau$. 
\item $R^*$ performs $\tomography$ (defined in~\Cref{thm:tomography}) with parameters $\Delta = 2^{-2\Lambda}$ and $\gamma = 2^{-\secp}$ on every output state of $G_\Lambda$ to obtain the description, denoted by $\wh{G}_\Lambda$.
\item If $\tau$ and $\wh{G}_\Lambda$ are not consistent, then $R^*$ output a uniform bit $\bfb'$. Otherwise, $R^*$ outputs the more likely bit $\bfb'$ from the distribution conditioned on $(\tau,\wh{G}_\Lambda)$.
\end{enumerate}
\end{mdframed}
For efficiency, $R^*$ asks polynomially many queries in Step~2. For every fixed $(G_\Lambda,\wh{G}_\Lambda)$, we denote by $p^{R^*}_{G_\Lambda,\wh{G}_\Lambda}$ the probability that $R^*$ guess the committed bit correctly.
\paragraph{Analyze $R^*$.}
The structure of the proof is similar to proving the completeness of $\KA_\plain$ in~\Cref{lem:compile_KA}. Define the event $\good$ in the hiding experiment as
\[
\good \equiv \bigwedge_{k \in \bigcup^\Lambda_{i=1} \bit^i} \left[ \TD(\ketbra{\psi_k}{\psi_k}, \ketbra{\wh{\psi}_k}{\wh{\psi}_k}) \leq \Delta \right].
\]
We now consider any pair $G_\Lambda = \set{\set{\ket{\psi_k}}_{k\in\bit^i}}_{i=1}^\Lambda$ and $\wh{G}_\Lambda = \set{\set{\ket{\wh{\psi}_k}}_{k\in\bit^i}}_{i=1}^\Lambda$ such that event $\good$ occurs. Let $\bfD_{BT\mid G_\Lambda}$ (\resp $\bfD_{BT\mid \wh{G}_\Lambda}$) denote the distribution of $(b,\tau)$ in the honest commit phase of $\wt{\Com}$ conditioned on oracle being $G_\Lambda$ (\resp $\wh{G}_\Lambda$). Since $\wt{C}^{(\cdot)}$ and $\wt{R}^{(\cdot)}$ in $\wt{\Com}^{G_\Lambda}$ (\resp $\wt{\Com}^{\wh{G}_\Lambda}$) ask a total of $2q$ queries, one can use $\set{\set{\ket{\psi_k}^{\otimes 2q}}_{k\in\bit^i}}_{i=1}^\Lambda$ (\resp $\set{\set{\ket{\wh{\psi}_k}^{\otimes 2q}}_{k\in\bit^i}}_{i=1}^\Lambda$) to perfectly answer $\wt{C}$'s and $\wt{R}$'s queries. From the operational definition of trace distance, we have
\[
\dtv(\bfD_{BT\mid G_\Lambda},\bfD_{BT\mid \wh{G}_\Lambda}) 
\leq 2q \cdot \sum_{i=1}^\Lambda 2^i \cdot \Delta
= \frac{1}{\poly(\secp)}
\]
for some polynomial $\poly$. \\

\noindent Define the quantity $p^{R^*}_{G_\Lambda}$ which is equal to the success probability of $R^*$ conditioned on $G_\Lambda$ without tomography error, \ie
\begin{align} \label{eq:hiding}
p^{R^*}_{G_\Lambda}
& := \frac{1}{2} + \Ex_{\tau\gets \bfD_{\tau \mid G_\Lambda}}\left[ \dtv(\bfD_{B \mid G_\Lambda, T = \tau}, \bfU_1) \right] \nonumber \\
& = \frac{1}{2} + \sum_\tau \Pr_\commit[\tau \mid G_\Lambda] \cdot  \dtv(\bfD_{B \mid G_\Lambda, T = \tau}, \bfU_1),
\end{align}
where $\commit$ denotes the honest commit phase of $\wt{\Com}$. By~\Cref{lem:tv_distance} (setting $\bfP \equiv \bfD_{BT\mid G_\Lambda}$ and $\bfQ \equiv \bfD_{BT\mid \wh{G}_\Lambda}$), we have
\begin{align} \label{eq:fixed_hiding_advantage}
p^{R^*}_{G_\Lambda,\wh{G}_\Lambda} 
\geq p^{R^*}_{G_\Lambda} - 3\dtv(\bfD_{BT\mid G_\Lambda},\bfD_{BT\mid \wh{G}_\Lambda}) 
= p^{R^*}_{G_\Lambda} - \frac{3}{\poly(\secp)}.
\end{align}
Finally, after averaging over $(G_\Lambda,\wh{G}_\Lambda)$, the probability $p_{R^*\text{win}}$ that $R^*$ guess the committed bit correctly satisfies
\begin{align} \label{eq:average_hiding_advantage}
p_{R^*\text{win}}
& := \Ex_{G_\Lambda,\wh{G}_\Lambda}[ p^{R^*}_{G_\Lambda,\wh{G}_\Lambda} ] \nonumber \\
& = \Ex_{G_\Lambda} \left[ \Ex_{\wh{G}_\Lambda} \left[ p^{R^*}_{G_\Lambda,\wh{G}_\Lambda} \mid G_\Lambda \right] \right] \nonumber \\
& \geq \Ex_{G_\Lambda} \left[ \Pr_{\wh{G}_\Lambda}[\good \mid G_\Lambda] \cdot \Ex_{\wh{G}_\Lambda}\left[ p^{R^*}_{G_\Lambda,\wh{G}_\Lambda} \mid G_\Lambda \land \good \right] \right] \nonumber \\
& \geq \Ex_{G_\Lambda} \left[ (1 - \negl(\secp)) \cdot \left( p^{R^*}_{G_\Lambda} - \frac{3}{\poly(\secp)} \right) \right] \nonumber \\
& = (1 - \negl(\secp)) \cdot \left( \Ex_{G_\Lambda} \left[ p^{R^*}_{G_\Lambda} \right] - \frac{3}{\poly(\secp)} \right).
\end{align}
The second inequality follows from~\Cref{eq:fixed_hiding_advantage} and the following reason: by the correctness guarantee of $\tomography$ (\Cref{thm:tomography}) and a union bound, the probability of $\good$ happening conditioned on any $G_\Lambda$ is at least $1 - \sum_{i=1}^\Lambda 2^i \cdot \gamma = 1 - \negl(\secp)$. \\

\noindent Next, consider the following malicious committer $C^*$:
\begin{mdframed}
\textbf{$C^*$ in Binding Experiment}:
\begin{enumerate}
    \item $C^*$ commits to a uniform bit $\bfb$, runs the commit phase with $R$ honestly, and generates the transcript $\tau$. The joint state of $C^*$ and $R$ after the commit phase is $\rho_{\bfb,G_\Lambda,\tau}^{\sfC} \otimes \sigma_{G_\Lambda,\tau}^{\sfR}$.\footnote{From~\Cref{lem:Cond_Indep}, the joint state is a product state. Moreover, fixing $(G_\Lambda, \tau)$ already determines the state of $R$. So it is independent of $\bfb$ after conditioned on $(G_\Lambda, \tau)$.}
    \item $C^*$ performs $\tomography$ (defined in~\Cref{thm:tomography}) with parameters $\Delta = 2^{-2\Lambda}$ and $\gamma = 2^{-\secp}$ on every output state of $G_\Lambda$ to obtain the description, denoted by $\wh{G}_\Lambda$.
    \item Upon receiving the challenge bit $\ch$, $C^*$ computes the description of the joint state conditioned on $(\ch,\tau,\wh{G}_\Lambda)$, denoted by $\rho_{\ch,\wh{G}_\Lambda,\tau}^{\sfC\sfR}$. If $(\ch,\wh{G}_\Lambda,\tau)$ is inconsistent, then $C^*$ aborts.\footnote{Note that it is equivalently to setting $\rho_{\ch,\wh{G}_\Lambda,\tau}^{\sfC\sfR}$ to the zero matrix in terms of calculating $C^*$'s success probability.}
    \item $C^*$ runs the reveal phase honestly on input $\ch$ and state $\rho_{\ch,\wh{G}_\Lambda,\tau}^\sfC$.
\end{enumerate}
\end{mdframed}
For efficiency, $C^*$ asks polynomially many queries in Step~2. For every fixed $(G_\Lambda,\wh{G}_\Lambda)$, the probability that $C^*$ successfully opens to $\ch$ is
\begin{align*}
& p^{C^*}_{G_\Lambda,\wt{G}_\Lambda}
:= \sum_\tau \sum_{b, ch \in \bit} \Pr_\commit[\tau \mid G_\Lambda] \cdot \Pr_\commit[\bfb = b \mid \tau, G_\Lambda] \cdot \Pr_\commit[\ch = ch \mid \bfb = b,\tau,G_\Lambda] \\
& \cdot \Pr[\reveal\langle {C}^{G_\Lambda}(ch), R^{G_\Lambda}, \rho_{ch,\wh{G}_\Lambda,\tau}^\sfC \otimes \sigma_{G_\Lambda, \tau}^\sfR, \tau \rangle = ch] \\
& = \sum_\tau \sum_{b, ch \in \bit} \Pr_\commit[\tau \mid G_\Lambda] \cdot \Pr_\commit[\bfb = b \mid \tau, G_\Lambda] \cdot \frac{1}{2} \cdot \Pr[\reveal\langle {C}^{G_\Lambda}(ch), R^{G_\Lambda}, \rho_{ch,\wh{G}_\Lambda,\tau}^\sfC \otimes \sigma_{G_\Lambda, \tau}^\sfR, \tau \rangle = ch]
\end{align*}
since $\ch$ is sampled uniformly and independently.

\paragraph{Analyze $C^*$.}
Define the event $\good$ in the same way as in the hiding experiment. For every fixed $(G_\Lambda,\wh{G}_\Lambda)$ such that $\good$ happens, 
consider the following two classical-quantum states corresponding to the joint state of $C$ and $R$ right after the honest commit phase of $\wt{\Com}$ conditioned on the oracle being $G_\Lambda$ and $\wh{G}_\Lambda$ respectively:
\[
\Psi_{G_\Lambda} := \sum_{b\in\bit}\sum_{\tau} \frac{1}{2} 
\cdot \Pr_\commit[\tau \mid \bfb = b, G_\Lambda] \cdot \ketbra{b}{b}_\sfB 
\otimes \rho_{b,G_\Lambda,\tau}^\sfC
\otimes \sigma_{G_\Lambda,\tau}^\sfR
\otimes \ketbra{\tau}{\tau}_\sfT,
\]
\[
\Psi_{\wh{G}_\Lambda} := \sum_{b\in\bit}\sum_{\tau} \frac{1}{2} 
\cdot \Pr_\commit[\tau \mid \bfb = b, \wh{G}_\Lambda] \cdot \ketbra{b}{b}_\sfB 
\otimes \rho_{b,\wh{G}_\Lambda,\tau}^\sfC
\otimes \sigma_{\wh{G}_\Lambda,\tau}^\sfR
\otimes \ketbra{\tau}{\tau}_\sfT,
\]
where register $\sfB$ is the committer's private register for storing the input and register $\sfT$ is the public register for storing the transcript. Similar to the previous section, since $\wt{C}^{(\cdot)}$ and $\wt{R}^{(\cdot)}$ in $\wt{\Com}^{G_\Lambda}$ (\resp $\wt{\Com}^{\wh{G}_\Lambda}$) ask a total of $2q$ queries, one can use $\set{\set{\ket{\psi_k}^{\otimes 2q}}_{k\in\bit^i}}_{i=1}^\Lambda$ (\resp $\set{\set{\ket{\wh{\psi}_k}^{\otimes 2q}}_{k\in\bit^i}}_{i=1}^\Lambda$) to perfectly answer $\wt{C}$'s and $\wt{R}$'s queries. From the correctness guarantee of $\tomography$, we have
\begin{align} \label{eq:binding_1}
\TD(\Psi_{G_\Lambda},\Psi_{\wh{G}_\Lambda})
\leq 2q \cdot \sum_{i=1}^\Lambda 2^i \cdot \Delta
= \frac{1}{\poly(\secp)}.
\end{align}
In order to analyze the success probability of $C^*$ conditioned on $(G_\Lambda,\wh{G}_\Lambda)$, we define the following state
\[
\Psi_{G_\Lambda,\wh{G}_\Lambda} := \sum_{b\in\bit}\sum_{\tau} \frac{1}{2} 
\cdot \Pr_\commit[\tau \mid \bfb = b, G_\Lambda] \cdot \ketbra{b}{b}_\sfB 
\otimes \rho_{b,\wh{G}_\Lambda,\tau}^\sfC
\otimes \sigma_{G_\Lambda,\tau}^\sfR
\otimes \ketbra{\tau}{\tau}_\sfT.
\]
We claim that 
\begin{align} \label{eq:binding_hyb}
\TD(\Psi_{G_\Lambda},\Psi_{G_\Lambda,\wt{G}_\Lambda})
\leq \frac{2}{\poly(\secp)}.
\end{align}
To prove~\Cref{eq:binding_hyb}, we introduce the following hybrid state:
\[
\Psi_{\Hyb} := \sum_{b\in\bit}\sum_{\tau} \frac{1}{2} 
\cdot \Pr_\commit[\tau \mid \bfb = b, \wh{G}_\Lambda] \cdot \ketbra{b}{b}_\sfB 
\otimes \rho_{b,\wh{G}_\Lambda,\tau}^\sfC
\otimes \sigma_{G_\Lambda,\tau}^\sfR
\otimes \ketbra{\tau}{\tau}_\sfT.
\]
By the triangle inequality, we can bound~\Cref{eq:binding_hyb} as
\begin{align} \label{eq:binding_2}
\TD(\Psi_{G_\Lambda},\Psi_{G_\Lambda,\wh{G}_\Lambda})
\leq \TD(\Psi_{G_\Lambda},\Psi_{\Hyb}) + \TD(\Psi_{\Hyb},\Psi_{G_\Lambda,\wh{G}_\Lambda}).
\end{align}
For the first term in~\Cref{eq:binding_2}, we have
\begin{align*}
\TD(\Psi_{G_\Lambda},\Psi_{\Hyb}) 
& = \sum_{b\in\bit} \sum_{\tau} \frac{1}{2} 
\cdot  
\TD\left( \Pr_\commit[\tau \mid \bfb = b, G_\Lambda] \cdot \rho_{b,G_\Lambda,\tau}^\sfC, 
\Pr_\commit[\tau \mid \bfb = b, \wh{G}_\Lambda] \cdot \rho_{b,\wh{G}_\Lambda,\tau}^\sfC \right) \\
& = \TD(\Tr_{\sfR}(\Psi_{G_\Lambda}), \Tr_{\sfR}(\Psi_{\wh{G}_\Lambda})) \\
& \leq \TD(\Psi_{G_\Lambda},\Psi_{\wh{G}_\Lambda}) \\
& = \frac{1}{\poly(\secp)},
\end{align*}
where the first two equalities are because $\TD(\bigoplus_i A_i, \bigoplus_i B_i) = \sum_i \TD(A_i, B_i)$ and the inequality is because the trace distance won't increase under partial trace; the inequality follows from~\Cref{eq:binding_1}. 
Similarly, For the first term in~\Cref{eq:binding_2}, we have
\begin{align*}
\TD(\Psi_{\Hyb},\Psi_{G_\Lambda,\wh{G}_\Lambda}) 
& = \sum_{b\in\bit} \sum_{\tau} \frac{1}{2} 
\cdot \TD\left( \Pr_\commit[\tau \mid \bfb = b, G_\Lambda], 
\Pr_\commit[\tau \mid \bfb = b, \wh{G}_\Lambda] \right) \\
& = \TD(\Tr_{\sfC\sfR}(\Psi_{G_\Lambda}), \Tr_{\sfC\sfR}(\Psi_{\wh{G}_\Lambda})) \\
& \leq \TD(\Psi_{G_\Lambda},\Psi_{\wh{G}_\Lambda}) \\
& = \frac{1}{\poly(\secp)}.
\end{align*}
Thus, the proof of~\Cref{eq:binding_hyb} is complete. \\

\noindent Define the quantity $p^{C^*}_{G_\Lambda}$ which is equal to the success probability of $C^*$ conditioned on $G_\Lambda$ without tomography error:
\begin{align*}
p^{C^*}_{G_\Lambda} 
:= \sum_\tau \sum_{ch \in \bit} \Pr_\commit[\tau \mid G_\Lambda] \cdot \frac{1}{2} \cdot \Pr[\reveal\langle {C}^{G_\Lambda}(ch), R^{G_\Lambda}, \rho_{ch,G_\Lambda,\tau}^\sfC \otimes \sigma_{G_\Lambda, \tau}^\sfR, \tau \rangle = ch].
\end{align*}
Thus, from the operational definition of trace distance and~\Cref{eq:binding_hyb}, we have
\begin{align*}
|p^{C^*}_{G_\Lambda,\wh{G}_\Lambda} - p^{C^*}_{G_\Lambda}|
\leq \TD(\Psi_{G_\Lambda},\Psi_{G_\Lambda,\wh{G}_\Lambda})
\leq \frac{2}{\poly(\secp)},
\end{align*}
which implies
\begin{align} \label{eq:fixed_binding_advantage}
p^{C^*}_{G_\Lambda,\wh{G}_\Lambda}
\geq p^{C^*}_{G_\Lambda} - \frac{2}{\poly(\secp)}.
\end{align}
By a similar argument to that of~\Cref{eq:average_hiding_advantage}, the probability $p_{C^*\text{win}}$ that $C^*$ successfully opens to $\ch$ satisfies
\begin{align} \label{eq:average_binding_advantage}
p_{C^*\text{win}}
:= \Ex_{G_\Lambda,\wh{G}_\Lambda}[ p^{C^*}_{G_\Lambda,\wh{G}_\Lambda} ]
\geq (1 - \negl(\secp)) \cdot \left( \Ex_{G_\Lambda} \left[ p^{C^*}_{G_\Lambda} \right] - \frac{2}{\poly(\secp)} \right).
\end{align}

\paragraph{Trade-off between completeness, hiding, and binding of commitments.}
Suppose $\wt{\Com}^{G_\Lambda}$ satisfies $\veps$-completeness. In other words,
\begin{align} \label{eq:completeness}
& p_{\complete} := \nonumber \\
& \Ex_{G_\Lambda} \left[ \sum_\tau \sum_{b \in \bit} 
\Pr_\commit[\tau \mid G_\Lambda] 
\cdot \Pr_\commit[\bfb = b \mid \tau, G_\Lambda] 
\cdot \Pr[\reveal\langle {C}^{G_\Lambda}(b), R^{G_\Lambda}, \rho_{b,G_\Lambda,\tau}^\sfC \otimes \sigma_{G_\Lambda, \tau}^\sfR, \tau \rangle = b] \right] \nonumber \\
& \geq 1 - \veps.
\end{align}
Now, for any fixed $(G_\Lambda,\tau)$ in the support of the honest commit phase of $\wt{\Com}^{G_\Lambda}$, define the success probabilities of $R^*$ and $C^*$ conditioned on $(G_\Lambda,\tau)$:
\[
p^{R^*}_{G_\Lambda, \tau} 
:= \frac{1}{2} + \dtv(\bfD_{B \mid G_\Lambda, \tau}, \bfU_1)
= \frac{1}{2} + \frac{1}{2}\sum_{b \in \bit}\left| \Pr_\commit[\bfb = b \mid \tau, G_\Lambda] - \frac{1}{2} \right|,
\]
\[
p^{C^*}_{G_\Lambda, \tau} := 
\sum_{ch \in \bit} \frac{1}{2} \cdot \Pr[\reveal\langle {C}^{G_\Lambda}(ch), R^{G_\Lambda}, \rho_{ch,G_\Lambda,\tau}^\sfC \otimes \sigma_{G_\Lambda, \tau}^\sfR, \tau \rangle = ch].
\]
W.L.O.G, suppose $\Pr_\commit[\bfb = 0 \mid \tau, G_\Lambda] = \frac{1}{2} + \eta$ and $\Pr_\commit[\bfb = 1 \mid \tau, G_\Lambda] = \frac{1}{2} - \eta$ for some $\eta\in[0,0.5]$ (the opposite case can be proven symmetrically). Thus, it holds that
\begin{align*}
p^{R^*}_{G_\Lambda, \tau} = \frac{1}{2} + \eta.
\end{align*}
A straightforward calculation yields
\begin{align} \label{eq:fixed_trade_off}
& p^{R^*}_{G_\Lambda, \tau} + p^{C^*}_{G_\Lambda, \tau} \nonumber \\
& = \frac{1}{2} + \eta + \sum_{ch \in \bit} \frac{1}{2} \cdot \Pr[\reveal\langle {C}^{G_\Lambda}(ch), R^{G_\Lambda}, \rho_{ch,G_\Lambda,\tau}^\sfC \otimes \sigma_{G_\Lambda, \tau}^\sfR, \tau \rangle = ch] \nonumber \\
& \geq \frac{1}{2} + \eta \cdot \bigg( \Pr[\reveal\langle {C}^{G_\Lambda}(0), R^{G_\Lambda}, \rho_{0,G_\Lambda,\tau}^\sfC \otimes \sigma_{G_\Lambda, \tau}^\sfR, \tau \rangle = 0] - \Pr[\reveal\langle {C}^{G_\Lambda}(1), R^{G_\Lambda}, \rho_{1,G_\Lambda,\tau}^\sfC \otimes \sigma_{G_\Lambda, \tau}^\sfR, \tau \rangle = 1] \bigg) \nonumber \\
& + \sum_{ch \in \bit} \frac{1}{2} \cdot \Pr[\reveal\langle {C}^{G_\Lambda}(ch), R^{G_\Lambda}, \rho_{ch,G_\Lambda,\tau}^C \otimes \rho_{G_\Lambda, \tau}^R, \tau \rangle = ch] \nonumber \\
& = \frac{1}{2} + \left( \frac{1}{2} + \eta \right) \Pr[\reveal\langle {C}^{G_\Lambda}(0), R^{G_\Lambda}, \rho_{0,G_\Lambda,\tau}^\sfC \otimes \sigma_{G_\Lambda, \tau}^\sfR, \tau \rangle = 0] \nonumber \\
& + \left( \frac{1}{2} - \eta \right) \Pr[\reveal\langle {C}^{G_\Lambda}(1), R^{G_\Lambda}, \rho_{1,G_\Lambda,\tau}^\sfC \otimes \sigma_{G_\Lambda, \tau}^\sfR, \tau \rangle = 1] \nonumber \\
& = \frac{1}{2} + \sum_{b \in \bit} \Pr_\commit[\bfb = b \mid \tau, G_\Lambda] 
\cdot \Pr[\reveal\langle {C}^{G_\Lambda}(b), R^{G_\Lambda}, \rho_{b,G_\Lambda,\tau}^\sfC \otimes \rho_{G_\Lambda, \tau}^\sfR, \tau \rangle = b].
\end{align}
By averaging over $(G_\Lambda,\tau)$ in~\Cref{eq:fixed_trade_off} and recalling the definition of $p_{\complete}$ in~\Cref{eq:completeness}, we have
\begin{align} \label{eq:average_trade_off}
\Ex_{G_\Lambda}[ p^{R^*}_{G_\Lambda} ] + \Ex_{G_\Lambda}[ p^{C^*}_{G_\Lambda} ]
= \Ex_{G_\Lambda, \tau}[ p^{R^*}_{G_\Lambda, \tau} + p^{C^*}_{G_\Lambda, \tau} ]
\geq \frac{1}{2} + p_{\complete} 
\geq \frac{3}{2} - \veps.
\end{align}
Finally, combining~\Cref{eq:average_hiding_advantage,eq:average_binding_advantage,eq:average_trade_off}, $p_{R^*\text{win}}$, $p_{C^*\text{win}}$, and $\veps$ satisfy
\begin{align*}
\frac{p_{R^*\text{win}} + p_{C^*\text{win}}}{1 - \negl(\secp)} + \frac{5}{\poly(\secp)}
\geq \frac{3}{2} - \veps.
\end{align*}
After rearranging, we have
\[
\left( p_{R^*\text{win}} - \frac{1}{2} \right) + \left( p_{C^*\text{win}} - \frac{1}{2} \right) + (1 - \negl(\secp)) \cdot \veps 
\geq \frac{1}{2} - \frac{3}{2} \negl(\secp) - \frac{5(1 - \negl(\secp))}{\poly(\secp)}.
\]
Therefore, at least one of $\set{p_{R^*\text{win}} - 1/2, p_{C^*\text{win}} - 1/2, \veps}$ is non-negligible. That is, $\wt{\Com}^{G_\Lambda}$ cannot satisfy completeness, statistical hiding, and statistical binding simultaneously.
\end{proof}

\begin{theorem} \label{thm:QBB_Com}
There does not exist a quantum fully black-box reduction $(C,S)$ from QCCC interactive commitments to $(\secp,\omega(\log(\secp)))$-PRSGs such that $C$ only asks classical queries to the PRSG.
\end{theorem}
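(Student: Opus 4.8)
The plan is to mirror the proof of \Cref{thm:QBB_KA}, substituting the impossibility of QCCC interactive commitments relative to $G$ (the lemma established just above) for the impossibility of key agreements relative to $G$. Suppose toward a contradiction that $(C,S)$ is a quantum fully black-box reduction from QCCC interactive commitments to $(\secp,\omega(\log(\secp)))$-PRSGs in which $C$ issues only classical queries to the PRSG. Let $\cI$ be the canonical implementation of the $(\secp,\omega(\log(\secp)))$-PRSG relative to $G$ from the proof of \Cref{lem:oracle_implies_PRS}, namely ``on input $k$, output the Haar state $\ket{\vartheta_k}$ generated by $G_k$''. By the correctness condition of the reduction, $C^\cI$ is an implementation of a QCCC interactive commitment that satisfies (negligible) completeness. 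Crucially, since $\cI$ answers each classical query by exactly one classical query to $G$, and $C$ queries $\cI$ only classically, the honest committer and receiver of the composed scheme $C^\cI$ query $G$ only classically; hence $C^\cI$ is a QCCC interactive commitment \emph{relative to $G$} of the kind covered by the impossibility lemma.

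Next I would apply that lemma: no QCCC interactive commitment relative to $G$ can simultaneously satisfy completeness, statistical hiding, and statistical binding. Since $C^\cI$ is complete, one of statistical hiding or statistical binding must fail, witnessed by a computationally unbounded malicious receiver (resp.\ committer) $\cA$ that makes only $\poly(\secp)$ classical queries to $G$ and breaks $C^\cI$ with non-negligible advantage. By the security condition of the black-box reduction, $S^{\cA,\cI}$ breaks the security of the PRSG $\cI$. Counting queries: $S$ is efficient, each invocation of $\cA$ costs $\poly(\secp)$ queries to $G$, and each invocation of $\cI$ costs one query to $G$, so $S^{\cA,\cI}$ makes at most $\poly(\secp)$ queries to $G$ in total. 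This contradicts \Cref{lem:oracle_implies_PRS}, which states that $\cI$ is a secure PRSG against every adversary asking $\poly(\secp)$ queries to $G$. Therefore no such reduction exists.

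The step requiring the most care is the bookkeeping that fuses the oracle accesses of $\cA$ and $\cI$ into a single polynomial-query $G$-adversary, together with verifying that the ``classical query'' hypothesis on $C$ is precisely what is needed: it is what forces $C^\cI$ (and hence the honest parties of the commitment scheme it defines) to query $G$ classically, which in turn is the regime in which the commitment impossibility lemma — and the LOCC-to-plain-model compilation underlying it — is proved. Everything else is a routine transcription of the key-agreement argument of \Cref{thm:QBB_KA}, so I would keep it brief.
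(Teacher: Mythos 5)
Your proposal is correct and follows exactly the route the paper takes: the paper's proof of \Cref{thm:QBB_Com} simply states that it is essentially the same as the proof of \Cref{thm:QBB_KA}, which is precisely the argument you transcribe (compose $C$ with the oracle implementation $\cI$, invoke the impossibility of QCCC interactive commitments relative to $G$, and use the security condition of the reduction plus query counting to contradict \Cref{lem:oracle_implies_PRS}). Your additional remarks on the query bookkeeping and on why the classical-query hypothesis on $C$ is needed are accurate and, if anything, more explicit than the paper's one-line proof.
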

\begin{proof}
It is essentially the same as the proof of~\Cref{thm:QBB_KA}.
\end{proof}

\begin{remark} \label{remark:QBB}
\par We compare our results with existing results. Note that our impossibility results only rule out implementations that ask classical queries to the PRSG. There exist applications that need to query a PRSG/PRFSG in superposition, \eg quantum bit commitments~\cite{MY21}, quantum PKEs~\cite{BGH+23}, etc. However, all of them require quantum communication. It is less obvious how this would be helpful in the QCCC setting. We leave the generalization of the impossibility results as an open problem.

\par Next, since PRS generators can be constructed from one-way functions in a black-box way~\cite{JLS18}, one might wonder whether~\Cref{thm:QBB_KA} is already implied by the classical separation result between key agreements and one-way functions~\cite{IR89,BR09}. In other words, can we prove~\Cref{thm:QBB_KA} by using a (classical) random oracle? We pointed out that all currently known constructions of PRS generators from one-way functions~\cite{JLS18,BS19,BrakerskiS20,GB23,JMW23} require \emph{quantum} oracle access. The impossibility of QCCC key agreements in the quantum random oracle model was studied in~\cite{ACC+22}, where they ruled out \emph{perfectly-complete} key agreements based on a conjecture. However, \Cref{thm:QBB_KA} separates imperfectly-complete key agreements from $\omega(\log(\secp))$-PRSGs without relying on any conjecture. Hence, the two results are incomparable.
\end{remark}

\subsection{Extending the Separation Results} \label{sec:Extend_QBB_PRFS}
We observe that our technique can also separate QCCC key agreements and commitments from \emph{classically accessible $(\secp,m,n)$-PRFSGs with $n = \omega(\log(\secp))$} and $m$ being arbitrary. Recall that currently there is no construction of long-input PRFSGs (\ie $m = \omega(\log(\secp))$) from PRSGs. Hence, the separation might be strictly stronger. To prove it, we strengthen the separating oracle by increasing the number of oracles as $G = \set{\set{G_{k,x}}_{k,x\in\bit^\secp}}_{\secp\in\N}$. In this way, $G$ can support answering the classical query on key $k$ and input $x$. The rest of the proof is identical to the case of $(\secp,\omega(\log(\secp)))$-PRSGs.

\section*{Acknowledgements}
This work is supported by the National Science Foundation under Grant No. 2329938 and Grant No. 2341004.

\printbibliography

\appendix

\section{Related Work}

\subsection{Quantum Pseudorandomness: State of the Art}
\label{sec:qpseudorandomness}
We present the state of the art of the pseudorandomness notions in the quantum world. We will only restrict our attention to two notions relevant to this work. The open problems will be {\em italicized}.  

\paragraph{Pseudorandomnes State Generators (PRSGs).} The concept of pseudorandom state generators (PRSGs) was introduced in a seminal work by Ji, Liu and Song~\cite{JLS18}. Roughly speaking, it states that any computationally bounded adversary cannot distinguish whether it receives many copies of a state produced using a pseudorandom state generator on a uniform key versus many copies of a single Haar state. We summarise the state of the art of PRSGs below. We use the notation  $(\secparam,n)$-PRSG to denote a PRSG with $\secparam$ being the key length and $n$ being the output length. The number of copies of the state given to the adversary is denoted to be $t$. Unless otherwise stated, $t$ will be an arbitrary polynomial in $\secparam$ that is not fixed ahead of time. If $t$ is indeed fixed ahead of time then we denote such a notion by $(\secparam,n,t)$-PRSGs. 

\begin{itemize}
    \item $n > \secparam$ (stretch): It is known that $(\secparam,n)$-PRSGs exist assuming one-way functions~\cite{JLS18,BS19,BrakerskiS20} or even pseudorandom unitaries\footnote{An efficiently computable keyed circuit is a pseudorandom unitary if any adversary cannot distinguish whether it has oracle access to the keyed circuit or a Haar unitary. }~\cite{JLS18,metger2024pseudorandom,CBBFDHX24}. Even to design $(\secparam,n,1)$-PRSG, we need computational assumptions and in fact, $(\secparam,n,1)$-PRSG is implied by multi-copy PRSGs with output length $\Omega(\log(\secparam))$~\cite{GJMZ23}. {\em However, it is not known if stretch $(n,\secparam)$-PRSGs exist under weaker assumptions}, although we do have some candidates inspired from random circuits~\cite{AQY21}. There is some evidence to believe that stretch PRSGs might be weaker than any existing classical cryptographic assumption~\cite{Kretschmer21,lombardi2023one}.     
    \item $n \leq \secparam$: This can be broken down into three parameter regimes:
    \begin{itemize}
        \item $n < c \cdot \log(\secparam)$, for some $c \in \mathbb{R}$: $(n,\secparam)$-PRSGs exists unconditionally~\cite{BrakerskiS20}.
        \item $n \in \Omega(\log(\secparam))$: for $n\geq \log(\secparam)$, it was shown~\cite{AGQY22} that $(\secparam,n)$-PRSGs cannot be unconditionally secure. However, assuming one-way functions, $(n,\secparam)$-PRSGs was shown to exist~\cite{JLS18,BS19,BrakerskiS20} or even pseudorandom unitaries~\cite{JLS18,metger2024pseudorandom}. {\em Designing $(\secparam,n)$-PRSGs from weaker assumptions is an interesting direction.} There seems to be a separation between $n=\Theta(\log(\secparam))$ and $n=\Omega(\log(\secparam))$ as shown in~\cite{ananth2023pseudorandom,bouaziz2024quantum,coladangelo2024black}. On the other hand, when $t$ is known ahead of time, $(\secparam,n,t)$-PRSGs with statistical security, where $\secparam$ could be much larger than $n$, are implied by state designs. 
    \end{itemize}
\end{itemize}

\paragraph{Pseudorandom Function-Like State Generators (PRFSGs).} The notion of pseudorandom function-like state generators (PRFSGs) was introduced in the work of~\cite{AQY21} as a quantum  analogue of pseudorandom functions. Unlike pseudorandom state generators, in the case of PRFSG, we can use the same key to generate many pseudorandom states, indexed by classical strings. We summarise the state of the art of PRFSGs below. We use the notation $(\secparam,m,n)$-PRFSG to denote a PRFSG with $\secparam$ being the key length,  $m$ being the input length and $n$ being the output length. The number of copies of the state given to the adversary is denoted to be $t$. Unless otherwise stated, $t$ will be an arbitrary polynomial in $\secparam$ and not fixed ahead of time. If $t$ is indeed fixed ahead of time then we denote such a notion by $(\secparam,m,n,t)$-PRFSGs. 
\begin{itemize} 
\item $m = O(\log(\secparam))$: It is known that $(\secparam,m,n)$-PRFSGs, for some $n$, exist based on PRSGs. 
\item $m = \omega(\log(\secparam))$: While we know how to construct $(\secparam,m,n)$-PRFSGs from one-way functions~\cite{AGQY22}, {\em it is not yet known that stretch $(\secparam,m,n)$-PRFSGs exist assuming PRSGs}.   
\end{itemize} 
In the case when $t$ is known ahead of time, unitary designs can be used to achieve statistically secure PRFSGs. 

\subsection{Comparison with~\cite{chen2024power} and~\cite{AGL24}}
\label{sec:cgg24}
\noindent The common Haar state model was concurrently introduced by~\cite{chen2024power} and an earlier version of this work~\cite{AGL24}. Even though the main theme -- studying feasibility and separations in the CHS model -- was common among both the works, there were two main differences. Firstly,~\cite{chen2024power} showed the feasibility of 1-copy PRSGs whereas~\cite{AGL24} showed the feasibility of bounded-copy PRSGs with simplified construction and its analysis. Secondly,~\cite{chen2024power} showed a separation between 1-copy PRS and unbounded-copy PRS which is unique to their work. 
\par Subsequent to both~\cite{chen2024power} and~\cite{AGL24}, we improved upon~\cite{AGL24} to show that even bounded-query PR\underline{F}SGs exist in the CHS model. We also demonstrate optimality, in terms of the query bound, of our construction. We also added separation results in the revised version (\Cref{sec:LOCC},~\Cref{sec:Imp_QCCC_CHS} and~\Cref{sec:QBB_QCCC}). 

\section{Alternative Proof of~\Cref{lem:Z_haar_indis}} \label{app:simpleproof}

\begin{proof}[Proof sketch of~\Cref{lem:Z_haar_indis}]
The first part of the proof is the same as in~\cite{Col23}. Here we introduce the required notations and omit the details. Let $d := 2^n$ and
\begin{align*}
\sigma 
& := \sum_{x\in\bit^n} \rho_x
= \sum_{x\in\bit^n} \Ex_{\ket{\psi}\gets \Haar(2^n)} \left[ (Z^x \otimes I^{\otimes m}) \ketbra{\psi}{\psi}^{\otimes m+1} (Z^x \otimes I^{\otimes m}) \right] \\
& = \Ex_{\Vec{t}\in\cI_{d,m+1}} \sum_{x\in\bit^n} \left[ (Z^x \otimes I^{\otimes m}) \ketbra{s(\Vec{t})}{s(\Vec{t})} (Z^x \otimes I^{\otimes m}) \right] \\
& = \frac{d}{\binom{d + m}{m+1}} \cdot \sum_{\Vec{t}\in\cI_{d,m+1}} \sum_{j\in\bit^n} (\ketbra{j}{j} \otimes I^{\otimes m}) \ketbra{s(\Vec{t})}{s(\Vec{t})} (\ketbra{j}{j} \otimes I^{\otimes m}) \\
& = \frac{d}{\binom{d + m}{m+1}} \cdot \sum_{j\in\bit^n} \sum_{0\le r\le m} \sum_{\Vec{t}\in T_{j,r}^m} \frac{r+1}{m+1} \ketbra{j}{j} \otimes \ketbra{s(\Vec{t})}{s(\Vec{t})}.
\end{align*}
So we have
\begin{align*}
\sigma^{-1/2}
= \sqrt{\frac{\binom{d + m}{m+1}}{d}} \cdot \sum_{j\in\bit^n} \sum_{0\le r\le m} \sum_{\Vec{t}\in T_{j,r}^m} \sqrt{\frac{m+1}{r+1}} \ketbra{j}{j} \otimes \ketbra{s(\Vec{t})}{s(\Vec{t})}.
\end{align*}
Note that $\sigma^{-1/2}$ is PSD with the largest eigenvalue $\norm{\sigma^{-1/2}} = \sqrt{\binom{d + m}{m+1}(m+1)/d}$ (when $r = 0$). In~\cite{Col23}, the main technicality is to show Equation~(28):
\[
\Ex_{x\gets\bit^n} \Tr( \rho_x\sigma^{-1/2}\rho_x\sigma^{-1/2} )
\leq C' \cdot \left( \frac{m}{d} + \frac{m^7}{d^3} \right),
\]
where $C' > 0$ is some constant. Here, we provide an alternative and simpler proof. Since $\sigma^{-1/2}$ and $\rho_x$ are both PSD, the matrix $\sigma^{-1/2}\rho_x\sigma^{-1/2}$ is PSD as well. As $\rho_x$ is a density matrix, we have
\[
\Tr( \rho_x \cdot \sigma^{-1/2}\rho_x\sigma^{-1/2} )
\leq \norm{ \sigma^{-1/2}\rho_x\sigma^{-1/2} }.
\]
Then we use the submultiplicativity of the operator norm to obtain
\begin{align*}
& \norm{ \sigma^{-1/2}\rho_x\sigma^{-1/2} } \\
& \leq \norm{\sigma^{-1/2}} 
\cdot \norm{Z^x \otimes I^{\otimes m}} 
\cdot \norm{\Ex_{\Vec{t}\in\cI_{d,m+1}}\ketbra{s(\Vec{t})}{s(\Vec{t})}} 
\cdot \norm{Z^x \otimes I^{\otimes m}} 
\cdot \norm{\sigma^{-1/2}} \\
& = \norm{\sigma^{-1/2}}^2 \cdot \norm{\Ex_{\Vec{t}\in\cI_{d,m+1}}\ketbra{s(\Vec{t})}{s(\Vec{t})}} \tag{\text{unitaries have a unit operator norm}} \\
& = \frac{\binom{d + m}{m+1}\cdot (m+1)}{d} \cdot \frac{1}{\binom{d + m}{m+1}}
= \frac{m+1}{d}.
\end{align*}
Hence, it holds that
\[
\Ex_{x\gets\bit^n} \Tr( \rho_x\sigma^{-1/2}\rho_x\sigma^{-1/2} )
\leq \frac{m+1}{d}. \qedhere
\]
\end{proof}

\end{document}